\newif\ifarxiv
\Crefname{algocf}{Algorithm}{Algorithms}
\crefname{algocfline}{line}{lines}
\Crefname{invariant}{Invariant}{Invariants}
\Crefname{claim}{Claim}{Claims}
\Crefname{subclaim}{Subclaim}{Subclaims}
\definecolor{DarkGray}{rgb}{0.66, 0.66, 0.66}
\definecolor{DarkPowderBlue}{rgb}{0.0, 0.2, 0.6}
\definecolor{fluorescentyellow}{rgb}{0.8, 1.0, 0.0}
\definecolor{cerulean}{rgb}{0.0, 0.48, 0.65}
\definecolor{bleudefrance}{rgb}{0.19, 0.55, 0.91}
\newtheorem{theorem}{Theorem}[section]
\newtheorem{lemma}[theorem]{Lemma}
\newtheorem{observation}[theorem]{Observation}
\newtheorem{claim}[theorem]{Claim}
\newtheorem{fact}[theorem]{Fact}
\newtheorem{corollary}[theorem]{Corollary}
\newtheorem{assumption}[theorem]{Assumption}
\theoremstyle{definition}
\newtheorem{define}{Definition}
\theoremstyle{remark}
\newcommand{\bF}{{\bf F}}
\newcommand{\bB}{{\bf B}}
\newcommand{\bR}{{\bf R}}
\newcommand{\tup}{\mathcal{T}^{\rm up}}
\newcommand{\bH}{{\bf H}}
\newcommand{\tdown}{\mathcal{T}^{\rm down}}
\newcommand{\calR}{\mathcal{R}}
\newcommand{\Sstar}{\mathcal{S}^{\star}}
\newcommand{\Sin}{\mathcal{S}^{\text{in}}}
\newcommand{\Ustar}{U^{\star}}
\newcommand{\Cstar}{C^{\star}}
\newcommand{\lmax}{{\ell_{\max}}}
\newcommand{\wstar}{w^{\star}}
\newcommand{\tO}{\widetilde{O}}
\newcommand{\tmB}{\widetilde{\mathbf{B}}}
\newcommand{\tbH}{\widetilde{\mathbf{H}}}
\newcommand{\csus}{c_{\rm sus}}
\newcommand{\tausus}{\tau^{\rm sus}}
\newcommand{\deltasus}{\Delta_{\rm sus}}
\newcommand{\cspd}{c_{\rm spd}}
\newcommand{\OPT}{\textsc{OPT}}
\newcommand{\tsus}{t^{\rm sus}}
\newcommand{\tend}{t^{\rm end}}
\newcommand{\tprep}{t^{\rm prep}}
\newcommand{\tcopy}{t^{\rm copy}}
\newcommand{\ttail}{t^{\rm tail}}
\newcommand{\tcomp}{t^{\rm comp}}
\newcommand{\tstart}{t^{\rm start}}
\newcommand{\cov}{{\rm cov}\xspace}
\newcommand{\lev}{{\rm lev}\xspace}
\newcommand{\plev}{{\rm plev}\xspace}
\newcommand{\Tset}{\mathbf{T}_{\rm set}}
\newcommand{\Telem}{\mathbf{T}_{\rm elem}}
\newcommand{\Tunc}{\mathbf{T}_{\rm unc}}
\newcommand{\Tact}{\mathbf{T}_{\rm A}}
\newcommand{\Tpas}{\mathbf{T}_{\rm P}}
\newcommand{\Tdor}{\mathbf{T}_{\rm D}}
\newcommand{\Texp}{\mathbf{T}_{\rm E}}
\newcommand{\Tin}{\mathbf{T}_{\rm in}}
\newcommand{\sayan}[1]{{\color{blue}[\textbf{Sayan}: #1]}}
\newcommand{\alert}[1]{{\color{red}#1}}
\newcounter{mynote}[section]
  \def\\{}%
  \def\texttt#1{<#1>}%
  \def\textsf#1{<#1>}%
  \def\mathsf#1{<#1>}%
  \def\xspace{}%
  \def\Cref#1{<Label:#1>}%
  \def\eqref#1{<Eq.:#1>}%
\newcommand{\eps}{\varepsilon}
\newcommand{\poly}{\operatorname{poly}}
\renewcommand{\emptyset}{\varnothing}
\newcommand{\calS}{\mathcal{S}}
\newcommand{\junk}[1]{}
\newcommand{\eat}[1]{}
\newif\ifhideproofs
\author{
{Sayan Bhattacharya\thanks{Department of Computer Science, University of Warwick, Coventry, UK. Email: {\tt S.Bhattacharya@warwick.ac.uk}.}}
\and{Ruoxu Cen\thanks{Department of Computer Science, Duke University, Durham, NC. Email: {\tt ruoxu.cen@duke.edu}.}}
\and{Debmalya Panigrahi\thanks{Department of Computer Science, Duke University, Durham, NC. Email: {\tt debmalya@cs.duke.edu}.}}
}
\date{November 2025}
\title{Fully Dynamic Set Cover: Worst-Case Recourse and Update Time}
\begin{document}

\maketitle

\begin{abstract}
    In (fully) dynamic set cover, the goal is to maintain an approximately optimal solution to a dynamically evolving instance of set cover, where in each step either an element is added to or removed from the instance. The two main desiderata of a dynamic set cover algorithm are to minimize at each time-step,
    \begin{itemize}
        \item[-] the {\em recourse}, which is the number of sets removed from or added to the solution, and 
        \item[-] the {\em update time} to compute the updated solution.
    \end{itemize}    
    This problem has been extensively studied over the last decade leading to many results that achieve ever-improving bounds on the recourse and update time, while maintaining a solution whose cost is comparable to that of offline approximation algorithms.
    
    In this paper, we give the {\bf first} algorithms to simultaneously achieve non-trivial {\bf worst-case bounds} for recourse and update time. Specifically, we give fully-dynamic set cover algorithms that simultaneously achieve $O(\log n)$ recourse and $f\cdot \poly\log(n)$ update time in the worst-case, for both approximation regimes: $O(\log n)$ and $O(f)$ approximation. (Here, $n, f$ respectively denote the maximum number of elements and maximum frequency of an element across all instances.) 
    Prior to our work, all results for this problem either settled for amortized bounds on recourse and update time, or obtained $f\cdot \poly\log(n)$ update time in the worst-case but at the cost of $\Omega(m)$ worst-case recourse. (Here, $m$ denotes the number of sets. Note that {\em any} algorithm has recourse at most $m$.)
\end{abstract}

\thispagestyle{empty}

\newpage
\tableofcontents
\newpage

\setcounter{page}{1}

\part{Extended Abstract}\label{part:extended}

\section{Introduction}\label{sec:intro}
Consider the fundamental problem of computing a minimum {\em set cover}. As input, we receive a universe $U$ of $n$ elements and a collection $\calS \subseteq 2^U$ of $m$ sets defined over $U$, such that $\bigcup_{s \in \calS} s = U$. A {\em set cover} is a collection of sets  $\calS^\star \subseteq \calS$ that covers every element, i.e., $\bigcup_{s \in \calS^\star} s = U$. The goal is to return a set cover of minimum size. Let $f$ denote an upper bound on the maximum {\em frequency} of any element in $U$, where the frequency of a given element is the number of sets in $\calS$ it belongs to. There are two textbook algorithms for this problem: one greedy and the other primal-dual. In $O(fn)$ time, they respectively achieve an approximation ratio of $\ln (n)$ and $f$ (see e.g.,~\cite{WS11}). Both the runtime and approximation guarantees of these two algorithms are tight, under standard complexity theoretic assumptions.\footnote{The size of the input can be $\Omega(fn)$, and hence this $O(fn)$ runtime is asymptotically optimal. Further, it is highly unlikely that we can beat either the $f$ or the $\ln(n)$ approximation guarantee,  even if we allow for any arbitrary polynomial runtime~\cite{DinurS14,KhotR03}.}

We focus on the minimum set cover problem in a {\em dynamic} setting, where the input keeps changing via a sequence of {\em updates}. Each update corresponds to the insertion/deletion of an element. Throughout these updates, we have to maintain an approximately minimum set cover of the current input. It is common to quantify the performance of such a {\em dynamic algorithm} according to three parameters.
\begin{itemize}
\item {\em Approximation ratio}: It indicates the quality of the solution $\calS^\star \subseteq \calS$ maintained by the algorithm.
\item {\em Recourse}: It equals the number of changes (i.e., insertions/deletions of sets) in the maintained solution $\calS^\star$, per update. It measures the {\em stability} of the solution maintained by the algorithm.
\item {\em Update time}: It is the time taken to process an update. It measures the {\em efficiency} of the  algorithm. 
\end{itemize}

The ultimate goal in this setting is  to design a dynamic algorithm that has {\bf worst-case} guarantees, and  matches the performance of the best known static algorithm.\footnote{In other words, we want to recover a near-optimal static algorithm by feeding the universe $U$ as a sequence of $n$ element-insertions, to be processed by the concerned dynamic algorithm. Thus, if the dynamic algorithm has update time $O(\tau)$, then the resulting static algorithm would have a runtime of $O(\tau \cdot n)$.} This  leads us to the following two natural questions.

\begin{itemize}
\item {\bf Q1:} Is there a dynamic set cover algorithm with $O(f)$ approximation ratio, $\tilde{O}(1)$ {\bf worst-case} recourse  and $\tilde{O}(f)$ {\bf worst-case} update time? (Throughout the paper, the $\tilde{O}(.)$ notation hides $\text{polylog}(n)$ factors.)
\item {\bf Q2:} Is there a dynamic set cover algorithm with $O(\log n)$ approximation ratio, $\tilde{O}(1)$ {\bf worst-case} recourse and $\tilde{O}(f)$ {\bf worst-case} update time? 
\end{itemize}

The study of a  fundamental, textbook problem like set cover  occupies a central place in the dynamic algorithms literature. Indeed, as we outline in \Cref{sec:prior:work}, over the past decade an extensive and influential line of work has been devoted towards designing dynamic set cover algorithms, exploring various trade-offs between the three performance measures (approximation ratio, recourse and update time). Surprisingly, however, {\bf all previous dynamic set cover algorithms with non-trivial update times had $\Omega(m)$ worst-case recourse}. Note that this worst-case recourse bound is trivial since {\em any} algorithm has recourse at most $m$. So, while there was substantial progress in dynamic set cover, particularly for amortized bounds, in prior work, it did not offer an answer to {\bf Q1} or {\bf Q2}.

We answer {\em both} {\bf Q1} and {\bf Q2} in the affirmative, by obtaining the first algorithms to  {\em simultaneously} achieve near-optimal worst-case recourse and worst-case update time bounds for dynamic set cover. 

Our results are summarized in the following two theorems.

\begin{theorem}\label{thm:main:f}
    There is a deterministic $O(f)$-approximation dynamic set cover algorithm with  $O(\log n)$ {\bf worst-case} recourse and $O(f \log^{3}(n))$ {\bf worst-case} update time. 
\end{theorem}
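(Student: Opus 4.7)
The plan is to start from the classical level-based primal-dual framework for $O(f)$-approximate set cover. In this framework, each set $s$ gets an integer level $\ell(s) \in \{0, 1, \ldots, L\}$ with $L = \Theta(\log n)$, each element $e$ is assigned to its maximum-level covering set, and the dual charges $y_e = \beta^{-\ell(e)}$ for some constant $\beta > 1$ satisfy the packing constraints up to an $O(1)$ slack. The maintained solution $\Sstar$ is the collection of sets whose level is above a fixed threshold; a standard LP argument shows this is an $O(f)$-approximation. My first step would be to formalize this setup and identify the invariants that need to be preserved: every element is assigned to a solution-set of its own level, and every solution-set is ``nearly tight'' in the dual.

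My second step would be to design an amortized algorithm that restores these invariants locally on each update. Inserting an element $e$ requires inspecting the at most $f$ sets containing it and possibly promoting one to a higher level to cover $e$; deleting $e$ may cause some sets to become slack and drop a level. A potential function on $\sum_s \ell(s)$ yields amortized recourse $O(\log n)$ and amortized update time $O(f \log^2 n)$; recourse matches the count of level changes because a set enters or exits $\Sstar$ only when its level crosses the threshold.

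The main obstacle, and where most of the technical work lies, is \emph{simultaneous} de-amortization of both recourse and update time. A single update can in the worst case trigger a cascade of $\Omega(n)$ level changes, even though such cascades are rare. My plan is to schedule level-correction work lazily: each update enqueues $O(\log n)$-sized batches of pending fix-ups that are then drained at a controlled rate of $\tO(f)$ work per subsequent update. To keep $\Sstar$ an $O(f)$-approximation at every step despite pending fix-ups, I would maintain a background ``shadow'' copy of the data structure being rebuilt over an epoch of $\polylog(n)$ updates, then swap it in via a slow handover that moves at most $O(\log n)$ sets between the old and the new solutions per step. The crux is to simultaneously argue that (a) the shadow copy finishes its rebuild before the foreground becomes stale; (b) no level deviates by more than $O(1)$ from its target, so that $O(f)$-approximation is preserved throughout; and (c) the handover itself incurs recourse $O(\log n)$ per step. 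The extra $\log n$ factor in the target update time, i.e., $O(f \log^3 n)$ versus the amortized $O(f \log^2 n)$, absorbs the cost of running the foreground and background in lock-step. I expect the most delicate accounting to be how queued fix-ups interact with new updates that may partially invalidate them, which will likely require an indirection layer so that enqueued operations reference logical set and element identities that remain well-defined across background rebuilds.
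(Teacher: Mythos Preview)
Your high-level plan—primal-dual levels, a shadow rebuild, and a slow handover—is in the right spirit, but it has a structural gap that the paper's argument is specifically built to address.

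First, a single background shadow copy does not suffice. The level hierarchy is \emph{multi-scale}: the invariant at level $k$ (that dormant/passive dual mass at levels $\le k$ is a small fraction of active mass there) degrades after $\Theta(|L_k|)$ updates, and $|L_k|$ can be exponentially smaller than $|L|$. A global rebuild taking $\Theta(|L|)$ steps lets low levels become arbitrarily dirty before it finishes, and the $O(f)$ bound via the LP argument needs tidiness at \emph{every} level, not just globally. The paper therefore runs $\ell_{\max}+1 = O(\log n)$ background threads in parallel, one per level $k$, each rebuilding levels $\le k$ on a $\Theta(|L_k|)$-length cycle.

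Second—and this is the paper's central new idea—once you have $O(\log n)$ parallel rebuilds, your handover scheme faces an obstacle you do not mention: if every background solution is being copied into the output, the output size is $|\bF| + \sum_k |\bR_k|$, and each $|\bR_k|$ is individually $O(f)\cdot\OPT$, giving only $O(f\log n)$ approximation. The paper resolves this with a \emph{suspension phase} and a scheduler: a thread $T_k$ that has finished computing may not start copying until its snapshot size $\tausus_k$ is at most half the minimum $\tausus_j$ over threads already copying. This forces the buffered solutions' sizes to form a geometric sequence, so $\sum_k |\bR_k| = O(\max_k \tausus_k) = O(f)\cdot\OPT$. The delicate part is showing that no thread waits so long in suspension that its solution goes stale; this needs a blocked-from-above / blocked-from-below case analysis that is the technical heart of the argument.

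Third, the paper decouples insertion from deletion recourse: it bounds worst-case \emph{insertion} recourse by $O(\log n)$ (only copy/tail phases add to the output, at $O(1)$ sets per thread per step), then applies a black-box garbage-collection reduction (their Lemma~3.1) to get the same worst-case bound on deletion recourse without losing approximation. Your ``slow handover'' gestures at this, but without the suspension-based scheduler the handover cannot simultaneously keep recourse $O(\log n)$ and the output $O(f)$-approximate.
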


\begin{theorem}\label{thm:main:lnn}
    There is a deterministic $O(\log n)$-approximation dynamic set cover algorithm with $O(\log n)$ {\bf worst-case} recourse and $O(f \log^{3}(n))$ {\bf worst-case} update time. 
\end{theorem}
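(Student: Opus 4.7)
My plan is to reduce Theorem~\ref{thm:main:lnn} to Theorem~\ref{thm:main:f} via a dynamic sparsification that caps the effective frequency of every element at $O(\log n)$, without inflating the cost of an optimal cover by more than a constant factor. First, I would dispatch the easy regime: if $f = O(\log n)$, then Theorem~\ref{thm:main:f} already yields an $O(\log n)$-approximation with the desired worst-case guarantees, so the interesting regime is $f \gg \log n$.

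For that regime, I would maintain, for each element $e$ currently in the universe, a \emph{representative list} $R_e$ of $O(\log n)$ sets chosen from the (statically fixed) collection $\{s \in \calS : e \in s\}$. The representatives will be picked by a canonical deterministic rule tied to the internal state of the algorithm of Theorem~\ref{thm:main:f}---concretely, I would assign each set a priority based on its \emph{level} in the hierarchy built by that algorithm, and take the $\Theta(\log n)$ highest-priority sets containing $e$. The \emph{sparsified instance} $\widehat{\calS}$ will consist of each original set restricted to the elements for which it is a representative; by construction, every element in $\widehat{\calS}$ has frequency $O(\log n)$.

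Next, I would feed the updates in $\widehat{\calS}$ into the algorithm of Theorem~\ref{thm:main:f} with frequency parameter $O(\log n)$, obtaining an $O(\log n)$-approximate cover for $\widehat{\calS}$ that I then output directly as my solution for $\calS$. A single insertion or deletion in the original instance triggers at most $O(\log n)$ updates in $\widehat{\calS}$ (the edits to the representative lists alone), so the worst-case recourse remains $O(\log n)$ and the worst-case update time stays within a $\polylog(n)$ factor of what Theorem~\ref{thm:main:f} delivers. A careful bookkeeping that piggybacks the $R_e$ maintenance on the algorithm's internal hierarchy should meet the stated $O(f \log^3 n)$ bound.

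The main obstacle, as I see it, is establishing that $\opt(\widehat{\calS}) = O(\opt(\calS))$. A naive priority rule can fail when the optimal cover happens to rely on low-priority sets. The resolution is to choose priorities so that every set currently contributing to the $O(f)$-algorithm's running solution is automatically a representative for each of its elements, thereby keeping a near-optimal cover alive inside $\widehat{\calS}$. Maintaining this invariant under updates can trigger cascading edits to representative lists, and controlling those cascades so that both recourse and update time remain \emph{worst-case} $\polylog(n)$ per update is where I expect the bulk of the technical effort to concentrate.
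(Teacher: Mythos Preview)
Your proposal has a genuine gap at the crux you yourself flag: the claim that $\opt(\widehat{\calS}) = O(\opt(\calS))$. There is no known deterministic frequency-capping sparsification for set cover, and a simple example shows why naive schemes fail. Take a universe $\{e_1,\dots,e_n\}$, one ``good'' set $s^\star=\{e_1,\dots,e_n\}$, and for each $e_i$ another $f-1$ singleton sets $\{e_i\}$. Here $\opt(\calS)=1$. Any rule that retains only $O(\log n)$ sets per element and is oblivious to $s^\star$ will, for $f\gg\log n$, drop $s^\star$ from $R_{e_i}$ for most $i$; then $\opt(\widehat{\calS})=\Theta(n)$. Your proposed fix---prioritizing sets that appear in the inner algorithm's \emph{current} solution---does not help: that solution approximates $\opt(\widehat{\calS})$, not $\opt(\calS)$, so the dependency is circular. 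You would need to guarantee that some $O(\opt(\calS))$-size cover in the \emph{original} instance survives the sparsification, and nothing in the proposal establishes this.

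The paper takes an entirely different route: it does not reduce Theorem~\ref{thm:main:lnn} to Theorem~\ref{thm:main:f} at all. Instead it gives a direct algorithm (Part~\ref{part:lnn}) built on a dynamic ``catch-up greedy'' primitive---a greedy set cover run at speed $\Theta(\log n)$ per time-step on a slowly evolving sub-universe---layered into a hierarchy with one background thread per level. Each thread goes through preparation, computation, suspension, copy, and tail phases; a scheduler ensures that the threads allowed to write to buffer solutions have geometrically decreasing sizes, so the union of buffers stays $O(\log n)$-approximate. Worst-case insertion recourse is bounded via the buffer mechanism, and deletion recourse is de-amortized by the black-box reduction of Lemma~\ref{lem:new:recourse-reduction}. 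The $O(f)$ algorithm in Part~\ref{part:f} is a parallel but technically separate construction based on a primal--dual rebuild rather than greedy; neither theorem is derived from the other.
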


\subsection{Prior Work}
\label{sec:prior:work}

\paragraph{Low Frequency Regime ($O(f)$-approximation).} 
First, note that if $f = 2$, then the dynamic set cover problem is equivalent to  {\em dynamic vertex cover}. Here, we have an input graph $G = (V, E)$, with $n$ nodes and $m$ edges, that is undergoing a sequence of edge insertions/deletions, and we have to maintain an approximately minimum vertex cover at all times. Accordingly, {\bf Q1} can be recast as follows, {\em in the special case when $f = 2$}.

\begin{itemize}
\item {\bf Q1*:} Can we maintain a $O(1)$-approximate minimum vertex cover in a dynamic graph with $\tilde{O}(1)$ {\bf worst-case} recourse and $\tilde{O}(1)$ {\bf worst-case} update time?
\end{itemize}

\begin{table}[h!]
  \centering
  \begin{tabular}{ccccccc}
    \toprule
    Approx & Amortized & {\bf Worst Case} & Amortized & {\bf Worst Case} & Paper \\
    Ratio & Update Time & Update Time & Recourse & Recourse & & \\
    \midrule
    \color{red} $O\left(f^2\right)$ \color{black} &  $O(f \log (m+n))$ & -  & $O(\log (m+n))$ & \color{red}$\Omega(m)$\color{black} & \cite{BhattacharyaHI15} \\
     \midrule 
     \color{red} $O\left(f^3\right)$ \color{black} & $O\left(f^2\right)$ & - & $O(f)$ & \color{red}$\Omega(m)$\color{black} & \cite{BhattacharyaCH17} \\
     \midrule 
     \color{red} $O\left(f^3\right)$ \color{black} & $O\left(f^2\right)$ & - & $O(f)$ & \color{red}$\Omega(m)$\color{black} & \cite{GuptaKKP17} \\
      \midrule
    \color{blue} $(1+\epsilon)f$ \color{black} & $O\left(f^2 \log (n)/\epsilon^5\right)$ & - & $O\left(f \log (n)/\epsilon^5\right)$ & \color{red}$\Omega(m)$\color{black} & \cite{AbboudAGPS19} ** \\
     \midrule
     \color{blue} $(1+\epsilon)f$ \color{black} & $O\left(f \log(n)/\epsilon^2\right)$ & - & $O\left(\log(n)/\epsilon^2\right)$ & \color{red}$\Omega(m)$\color{black} & \cite{BhattacharyaHN19} \\
     \midrule
  \color{blue} $f$ \color{black} &  $O\left(f^2 \right)$ & - & $O(f)$ & \color{red}$\Omega(m)$\color{black} & \cite{AssadiS21}** \\
     \midrule 
     \color{blue} $(1+\epsilon)f$ \color{black} & $O\left(f^2/\epsilon^2\right)$ & - & $O\left(f/\epsilon^2\right)$ & \color{red}$\Omega(m)$\color{black} & \cite{BhattacharyaHNW21} \\
     \midrule
     \color{blue}$(1+\epsilon)f$ \color{black} & $O\left(f \log^2 (n)/\epsilon^3 \right)$ & \color{blue}$O\left(f \log^2 (n)/\epsilon^3 \right)$\color{black} & $O\left(\log^2 (n)/\epsilon^3 \right)$ & \color{red}$\Omega(m)$\color{black} & \cite{BhattacharyaHNW21} \\
     \midrule
     \color{blue} $(1+\epsilon)f$ \color{black} & $O\left( \frac{f\log f}{\epsilon} + \frac{f}{\epsilon^3} \right)$ & - & $O\left( \frac{\log f}{\epsilon} + \frac{1}{\epsilon^3} \right)$ & \color{red}$\Omega(m)$\color{black} & \cite{BukovSZ25} \\
     \midrule
     \color{blue} $(1+\epsilon)f$ \color{black} & $O\left( \frac{f\log^\star f}{\epsilon} + \frac{f}{\epsilon^3} \right)$ & - & $O\left( \frac{\log^\star f}{\epsilon} + \frac{1}{\epsilon^3} \right)$ & \color{red}$\Omega(m)$\color{black} & \cite{BukovSZ25}* \\
     \midrule
     \color{blue} $(1+\epsilon)f$ \color{blue} & $O\left(f \log (n)/\epsilon^2\right)$ & \color{blue} $O\left(f \log (n)/\epsilon^2\right)$ \color{blue} & $O\left(\log (n)/\epsilon^2\right)$ & \color{red}$\Omega(m)$\color{black} & \cite{SolomonUZ24} \\
     \midrule 
     \color{blue}$O(f)$\color{black} & $O\left(f \log^3 (n)\right)$ & \color{blue}$O\left(f \log^3 (n)\right)$\color{black} & $O\left(\log n \right)$ & \color{blue}$O\left(\log n\right)$\color{black} & {\bf our result} \\
    \bottomrule
  \end{tabular}
    \caption{Prior results for dynamic set cover in the low frequency regime. The entries marked with ** (resp.~*) correspond to randomized algorithms that work only against an {\em oblivious adversary} (resp.~{\em adaptive adversary}). All the other entries correspond to deterministic algorithms. Whenever the guarantee specified by {\bf Q1} is attained for a certain parameter (i.e., approximation ratio, worst-case update time or worst-case recourse), the concerned entry is marked in {\textcolor{blue} {blue}}. Otherwise, the concerned entry is marked in {\textcolor{red} {red}}. If an entry for worst-case update time is left blank, then it implies a trivial worst-case update time of $\tilde{O}(fn)$.}
  \label{tab:results:f}
\end{table}

The first nontrivial progress towards answering this question was made by \cite{OnakR10} in 2010. They designed a randomized dynamic algorithm that works against an oblivious adversary, with an approximation ratio of $O(1)$,  an {\bf amortized} recourse of $\tilde{O}(1)$ and an {\bf amortized} update time of $\tilde{O}(1)$.\footnote{We say that a dynamic algorithm has amortized recourse (resp.~amortized update time) of $O(\tau)$ iff it incurs a total recourse (resp.~time) of $O(t \cdot \tau)$  to process any sequence of $t$ updates starting from an empty input instance. Note that a worst-case recourse (resp.~update time) of $O(\tau)$ implies an amortized recourse (resp.~update time) of $O(\tau)$, but not vice versa.} 
This was followed by a long sequence of results on  dynamic vertex cover and its dual problem of dynamic matching~\cite{BaswanaGS11,NeimanS13,BhattacharyaHI15a,BernsteinS16,BhattacharyaHN17,ArarCCSW18,BhattacharyaK19,Wajc20,BernsteinDL21,BhattacharyaKSW23,Behnezhad23,BehnezhadG24,AssadiKK25}. Already in 2017,  \cite{BhattacharyaHN17} answered {\bf Q1*} in the affirmative, by presenting  a deterministic dynamic vertex cover algorithm with $(2+\epsilon)$ approximation ratio, $O(\log n)$ worst-case recourse and $O(\log^3 (n)/\text{poly}(\epsilon))$ worst-case update time.

Beyond the special case of $f = 2$,  however, progress towards resolving {\bf Q1} has been rather limited. \Cref{tab:results:f} summarizes the prior results on dynamic set cover in the low frequency regime.\footnote{Several of the results in \Cref{tab:results:f,tab:results:logn} do not explicitly state an amortized recourse bound. The bounds stated in the tables are the best ones we could infer from the algorithms, but even if better amortized bounds were achievable, it would not affect the contributions of the current article.} This line of work was initiated by \cite{BhattacharyaHI15} in 2015. By 2019, there was a deterministic dynamic set cover algorithm with $(1+\epsilon)f$ approximation ratio, $\tilde{O}(1)$  amortized recourse and $\tilde{O}(1)$  amortized update time \cite{BhattacharyaHN19}. Following up on this, \cite{BhattacharyaHNW21} and \cite{SolomonUZ24} showed how to deamortize the result of \cite{BhattacharyaHN19}, which led to new algorithms with $(1+\epsilon)f$ approximation ratio and $\tilde{O}(f)$ worst-case update time. Unfortunately, both these  latter algorithms \cite{BhattacharyaHNW21,SolomonUZ24} have a worst-case recourse of $\Omega(m)$. This is because each of these algorithms maintains $O(\log n)$ different solutions (i.e., collection of sets) {\em in the background}, while spending $\tilde{O}(f)$ worst-case time per update on each of these background solutions. It is possible, however, that none of these background solutions constitute a valid set cover. Instead, the actual output of the algorithm is a {\em foreground} solution, which is guaranteed to be a valid $(1+\epsilon)f$ approximate set cover of the current input. This foreground solution is obtained via carefully combining all the background solutions, by means of manipulating some pointers. Thus, although each individual background solution incurs $\tilde{O}(1)$ worst-case recourse, because of the pointer switches the foreground solution might incur a recourse as large as $\Omega(m)$ during a single update. In sharp contrast, as summarized in \Cref{thm:main:f},  we {\em simultaneously} achieve $O(f)$ approximation ratio, $\tilde{O}(1)$ worst-case recourse and $\tilde{O}(f)$ worst-case update time, thereby resolving {\bf Q1}.

\begin{table}[h!]
  \centering
  \begin{tabular}{ccccccc}
    \toprule
    Approximation & Amortized & {\bf Worst Case} & Amortized & {\bf Worst Case} & Paper \\
    Ratio & Update Time & Update Time & Recourse & Recourse & & \\
    \midrule
    \color{blue} $O(\log n)$ \color{black} & exponential & \color{red} exponential \color{black} & $O(1)$ & \color{blue} $O(1)$ \color{black} & \cite{GuptaKKP17} \\
    \midrule
    \color{blue} $O(\log n)$ \color{black} &  $O(f \log n)$ & -  & $O(1)$ & \color{red} $\Omega(m)$ \color{red} & \cite{GuptaKKP17} \\
     \midrule 
    \color{blue} $(1+\epsilon) \ln n$ \color{black} & $O\left(f \log (n)/\epsilon^5\right)$ & - & $O\left(1/\epsilon^4\right)$ & \color{red} $\Omega(m)$ \color{black} & \cite{SolomonU23} \\
     \midrule
     \color{blue} $(1+\epsilon)\ln n$ \color{black} & $O\left( f \log (n)/\epsilon^2\right)$ & \color{blue} $O\left( f \log (n)/\epsilon^2\right)$ \color{black} &  $O\left(\log (n)/\epsilon^2\right)$  & \color{red} $\Omega(m)$ \color{black} & \cite{SolomonUZ24} \\
     \midrule 
     \color{blue} $O(\log n)$ \color{black} & $O(f \log^3 (n))$  & \color{blue} $O(f \log^3 (n))$ \color{black} & $O(\log n)$ & \color{blue} $O(\log n)$ \color{black} & {\bf our result} \\
    \bottomrule
  \end{tabular}
    \caption{Prior results for dynamic set cover in the high frequency regime.  All the algorithms in this table are  deterministic. Whenever the guarantee specified by {\bf Q2} is attained for a certain parameter (i.e., approximation ratio, worst-case update time or worst-case recourse), the concerned entry is marked in \color{blue}blue\color{black}. Otherwise, the concerned entry is marked in \color{red}red\color{black}. If an entry for worst-case update time is left blank, then it implies a trivial worst-case update time of $\tilde{O}(fn)$.}
  \label{tab:results:logn}
\end{table}

\paragraph{High Frequency Regime ($O(\log n)$-approximation).} \Cref{tab:results:logn} summarizes the prior results on  $O(\log n)$ approximation algorithms for dynamic set cover. This line of work was initiated by \cite{GuptaKKP17}, who designed a deterministic dynamic algorithm with $O(\log n)$ approximation ratio, $O(1)$ {\bf amortized} recourse and $O(f \log n)$ {\bf amortized} update time. They also developed a dynamic algorithm with $O(\log n)$ approximation ratio and $O(1)$  worst-case recourse, albeit with an exponential update time. In 2024, \cite{SolomonUZ24} presented the first dynamic set cover algorithm that achieves $\tilde{O}(f)$ worst-case update time with $O(\log n)$, or even a $(1+\epsilon)\ln (n)$, approximation ratio. Just as in the low frequency regime (and for exactly the same reason), however, the worst-case recourse of the \cite{SolomonUZ24} algorithm is $\Omega(m)$. In sharp contrast,  as stated in \Cref{thm:main:lnn}, we  {\em simultaneously} achieve good worst-case recourse and update time, thereby resolving {\bf Q2}.

\subsection{Organization of the Rest of the Paper} 

Both our algorithms for \Cref{thm:main:f} and \Cref{thm:main:lnn} utilize a common set of high-level ideas. However, because they maintain different approximation ratios, the technical details are quite different. To highlight the key ideas without going deep into the technical details, we  organize the paper in three parts: \Cref{part:extended} is an extended abstract of our paper that describes the high-level ideas, using our $O(\log n)$-approximation algorithm to instantiate these ideas, \Cref{part:lnn} gives the proof of \Cref{thm:main:lnn} with all technical details, and \Cref{part:f} gives the proof of \Cref{thm:main:f} with all technical details.


Within the extended abstract, we define the problem formally and introduce some preliminary definitions in \Cref{sec:prelim}. Next, in \Cref{sec:overview}, we  highlight the key ideas behind our algorithm, and how they overcome the main obstacles to obtaining  worst-case recourse. We conclude with some future directions in \Cref{sec:closing}.

\eat{

\alert{Should we separately give the (stronger) result for the decremental setting?}

\alert{Sayan's summary of prior results:}

\noindent {\bf Notations:} $n = $ number  of elements, $m = $ number of sets, $f = $ maximum frequency of an element,  $C =$ the ratio between maximum and minimum cost of a set.

\medskip
\noindent {\bf Deterministic $O(\log n)$-approximation.} 
\begin{enumerate}
\item \href{https://arxiv.org/pdf/1611.05646}{[GKKP'17]}~\cite{GuptaKKP17}: $O(\log n)$-approximation in $O(f \log n)$ amortized update time and $O(1)$ amortized recourse. Further, the paper also gives a $O(\log n)$-approximation algorithm with $O(1)$ {\bf worst-case} recourse, but this algorithm has exponential update time.
\item \href{https://arxiv.org/pdf/2312.17625}{[SU'23]}~\cite{SolomonU23}: $(1+\epsilon)\ln n$-approximation in $O(\epsilon^{-5} \cdot f \log n)$ amortized update time, and  $O(\epsilon^{-4} \cdot \min(\log n, \log C))$ amortized recourse.  The authors claim (but do not prove explicitly) that the update time dependence on $\epsilon$ can be reduced to $\epsilon^{-4}$ with some further effort.
\item \href{https://arxiv.org/pdf/2407.06431}{[SUZ'24]}~\cite{SolomonUZ24}: $(1+\epsilon)\ln n$-approximation in $O(f \log n/\epsilon^2)$ {\bf worst-case} update time, {\bf but the worst-case recourse here should be $\Omega(m)$.} {\bf The paper does {\em not} explicitly state its recourse bound.}
\end{enumerate}

\medskip
\noindent {\bf Deterministic $O(f)$-approximation.}
\begin{enumerate}
\item \href{https://arxiv.org/pdf/1604.05337}{[BHI'15]}~\cite{BhattacharyaHI15}: $O(f^2)$-approximation in $O(f \log (m+n))$ amortized update time and recourse. {\bf The paper does {\em not} explicitly state its recourse bound. But the actual recourse should be a factor $f$ smaller than the update time.}
\item \href{https://arxiv.org/pdf/1611.00198}{[BCH'17]}~\cite{BhattacharyaCH17} and \href{https://arxiv.org/pdf/1611.05646}{[GKKP'17]}~\cite{GuptaKKP17}: $O(f^3)$-approximation in $O(f^2)$-amortized update time and recourse. {\bf The papers do {\em not} explicitly state its recourse bound. But the actual recourse should be a factor $f$ smaller than the update time.}
\item \href{https://arxiv.org/pdf/1909.11600}{[BHN'19]}~\cite{BhattacharyaHN19}: $(1+\epsilon)f$-approximation in $O(f \log (Cn)/\epsilon^2)$ amortized update time and recourse. {\bf The papers do {\em not} explicitly state its recourse bound. But the actual recourse should be a factor $f$ smaller than the update time.}
\item \href{https://arxiv.org/pdf/2002.11171}{[BHNW'21]}~\cite{BhattacharyaHNW21}: $(1+\epsilon)f$-approximation in $O((f^2/\epsilon^3) + (f/\epsilon^2) \log C)$ amortized update time and recourse. {\bf The paper does {\em not} explicitly state this recourse bound. But the actual recourse should be a factor $f$ smaller than the update time.} A second algorithm in the paper achieves $(1+\epsilon)f$-approximation in $O(f \log^2 (Cn)/\epsilon^3)$ {\bf worst-case} update time.  {\bf The paper does {\em not} explicitly state the corresponding recourse bound. But the worst-case recourse here should be $\Omega(m)$.}
\item \href{https://arxiv.org/pdf/2308.00793}{[BSZ'25]}~\cite{BukovSZ25}: $(1+\epsilon)f$-approximation in $O\left(\frac{f \log f}{\epsilon} + \frac{f}{\epsilon^3} + \frac{f \log C}{\epsilon^2}\right)$ amortized update time and recourse. A second algorithm in the paper is randomized (but works against an adaptive adversary), and achieves $(1+\epsilon)f$-approximation in $O\left(\frac{f \log^\star f}{\epsilon^2} + \frac{f}{\epsilon^3} + \frac{f \log C}{\epsilon^2}\right)$ expected amortized update time and recourse. {\bf The paper does {\em not} explicitly state its recourse bounds. But the actual recourse bounds should be a factor $f$ smaller than the update time bounds.}
\item \href{https://arxiv.org/pdf/2407.06431}{[SUZ'24]}~\cite{SolomonUZ24}: $(1+\epsilon)f$-approximation in $O(f \log n/\epsilon^2)$ {\bf worst-case} update time. {\bf The paper does {\em not} explicitly state its recourse bound. But the worst-case recourse here should be $\Omega(m)$.}
\end{enumerate}

\medskip
\noindent {\bf Randomized $O(f)$-approximation against an oblivious adversary.}
\begin{enumerate}
\item \href{https://arxiv.org/pdf/1611.05646}{[GKKP'17]}~\cite{GuptaKKP17}: $O(f)$-approximation in $O(1)$ amortized recourse and polynomial update time.
\item \href{https://arxiv.org/pdf/1804.03197}{[AAGPS'19]}~\cite{AbboudAGPS19}: $(1+\epsilon)f$-approximation in $O(f^2 \log n/\epsilon^5)$ expected amortized update time and recourse. It works only in the unweighted setting. {\bf The paper does {\em not} explicitly state its recourse bound. But the actual recourse  should be a factor $f$ smaller than the update time.}
\item \href{https://arxiv.org/pdf/2105.06889}{[AS'21]}~\cite{AssadiS21}: $f$-approximation in $O(f^2)$ expected (and whp) amortized update time and recourse. It works only in the unweighted setting. {\bf The paper does {\em not} explicitly state its recourse bound. But the actual recourse  should be a factor $f$ smaller than the update time.}
\end{enumerate}

}

\section{Preliminaries}\label{sec:prelim}\paragraph{The Fully Dynamic Set Cover Problem.}
In this problem, we are given a (fixed) universe $U$ of elements and a collection of sets $\calS$ that cover the elements in the universe, i.e., $\cup_{s\in \calS} \ s = U$. At any {\em time-step} $t \ge 0$, there is a set of {\em live} elements $L(t)$ that have to be covered by the algorithm. Elements that are not live are called {\em dormant} and denoted by $D(t) := U\setminus L(t)$. Initially, all elements are dormant, i.e. $L(0) = \emptyset$. The set $L(t)$ evolves over time via the following operations:
\begin{itemize}
    \item[-] if $L(t) = L(t-1) \cup \{e\}$ for some $e\notin L(t-1)$,  we say that the element $e$ has been {\em inserted} at time-step $t$, and
    \item[-] if $L(t) = L(t-1) \setminus \{e\}$ for some $e \in L(t-1)$, we say that the element $e$ has been {\em deleted} at time-step $t$.
\end{itemize}
We denote the number of elements to be covered at time-step $t$ by $n(t) := |L(t)|$, and its maximum value over time by $n := \max_t n(t)$.  The {\em frequency} of an element $e\in U$ in $\calS$ is the number of sets containing it $f(e) := |\{s\in \calS: e\in s\}|$, and the maximum frequency over all elements is denoted $f := \max_{e\in U} f(e)$.

Any solution is a collection of sets in $\calS$. At time-step $t$, a solution $S(t) \subseteq \calS$ is said to be {\em feasible} if it covers all the live elements at that time-step, i.e. if $\cup_{s\in S(t)} s \supseteq L(t)$. The {\em cost} of a solution is the number of sets in the solution, $c(S(t)) = |S(t)|$. The goal of the algorithm is to maintain a solution of approximately optimal cost. Let $S^*(t)$ denote an optimal solution at time-step $t$. Then, the algorithm is said to be an $\alpha$-approximation if for all time-steps $t$, we have $c(S(t)) \le \alpha\cdot c(S^*(t))$.

The two important attributes of dynamic set cover algorithms that we study in this paper are {\em recourse} and {\em update time}. Recourse refers to the change in the solution from one time-step to the next. Formally, the recourse at time-step $t$ is defined as $\eta_t := |S(t) \setminus S(t-1)| + |S(t-1) \setminus S(t)|$. We say that an algorithm has $\beta$-recourse if the recourse at all time-steps is at most $\beta$. The update time at time-step $t$ is the running time of the algorithm in this time-step. Similar to recourse, we say that an algorithm has update time $T$ if for all time-steps $t$, the update time at time-step $t$ is at most $T$.

{\em Throughout the rest of the paper, we use the symbol $X(t)$ to denote the status of any object $X$ at time-step $t$.}

\paragraph{Hierarchical Solution.}
Throughout the paper, we consider certain structured set cover solutions that we call {\em hierarchical} solutions. 
\begin{define}\label{def:hierarchical-sol}
    A \emph{hierarchical solution} $S \subseteq \calS$ is a collection of sets such that each set $s\in S$ is associated with a non-empty \emph{coverage} set $\cov(s)\subseteq s$. We denote $\cov(S) := \cup_{s\in S} \cov(s)$ to be the coverage set of the entire solution $S$. The coverage sets in the solution  are required to be disjoint, i.e. $\cov(s) \cap \cov(s') = \emptyset$ for any distinct $s, s' \in S$. In particular, the coverage set of a feasible solution $S(t)$ at time-step $t$ contains all live elements at that time-step, i.e., $\cov(S(t)) = \cup_{s\in S(t)} \cov(s) \supseteq L(t)$. In addition to the live elements, the coverage sets can contain dormant elements as well.
    
    Each set $s$ in a hierarchical solution $S$ is assigned a non-negative integer \emph{level} denoted $\lev(s)$. The level of a set is inherited by all elements in its coverage set, i.e., $\lev(e) := \lev(\cov^{-1}(e))$. (Note that the level of an element is unambiguous because the coverage sets are disjoint for different sets.) If $e$ is not in the coverage set of a solution $S$, then its level is undefined.

    We require the levels of elements in a hierarchical solution satisfy the following invariant. Intuitively, we would like $\lev(s)$ to be approximately $\log |\cov(s)|$, but for some technical reasons, we relax this to only hold in one direction.
    \begin{itemize}
        \item[-] (Level Invariant) $\forall s\in S$, we have $\lev(s) \le \log_{1+\eps}|\cov(s)|$.
    \end{itemize}
    
    A hierarchical solution $S$ also assigns a \emph{passive level} $\plev(e)$ for elements. We require the passive levels to be defined for all elements in $\cov(S)$, but it can also be defined on other elements.
    The passive level $\plev(e)$ of an element $e$ is a dynamic parameter maintained by the algorithm that we will define later. We require the passive levels to satisfy the following invariant that it will always be at least $\lev(e)$ for elements $e\in \cov(S)$.
    \begin{itemize}
        \item[-] (Passive Level Invariant) $\forall e\in \cov(S)$, we have $\lev(e) \le \plev(e)$.
        In addition, if $e\in \cov(S)$ is dormant, then we require $\plev(e) = \lev(e)$.
    \end{itemize}

    \eat{We require that the levels and passive levels of elements in a hierarchical solution satisfy two basic invariants. Intuitively, we would like $\lev(s)$ to be approximately $\log |\cov(s)|$, but for some technical reasons, we relax this to only hold in one direction. This constitutes the first invariant. The passive level $\plev(e)$ of an element $e$ is a dynamic parameter maintained by the algorithm that we will define later, but it will always be at least $\lev(e)$. This constitutes the second invariant. We write the invariants below, where $\eps > 0$ is a small constant that we will fix later:
    \begin{enumerate}
        \item (Level Invariant) $\forall s\in S$, we have $\lev(s) \le \log_{1+\eps}|\cov(s)|$.
        \item (Passive Level Invariant) $\forall e\in \cov(S)$, we have $\lev(e) \le \plev(e)$.
    \end{enumerate}
    }
\end{define}
This completes the definition of a hierarchical solution. The algorithm ensures that the maximum level $\ell_{\max}$ is at most $O(\log n)$.


To disambiguate between different hierarchical solutions, we use the superscript $S$ to indicate that a parameter refers to a specific solution $S$. In particular, $\cov^S(s)$ denotes the set of elements in the coverage set of $s$ in solution $S$, while $\lev^S(e), \plev^S(e)$ respectively denote the level and passive level of an element $e\in \cov(S)$. 

We next introduce some new notation for a hierarchical solution $S$.
\begin{itemize}
    \item (Sub-universe) For a level $k$, the set of live elements at time-step $t$ whose level is at most $k$ is denoted by $L_k^S(t) := \{e\in L(t): \lev^S(e) \le k\}$. In particular, we define $L^S(t) := L^S_{\ell_{\max}}(t) = L(t)\cap \cov(S)$.
    \item (Coverage) The collection of elements in the coverage sets at levels at most $k$ is denoted by $C_k^S := \{e\in \cov(S) : \lev^S(e) \le k\}$.
    \item (Active coverage) Of the elements covered at levels at most $k$, the ones that have passive level larger than $k$ are denoted by $A^S_k := \{e \in C_k^S:  \plev^S(e) > k\}$.
    \item (Passive coverage) The remaining elements covered at levels at most $k$ are denoted $P^S_k := C^S_k \setminus A^S_k$.
\end{itemize} 
Elements in $\cov(S)$ with $\plev(e) = \lev(e)$ cannot appear in $A^S_k$ for any $k$. We say such elements are {\em universally passive} in $S$.

\paragraph{Approximation Bound.} To show that the algorithm has a bounded approximation factor, we define the notions of $\eps$-stable and $\eps$-tidy solutions. We define these next:
\begin{define}
\label{def:stable}
    A solution $S$ is $\eps$-stable at level $k$ iff for every set $s\in\calS$, it satisfies $|A^S_k\cap s| < (1+\eps)^{k+1}$. Furthermore, we say that the solution $S$ is $\eps$-stable iff it is $\eps$-stable at every level $k\in[0, \ell_{\max}]$.
    (Note that the inequality needs to hold for every set in $\calS$ and not just those in the solution $S$.) 
\end{define}
Intuitively, the above condition is a form of dual feasibility that we use in our analysis.
\begin{define}
    A solution $S$ is $\eps$-tidy at a level $k$ if $|P^S_{k}| \le \eps\cdot |A^S_k|$, and $\eps$-dirty at level $k$ otherwise. Furthermore, we say that the solution $S$ is $\eps$-tidy iff it is $\eps$-tidy at every level $k\in[0, \ell_{\max}]$. 
\end{define}
Intuitively, the above condition is a form of approximation of the dual objective that we use in our analysis.

Throughout the rest of the paper, we use the symbol $\OPT(X)$ to denote the size of minimum set cover for the universe of elements $X$.
The next lemma from \cite{SolomonUZ24} asserts that it suffices to show that a solution satisfies the above two properties in order to establish its approximation factor:

\begin{lemma}[Lemma 3.1 of \cite{SolomonUZ24}]\label{lem:approx-factor-full}
Suppose $S$ is a hierarchical solution that is $\eps$-tidy and $\eps$-stable at time-step $t$. Then, $|S| \le (1+O(\eps))\ln n\cdot \OPT(L(t))$. 
\end{lemma}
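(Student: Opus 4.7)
The plan is to prove this lemma by LP duality / dual fitting, mirroring the classical greedy analysis of static set cover. Consider the LP relaxation of minimum set cover on $L(t)$: primal variables $x_s$ for $s \in \calS$ with covering constraints $\sum_{s \ni e} x_s \ge 1$ for each $e \in L(t)$, and dual variables $y_e$ with packing constraints $\sum_{e \in s \cap L(t)} y_e \le 1$ for each $s \in \calS$. By weak duality, it suffices to exhibit nonnegative duals $(y_e)_{e \in L(t)}$ and constants $H = (1+O(\eps))\ln n$, $c = 1 - O(\eps)$ such that (i) $\sum_{e \in s \cap L(t)} y_e \le H$ for every $s \in \calS$, and (ii) $\sum_e y_e \ge c \cdot |S|$. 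Then $|S| \le (H/c) \cdot \OPT(L(t)) = (1+O(\eps)) \ln n \cdot \OPT(L(t))$, as desired.

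For the dual assignment, I would set $y_e := (1+\eps)^{-(\plev^S(e)+1)}$ for each $e \in L(t)$ (with an optional harmonic-type normalization absorbed into $H$); this is well-defined because feasibility of $S$ forces $L(t) \subseteq \cov(S)$, so $\plev^S(e)$ exists. To verify (i), fix $s \in \calS$ and rewrite $\sum_{e \in s \cap L(t)} y_e$ as a weighted sum over passive-level buckets. Using the passive-level invariant $\lev(e) \le \plev(e)$, the partial sum $|\{e \in s \cap L(t) : \plev(e) \le k\}|$ is captured by $s \cap A^S_k$ together with a tail of elements whose level exceeds $k$ (which are handled by the later buckets). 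An Abel-summation argument applied to the per-level stability bound $|s \cap A^S_k| < (1+\eps)^{k+1}$ collapses the geometric factors, leaving an $O(\lmax) = O(\ln n)$ sum; with a finer harmonic-number normalization this sharpens to $(1+O(\eps))\ln n$.

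For the value bound (ii), I would group $\sum_e y_e$ by the covering set $\sigma(e) \in S$ of each element. For $s \in S$ at level $k$, every $e \in \cov(s)$ has $\lev(e) = k$, and $|\cov(s)| \ge (1+\eps)^k$ by the level invariant. The \emph{universally passive} elements in $\cov(s)$ (those with $\plev(e)=\lev(e)=k$) contribute the maximal per-element weight $(1+\eps)^{-(k+1)}$, while $\eps$-tidiness ($|P^S_k| \le \eps|A^S_k|$) both caps the loss to dormant elements in $\cov(s)$ and controls the deficit from live elements with $\plev(e) > k$. Together these should give $\sum_{e \in \cov(s) \cap L(t)} y_e \ge (1 - O(\eps)) \cdot |\cov(s)|/(1+\eps)^{k+1} \ge (1-O(\eps))/(1+\eps)$ for each $s \in S$; summing over $s \in S$ yields (ii) after folding $1/(1+\eps) = 1 - O(\eps)$ into the constant.

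The main obstacle I expect is the genuine tension between (i) and (ii): pushing $\plev(e)$ down improves the per-set lower bound but blows up the packing sums, and vice versa. The conditions $\eps$-stability and $\eps$-tidiness are precisely the minimal invariants needed to jointly control both sides globally, so the technical heart of the proof will be a careful level-by-level bookkeeping that amortizes the contribution of the ``dirty'' elements (those with $\plev > \lev$ or dormant) against the bulk of the ``clean'' universally-passive live elements.
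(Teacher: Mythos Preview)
The dual-fitting framework is right, but your specific choice $y_e=(1+\eps)^{-(\plev(e)+1)}$ does not work, and the error stems from reading the active/passive roles backwards. By the passive-level invariant, dormant elements and ``stale'' live elements are exactly the \emph{universally passive} ones ($\plev(e)=\lev(e)$), and $\eps$-tidiness says $|P_k^S|\le\eps|A_k^S|$, i.e.\ passive elements are \emph{rare} and active elements are the bulk. Your step (ii) assumes the opposite: you argue that most of $\cov(s)$ consists of universally passive elements carrying the maximal weight $(1+\eps)^{-(k+1)}$, with tidiness controlling the ``deficit'' from active ones. In fact most elements in $\cov(s)$ are active with $\plev(e)>k$ (possibly $\gg k$), so your $y_e$ can be as small as $(1+\eps)^{-(\lmax+1)}$ on almost all of $\cov(s)$, and $\sum_{e\in\cov(s)\cap L(t)}y_e$ need not be bounded below by any constant. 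Dually, in step (i) your Abel-summation reduces to bounding $|s\cap\{e:\plev(e)\le k\}|=|s\cap P_k^S|$ per set, but stability only controls $|s\cap A_k^S|$; there is no per-set bound on passive elements.

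The fix is to use the dual $y_e=(1+\eps)^{-\lev(e)}-(1+\eps)^{-\plev(e)}$, which vanishes on universally passive (in particular dormant) elements and satisfies the identity
\[
y_e=\eps\sum_{k}(1+\eps)^{-(k+1)}\mathbb{1}[e\in A_k^S],
\]
since $e\in A_k^S$ iff $\lev(e)\le k<\plev(e)$. Then for every $s\in\calS$, stability gives $\sum_{e\in s}y_e=\eps\sum_k(1+\eps)^{-(k+1)}|A_k^S\cap s|<\eps(\lmax+1)=(1+O(\eps))\ln n$. And summing over all elements, $\sum_e y_e=\eps\sum_k(1+\eps)^{-(k+1)}|A_k^S|\ge\frac{\eps}{1+\eps}\sum_k(1+\eps)^{-(k+1)}|C_k^S|=\frac{1}{1+\eps}\sum_e(1+\eps)^{-\lev(e)}\ge\frac{|S|}{1+\eps}$, where the first inequality is $\eps$-tidiness and the last is the level invariant. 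Weak duality then yields the bound. (The paper itself does not give a proof; it imports the lemma from \cite{SolomonUZ24}, whose argument is along these lines.)
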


We remark that the constant $\eps$ is merely for analysis.
We want to call \Cref{lem:approx-factor-full} directly, which depends on the parameter $\eps$ for measuring tidy and stable properties, as well as defining the level invariant of hierarchical solutions.
We simply set $\eps=0.5$ to define the level invariant, and we will establish the properties for some constant $\le \eps$ to obtain $O(\log n)$ approximation. 

In our analysis of the approximation bound, we will establish the $\eps$-stable and $\eps$-tidy properties of our algorithm and then invoke this lemma.
%
%
Overall, our goal is an algorithm that achieves $O(\log n)$ approximation ratio, $O(\log n)$ worst case recourse and $O(f\log^3 n)$ worst case update time. 

\section{Technical Overview}\label{sec:overview}
To highlight the main conceptual ideas behind our approach, throughout this section we don't try to optimize the precise polylogarithmic factors in our recourse and update time bounds. Instead, we focus on achieving $O(\log n)$-approximation, $\tO(1)$ worst-case recourse and $\tO(f)$ worst-case update time. Furthermore, for ease of exposition, we ignore all the low-level details about the data structures maintained by our algorithm.

{\bf Roadmap.} In \Cref{sec:basic:greedy}, we present a basic algorithmic template, which we refer to as ``catch-up greedy'', that serves as a very useful subroutine for designing dynamic set cover algorithms with worst-case update time~\cite{SolomonUZ24}. Subsequently, in \Cref{sec:review}, we present (our interpretation of) the dynamic set cover algorithm of~\cite{SolomonUZ24}, and explain why it achieves $O(\log n)$-approximation ratio and $\tO(f)$ worst-case update time, {\bf but incurs large worst-case recourse}. We omit all the technical proofs (of lemmas, corollaries etc.) in  \Cref{sec:basic:greedy} and \Cref{sec:review}, because they follow, either explicitly or implicitly, from the work of~\cite{SolomonUZ24}.

In \Cref{sec:reduction}, we present our first original insight in this paper, which shows that for our purpose it suffices to design a dynamic set cover algorithm with $O(\log n)$-approximation ratio, $\tO(f)$ worst-case update time and $\tO(1)$ worst-case {\bf insertion recourse} (which measures how many sets get inserted into the maintained solution per time-step). Motivated by this insight, we develop our overall dynamic algorithms in three stages; they are presented in \Cref{sec:take1}, \Cref{sec:take2} and \Cref{sec:take3}, respectively, culminating in \Cref{th:overview:main}.

\subsection{A ``Catch-Up Greedy'' Algorithm}
\label{sec:basic:greedy}

Imagine a scenario where we wish to run a {\em static} algorithm for set cover on an input instance. During the execution of the algorithm, however, the input keeps getting updated (via insertions/deletions of elements) on the fly. We want to ensure the property that when the algorithm terminates, it still returns an $O(\log n)$-approximate set cover solution on the current input (which is different from the original input at the time the algorithm started). Our goal in this section is to explain that the standard greedy algorithm for set cover indeed satisfies this property, provided the {\em speed} at which it runs is sufficiently large compared to the speed at which the input keeps getting updated. We will refer to the greedy algorithm in this setting as {\bf catch-up greedy}. Before elaborating further, we need to set up the stage by introducing the following key notion.

\paragraph{Lazily Updated Hierarchical Solution.}   We say that a subset $X \subseteq L$ of live elements is {\bf slowly evolving} at time-step $t$ iff  $|X(t) \oplus X(t-1)| \leq 1$,\footnote{The symbol $\oplus$ denotes the symmetric difference between two sets.} and  a  hierarchical solution $\bH$ is {\bf contained within} $X$ iff $L^{\bH} \subseteq X$. 

Now, consider a hierarchical solution $\bH$ and a subset of live elements $X$ such that: (i) $X$ is slowly evolving at time-step $t$ and (ii) $\bH$ is contained within $X$ just before time-step $t$. Then, we say that $\bH$ is {\bf lazily updated} at time-step $t$ w.r.t.~$X$ iff  it undergoes the changes described in \Cref{alg:greedy:lazy:0} below.

\begin{algorithm}
\caption{Lazily updating hierarchical solution $\bH$ w.r.t.~$X$ at time-step $t$}
    \eIf{$X(t) = X(t-1)$}{
        $\bH$ remains the same, i.e., $\bH(t) = \bH(t-1)$ \;
    }
    {
       Let $e := X(t) \oplus X(t-1)$, where $\oplus$ denotes the symmetric difference between two sets \;
       \eIf{$e = X(t-1) \setminus X(t)$}{
                $\plev^{\bH}(e) \gets \lev^{\bH}(e)$ \; 
       }
       {        $e = X(t) \setminus X(t-1)$ \; 
                \eIf{there is at least one set in $\bH$ that contains $e$}{
                    Let $s \in \bH$ be the set containing $e$ with largest $\lev^{\bH}(s)$ \;
                    Assign $e$ to $s$, i.e., $e$ is added to $\cov^{\bH}(s)$ (but we don't change $\lev^{\bH}(s)$) \;
                    Set $\plev^{\bH}(e) := \lev^{\bH}(e) := \lev^{\bH}(s)$ \;
                }
                {
                    Let $s$ be any set that contains $e$ \label{new:1} \;
                    Add $s$ to $\bH$ at level $0$ \label{new:2} \;
                    Set $\cov^{\bH}(s) := \{e\}$ \label{new:3} \;
                    Set $\plev^{\bH}(e) := \lev^{\bH}(e) := \lev^{\bH}(s) =0$ \label{new:4} \;
                }
        }
    }
\label{alg:greedy:lazy:0}
\end{algorithm}

\paragraph{Catch Up Greedy.} Suppose that we are given a {\bf starting time-step} $t$, a level $k \in [0, \ell_{\max}]$, and a subset $X \subseteq L$ of live elements that is  slowly evolving from time-step $t$ onward. For our purpose, the set $X$ will always correspond to a subset of  live elements in some hierarchical solution. Accordingly,  we let $\plev^{X}(e)$  denote the passive level of an element $e \in X$ in the concerned hierarchical solution that generates $X$. In addition, we are given access to another hierarchical solution $\bH$ that is initially (i.e., in the beginning of time-step $t$) empty. We refer to $(t, k, X)$ as the {\bf source}, and $\bH$ as the {\bf target} of the following computational task: 

We have to ensure that at a not-too-distant future time-step $\tend$ of our own choosing, the hierarchical solution $\bH(\tend)$ is a feasible and $O(\log n)$-approximate  set cover on $X(\tend)$, and additionally that  $\lev^{\bH}(e) \leq k+1$ for all elements $e \in \cov(\bH)$. Moreover, we can only perform $\tO(f)$ units of computation per time-step, i.e., the worst-case update time of the procedure is $\tO(f)$.

\medskip
\noindent {\bf Remark.} For technical reasons, we need to consider the parameter $k$ as part of our target. In the ensuing discussions, the reader might find it more intuitive to think of the scenario where $k = \ell_{\max}$, in which case we trivially have $\lev^{\bH}(e) \leq k+1$ for all elements $e \in \cov(\bH)$.


\begin{algorithm}
\caption{A catch up greedy algorithm with source $(t, k, X)$ and target $\bH$}
    $p_k\gets k+1$\;
    \While{the uncovered sub-universe $X \setminus \cov(\bH)$ is nonempty}{
    Find the set $s$ that covers the maximum number of uncovered elements in the sub-universe, i.e. 
    $$s = \arg\max_{s\in \calS} |s\cap (X\setminus \cov(\bH))|.$$
    
    Add $s$ to $\bH$, with $\cov^{\bH}(s)\gets s\cap (X \setminus \cov(\bH))$. \;
    $\lev^{\bH}(s) \gets \min\left\{p_k, \lfloor \log_{1+\eps}|\cov(s)|\rfloor\right\}$.\;

        \For{For each $e\in \cov^{\bH}(s)$}
        {$\lev^{\bH}(e)\gets \lev^{\bH}(s)$.\; 
        \If{$\plev^{\bH}(e)$ is not already defined}{  $\plev^{\bH}(e) \leftarrow \max\left\{k+1, \plev^{X}(e)\right\}$ \;
        }
        }
        Update $p_k \leftarrow \min\left\{p_k, \lev^{\bH}(s) \right\}$.
}\label{alg:greedy:catchup}
\end{algorithm}

We wish to run the greedy algorithm for set cover on the input $X$, as specified in \Cref{alg:greedy:catchup}, but the issue is that the input $X$ itself is evolving, albeit slowly, over time. To cope with this challenge, we take a very natural approach: We ensure that $\bH$ is contained within $X$ from time-step $t$ onward (this is trivially true at time-step $t$, when $\bH$ is empty). Moreover, in the beginning of time-step $t$, we initialize a {\bf copy pointer} $p_k \leftarrow k+1$, which would henceforth keep monotonically decreasing over time.  Subsequently, during each time-step $t' \geq t$, we perform the following operations.

\begin{algorithm}
\caption{Updating $\bH$ at time-step $t'$, while running catch-up greedy with source $(t, k, X)$}
    \eIf{$X(t') = X(t'-1)$}{
        $\bH$ remains the same, i.e., $\bH(t') = \bH(t'-1)$ \;
    }
    {
       Let $e := X(t') \oplus X(t'-1)$, where $\oplus$ denotes the symmetric difference between two sets \;
       \eIf{$e = X(t'-1) \setminus X(t')$}{
                \If{$e \in \cov(\bH)$}{
                                   $\plev^{\bH}(e) \gets \lev^{\bH}(e)$ \;  
                }
       }
       {        $e = X(t') \setminus X(t'-1)$ \; 
                \eIf{there is at least one set in $\bH$ that contains $e$}{
                    Let $s \in \bH$ be the set containing $e$ with largest $\lev^{\bH}(s)$ \;
                    Assign $e$ to $s$, i.e., $e$ is added to $\cov^{\bH}(s)$ (but we don't change $\lev^{\bH}(s)$) \;
                    Set $\plev^{\bH}(e) := \lev^{\bH}(e) := \lev^{\bH}(s)$ \;
                }
                {
                   Set $\plev^{\bH}(e) \gets p_k$ \; 
                }
        }
    }
\label{alg:greedy:lazy}
\end{algorithm}

\begin{itemize}
\item First, we  update $\bH$ as per the procedure in \Cref{alg:greedy:lazy}. 
\item Then, we execute the next $\log n$ steps\footnote{All logarithms are base 2 in default.} of computation as specified by \Cref{alg:greedy:catchup}.
This steps only measures the operations to process elements in the inner for loop, which is the bottleneck of \Cref{alg:greedy:catchup}.
\end{itemize}
The algorithm terminates when $\bH$ becomes a feasible hierarchical solution w.r.t.~the input $X$, i.e., when $L^{\bH} = X$.


We now observe a few basic properties.
\begin{itemize}
\item The algorithm covers $\log n$ elements from $X$ per time-step. 
We refer to this as the {\em speed} of the catch up greedy algorithm.
(In contrast, the set $X$ itself changes by at most one element per time-step.)
\item It assigns sets in $\bH$ to levels that decrease monotonically over time. Specifically, suppose that the sets $s_1$ and $s_2$ get added to $\bH$ at time-steps $t_1$ and $t_2$, respectively. If $t_1 < t_2$, then $\lev^{\bH}(s_1) \geq \lev^{\bH}(s_2)$. 
\item It essentially implements the static greedy algorithm on $X$, which picks the sets in decreasing order of their marginal coverages, with the computation being spread out across a sequence of time-steps.
\item The algorithm can be implemented in $\tO(f)$ worst-case update time. The bottleneck is due to maintaining some data structure for number of uncovered elements in each set. In each time-step, the algorithm processes $\tO(1)$ elements, and each element is incident to at most $f$ sets.
\item $\lev^{\bH}(e) \leq k+1$ for all elements $e \in \cov(\bH)$.
\end{itemize}

\medskip
\noindent {\bf Extension of the Target Hierarchical Solution $\bH$.} Consider any time-step  $t'$ 
 during the lifetime of a catch-up greedy algorithm, with source $(t, k, X)$ and target $\bH$, as described above. We use the symbol $\tbH(t')$ to define another hierarchical solution, which we refer to as the {\bf extension of $\bH$} at time-step $t'$, as follows: $\tbH$ is the hierarchical solution obtained by continuing to run \Cref{alg:greedy:catchup} at the current time-step $t'$ (i.e., in a static setting, without any element insertion or deletion) until \Cref{alg:greedy:catchup} completes execution. It is easy to verify that $\bH(t') \subseteq \tbH(t')$ at every time-step $t'$, and that $\bH = \tbH$ when the catch-up greedy algorithm terminates. We will crucially use the notion of such an {\bf extended hierarchical solution $\tbH$} in our analysis.

The lemma below summarizes the key guarantees of the catch up greedy algorithm, and it follows as a direct consequence of the properties summarized above.
{\bf Throughout this technical overview section, we use $\epsilon$ as an arbitrarily small constant.} 

\begin{lemma}
\label{lm:catchup:end}
The catch up greedy algorithm described above terminates at time-step $\tend \le  t   + \frac{2}{\log n} \cdot |X(t)|$. Furthermore, at time-step $t \in \left[ t, \tend\right]$, the following conditions hold.
\begin{enumerate}
\item \label{lm:catchup:end:1} $\tbH$ is $\epsilon$-tidy at every level $k' \in [0, k]$.
\item \label{lm:catchup:end:2} $\tbH$ is $\epsilon$-stable.
\item \label{lm:catchup:end:3} $|\bH|\le |\tbH| = O(\log n) \cdot \text{{\sc OPT}}(X)$, where $\text{{\sc OPT}}(X)$ denotes the size of the minimum set cover on input $X$.
\end{enumerate}
\end{lemma}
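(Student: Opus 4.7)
The plan is to establish the three claims in order, leveraging the fact that the extended solution $\tbH$ is essentially the output of the static greedy set cover algorithm applied to $X(t')$, subject to the level-rounding rule of \Cref{alg:greedy:catchup}. Before going further, I would fix notation: let $U(t') := X(t') \setminus \cov(\bH(t'))$ denote the uncovered sub-universe at the end of time-step $t'$.

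\textbf{Termination time.} I would first argue by a simple potential argument on $|U|$. Each time-step, the inner for-loop of \Cref{alg:greedy:catchup} processes $\log n$ uncovered elements, while the lazy update in \Cref{alg:greedy:lazy} can add at most one new element to $U$ (only when the inserted element is not already covered by $\bH$). Hence $|U(t')| \le |U(t)| - (\log n - 1)(t'-t) \le |X(t)| - (\log n - 1)(t'-t)$, and termination ($|U|=0$) occurs by $\tend - t \le |X(t)|/(\log n - 1) \le 2|X(t)|/\log n$ for sufficiently large $n$.

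\textbf{Stability of $\tbH$.} Fix any $s \in \calS$ and level $k'$. The key invariant of greedy is that every set $s'$ added to $\tbH$ is, at that moment, a maximum marginal-coverage set (subject to the cap $p_k$), and its rounded level satisfies $|\cov^{\tbH}(s')| \ge (1+\eps)^{\lev^{\tbH}(s')}$. Consequently, right before the first set of level $\le k'$ is picked in the (extended) greedy, every $s\in \calS$ must have strictly fewer than $(1+\eps)^{k'+1}$ uncovered elements — else greedy would have selected it at level $\ge k'+1$. The elements of $s$ ending up at level $\le k'$ in $\tbH$ form a subset of these remaining uncovered elements of $s$, so $|A^{\tbH}_{k'} \cap s| < (1+\eps)^{k'+1}$.

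\textbf{Tidiness of $\tbH$ at levels $k' \in [0,k]$.} Here I would distinguish between the two ways an element can join $\cov(\tbH)$. If $e$ is covered by the main greedy procedure, then \Cref{alg:greedy:catchup} sets $\plev^{\tbH}(e) \ge \max(k+1,\plev^X(e)) \ge k+1 > k'$, so $e \in A^{\tbH}_{k'}$ and not $P^{\tbH}_{k'}$. The only contributions to $P^{\tbH}_{k'}$ come from elements inserted by the lazy update and attached to an already-present set of level $\le k'$ (the branch where $\plev^{\bH}(e) := \lev^{\bH}(e)$ in \Cref{alg:greedy:lazy}). The total number of such lazy insertions across the lifetime of the algorithm is at most $\tend - t = O(|X(t)|/\log n)$. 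Pairing this count against $|A^{\tbH}_{k'}|$, which by the level invariant contains at least $(1+\eps)^{\lev(s)}$ active elements per greedy-placed set, gives the bound $|P^{\tbH}_{k'}| \le \eps \cdot |A^{\tbH}_{k'}|$ provided enough greedy-covered mass exists at the relevant levels.

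\textbf{Approximation.} Finally, $|\bH| \le |\tbH|$ is immediate from $\bH \subseteq \tbH$. The bound $|\tbH| = O(\log n) \cdot \OPT(X)$ is the classical greedy-set-cover analysis: the $\epsilon$-stability proved above is precisely a dual-feasibility statement (each set's dual load is $\le (1+\eps)^{k'+1}$), and the level-assignment rule rounds this into a feasible fractional dual of total value $\Omega(|\tbH|/\log n)$, matching $\OPT$ up to the $(1+\eps)$-rounding loss.

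\textbf{Main obstacle.} The tidiness claim is the most delicate step: it requires carefully separating the two sources of coverage (greedy vs.\ lazy) and exploiting the fact that the lazy updates are ``slow'' relative to the greedy speed, uniformly across all levels $k' \le k$. The other parts reduce to fairly standard greedy/LP-duality accounting once the extension $\tbH$ is set up correctly.
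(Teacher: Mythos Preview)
Your termination, stability, and approximation sketches are essentially correct and match the paper's approach (the paper itself defers this lemma to prior work, but the analogues in Part~II, Lemmas~6.7--6.11, proceed exactly along these lines).

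The tidiness argument, however, has a real gap. You bound $|P^{\tbH}_{k'}|$ by the \emph{global} number of lazy updates, $\tend - t = O(|X(t)|/\log n)$, and then want to compare this against $|A^{\tbH}_{k'}|$. But $|A^{\tbH}_{k'}|$ can be tiny compared to $|X(t)|/\log n$ --- think of $k'=0$, where there may be only a handful of level-$0$ elements while the global lifetime is $\Theta(|X(t)|/\log n)$. Your hedge ``provided enough greedy-covered mass exists at the relevant levels'' is exactly where the argument breaks. You also undercount the sources of passive elements: deletions (which set $\plev \gets \lev$) and insertions of uncovered elements (which set $\plev \gets p_k$) can each land an element in $P^{\tbH}_{k'}$, not just insertions attached to an existing low-level set.

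The fix the paper uses (see Lemma~6.7) is to work level-by-level via the copy pointer $p_k$. Let $t_{k'}$ be the first time-step at which $p_k$ drops to $\le k'$. Before $t_{k'}$ no element can receive $\plev^{\tbH}\le k'$ at all (every path that assigns $\plev$ gives a value $\ge p_k > k'$), so $P^{\tbH}_{k'}=\emptyset$ up to $t_{k'}$. From $t_{k'}$ onward, at most one element per time-step can acquire $\plev\le k'$, so $|P^{\tbH}_{k'}(t')|\le t'-t_{k'}$. Crucially, from $t_{k'}$ onward greedy only places sets at levels $\le k'$, so it is now covering $\log n$ elements per step \emph{all of which land in $C^{\tbH}_{k'}$}, giving $|C^{\tbH}_{k'}(t')|\ge (\log n)(t'-t_{k'})$. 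Dividing yields $|P^{\tbH}_{k'}|\le \frac{1}{\log n}|C^{\tbH}_{k'}|$, hence $\eps$-tidiness for constant $\eps$. The point is that both sides of the tidiness ratio are controlled by the \emph{same} level-specific clock $t'-t_{k'}$, not the global lifetime.
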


\subsection{Achieving Good Worst-Case Update Time: An Overview of \cite{SolomonUZ24} Algorithm}
\label{sec:review}

We now summarize the main insight behind how \cite{SolomonUZ24} achieves $\tO(f)$ worst case update time for maintaining a $O(\log n)$-approximate minimum set cover, albeit with a trivial worst case recourse bound of $O(m)$ where $m$ is the number of sets.  In a bit more detail, the algorithm maintains two different sets of hierarchical  solutions, some of which correspond to different levels. 
The solutions are:
\begin{itemize}
    \item[-] a {\bf foreground} solution $\bF$, and 
    \item[-] a set of $\ell_{\max} +1$ {\bf background} solutions $\bB_k$ for levels $k\in [0,\ell_{\max}]$.
\end{itemize}  

The output of the dynamic algorithm comprises only of the foreground solution, which is feasible for the set of live elements $L(t)$ at each time-step $t$ (i.e., $L^{\bF}(t) = L(t)$). Moreover, the foreground solution $\bF$ gets lazily updated w.r.t.~$L$ in the beginning of every time-step. Accordingly, the subset of live elements $L^{\bF}_k$ is also slowly evolving, for all $k \in [0, \ell_{\max}]$. 

We next focus on the background solution $\bB_k$ for a given level $k \in [0, \ell_{\max}]$. This background solution evolves in {\bf phases}, and is maintained by a subroutine that we refer to as {\bf background thread} $T_k$. Each phase lasts for a sequence of consecutive time-steps. Let us consider a specific phase that starts at (say) time-step $\tstart$. In the beginning of time-step $\tstart$, the hierarchical solution $\bB_k$ is empty. Throughout the duration of the phase, we run a catch up greedy algorithm with source $(t, k, L_k^{\bF})$ and target $\bB_k$. The phase ends (say) at time-step $\tend \geq \tstart$, because of one of the following two reasons.
\begin{enumerate}
\item The catch up greedy algorithm for $\bB_k$ completes execution at time-step $\tend$. We refer to this event as a {\bf normal termination} of the background thread $T_k$. In this event, we first {\bf switch} the background solution $\bB_k$ to the foreground, by removing all sets in $\bF$ at levels $\leq k$ and replacing them with all the sets in the corresponding levels in $\bB_k$. Additionally, $\bB_k$ might have sets at level $k+1$, which are now merged into level $k+1$ in the foreground solution $\bF$. Next, for each level $k' \in [0, k-1]$, we reset $\bB_{k'}$ to be empty and terminate the background thread $T_{k'}$. We say that the thread $T_{k}$ {\bf naturally terminates}, whereas for each $k' \in [0, k-1]$ the thread $T_{k'}$ {\bf abnormally terminates}. From the next time-step $\tend+1$ onward, each of these threads $T_0, \cdots, T_k$ {\bf restarts} a new phase.
\item A higher background thread $T_{k'}$, with $k' \in [k+1, \ell_{\max}]$, normally terminates at time-step $\tend$, and this leads to an abnormal termination of the thread $T_k$.
\end{enumerate}
The above discussion makes it clear that we need a tie-breaking rule, for the scenario when we have a collection $\mathcal{T}$ of multiple threads that complete execution at the same time-step. In such a scenario, we select the thread $T_k \in \mathcal{T}$ with the largest index $k$ for normal termination, which in turn leads to abnormal terminations of all threads in $\{ T_{k'} : k' < k \} \supseteq \mathcal{T} \setminus \{T_k\}$.

Using appropriate pointers and data structures that support merging, switching a background thread to the foreground can be handled in $\tO(1)$ time. {\bf Such a switch, however, might potentially lead to $\Omega(m)$ worst-case recourse}, $m$ being the total number of sets in the input. Furthermore, since each background thread runs in $\tO(f)$ worst-case update time, and there are $\ell_{\max}+1 = O(\log n)$ background threads, we conclude that:

\begin{corollary}
\label{cor:new:udpate:time}
The algorithm described above has an overall worst-case update time of $\tO(f)$.
\end{corollary}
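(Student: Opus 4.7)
The plan is to decompose the per-time-step work of the algorithm into three components and bound each separately. The components are: (i) lazily updating the foreground solution $\bF$ in response to the current element insertion/deletion in $L$; (ii) advancing each of the $\ell_{\max}+1 = O(\log n)$ background threads $T_0,\ldots,T_{\ell_{\max}}$ by one step of its catch-up greedy procedure; and (iii) handling any termination events that fire at this time-step, namely switching a normally-terminated thread to the foreground and resetting the lower abnormally-terminated threads.

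For component (i), the lazy-update procedure in \Cref{alg:greedy:lazy:0} performs $O(1)$ logical work per incoming element update (find a containing set of highest $\lev$, update $\cov$/$\lev$/$\plev$ entries). Looking up the containing set of highest level requires scanning the at most $f$ sets incident to the updated element, and the bookkeeping per set is $\tO(1)$; hence this component runs in $\tO(f)$ time. For component (ii), each background thread $T_k$ runs a catch-up greedy instance whose per-time-step cost is already $\tO(f)$: it processes $\log n$ coverage steps per time-step, and each step touches one chosen set $s$ together with the $\tO(1)$ entries of $\cov(s)$ added in that step, maintaining the max-marginal-coverage data structure at a cost of $O(f)$ per processed element (since each element lies in at most $f$ sets whose ``uncovered count'' may need to be decremented). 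Summing over the $\ell_{\max}+1 = O(\log n)$ background threads, component (ii) contributes $O(\log n) \cdot \tO(f) = \tO(f)$.

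For component (iii) — the subtle point — I would argue that all switching and resetting is implementable in $\tO(1)$ per affected thread via pointer manipulation. When a thread $T_k$ naturally terminates, the completed hierarchical structure $\bB_k$ replaces the levels $\leq k$ of $\bF$; the entire replacement is realized by rewiring a constant number of pointers from $\bF$'s level-$k'$ lists ($k' \leq k$) to those already stored inside $\bB_k$, together with a constant-time concatenation of the level-$(k+1)$ list of $\bB_k$ into that of $\bF$ (using doubly-linked lists or a standard meldable structure). No element or set is individually inspected during the switch. For each abnormally terminated lower thread $T_{k'}$ with $k' < k$, resetting $\bB_{k'}$ to empty is a single pointer reassignment; the discarded structure is dropped wholesale. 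With at most $O(\log n)$ threads terminating in one time-step, component (iii) costs $O(\log n) \cdot \tO(1) = \tO(1)$.

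Adding the three components gives an overall worst-case update time of $\tO(f) + \tO(f) + \tO(1) = \tO(f)$, which is the claim. The main obstacle I expect is the data-structure design underlying component (iii): one must fix a representation of $\bF$ and the $\bB_k$'s (indexed lists of sets by level, with auxiliary indices for $\cov$, $\lev$, $\plev$) such that a switch truly costs $\tO(1)$, and, crucially, that the foreground left behind after the switch still supports the $\tO(f)$-time lazy updates and $\tO(f)$-time catch-up greedy queries of subsequent time-steps. Once this representation is pinned down and the tie-breaking rule (highest index among simultaneously terminating threads wins) is implemented so that abnormal terminations are processed in $O(1)$ apiece, the arithmetic above yields the corollary.
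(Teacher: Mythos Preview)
Your proposal is correct and follows essentially the same reasoning as the paper: the paper's justification for this corollary is simply the sentence preceding it, which says that each of the $\ell_{\max}+1=O(\log n)$ background threads runs in $\tO(f)$ worst-case update time and that switching a background solution to the foreground is $\tO(1)$ via pointer manipulation. Your three-component decomposition is a more detailed unpacking of exactly this argument; the only minor slip is that the switch touches $O(\log n)$ level pointers rather than a ``constant number,'' but this is still $\tO(1)$ and does not affect the conclusion.
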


Finally, using \Cref{lm:catchup:end}, we can argue that if a background thread $T_k$ starts a new phase at (say) time-step $t$, then that phase ends (due to either a normal or an abnormal termination) by time-step $t^\star \leq t + \frac{2}{\log n} \cdot \left| L_k^{\bF}(t) \right|$. This observation, along with \Cref{lm:catchup:end}, in turn gives us the following corollary. 
\begin{corollary}
\label{cor:new:thread}
Consider any background thread $T_k$. 
\begin{enumerate}
\item \label{cor:new:1} If $T_k$ enters a new phase at time-step $t$, then that phase lasts for at most $\frac{2}{\log n} \cdot \left| L_k^{\bF}(t) \right|$ time-steps.
\item \label{cor:new:2} The extended hierarchical solution $\tmB_k$ is always $\epsilon$-tidy at every level $k' \in [0, k]$, and $\epsilon$-stable. 
\item \label{cor:new:3} We always have $|\bB_k| = O(\log n) \cdot \text{{\sc OPT}}(L_k^{\bF})$, where $\text{{\sc OPT}}(L_k^{\bF})$ is  the minimum set-cover size on input $L_k^{\bF}$.
\end{enumerate}
\end{corollary}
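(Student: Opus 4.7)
The plan is to reduce all three items to the catch-up greedy guarantees from Lemma~\ref{lm:catchup:end} by carefully identifying the source and the target of the catch-up greedy instance that powers thread $T_k$. By construction of the algorithm, a phase of $T_k$ that starts at time-step $\tstart$ runs catch-up greedy with source $(\tstart, k, L_k^{\bF})$ and target $\bB_k$, and the extended solution $\tmB_k$ is exactly the extension $\tbH$ from the definition preceding Lemma~\ref{lm:catchup:end}. Crucially, $L_k^{\bF}$ is slowly evolving across the phase because the foreground $\bF$ is lazily updated with respect to $L$ in every time-step, so the catch-up greedy primitive is being invoked on a valid source. This checks that all the hypotheses of Lemma~\ref{lm:catchup:end} are in force.

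For item~\ref{cor:new:1}, I would observe that a phase ends either by a normal termination of $T_k$ itself or by an abnormal termination triggered by some higher thread $T_{k'}$ with $k' > k$ terminating normally. By item~\ref{lm:catchup:end:1}-flavored guarantees of Lemma~\ref{lm:catchup:end}, applied to the source $(\tstart, k, L_k^{\bF})$, a normal termination happens by time $\tstart + \frac{2}{\log n}\cdot |L_k^{\bF}(\tstart)|$. An abnormal termination can only happen earlier than the natural termination time (otherwise $T_k$ would already have terminated normally first). Hence the phase has duration at most $\frac{2}{\log n}\cdot |L_k^{\bF}(\tstart)|$ in either case.

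For items~\ref{cor:new:2} and \ref{cor:new:3}, I would apply items \ref{lm:catchup:end:1}--\ref{lm:catchup:end:3} of Lemma~\ref{lm:catchup:end} to the extended solution $\tmB_k$ at every time-step inside the phase. The $\epsilon$-tidy property at every level $k' \in [0,k]$ and the $\epsilon$-stability follow directly. For the cardinality bound, since $\bB_k \subseteq \tmB_k$ (by the very definition of the extension), we get $|\bB_k| \le |\tmB_k| = O(\log n)\cdot \text{\sc OPT}(L_k^{\bF})$ from item~\ref{lm:catchup:end:3} of Lemma~\ref{lm:catchup:end}. The claims hold throughout the phase, and phases are concatenated in time, so the statement ``always'' is obtained by gluing consecutive phases.

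The only step that requires care, and which I expect to be the main obstacle, is handling what happens at phase boundaries. At the instant a thread $T_k$ abnormally terminates and immediately restarts, the solution $\bB_k$ is reset to empty and the source of the new catch-up greedy is re-initialized using the new $L_k^{\bF}$ produced by the switch of some higher thread $T_{k'}$ to the foreground. One must verify that $L_k^{\bF}$ remains slowly evolving across this transition (so that the next call to catch-up greedy is valid), that $\tmB_k$ at that instant is taken with respect to the freshly started catch-up greedy (vacuously satisfying the $\epsilon$-tidy, $\epsilon$-stable and cardinality bounds at time $\tend+1$), and that the ``always'' in items~\ref{cor:new:2} and \ref{cor:new:3} is parsed phase-by-phase. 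Once these book-keeping issues are resolved, the corollary follows by a direct invocation of Lemma~\ref{lm:catchup:end} within each phase.
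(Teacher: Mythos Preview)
Your proposal is correct and follows essentially the same approach as the paper: the corollary is stated immediately after the sentence ``This observation, along with \Cref{lm:catchup:end}, in turn gives us the following corollary,'' and you have correctly identified that all three items are direct consequences of the catch-up greedy guarantees applied to the source $(\tstart,k,L_k^{\bF})$ and target $\bB_k$. Your extra care about phase boundaries is reasonable bookkeeping that the paper (being an overview section) elides.
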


Now, \Cref{cor:new:2} of \Cref{cor:new:thread} guarantees that immediately after switching the background solution $\bB_k$ to the foreground, the foreground solution $\bF$ is $\epsilon$-tidy at every level $k' \in [0, k]$ and $\epsilon$-stable at every level $k' \in [0, k]$. In contrast, \Cref{cor:new:1} of \Cref{cor:new:thread} guarantees that within every $\frac{2}{\log n} \cdot \left| L_k^{\bF}(t) \right| \leq \epsilon \cdot \left| L_k^{\bF}(t) \right|$ time-steps (the inequality holds because $\epsilon$ is an absolute constant), some background solution $\bB_{k'}$ with $k' \geq k$ is switched to the foreground. Since the interval between any two consecutive switches is so short, the foreground solution $\bF$ only accumulates a small number of passive elements within that interval. In particular, we can show that $\bF$ always remain $2\epsilon$-tidy for every level $k \in [0, \ell_{\max}]$, as it keeps getting lazily updated w.r.t.~the set of live elements $L$. Furthermore, the foreground solutions $\bF$ can {\em not} cease to be $\epsilon$-stable at some level $k$ because of a lazy update. The preceding discussion leads to the corollary below.

\begin{corollary}
\label{cor:new:approx}
The foreground solution $\bF$ is $2\epsilon$-tidy and $\epsilon$-stable at every time-step.
\end{corollary}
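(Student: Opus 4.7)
The plan is to verify the $\epsilon$-stability and $2\epsilon$-tidy properties of $\bF$ separately, using the fact that switches at every level occur sufficiently often to reset both invariants and that lazy updates only drift them by a controlled amount in between.

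First, for $\epsilon$-stability, I would argue by induction on time-steps. Immediately after a switch of $\bB_{k^\star}$ to the foreground, \Cref{cor:new:thread}(\ref{cor:new:2}) makes $\bF$ $\epsilon$-stable at every level $k \leq k^\star$; at levels $k > k^\star + 1$ the foreground is unchanged by the switch, so stability at those levels is inherited from before. Between switches, every lazy update preserves $\epsilon$-stability because $|A_k^{\bF} \cap s|$ is monotone non-increasing under lazy updates: a deletion of $e$ sets $\plev^{\bF}(e) = \lev^{\bF}(e)$, which removes $e$ from every $A_k^{\bF}$ it was in, while an insertion always creates a universally passive element (with $\lev = \plev$) that lies in no $A_k^{\bF}$.

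Second, for $2\epsilon$-tidiness, I would fix a level $k$ and track $|P_k^{\bF}(t)|$ and $|A_k^{\bF}(t)|$ between consecutive \emph{tidy resets} at level $k$, where a tidy reset is any switch of $\bB_{k^\star}$ with $k^\star \geq k$; \Cref{cor:new:thread}(\ref{cor:new:2}) guarantees that each such switch leaves $\bF$ $\epsilon$-tidy at level $k$. \Cref{cor:new:thread}(\ref{cor:new:1}) implies that a tidy reset at level $k$ happens at least every $\Delta \leq \frac{2}{\log n}|L_k^{\bF}(t_{\mathrm{reset}})|$ time-steps, where $t_{\mathrm{reset}}$ is the time of the previous reset. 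Each lazy update changes $|P_k^{\bF}|$ by at most $+1$ and $|A_k^{\bF}|$ by at most $-1$, so $|P_k^{\bF}(t)| \leq |P_k^{\bF}(t_{\mathrm{reset}})| + \Delta$ and $|A_k^{\bF}(t)| \geq |A_k^{\bF}(t_{\mathrm{reset}})| - \Delta$. Since every live element at level $\leq k$ lies in $A_k^{\bF} \cup P_k^{\bF}$, tidiness at $t_{\mathrm{reset}}$ implies $|L_k^{\bF}(t_{\mathrm{reset}})| \leq (1+\epsilon)|A_k^{\bF}(t_{\mathrm{reset}})|$, so $\Delta \leq \tfrac{\epsilon}{2}|A_k^{\bF}(t_{\mathrm{reset}})|$ once $\log n$ is a sufficiently large constant relative to $1/\epsilon$. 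Substituting yields $|P_k^{\bF}(t)| \leq \tfrac{3\epsilon}{2}|A_k^{\bF}(t_{\mathrm{reset}})|$ and $|A_k^{\bF}(t)| \geq (1-\tfrac{\epsilon}{2})|A_k^{\bF}(t_{\mathrm{reset}})|$, whence $|P_k^{\bF}(t)| \leq 2\epsilon|A_k^{\bF}(t)|$, as required.

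The main obstacle will be the careful accounting at the boundary level $k^\star + 1$ during a switch, where $\bB_{k^\star}$'s level-$(k^\star+1)$ sets are merged with $\bF$'s old level-$(k^\star+1)$ sets and passive levels from the two hierarchies must be reconciled. I would use the rule $\plev^{\bB_{k^\star}}(e) \leftarrow \max\{k^\star+1, \plev^{X}(e)\}$ from \Cref{alg:greedy:catchup} to show that the elements newly placed at level $\leq k^\star+1$ by the switch do not inflate $|A_{k^\star+1}^{\bF} \cap s|$ beyond the stability bound guaranteed by $\tmB_{k^\star}$ at that level, which is the step needed to extend $\epsilon$-stability to every level rather than only the levels $\leq k^\star$ addressed directly by \Cref{cor:new:thread}(\ref{cor:new:2}).
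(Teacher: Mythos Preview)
Your proposal is correct and follows essentially the same approach as the paper: stability is preserved because lazy updates only shrink the sets $A_k^{\bF}\cap s$, while tidiness is maintained because switches at level $\geq k$ recur within $\frac{2}{\log n}|L_k^{\bF}|$ time-steps and each lazy update perturbs $|P_k^{\bF}|$ and $|A_k^{\bF}|$ by at most one. Your explicit identification of the merge at level $k^\star+1$ as the delicate point, and the use of the rule $\plev^{\bB_{k^\star}}(e)=\max\{k^\star+1,\plev^{X}(e)\}$ to show that $A_{k^\star+1}^{\bF}$ is unchanged by the switch, matches exactly what the paper does in its full proof (\Cref{lem:foreground-stable}); the overview section containing \Cref{cor:new:approx} simply omits this detail.
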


The main result in \cite{SolomonUZ24} now follows from  \Cref{cor:new:udpate:time}, \Cref{lem:approx-factor-full} and \Cref{cor:new:approx}.

\begin{theorem}
\label{th:new:prior}
There is a dynamic $O(\log n)$-approximation for set cover with $\tO(f)$ worst case update time.
\end{theorem}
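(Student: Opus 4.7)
\textbf{Proof proposal for Theorem \ref{th:new:prior}.} The plan is to combine the three ingredients already assembled in the excerpt: the worst-case update time bound from Corollary \ref{cor:new:udpate:time}, the structural guarantees on the foreground solution from Corollary \ref{cor:new:approx}, and the approximation bound from Lemma \ref{lem:approx-factor-full}. Concretely, at each time-step $t$, Corollary \ref{cor:new:approx} gives us that the foreground solution $\bF$ is $2\epsilon$-tidy and $\epsilon$-stable. Setting $\epsilon$ to any sufficiently small absolute constant (e.g.\ $\epsilon = 0.1$), Lemma \ref{lem:approx-factor-full} then yields $|\bF(t)| \le (1+O(\epsilon))\ln n \cdot \OPT(L(t)) = O(\log n)\cdot\OPT(L(t))$, which delivers the approximation ratio. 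Feasibility of $\bF$ (i.e.\ $L^{\bF}(t) = L(t)$) holds by construction since $\bF$ is lazily updated against $L$ in every time-step and the switch-in operations only strengthen coverage. The $\tO(f)$ worst-case update time is immediate from Corollary \ref{cor:new:udpate:time}.

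The step that actually requires care, and which I expect to be the main obstacle, is justifying Corollary \ref{cor:new:approx}: that the foreground solution $\bF$ remains $2\epsilon$-tidy and $\epsilon$-stable \emph{at every time-step}, not merely at the moments when a background thread switches into the foreground. I would argue this in two parts. First, immediately after a switch of some thread $T_k$ into $\bF$ at time $\tend$, Property \ref{cor:new:2} of Corollary \ref{cor:new:thread} (applied to the extended solution $\tmB_k$, which equals $\bB_k$ at normal termination) guarantees $\epsilon$-tidiness at all levels $k' \le k$ and $\epsilon$-stability everywhere. Levels $k' > k$ of $\bF$ are inherited unchanged from the previous foreground and the corresponding previous switch was at some level $\ge k' > k$; by induction on the switch history, those upper levels already satisfied the invariants and stability is preserved because no element was added to $A^{\bF}_{k'}$ intersections there.

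Second, between two consecutive switches, $\bF$ only undergoes lazy updates driven by the element insertions and deletions. I would argue that a lazy update cannot violate $\epsilon$-stability: inserting an element $e$ either assigns it to an existing set $s \in \bF$ at $\lev^{\bF}(s)$ (so $e$ is universally passive and hence contributes to no $A^{\bF}_{k'}$), or places $e$ at level $0$ on a fresh set, again with $\plev = \lev$; deletions only decrease the sets $A^{\bF}_{k'}$. Tidiness can degrade over the interval because some active elements may turn passive, but by Part \ref{cor:new:1} of Corollary \ref{cor:new:thread}, the next switch at some level $k' \ge k$ occurs within $\tfrac{2}{\log n}\cdot |L_{k'}^{\bF}(t)| \le \epsilon \cdot |L_{k'}^{\bF}(t)|$ time-steps, so at most an $\epsilon$-fraction of active elements at any level can convert to passive before being refreshed; combined with the $\epsilon$-tidiness at the moment of the switch, this yields $2\epsilon$-tidiness throughout. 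Chaining these two observations across phases establishes Corollary \ref{cor:new:approx}, and the theorem then follows immediately from Lemma \ref{lem:approx-factor-full} and Corollary \ref{cor:new:udpate:time}.
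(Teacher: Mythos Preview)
Your proposal is correct and follows essentially the same approach as the paper: the paper states that the theorem ``follows from \Cref{cor:new:udpate:time}, \Cref{lem:approx-factor-full} and \Cref{cor:new:approx},'' and your justification of \Cref{cor:new:approx} (stability preserved under lazy updates, tidiness degrading by at most $\epsilon$ between consecutive switches because of the short phase bound in \Cref{cor:new:thread}) mirrors the paper's own sketch of that corollary in the paragraph preceding it.
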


\subsection{A New Reduction from Worst-Case Insertion Recourse}
\label{sec:reduction}

Say that a dynamic set cover algorithm incurs one unit of {\bf insertion recourse} (resp.~{\bf deletion recourse}) whenever a set gets inserted into (resp.~deleted  from) its maintained solution. The overall recourse of the algorithm is the sum of its insertion recourse and deletion recourse. It is easy to verify that a worst-case bound on insertion recourse immediately implies the same bound on deletion recourse, but only in an amortized sense. Our first insight in this paper is the observation that in a black-box manner, we can transform any $O(\log n)$-approximate dynamic (unweighted) set cover algorithm with worst-case $O(\log n)$ insertion recourse into another dynamic algorithm with $O(\log n)$-approximation ratio and $O(\log n)$ worst-case overall recourse. 

This insight is summarized in \Cref{lem:new:recourse-reduction} below. Instead of arguing specifically about our algorithm, the lemma prove a more general property about de-amortizing deletion recourse. Consider any online optimization problem where a solution comprises a set of objects, and the goal is to minimize the number of objects subject to a feasibility constraint that changes online. Let us call such a problem an online unweighted minimization problem. We prove the following for any such problem (for our purpose, set $\alpha = O(\log n)$, $\beta = \tO(1)$ and $\delta = 1$ in the lemma statement). 

\begin{restatable}{lemma}{recourse}\label{lem:new:recourse-reduction}
    Consider an online unweighted minimization problem, where the size of the optimal solution changes by at most $\delta$ in each time-step.
    Suppose there exists an algorithm $\cal A$ that maintains an $\alpha$-approximate solution and has worst-case insertion recourse at most $\beta$, for some parameters $\alpha, \beta\ge 1$. (The worst-case deletion recourse of algorithm $\cal A$ can be arbitrarily large.)
    Then, there exists an alternative algorithm $\cal A'$ that also maintains an $\alpha$-approximate solution and has worst-case recourse (both insertion and deletion) at most $2\beta+\delta\alpha$.
\end{restatable}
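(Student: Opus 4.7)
The plan is to construct $\cA'$ as a simulator of $\cA$ that delays $\cA$'s deletions through a pending queue $P$ of sets. At every time-step $t$, $\cA'$ acts in three stages: (i) mirror each of $\cA$'s insertions into its own solution $S_{\cA'}$ (if the inserted set is already in $P$ because it was previously deleted but not yet processed, $\cA'$ simply removes it from $P$ and performs no actual insertion); (ii) enqueue into $P$ every set that $\cA$ just deleted; (iii) pop up to $\beta + \delta\alpha$ sets from $P$ and actually remove them from $S_{\cA'}$. The invariant $S_{\cA'}(t) = S_\cA(t) \cup P(t)$ is maintained throughout. Feasibility of $S_{\cA'}(t)$ is immediate because it contains the feasible $S_\cA(t)$, and by construction $\cA'$ performs at most $\beta$ insertions and at most $\beta + \delta\alpha$ deletions per step, giving worst-case total recourse at most $2\beta + \delta\alpha$.

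The crux is to bound $|S_{\cA'}(t)| \le \alpha \OPT(t)$. I would fix any time-step $t$ and let $t_0$ be the largest index in $\{0, 1, \dots, t\}$ at which $P$ is empty at the end of stage (iii); this is well-defined because $P(0) = \emptyset$. For every step $s \in (t_0, t]$, the queue is nonempty at the end of $s$, which forces stage (iii) at step $s$ to have popped exactly $\beta + \delta\alpha$ items (otherwise the queue would have drained to zero, contradicting maximality of $t_0$). Consequently, over the window $(t_0, t]$, the number of insertions into $S_{\cA'}$ is at most $\beta(t-t_0)$ while the number of deletions is exactly $(\beta + \delta\alpha)(t - t_0)$. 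Combined with $|S_{\cA'}(t_0)| = |S_\cA(t_0)| \le \alpha \OPT(t_0)$ (using $P(t_0) = \emptyset$) and the hypothesis $\OPT(t_0) \le \OPT(t) + \delta(t-t_0)$, I would conclude
\[
|S_{\cA'}(t)| \;\le\; |S_{\cA'}(t_0)| - \delta\alpha(t-t_0) \;\le\; \alpha\OPT(t_0) - \delta\alpha(t-t_0) \;\le\; \alpha \OPT(t).
\]

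The main conceptual challenge is calibrating the excess deletion rate so that this telescoping closes. The Lipschitz-style bound on $\OPT$ introduces a slack of $\alpha\delta(t-t_0)$ because $\cA$'s approximation is only guaranteed at $t_0$, not at $t$; the algorithm compensates by processing deletions at rate $\beta + \delta\alpha$ instead of the long-run average $\beta$, yielding exactly a net decrease of $\delta\alpha(t-t_0)$ in $|S_{\cA'}|$ that absorbs this slack. A minor bookkeeping point worth verifying is the handling of re-insertions: when $\cA$ reinserts a set already sitting in $P$, removing it from $P$ instead of re-inserting it into $S_{\cA'}$ only shrinks the queue, so it only strengthens the approximation analysis while keeping the insertion recourse of $\cA'$ bounded by that of $\cA$.
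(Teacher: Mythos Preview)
Your proposal is correct and follows essentially the same approach as the paper: both use a garbage/pending set to delay $\cA$'s deletions and purge $\beta+\delta\alpha$ of them per step, and both prove the approximation bound by comparing against the last time the garbage set was empty. The only difference is cosmetic---the paper argues by one-step induction (if the garbage set is nonempty at $t$, then $|S_{\cA'}|$ dropped by at least $\delta\alpha$ from $t-1$), whereas you unroll this induction explicitly over the window $(t_0,t]$; your handling of re-insertions via $P$ is a slightly more careful bookkeeping detail that the paper leaves implicit.
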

\begin{proof}
    We define algorithm $\cal A'$ to simulate $\cal A$, but with the following modification: 
    Whenever $\cal A$ deletes objects from its solution, move those objects to a garbage set instead of deleting them immediately in $\cal A'$. The garbage set is added to the output of $\cal A$ to form the output of $\cal A'$.
    Moreover, $\cal A'$ performs an additional garbage removal operation at the end of the processing in each time-step: delete any $\beta+\delta\alpha$ objects (or all objects if fewer) from the garbage set.
    The recourse bound for $\cal A'$ is straightforward from this description.

    We bound the approximation factor of $\cal A'$ by induction over time. 
    For the base case, note that whenever the garbage set is empty (which is the case at initiation), the solutions of $\cal A$ and $\cal A'$ are identical. Hence, at these time-steps, $\cal A'$ is an $\alpha$-approximation.
    The remaining case is when the garbage set is non-empty at the end of a time-step $t$.
    In this case, $\cal A'$ must have deleted $\beta+\delta\alpha$ objects from the garbage set in time-step $t$. 
    Since $\cal A'$ has insertion recourse at most $\beta$, the solution size at the end of time-step $t$ decreases by at least $\delta\alpha$ compared to that at the end of time-step $t-1$. Combined with the induction hypothesis that the solution was an $\alpha$-approximation at the end of $t-1$, and the assumption that optimal solution size can only change by at most $\delta$ in a time-step, we conclude that the solution is still an $\alpha$-approximation at the end of time-step $t$.
\end{proof}

In the full version (see \Cref{lm:full:reduction}), we show that for our specific dynamic algorithm, the reduction guaranteed by \Cref{lem:new:recourse-reduction} can be implemented while incurring only a $\tO(1)$ factor overhead in the worst-case update time. Accordingly,  henceforth we focus on designing a dynamic set cover algorithm with $O(\log n)$ approximation ratio, $\tO(f)$ worst-case update time and $\tO(1)$  worst-case {\bf insertion recourse}.

\subsection{Our Dynamic Algorithm: Take I}
\label{sec:take1}

Armed with \Cref{lem:new:recourse-reduction}, we propose a very natural modification of the algorithm from \Cref{sec:review}, as described below. As we will soon see, the issue with this approach is that it leads to an approximation ratio of $O(\log^2 n)$.

 The algorithm maintains three different sets of  solutions, some of which correspond to different levels: 
\begin{itemize}
    \item[-] a {\bf foreground} solution $\bF$, and 
    \item[-] a set of $\ell_{\max} +1$ {\bf background} solutions $\bB_k$ for levels $k\in [0,\ell_{\max}]$.
  \item[-] a set of $\ell_{\max} +1$ {\bf buffer} solutions $\bR_k$ for levels $k\in [0,\ell_{\max}]$.
\end{itemize}  
The foreground and the background solutions evolve in exactly the same manner as in \Cref{sec:review}. As opposed to the foreground and background solutions (which are hierarchical solutions), each buffer solution $\bR_k$ is simply a collection of sets. The buffer solutions are used as intermediaries between the background and foreground solutions, and they help ensure good worst-case insertion recourse. 
The output maintained by the algorithm, against which we measure its approximation ratio and recourse, is the union of the foreground solution and all the buffer solutions. 

Specifically, a background thread $T_k$ ensures that $\bB_k = \bR_k$ at every time-step. In other words, within a given phase, whenever the thread $T_k$ adds a set $s$ to $\bB_k$, it copies the same set $s$ to $\bR_k$. Since the thread $T_k$ works at a speed of $\log n$, at most $\log n$ sets get added to $\bR_k$ during any given time-step. Since $\ell_{\max} = \tO(1)$, the worst-case insertion recourse of the maintained solution $\bF \bigcup_{k \in [0, \ell_{\max}]} \bR_k$ is also $\tO(1)$. Crucially, note that whenever a background solution $\bB_k$ is switched into the foreground or a background thread $T_k$ abnormally terminates, this leads to {\em zero} insertion recourse; although the deletion recourse due to such an event can be $\Omega(m)$, where $m$ is the total number of sets.   (Whenever a thread $T_k$ terminates, normally or abnormally, we reset {\em both} its background solution $\bB_k$ and its buffer solution $\bR_k$ to being empty.)

Clearly, the worst-case update time of this algorithm is $\tO(f)$. 

As we alluded to it before, the issue with this approach is that the approximation ratio of the resulting algorithm is $O(\log^2 n)$, instead of $O(\log n)$. This is because  $|\bR_k| = |\bB_k| = O(\log n) \cdot \text{{\sc OPT}}(L_k^{\bF}) = O(\log n) \cdot \text{{\sc OPT}}(L)$ for each buffer solution $\bR_k$ (see \Cref{cor:new:3} of \Cref{cor:new:thread}), and $\left| \bF \right| = O(\log n) \cdot \text{{\sc OPT}}(L)$ (see \Cref{th:new:prior}). Thus, we get $\left| \bF \right| + \sum_{k=0}^{\ell_{\max}} \left| \bR_k \right| = O(\log^2 n) \cdot \text{{\sc OPT}}(L)$. This is a fundamental challenge we need to overcome.

To address this challenge, our first attempt would be to modify the algorithm so as to ensure that at each time-step, only a subset of buffer solutions are nonempty, and the sizes (i.e., the number of sets) of the nonempty buffer solutions form a decreasing geometric series. This, in turn, necessitates splitting up the lifetime of a background thread into three different phases. In \Cref{sec:take2}, we explain our new framework in more detail.

\subsection{Our Dynamic Algorithm: Take II}
\label{sec:take2}

As before, we maintain three different sets of  solutions, some of which correspond to different levels: 
\begin{itemize}
    \item[-] a {\bf foreground} solution $\bF$, and 
    \item[-] a set of $\ell_{\max} +1$ {\bf background} solutions $\bB_k$ for levels $k\in [0,\ell_{\max}]$.
  \item[-] a set of $\ell_{\max} +1$ {\bf buffer} solutions $\bR_k$ for levels $k\in [0,\ell_{\max}]$.
\end{itemize}  
The foreground and background solutions are hierarchical solutions, whereas each buffer solution $\bR_k$ is simply a collection of sets. The output maintained by the algorithm is given by $\bF \bigcup_{k \in [0, \ell_{\max}]} \bR_k$. 

The foreground solution is always feasible for the live elements, i.e., $L^{\bF}(t) = L(t)$ at each time-step $t$. Furthermore, in the beginning of each time-step, we lazily update $\bF$ w.r.t.~$L$. Accordingly, the subset of live elements $L_k^{\bF}$ is also slowly evolving, for all $k \in \left[0, \ell_{\max}\right]$.

Next, consider any level $k \in \left[0, \ell_{\max}\right]$, and focus on the background solution $\bB_k$ maintained by the background thread $T_k$. The lifetime of this thread consists of three phases -- a {\bf computation phase}, a {\bf suspension phase} and a {\bf copy phase} -- one after another. Each phase lasts for a sequence of consecutive time-steps. 

 \paragraph{I: Computation Phase.} Suppose that the thread $T_k$ enters computation phase at (say) time-step $\tcomp$. At the start of time-step $\tcomp$, both the background solution $\bB_k$ and the buffer solution $\bR_k$ are empty. In fact, the buffer solution $\bR_k$ will remain empty throughout the duration of the computation phase. 

Now, the thread $T_k$ runs a catch-up greedy algorithm with source $(\tcomp, k, L_k^{\bF})$ and target $\bB_k$. Let $\tsus-1$ be the time-step at which the catch-up greedy algorithm completes execution. Then, the computation phase also ends at time-step $\tsus-1$, and the thread $T_k$ enters suspension phase from the next time-step $\tsus$. Before entering the suspension phase, the thread $T_k$ takes a snapshot of the solution size $\left|\bB_k\right|$ and denotes it by $\tausus_k$. 

\paragraph{II: Suspension Phase.}  Throughout the duration of the suspension phase, the thread $T_k$ lazily updates $\bB_k$ w.r.t.~$L_k^{\bF}$, and the buffer solution $\bR_k$ continues to remain empty.

 During each time-step, the algorithm runs a {\em scheduler} to determine which (if any) of the threads currently in suspension phase will enter copy phase. The scheduler for time-step $t$ implements  \Cref{alg:new:transition} below.

\begin{algorithm}
\caption{Transitioning background threads from suspension to copy phase}   
    $\tau \gets \frac 12  \cdot \min_{j} \tausus_j$, where the minimum is taken over all the threads currently in copy phase \;
    \For{$k = \ell_{\max}$ down to $0$}
    { \If{$\tausus_k \leq \tau$}{
        The thread $T_k$ will enter copy phase from the beginning of the next time-step $t+1$ \;
        Update $\tau \gets \frac 12 \cdot \tausus_k$\;
    }
    }
\label{alg:new:transition}
\end{algorithm}

 \paragraph{III: Copy Phase.}  Throughout the duration of the copy phase, the thread $T_k$ lazily updates $\bB_k$ w.r.t.~$L_k^{\bF}$. Furthermore, at each time-step during its copy phase, the thread $T_k$ copies $\log n$ sets from $\bB_k$ into the buffer solution $\bR_k$. When all the sets from $\bB_k$ have been copied into $\bR_k$, the thread $T_k$ completes execution. 

 The algorithm now performs the following procedure at the end of each time-step $t$. Let $Q(t) \subseteq \left[0, \ell_{\max}\right]$ denote the collection of indices of all those background threads that have just completed execution, and let $j^\star \gets \max\{Q(t)\}$. We {\bf switch} the background solution $\bB_{j^\star}$ to the foreground. Moreover, for each $k \in [0, j^\star]$, the background solution $\bB_k$ and the buffer solution $\bR_k$ are both reset to being empty at the end of time-step $t$, and the thread $T_k$ restarts its computation phase from the next time-step $t+1$. We say that the thread $T_{j^\star}$ {\bf normally terminates}, and  every thread $k \in [0, j^\star-1]$ {\bf abnormally terminates} at time-step $t$. 

{\bf A key conceptual contribution  in this paper to prove the lemma below}, which bounds for how long can a background thread remain waiting in the suspension phase.


 \begin{lemma}
 \label{lm:new:copy}
Suppose that a background thread $T_k$ enters suspension phase from the beginning of time-step $\tsus_k$. Then, the thread $T_k$ terminates (either naturally or abnormally) at or before time-step $\tsus_k+ o(1) \cdot \tausus_k$.
 \end{lemma}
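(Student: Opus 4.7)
The plan is to bound separately the length of $T_k$'s suspension phase and the subsequent copy phase, and to show that each is $o(1)\cdot\tausus_k$. The copy phase is the easy part: at a rate of $\log n$ sets copied per step it takes $|\bB_k|/\log n$ time-steps, and because lazy updates during the entire (bounded) interval can inflate $|\bB_k|$ by at most $o(\tausus_k)$ beyond its value $\tausus_k$ at entry into suspension, the copy phase contributes $(1+o(1))\tausus_k/\log n = o(1)\cdot\tausus_k$. The crux is therefore to bound the suspension length.

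My first step is the following structural invariant of the scheduler in \Cref{alg:new:transition}: at every time-step, if the threads currently in the copy phase have $\tausus$ values $v_1 > v_2 > \cdots > v_p$, then $v_{i+1} \le v_i/2$. This follows by induction on scheduler calls. Within one call, the top-down loop considers candidates in order of decreasing level, and each promoted thread $T_j$ satisfies $\tausus_j \le \tau$ where $\tau = \tfrac12 \min \tausus$ over the union of the pre-existing queue and higher-level threads promoted earlier in the same call; this forces $\tausus_j \le v/2$ for every existing value $v$, preserving the geometric gap.

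Next I would call a copy-phase thread a \emph{blocker} of $T_k$ if its $\tausus$ is at most $2\tausus_k$. I claim that at every scheduler call during $T_k$'s suspension at least one blocker exists: otherwise every pre-existing queue member has $\tausus > 2\tausus_k$, and any promotion in this call also must have $\tausus > 2\tausus_k$ (else the newly promoted thread is itself a blocker), so each update $\tau \gets \tausus_j/2$ keeps $\tau > \tausus_k$; when the loop reaches level $k$ we still have $\tausus_k \le \tau$ and so $T_k$ is promoted, contradicting the assumption. Combined with the geometric invariant, at any single moment the $\tausus$ values of all blockers sum to at most $\sum_{i\ge 0} 2\tausus_k/2^i = 4\tausus_k$, so their total remaining copy time at that moment is at most $4\tausus_k/\log n$. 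If at some point during $T_k$'s suspension there is a blocker at a level $j > k$, then within at most its remaining copy time of $2\tausus_k/\log n$ steps some level-$>k$ thread completes its copy phase; by the tie-breaking rule $j^\star = \max Q(t)$ this triggers abnormal termination of $T_k$, and the lemma follows in this subcase.

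The remaining subcase, where every blocker throughout $T_k$'s suspension lies at some level $j < k$, is where I expect the main obstacle. Here the snapshot bound of $4\tausus_k$ does not directly bound the suspension length, because blockers can repeatedly re-appear after natural terminations of other level-$<k$ threads reset them. My plan is a charging argument: each re-appearance of $T_j$ as a blocker must be preceded by a fresh computation phase on $L_j^{\bF}$, and since the disjoint coverage sets of the produced $\bB_j$ give $|L_j^{\bF}| \ge |\bB_j| = \tausus_j$ while catch-up greedy covers at a rate of $\log n$ elements per step, this computation phase has length at least $\Omega(\tausus_j/\log n)$. Charging each re-appearance to such a disjoint preceding interval at level $j$, and combining with the per-moment geometric bound of $4\tausus_k$ on total blocker $\tausus$-mass and with there being only $\ell_{\max} = O(\log n)$ available levels below $k$, I expect to derive a total suspension length of $o(1)\cdot\tausus_k$. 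The technical difficulty is that $|L_j^{\bF}|$ may shift between successive computations, but this shift is itself bounded since $L$ is slowly evolving and the suspension is short; rigorizing this circular dependency is where the bulk of the work lies. Summing with the copy-phase bound yields $\tsus_k + o(1)\cdot\tausus_k$, as required.
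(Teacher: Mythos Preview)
Your overall decomposition is sound, the spatial geometric invariant on the $\tausus$ values of copying threads is correct, and your treatment of the blocked-from-above case matches the paper.

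The gap is in the blocked-from-below case, and it is real. Your charging scheme does not close: the computation phases you charge re-appearances to lie \emph{inside} the suspension interval whose length $S$ you want to bound, so what the charge yields is only $\sum_{i\ge 2}\tausus_j^{(i)} \le S\log n$ per level $j$, hence $\sum_{j,i\ge 2}\tausus_j^{(i)} \le S\log^2 n$ over all $O(\log n)$ levels. Combining this with the covering inequality $S \le \sum_{j,i}\tausus_j^{(i)}/\log n$ gives $S\log n \le (\text{initial mass}) + S\log^2 n$, which is vacuous. Neither the per-moment $4\tausus_k$ cap nor the level count breaks the circularity you flag. (A second, smaller circularity already appears in your copy-phase bound, where ``the entire (bounded) interval'' presupposes the conclusion; the paper handles that by temporarily imposing a hard abort after $\frac{8}{\log n}\tausus_k$ suspension steps and then proving the abort never fires.)

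The paper's fix is much simpler than your plan and uses no computation-phase accounting: your geometric invariant also holds \emph{along the timeline of dominant blockers}. Build a chain $T_{j_1},T_{j_2},\ldots$ by repeatedly taking, among the current low-level blockers, the one with the latest termination time, and advancing to one step past that termination. The key observation is that $T_{j_{i+1}}$ could not have been a blocker when $T_{j_i}$ was selected (it outlasts $T_{j_i}$, so it would have been chosen instead); hence $T_{j_{i+1}}$ \emph{entered} copy phase while $T_{j_i}$ was still copying, and the scheduler forces $\tausus_{j_{i+1}}\le\tfrac12\,\tausus_{j_i}$. Thus $\sum_i\tausus_{j_i}\le 2\tausus_{j_1}<4\tausus_k$, each link has length $O(\tausus_{j_i}/\log n)$, and the whole chain spans $O(\tausus_k/\log n)$ time. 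Either the chain extends past the end of the suspension (done), or it terminates because no low-level blocker remains, at which point $T_k$ is blocked-from-above and you are in your easy case. Re-appearance is harmless here: even if a level recurs in the chain, its new $\tausus$ is still at most half of its chain-predecessor's.
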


 \subsubsection{Proof (Sketch) of \Cref{lm:new:copy}}

For the sake of analysis, we consider a modified version of the algorithm, where a background thread $T_k$ {\bf aborts} after waiting in suspension phase for $\frac{8}{\log n} \cdot \tausus_k$ time-steps. Specifically, suppose that the thread $T_k$ enters suspension phase at time-step $\tsus_k$. If $T_k$ is still at suspension phase at (say) time-step $\tend_k = \tsus_k + \frac{8}{\log n} \cdot \tausus_k$, then from the beginning of the next time-step $\tend_k+1$, the thread $T_k$ restarts a computation phase (after resetting  $\bB_k$ to being empty). We will show that a background thread actually never gets aborted in this manner (see \Cref{cl:abort:assume}), and so this modified version of the algorithm is equivalent to the original description. The only reason for sticking with this version is that it makes the proof of \Cref{lm:new:copy}  simpler and more intuitive.

We start by upper bounding the duration of a copy phase in the claim below.

\begin{claim}
\label{cl:new:copy:main}
A background thread $T_k$ remains in copy phase for less than $\frac{2}{\log n} \cdot \tausus_k$ time-steps.
\end{claim}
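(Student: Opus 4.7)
The plan is to track the size of $\bB_k$ across the suspension and copy phases, and to use the copy rate of $\log n$ sets per time-step to bound the copy phase duration.

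First, I would bound $|\bB_k|$ at the beginning of the copy phase, denoted $\tcopy_k$. By the abort rule introduced in the modified analysis, the suspension phase lasts at most $\frac{8}{\log n}\cdot\tausus_k$ time-steps, so $\tcopy_k - \tsus_k \leq \frac{8}{\log n}\cdot\tausus_k$. Throughout the suspension phase, $\bB_k$ is modified only via the lazy update of \Cref{alg:greedy:lazy:0} with respect to the slowly-evolving $L^{\bF}_k$, and inspecting that procedure shows that each time-step introduces at most one new set into $\bB_k$ (only the branch at lines \ref{new:1}--\ref{new:4} adds a set, and it triggers only when a newly-inserted element is uncovered). Hence
\[
|\bB_k(\tcopy_k)| \;\le\; \tausus_k + (\tcopy_k - \tsus_k) \;\le\; \tausus_k\left(1 + \tfrac{8}{\log n}\right).
\]

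Next, I would analyze the copy phase via the potential $U(t) := |\bB_k(t)| - |\bR_k(t)|$, counting sets in $\bB_k$ still waiting to be copied. In each time-step of the copy phase, $U$ drops by exactly $\log n$ because of the explicit copy step (or drops to $0$ in the final time-step), while $|\bB_k|$ can grow by at most one due to a lazy update of $\bB_k$ w.r.t.~$L^{\bF}_k$, exactly as in the suspension phase argument. Thus $U$ has a net decrease of at least $\log n - 1$ per time-step. Since the copy phase ends as soon as $U = 0$, letting $T$ denote its duration we get
\[
T \;\le\; \left\lceil \frac{|\bB_k(\tcopy_k)|}{\log n - 1} \right\rceil \;\le\; \frac{\tausus_k\left(1 + 8/\log n\right)}{\log n - 1} + 1 \;<\; \frac{2}{\log n}\cdot \tausus_k,
\]
where the final inequality holds for sufficiently large $n$ (the small-$n$ regime can be handled trivially since then the claimed bound becomes vacuous up to the polylog factors we allow).

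The main technical point to be careful about is the bookkeeping that during both the suspension and copy phases, the only mechanism by which $\bB_k$ can grow is the lazy update in \Cref{alg:greedy:lazy:0}, and that this grows $|\bB_k|$ by at most one per time-step. This in turn relies on the fact that $L^{\bF}_k$ is slowly evolving, which follows because the foreground solution $\bF$ is itself lazily updated with respect to the live element set $L$ (which evolves by at most one element per time-step), so that $|L^{\bF}_k(t)\oplus L^{\bF}_k(t-1)|\leq 1$ at every time-step. The remaining work is then a routine calculation to pin down the constant in the final inequality, which I do not expect to present any real difficulty.
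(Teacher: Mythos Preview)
Your proposal is correct and mirrors the paper's own proof: both bound $|\bB_k(\tcopy_k)|$ via the abort rule on the suspension phase plus the one-set-per-step growth from lazy updates, then divide by the net copy rate of $\log n - 1$ to bound the copy-phase duration. The potential $U(t)$ you introduce is just a slightly more explicit packaging of the same accounting the paper does informally.
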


\begin{proof}(Sketch) Suppose that the thread enters suspension phase and copy phase at the start of time-steps $\tsus_k$ and $\tcopy_k > \tsus_k$, respectively. Since $T_k$ was {\em not} aborted while waiting in the suspension phase, we have 
\begin{equation}
\label{eq:abort}
\tsus_k < \tcopy_k < \tsus_k + \frac{8}{\log n} \cdot \tausus_k.
\end{equation}
Throughout the suspension phase, the background solution $\bB_k$ gets lazily updated w.r.t.~$L_k^{\bF}$, and so the solution size $\left| \bB_k \right|$ grows at most by an additive one per time-step. Thus, at the start of the copy phase, we have 
\begin{equation}
\label{eq:abort:2}
\left| \bB_k(\tcopy_k) \right| \leq \left| \bB_k(\tsus_k) \right| + (\tcopy_k - \tsus_k) < \left(1+\frac{8}{\log n}\right) \cdot \tausus_k.
\end{equation}
Next, throughout the copy phase, the background solution $\bB_k$ continues to get lazily updated w.r.t.~$L_k^{\bF}$, and so the solution size $\left| \bB_k \right|$ continues to grow at most by an additive one per time-step. Simultaneously, the thread $T_k$ keeps copying $\log n$ sets from $\bB_k$ into the buffer solution $\bR_k$ per time-step, and the copy phase ends when all the sets from $\bB_k$ have been copied into $\bR_k$ (or, if the thread $T_k$ gets abnormally terminated before that). Since there are $< \left(1+\frac{8}{\log n}\right) \cdot \tausus_k$ sets in $\bB_k$ at the start of the copy phase (see \Cref{eq:abort:2}),  the copy phase lasts for at most $\frac{\left(1+\frac{8}{\log n}\right) \cdot \tausus_k}{(\log n) - 1} < \frac{2}{\log n} \cdot \tausus_k$ time-steps. This concludes the proof.
\end{proof}

To highlight the main idea behind the proof of \Cref{lm:new:copy}, we now make a simplifying assumption.

\begin{assumption}
\label{assume:simple}
Let $\mathcal{Q}(t)$ be the collection of the indices of the threads that are in either suspension or copy phase at the start of time-step $t$. Then, at every time-step $t$, we have $\tausus_j \geq \tausus_k$ for all $j, k \in \mathcal{Q}(t)$ with $j > k$.  
\end{assumption}

In other words, \Cref{assume:simple} enforces a certain {\em monotonicity} on the $\tausus_k$ values of the relevant threads, guaranteeing that threads corresponding to lower levels have smaller $\tausus_k$ values.  

\paragraph{Remark.} A priori, \Cref{assume:simple} might intuitively seem to be true. After all, a thread $T_k$ is responsible for the elements in the foreground solution that are at levels $\leq k$, which is a subset of the elements under the purview of the next thread $T_{k+1}$. Accordingly, we expect that the sizes of the background solutions $\bB_k$ would be monotonically non-decreasing with the index $k$. Since $\tausus_k$ is  a snapshot of the size of $\bB_k$ at the moment the thread $T_k$ enters suspension phase, this provides an intuitive justification for \Cref{assume:simple}.

Unfortunately, however, \Cref{assume:simple} is not necessarily true. This is because:
\begin{enumerate}
\item For two distinct threads $T_j$ and $T_k$, the values $\tausus_j$ and $\tausus_k$ refer to two different snapshots in time (corresponding to the possibly different time-steps at which $T_j$ and $T_k$ enter suspension phase). 
\item The catch-up greedy algorithms run by $T_j$ and $T_k$, which eventually determine the values $\tausus_j$ and $\tausus_k$, are {\em not} synchronized with each other. 
\end{enumerate}
We provide a complete, formal proof of \Cref{lm:new:copy} in \Cref{sec:lifetime} (see \Cref{lm:wait:time}). In this technical overview section, however, we rely on \Cref{assume:simple} while presenting a proof-sketch of \Cref{lm:new:copy}; since our goal here is only to showcase the high-level intuitions behind our algorithm and analysis.

\begin{claim}
\label{cl:abort:assume} A background thread $T_k$ never gets aborted for waiting too long in the suspension phase.
\end{claim}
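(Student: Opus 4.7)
The plan is to argue by contradiction. Suppose that $T_k$ is still in suspension phase at time $\tend_k = \tsus_k + \frac{8}{\log n}\cdot \tausus_k$ and has not been abnormally terminated during $[\tsus_k,\tend_k]$. I would derive a contradiction by exhibiting a time in this window at which the scheduler of \Cref{alg:new:transition} must have promoted $T_k$ to copy phase.

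Inspection of \Cref{alg:new:transition} shows that $T_k$ fails to be promoted at time $t$ if and only if there is a ``blocker'' thread $T_j$ currently in copy phase with $\tausus_j<2\tausus_k$. Moreover, the scheduler's doubling rule implies that at any moment the concurrently-in-copy-phase threads, indexed in decreasing order of level, have $\tausus$-values forming a geometric chain with ratio at least $2$. I would next split blockers into two classes under \Cref{assume:simple}. First, any blocker $T_j$ with $j>k$ must satisfy $\tausus_j\in[\tausus_k,2\tausus_k)$, and by \Cref{cl:new:copy:main} its copy phase lasts less than $\frac{4}{\log n}\tausus_k$ time-steps; if it completes copy phase inside the window, then since $j>k$ this triggers an abnormal termination of $T_k$ (contradicting our assumption), so such a blocker can only persist briefly near the end of the window, contributing a bounded blocking slice. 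Second, blockers $T_j$ with $j<k$ have $\tausus_j\le\tausus_k$ by monotonicity; each contributes copy-phase duration at most $\frac{2}{\log n}\tausus_j$, and after completing must re-traverse a computation phase of length $\Omega(\tausus_j/\log n)$ (using the catch-up greedy speed from \Cref{lm:catchup:end}) before it can re-appear in copy phase. The geometric-chain structure of simultaneously present blockers limits how many distinct scales of $\tausus_j$ can contribute, and summing the per-scale contributions telescopes to $O(\tausus_k/\log n)$, strictly less than the window length $\frac{8}{\log n}\tausus_k$.

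Combining the two cases, there must exist some time $t\in[\tsus_k,\tend_k]$ at which no blocker is in copy phase; at this $t$ the scheduler of \Cref{alg:new:transition} promotes $T_k$ to copy phase, contradicting the assumption. The main obstacle is the accounting for the second class of blockers: low-level threads cycle through computation, suspension, and copy phases independently of $T_k$, and could a priori regenerate blockers repeatedly during $T_k$'s wait. The critical levers are the scheduler's doubling rule, which keeps the set of simultaneous blockers geometrically thin across levels, and the $\Omega(\tausus_j/\log n)$ lower bound on the gap between successive copy-phase entries of the same thread. \Cref{assume:simple} is essential because it pins down which $\tausus$-values a blocker can take, so that these two ingredients combine into a single $O(\tausus_k/\log n)$ aggregate blocking budget.
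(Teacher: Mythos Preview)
Your handling of blockers-from-below has a genuine gap. The claimed $\Omega(\tausus_j/\log n)$ lower bound on the time before the same thread $T_j$ can re-enter copy phase is not justified: once $T_j$ terminates and restarts, it runs catch-up greedy on the \emph{current} sub-universe $L_j^{\bF}$, whose size bears no necessary relation to the old snapshot $\tausus_j$ and could be much smaller, so $T_j$ (or any other low thread freshly reset) could return to copy phase quickly. Moreover, your telescoping step conflates the geometric structure among \emph{simultaneously} present copy-phase threads (which the scheduler does enforce) with the temporal succession of blockers over the window; nothing you have said controls how many distinct blockers appear sequentially, and summing $\frac{2}{\log n}\tausus_j$ over an unbounded sequence does not obviously collapse to $O(\tausus_k/\log n)$.

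You correctly identify the obstacle (``low-level threads \ldots\ could a priori regenerate blockers repeatedly''), but the paper resolves it with a single clean observation you are missing, rather than a budget argument. Under \Cref{assume:simple}, while the highest blocker-from-below $T_j$ is in copy phase during $[\tsus_k,t_j]$, no thread $T_{j''}$ with $j<j''<k$ can \emph{enter} copy phase: the scheduler would require $\tausus_{j''}\le\tfrac12\tausus_j$, contradicting $\tausus_{j''}\ge\tausus_j$ from \Cref{assume:simple}. Hence when $T_j$ terminates at $t_j$ (which also aborts every thread of index $\le j$), at time $t_j+1$ there is \emph{no} blocker below $k$ at all; if in addition we are in the case where $T_k$ is not blocked-from-above in this early window, the scheduler promotes $T_k$ immediately. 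There is only one blocker-from-below to wait out, not a chain to account for, and $t_j < \tsus_k + \frac{2}{\log n}\tausus_k$ by \Cref{cl:new:copy:main}.
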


\begin{proof}(Sketch)
Suppose that the thread $T_k$ enters suspension phase at time-step $\tsus_k$. Consider any time-step $t \geq \tsus_k$ in the suspension phase, and suppose that the thread $T_k$ fails to enter copy phase during time-step $t$. This happens iff one of the following two (mutually exclusive) scenarios hold.
\begin{enumerate}
\item There is another background thread $T_j$, with $j > k$ and $\tausus_j < 2 \cdot \tausus_k$, that is either in copy phase at the start of time-step $t$, or  \Cref{alg:new:transition} decides that $T_j$ will enter copy phase from the next time-step $t+1$. We say that the thread $T_k$ is {\bf blocked-from-above} by the thread $T_j$ at time-step $t$.
\item The thread $T_k$ is {\em not} blocked-from-above at time-step $t$. However, there is another background thread $T_j$, with $j < k$ and $\tausus_j \leq \tausus_k$ (as per \Cref{assume:simple}), that is in copy phase at the start of time-step $t$. Furthermore, at the start of time-step $t$, no other thread $T_{j'}$ with $j' \in [j+1, k-1]$ is in copy phase. We say that the thread $T_k$ is {\bf blocked-from-below} by the thread $T_j$ at time-step $t$.
\end{enumerate}

Now, we consider two possible cases that might occur if  $T_k$ continues to remain in the suspension phase.

\medskip
\noindent {\em Case 1. The thread $T_k$ is blocked-from-above by some other thread $T_j$ at some time-step $t \in \left[\tsus_k, \frac{4}{\log n} \cdot \tsus_k\right]$.} This means that $j > k$, $\tausus_j < 2 \cdot \tausus_k$, and either $T_j$ is already in copy phase at the start of time-step $t$, or $T_j$ enters copy phase at the start of the next time-step $t+1$. Accordingly, by \Cref{cl:new:copy:main}, the thread $T_j$ terminates by time-step $t + \frac{2}{\log n} \cdot \tausus_j < t + \frac{4}{\log n} \cdot \tausus_k \leq  \tsus_k + \frac{8}{\log n} \cdot \tsus_k$. Since $j > k$, whenever $T_j$ terminates, this leads to an abnormal termination of the thread $T_k$. Thus, we conclude that in this case, the thread $T_k$ is terminated before waiting for $\frac{8}{\log n} \cdot \tausus_k$ time-steps in the suspension phase. In other words, the thread $T_k$ does {\em not} get aborted while waiting in the suspension phase.

\medskip
\noindent {\em Case 2. The thread $T_k$ is {\em not} blocked-from-above at any time-step $t \in \left[\tsus_k, \frac{4}{\log n} \cdot \tsus_k\right]$.} Here, the thread $T_k$ must be  blocked-from-below by some thread $T_j$, with $j < k$ and $\tausus_j \leq \tausus_k$ (see \Cref{assume:simple}), at time-step $\tsus_k$. Now, by \Cref{cl:new:copy:main}, the thread $T_j$ terminates at some time-step $t_j < \tsus_k + \frac{2}{\log n} \cdot \tausus_j \leq \tsus_k + \frac{2}{\log n} \cdot \tausus_k$. This, in turn, leads to all threads $T_{j'}$ with $j' \in [0, j-1]$ getting abnormally terminated at the same time-step $t_j$. 

Note that due to \Cref{assume:simple}, no thread $T_{j''}$ with $j'' > j$ enters the copy phase during the time-interval $[\tsus_k, t_j]$. Moreover, since $t_j+1 \in \left[\tsus_k, \frac{2}{\log n} \cdot \tsus_k\right]$, the thread $T_k$ is neither blocked-from-above nor blocked-from-below at time-step $t_j+1$. Accordingly,  the thread $T_k$ must enter copy phase at time-step $t_j + 1$.  In other words, the thread $T_k$ does {\em not} get aborted while waiting in the suspension phase.
\end{proof}

\paragraph{Proof of \Cref{lm:new:copy}.} Consider any background thread $T_k$ that enters suspension phase at time-step $\tsus_k$. By \Cref{cl:abort:assume}, the thread $T_k$ never gets aborted while waiting in the suspension phase. 
Thus, from the description of our modified version of the algorithm, it follows that $T_k$ either gets abnormally terminated or enters copy phase within the next $\frac{8}{\log n} \cdot \tausus_k$ time-steps. Further, if $T_k$ enters copy phase, then by \Cref{cl:new:copy:main} it remains in copy phase for less than $\frac{2}{\log n} \cdot \tausus_k$ time-steps. Overall, from the preceding discussion we infer that $T_k$ terminates (either naturally or abnormally) at or before time-step $\tsus_k + \frac{8}{\log n} \cdot \tausus_k + \frac{2}{\log n} \cdot \tausus_k = \tsus_k + o(1) \cdot \tausus_k$.

\subsection{Our Dynamic Algorithm: Take III}
\label{sec:take3}

In this section, we outline the final version of our algorithm, and sketch the proof of the following theorem.

\begin{theorem}
\label{th:overview:main}
There is a dynamic set cover algorithm with $O(\log n)$-approximation ratio, $\tO(f)$ worst-case update time and $\tO(1)$ worst-case recourse.
\end{theorem}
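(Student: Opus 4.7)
The plan is to prove \Cref{th:overview:main} by showing that the Take~II algorithm of \Cref{sec:take2} already achieves an $O(\log n)$-approximation in $\tO(f)$ worst-case update time with $\tO(1)$ worst-case \emph{insertion} recourse, and then invoking \Cref{lem:new:recourse-reduction} with $\alpha = O(\log n)$, $\beta = \tO(1)$, $\delta = 1$ to upgrade this to $\tO(1)$ worst-case overall recourse. The update-time overhead of the reduction, a $\tO(1)$ factor, is handled by the separate implementation accounting promised in the paragraph following \Cref{lem:new:recourse-reduction}.

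The update-time bound is immediate: at each step, the $\ell_{\max}+1 = O(\log n)$ threads each advance their catch-up greedy subroutine by $\log n$ coverage steps at $\tO(f)$ cost per thread, while the lazy updates of $\bF$ and the copy-phase buffer writes add only $\tO(1)$ more. For the insertion recourse, insertions into the output $\bF \cup \bigcup_k \bR_k$ can arise only from (a) lazy updates of $\bF$ (at most one per step) and (b) copy-phase transfers from $\bB_k$ to $\bR_k$ (at most $\log n$ per copying thread per step). The key structural observation is that \Cref{alg:new:transition} forces the $\tausus_k$ values of all threads currently in copy phase to form a decreasing geometric sequence, each at most half of the next, so at most $O(\log n)$ threads copy simultaneously. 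Hence the per-step insertion recourse is $O(\log^2 n) = \tO(1)$.

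The approximation analysis splits into two parts. For the foreground, I would extend the stability/tidiness argument sketched in \Cref{sec:review}: immediately after a switch of some $\bB_{j^\star}$ into $\bF$, the foreground inherits $\epsilon$-stability and $\epsilon$-tidiness at levels $\le j^\star$ via \Cref{cor:new:2}, and between consecutive switches the lazy update of $\bF$ accumulates passive elements at rate one per step. \Cref{lm:new:copy} guarantees that the time between two consecutive switches touching a given level $\ge k$ is bounded by the computation-phase length plus $o(1)\cdot \tausus_k$, which I expect to be short enough for $\bF$ to remain $O(\epsilon)$-tidy and $O(\epsilon)$-stable; then \Cref{lem:approx-factor-full} yields $|\bF| = O(\log n)\cdot \OPT(L)$. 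For the buffers, \Cref{lm:new:copy} gives $|\bR_k| \le |\bB_k| \le (1+o(1))\tausus_k$, and the geometric structure of the copy-phase threads gives $\sum_k |\bR_k| = O(\max_k \tausus_k)$; since $\max_k \tausus_k = O(\log n)\cdot \OPT(L)$ by \Cref{cor:new:3}, we obtain $\sum_k |\bR_k| = O(\log n)\cdot \OPT(L)$.

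The main obstacle will be the tidiness argument for $\bF$. The delicate point is that the inter-switch gap $o(1)\cdot \tausus_k$ is measured in terms of $\tausus_k \approx |\bB_k|$, not in terms of $|L_k^{\bF}|$; when $\OPT(L_k^{\bF})$ is much smaller than $|L_k^{\bF}|$, naively charging the accumulated passive elements of $\bF$ against $|L|$ would be too weak. My envisioned fix is to charge passive elements at level $k$ against $|A_k^{\bF}|$, using the fact that immediately after the most recent switch $|A_k^{\bF}|$ reflects the active coverage produced by the catch-up greedy subroutine and is therefore proportional to $\tausus_k$ up to constants, which is exactly the quantity governing the inter-switch gap.
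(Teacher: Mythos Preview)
Your proposal has a genuine gap: the Take~II algorithm of \Cref{sec:take2} does \emph{not} achieve $O(\log n)$-approximation, for precisely the reason the paper spells out at the opening of \Cref{sec:take3}, and the theorem is actually proved there using the additional tail phase of Take~III.

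The flaw is in your claim that after switching $\bB_{j^\star}$ into $\bF$, the foreground ``inherits $\epsilon$-tidiness at levels $\le j^\star$ via \Cref{cor:new:2}.'' That guarantee applies to the extended solution $\tmB_{j^\star}$ while catch-up greedy is running; but in Take~II, catch-up greedy finishes at the end of the computation phase, and throughout the subsequent suspension and copy phases $\bB_{j^\star}$ is only \emph{lazily} updated. During those $o(1)\cdot\tausus_{j^\star}$ steps (\Cref{lm:new:copy}), up to $o(1)\cdot\tausus_{j^\star}$ elements can become passive at any level $k'\le j^\star$. For a low level $k'$ with $|C_{k'}^{\bB_{j^\star}}|\ll\tausus_{j^\star}$, this swamps the active coverage at level $k'$, so $\bB_{j^\star}$ is dirty at level $k'$ when it is switched in, and hence so is $\bF$. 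Your envisioned fix does not help: after the switch, $|A_{k'}^{\bF}|=|A_{k'}^{\bB_{j^\star}}|$, and there is no reason this scales with $\tausus_{j^\star}$ when $k'\ll j^\star$; indeed $|A_{k'}^{\bB_{j^\star}}|$ can be arbitrarily small compared to $\tausus_{j^\star}$. (The obstacle you flagged is in fact a non-issue, since $\tausus_k\le |L_k^{\bF}|$ trivially.)

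The paper's repair is structural. The copy phase transfers only sets above a critical level $\ell_{j^\star}$ (chosen so that $|C_{\ell_{j^\star}}^{\bB_{j^\star}}|\le\tfrac12\tausus_{j^\star}$), and then a new \emph{tail phase} runs a fresh catch-up greedy on $L_{\ell_{j^\star}}^{\bB_{j^\star}}$ to rebuild the low levels from scratch. The spliced solution $\bB_{j^\star}^\star$ is then $2\epsilon$-tidy at every level $\le j^\star$ (\Cref{lm:new:tail:1}), and it is $\bB_{j^\star}^\star$, not $\bB_{j^\star}$, that gets swapped into $\bF$ (\Cref{cor:new:tail:1}). The buffer bound (\Cref{cor:new:tail:3}) and the final reduction via \Cref{lem:new:recourse-reduction} then go through essentially as you outlined.
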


We first explain that  \Cref{lm:new:copy} is not sufficient to guarantee an $O(\log n)$-approximation ratio of the algorithm presented in \Cref{sec:take2}. To see why this is the case, define a {\bf critical level $\ell_k$} for the background thread $T_k$, as follows: It is the maximum level $\ell \in \left[0, k \right]$ such that $\left| C_{\ell}^{\bB_k}(\tsus_k) \right| \leq  \frac{1}{2} \cdot \tausus_k$, where $\tausus_k :=  \left| \bB_k(\tsus_k) \right|$. Next, consider some level $k' \ll \ell_k$, where 
$\left| C_{k'}^{\bB_k}(\tsus_k) \right| \ll \tausus_k$.
During each time-step when the thread $T_k$ is in suspension or copy phase, the background solution $\bB_k$ gets lazily updated w.r.t.~$L_k^{\bF}$. Such a lazy update might potentially involve the creation of a universally passive element (because of an external deletion) at level $\leq k'$ in $\bB_k$. If this keeps happening repeatedly, then after very few time-steps an overwhelming majority of the elements at level $\leq k'$ in $\bB_k$ would become universally passive. This would imply that $\bB_k$ is no longer $\epsilon$-tidy at level $k'$ when $T_k$ terminates, which, in turn, would lead to a violation of  \Cref{cor:new:approx}.

To address this issue, we create a new {\bf tail phase} for background threads, which occurs after the copy phase. As in \Cref{sec:take2}, we maintain a foreground solution $\bF$, a set of background solutions $\bB_k$ and buffer solutions $\bR_k$ for each level $k \in \left[0, \ell_{\max}\right]$. The foreground solution $\bF$ keeps getting lazily updated w.r.t.~$L$. 

A background thread $T_k$, which is responsible for maintaining $\bB_k$, now runs in four phases: (i) {\bf computation phase}, (ii) {\bf suspension phase}, (iii) {\bf copy phase} and (iv) {\bf tail phase}. The computation and suspension phases work in exactly the same manner as in \Cref{sec:take2}, whereas the copy and tail phases are different. Accordingly, below we only explain these last two phases for a given background thread $T_k$.

 \paragraph{III: Copy Phase.} At each time-step in the copy phase, the thread $T_k$ copies $\log n$ sets $s \in \bB_k$ with $\lev^{\bB_k}(s) > \ell_k$ into the buffer solution $\bR_k$. In addition, throughout the duration of the copy phase, the thread $T_k$ lazily updates $\bB_k$ w.r.t.~$L_k^{\bF}$. The phase ends when all the sets in $\bB_k$ at levels $> \ell_k$ have been copied into $\bR_k$. Suppose that this happens during time-step $\ttail_k-1$. Then, the tail phase begins from time-step $\ttail_k$. 

 \paragraph{IV: Tail Phase.} Throughout the tail phase, the thread $T_k$ lazily updates $\bB_k$ w.r.t.~$L_k^{\bF}$. Furthermore, in the beginning of time-step $\ttail_k$, it initializes an empty hierarchical solution $\bB^{\rm tail}_k$.  Next, the thread $T_k$ runs a catch-up greedy algorithm with source $\left(\ttail_k, \ell_k, L_{\ell_k}^{\bB_k}\right)$ and target $\bB^{\rm tail}_k$. In addition, whenever the catch-up greedy algorithm adds a new set $s$ into $\bB_k^{\rm tail}$, the thread $T_k$ immediately copies the set $s$ into $\bR_k$ on the fly. The tail phase ends at some time-step $\tend_k$ (say), when the catch-up greedy algorithm completes execution.

 We introduce the notation $\bB_k^\star$ to denote a new hierarchical solution, which is obtained by removing from $\bB_k$ all the sets that are at level $\leq \ell_k$, and adding back to it the  hierarchical solution $\tmB_k^{\rm tail}$ (the extension of $\bB_k^{\rm tail}$; see the discussion just before the statement of \Cref{lm:catchup:end}). 
 It is easy to verify that during the tail phase we always have $\bR_k \subseteq \bB^\star_k$. Furthermore, using appropriate pointer switches, {\em at the termination of the tail phase} we can construct $\bB_k^{\star}$ from  $\bB_k$ and $\bB_k^{\rm tail}$ in $\tO(1)$ time.\footnote{This is because when the tail phase terminates, we have $\tmB_k^{\rm tail} = \bB_k^{\rm tail}$.}

The algorithm performs the following procedure at the end of each time-step $t$. Let $Q(t) \subseteq \left[0, \ell_{\max}\right]$ denote the collection of indices of all those background threads that have just completed execution in their tail phases, and let $j^\star \gets \max\{Q(t)\}$ be the highest such thread. We {\bf switch} the  solution $\bB_{j^\star}^\star$ to the foreground. 
After that, for each $k \in [0, j^\star]$, the background solution $\bB_k$ and the buffer solution $\bR_k$ are both reset to being empty at the end of time-step $t$, the solution $\bB_k^{\rm tail}$ is thrown away, and $T_k$ restarts computation phase from the next time-step $t+1$. We say that the thread $T_{j^\star}$ {\bf normally terminates}, and  every thread $k \in [0, j^\star-1]$ {\bf abnormally terminates} at time-step $t$. 


We now summarize a few basic properties of the dynamic algorithm presented above.

 \begin{lemma}
\label{lm:new:tail:1}
Throughout the  tail phase, the hierarchical solution $\bB_k^\star$ always satisfies the following properties.
\begin{enumerate}
    \item \label{item:lm:new:tail:1:1} It is $2\epsilon$-tidy at every level $k' \in [0, k]$, and also $\epsilon$-stable.
    \item \label{item:lm:new:tail:1:2} $\left| \bB_k^\star \right| = O(\log n) \cdot \text{{\sc OPT}}$, where $\text{{\sc OPT}}$ is the minimum set-cover size on input $L$. 
\end{enumerate}
 \end{lemma}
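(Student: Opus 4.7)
The plan is to decompose $\bB_k^\star$ into two parts: the \emph{tail part} $\tmB_k^{\rm tail}$ (all of whose sets sit at levels $\le \ell_k + 1$) and the \emph{high part} consisting of the sets of $\bB_k$ at levels $> \ell_k$. The key identity I would establish first is that, for every set $s \in \calS$ and every level $k' \in [0, k]$, the active and passive coverage of $\bB_k^\star$ matches that of $\tmB_k^{\rm tail}$ when $k' \le \ell_k$, and matches that of $\bB_k$ when $k' \ge \ell_k + 1$. The first half follows because catch-up greedy sets $\plev^{\tmB_k^{\rm tail}}(e) \ge \ell_k + 1$ for every element it covers, while any element absorbed into the tail via a lazy insertion (\Cref{alg:greedy:lazy}) is born universally passive and hence not in $A_{k'}^{\bB_k^\star}$ at any $k' \le \ell_k$. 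The second half follows from the passive-level rule $\plev^{\tmB_k^{\rm tail}}(e) = \max\{\ell_k + 1, \plev^{\bB_k}(e)\}$ of \Cref{alg:greedy:catchup}: for $k' \ge \ell_k + 1$, this rule preserves, element by element, the active/passive classification at level $k'$ of every element migrating from $\bB_k$ into the tail.

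With these identities in hand, \textbf{$\epsilon$-stability} is immediate. For $k' \le \ell_k$, $|A_{k'}^{\bB_k^\star} \cap s| = |A_{k'}^{\tmB_k^{\rm tail}} \cap s| < (1+\epsilon)^{k'+1}$ by the $\epsilon$-stability of $\tmB_k^{\rm tail}$ from \Cref{lm:catchup:end}. For $k' \ge \ell_k + 1$, $|A_{k'}^{\bB_k^\star} \cap s| = |A_{k'}^{\bB_k} \cap s|$, and it suffices to show $\bB_k$ remains $\epsilon$-stable throughout the suspension, copy, and tail phases. The starting solution $\tmB_k$ is $\epsilon$-stable by \Cref{lm:catchup:end}, and a direct inspection of \Cref{alg:greedy:lazy:0} shows that a lazy update never creates a new active element: an insertion always places the element with $\plev = \lev$ (universally passive), and a deletion can only lower $\plev$. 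Hence $A_{k'}^{\bB_k} \cap s$ only shrinks under lazy maintenance, and $\epsilon$-stability survives.

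For \textbf{$2\epsilon$-tidiness}, the case $k' \le \ell_k$ reduces, via $P_{k'}^{\bB_k^\star} = P_{k'}^{\tmB_k^{\rm tail}}$, to $\epsilon$-tidiness of $\tmB_k^{\rm tail}$ at every level in $[0, \ell_k]$, which is given by \Cref{lm:catchup:end}. For $k' \in (\ell_k, k]$, the identity $A_{k'}^{\bB_k^\star} = A_{k'}^{\bB_k}$ combined with $P_{k'}^{\bB_k^\star} \subseteq P_{k'}^{\bB_k}$ (dormant low-level elements are dropped in $\bB_k^\star$) reduces the task to controlling the tidy ratio of $\bB_k$ itself at such levels. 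By the maximality in the definition of $\ell_k$, we have $|C_{k'}^{\bB_k}(\tsus_k)| > \tfrac{1}{2}\tausus_k$, so $\tmB_k$'s $\epsilon$-tidiness gives $|A_{k'}^{\bB_k}(\tsus_k)| \ge \tausus_k / (2(1+\epsilon))$. Each lazy update during the suspension, copy, and tail phases contributes at most one to $|P_{k'}^{\bB_k}|$ and removes at most one from $|A_{k'}^{\bB_k}|$; by \Cref{lm:new:copy} the total duration of these phases is $o(1) \cdot \tausus_k$. The resulting tidy ratio is therefore at most $\epsilon + o(1) < 2\epsilon$ for $n$ sufficiently large, completing the tidiness bound.

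The size bound in item~\ref{item:lm:new:tail:1:2} then follows by applying \Cref{lem:approx-factor-full} to $\bB_k^\star$: it yields $|\bB_k^\star| \le (1+O(\epsilon)) \ln n \cdot \OPT(L^{\bB_k^\star}) \le O(\log n) \cdot \OPT(L) = O(\log n) \cdot \OPT$, using $L^{\bB_k^\star} \subseteq L$. The hard part I expect will be establishing the active/passive identity at the boundary level $k' = \ell_k + 1$, where the catch-up greedy passive-level rule $\max\{\ell_k+1, \plev^{\bB_k}(e)\}$ must be reconciled with the original classification in $\bB_k$, including elements whose $\plev^{\bB_k}(e)$ or $\lev^{\bB_k}(e)$ shifts during the tail phase through lazy deletions and re-insertions into higher-level sets of $\bB_k$. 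A careful case analysis splitting on whether each element entered the tail via a catch-up greedy step or a lazy insertion, and on whether its $\plev^{\bB_k}(e)$ lies above or below $\ell_k + 1$ at each moment, should confirm that the two classifications never diverge.
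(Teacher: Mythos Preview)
Your treatment of item~1 matches the paper's sketch: both decompose into low levels $[0,\ell_k]$ (handled by the tail catch-up greedy via \Cref{lm:catchup:end}) and high levels $(\ell_k,k]$ (handled by $\bB_k$ plus a drift bound exploiting $|C_{k'}^{\bB_k}(\tsus_k)|>\tfrac12\tausus_k$ from the definition of $\ell_k$). Your explicit active/passive identity framework is in fact a more rigorous articulation of what the paper simply asserts without proof (namely that ``replacing the sets at level $\le\ell_k$ by $\tmB^{\rm tail}_k$ does not increase dirty for level $k'>\ell_k$'').

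Your route for item~2, however, has a genuine gap. You invoke \Cref{lem:approx-factor-full} directly on $\bB_k^\star$, but that lemma requires tidiness at \emph{every} level in $[0,\ell_{\max}]$, whereas item~1 only delivers tidiness on $[0,k]$. And $\bB_k^\star$ is typically \emph{not} tidy at level $k+1$: the catch-up greedy initializes $\plev^{\bB_k}(e)\gets\max\{k+1,\plev^{\bF}(e)\}$, so if the elements of $L_k^{\bF}$ all happen to have $\plev^{\bF}(e)\le k+1$ (which nothing prevents), then $A_{k+1}^{\bB_k^\star}=\emptyset$ while $P_{k+1}^{\bB_k^\star}$ contains everything, and the tidy ratio at level $k+1$ is infinite. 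The paper sidesteps this by bounding the two halves of $\bB_k^\star$ separately, each via the $O(\log n)\cdot\OPT$ guarantee for catch-up greedy (part~3 of \Cref{lm:catchup:end} and \Cref{cor:new:thread}). In the full version that guarantee is itself proved by a relabeling trick: one imagines the copy pointer had been initialized to $\ell_{\max}+1$ instead of $k+1$; the chosen sets are unchanged, but the alternative labeling \emph{is} tidy and stable at all levels, so \Cref{lem:approx-factor-full} applies to it.
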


 \begin{proof}(Sketch)
Suppose that the thread $T_k$ enters suspension phase in the beginning of time-step $\tsus_k$. 

Consider any level $k' \in \left[ \ell_k +1, k \right]$. By definition, we have $\left| C_{k'}^{\bB_k}(\tsus_k) \right| > \frac 12 \cdot \tausus_k$. Using the same argument as in the proof of \Cref{lm:new:copy}, we conclude that the thread $T_k$ remains in suspension, copy or tail phase for at most $o(1) \cdot \tausus_k$ time-steps. During each such time-step, at most one element in $\bB_k$ can become passive at level $
\leq k'$ (as $\bB_k$ gets lazily updated). Thus, at every time-step $t$ in the copy or tail phase, we have 
\begin{equation}
\label{eq:new:101}
\left| P_{k'}^{\bB_k}(t) \right| \leq \left| P_{k'}^{\bB_k}(\tsus_k) \right| + o(1) \cdot \tausus_k < \left| P_{k'}^{\bB_k}(\tsus_k) \right| + o(1) \cdot \left| C_{k'}^{\bB_k}(\tsus_k) \right|
\end{equation}
Next, by the same argument as in \Cref{cor:new:2} of \Cref{cor:new:thread}, the background solution $\bB_k$ is $\epsilon$-tidy at level $k'$ at time-step $\tsus_k$. Combining this observation with \Cref{eq:new:101}, we get that 
\begin{equation}
\label{eq:999}
\bB_k \text{ is } 2\epsilon\text{-tidy at  level } k'   \text{ throughout the copy and tail phases.}
\end{equation} 
We claim without proof that replacing the sets at level $\le \ell_k$ by  $\tmB^{\rm tail}_k$ does not increase dirty for level $k' > \ell_k$. Then, we also have $\bB^\star_k$ is  $2\epsilon$-tidy at level $k'$ throughout the tail phase. 

Next, consider any level $k' \in \left[0, \ell_k\right]$. Recall that during the tail phase the thread $T_k$ runs catch-up greedy with source $\left(\ttail_k, \ell_k, L_{\ell_k}^{\bB_k} \right)$ and target $\bB_k^{\rm tail}$. Thus, by \Cref{lm:catchup:end:1} of \Cref{lm:catchup:end}, we get that throughout the tail phase $\tmB^{\rm tail}_k$ is $\epsilon$-tidy at level $k'$ .  Since $k' \leq \ell_k$, this also implies that $\bB^\star_k$ is  $2\epsilon$-tidy at level $k'$ throughout the tail phase.

Next, using similar ideas as in \Cref{cor:new:2} of \Cref{cor:new:thread} (resp.~\Cref{lm:catchup:end:2} of \Cref{lm:catchup:end}), we infer that $\bB^\star_k$ is $\epsilon$-stable at every level $k' \in \left[ \ell_k+1, k\right]$ (resp.~$k' \in \left[0, \ell_k\right]$) throughout the tail phase. 

Finally, using similar ideas as in \Cref{cor:new:3} of \Cref{cor:new:thread}, we can show that throughout the tail phase, the number of sets at levels $> \ell_k$ in $\bB_k$ is at most $O(\log n) \cdot \text{{\sc OPT}}$. In contrast, using similar ideas as in \Cref{lm:catchup:end:3} of \Cref{lm:catchup:end}, we can show that throughout the tail phase we have $\left| \tmB^{\rm tail}_k\right| = O(\log n) \cdot \text{{\sc OPT}}$. Taken together, these two observations imply that $\left| \bB_k^\star \right| = O(\log n) \cdot \text{{\sc OPT}}$ throughout the tail phase. 
 \end{proof}

\begin{corollary}
\label{cor:new:tail:1}
At every time-step, the foreground solution $\bF$ is $3\epsilon$-tidy and $\epsilon$-stable.
\end{corollary}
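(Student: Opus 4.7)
The plan is to mirror the proof of \Cref{cor:new:approx} from the simpler algorithm in \Cref{sec:review}, but with \Cref{lm:new:tail:1} playing the role previously played by \Cref{cor:new:thread}. The foreground $\bF$ evolves in two ways: lazy updates w.r.t.~$L$ at every time-step, and occasional switches where a normally-terminating thread $T_{j^\star}$ replaces $\bF$ at levels $\leq j^\star$ with the corresponding portion of $\bB_{j^\star}^{\star}$. I will establish that (i) immediately after a switch, $\bF$ is $2\epsilon$-tidy and $\epsilon$-stable at the switched levels, and (ii) between consecutive switches at any given level, too few lazy updates occur to push $\bF$ beyond the $3\epsilon$-tidy, $\epsilon$-stable envelope.

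For $\epsilon$-stability, deletions only convert active elements to passive, which strictly shrinks $|A_k^{\bF} \cap s|$ for every set $s$ and level $k$; insertions assign the new element to the highest-level set in $\bF$ containing it, and using the slack preserved by \Cref{lm:new:tail:1} Item~\ref{item:lm:new:tail:1:1} at the most recent switch, the strict inequality $|A_k^{\bF} \cap s| < (1+\epsilon)^{k+1}$ continues to hold. Immediately after a switch from thread $T_{j^\star}$, the levels $\leq j^\star$ of $\bF$ are inherited from $\bB_{j^\star}^\star$ and are $\epsilon$-stable by \Cref{lm:new:tail:1}, while the levels $> j^\star$ of $\bF$ are unchanged and remain $\epsilon$-stable by induction on previous switches.

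For $3\epsilon$-tidiness at a fixed level $k$ and time-step $t$, let $t^\star \leq t$ be the most recent time-step at which some thread $T_{j^\star}$ with $j^\star \geq k$ normally terminated. By \Cref{lm:new:tail:1} Item~\ref{item:lm:new:tail:1:1}, we have $|P_k^{\bF}(t^\star)| \leq 2\epsilon \cdot |A_k^{\bF}(t^\star)|$. Between $t^\star$ and $t$, each lazy update can convert at most one active element at level $\leq k$ into a passive one, so $|P_k^{\bF}(t)| \leq 2\epsilon \cdot |A_k^{\bF}(t^\star)| + (t-t^\star)$ and $|A_k^{\bF}(t)| \geq |A_k^{\bF}(t^\star)| - (t-t^\star)$. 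Hence it suffices to bound $t - t^\star = O(\epsilon) \cdot |A_k^{\bF}(t^\star)|$, which I obtain by using \Cref{lm:catchup:end}, \Cref{lm:new:copy}, and a further application of \Cref{lm:catchup:end} to the tail phase: the full cycle of thread $T_k$ (computation, suspension, copy, tail) lasts at most $O(|L_k^{\bF}(t^\star)|/\log n)$ time-steps, which is $O(|A_k^{\bF}(t^\star)|/\log n) \ll \epsilon \cdot |A_k^{\bF}(t^\star)|$. Within this window either $T_k$ itself normally terminates (yielding a fresh switch at level $\geq k$), or some higher thread $T_j$ with $j > k$ abnormally terminates $T_k$, in which case its own normal termination already provided a more recent valid $t^\star$.

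The main obstacle will be the final step, namely uniformly bounding the cycle length of $T_k$ in the presence of abnormal terminations triggered from above. This requires analyzing the scheduler of \Cref{alg:new:transition} without leaning on the simplifying \Cref{assume:simple}, so that both the cycle bound for $T_k$ and the guarantee of a normally-terminating thread $T_{j^\star}$ with $j^\star \geq k$ within the window can be derived from the actual dynamics of the $\tausus_j$ snapshots. A secondary subtlety is verifying that when the levels $\leq j^\star$ of $\bF$ are replaced by $\bB_{j^\star}^\star$, they blend consistently with the inherited higher levels so that $\bF$ remains a valid hierarchical solution satisfying the level and passive-level invariants.
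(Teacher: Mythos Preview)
Your approach is correct and matches the paper's own proof sketch, which explicitly says to rerun the reasoning behind \Cref{cor:new:approx} with \Cref{lm:new:tail:1} substituted for \Cref{cor:new:thread}.

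Two small clarifications are worth making. First, your stability argument for insertions via ``slack preserved at the most recent switch'' is not the right mechanism: under a lazy update (\Cref{alg:greedy:lazy:0}) an inserted element receives $\plev^{\bF}(e) = \lev^{\bF}(e)$, so it is universally passive and never enters any $A_k^{\bF}$; stability is therefore preserved vacuously, whereas a slack argument would fail because the strict inequality $|A_k^{\bF} \cap s| < (1+\epsilon)^{k+1}$ can be arbitrarily close to tight. Second, the ``main obstacle'' you flag is overstated: \Cref{lm:new:copy} is stated as a standalone lemma (its full proof without \Cref{assume:simple} is deferred to \Cref{lm:wait:time} in the formal part), so for the purpose of proving \Cref{cor:new:tail:1} you may invoke it as a black box and need not re-analyze the scheduler yourself.
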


\begin{proof}(Sketch)
When a thread $T_k$ normally terminates, the way we switch the hierarchical solution $\bB^\star_k$ to the foreground and abnormally terminate all the threads $T_{j}$ with $j < k$ is analogous to the corresponding procedure in \Cref{sec:review} (where we switch $\bB_k$ to the foreground, as opposed to $\bB^\star_k$). Accordingly, following the same line of reasoning that led us to \Cref{cor:new:approx} starting from \Cref{cor:new:thread}, we can derive \Cref{cor:new:tail:1} starting from \Cref{lm:new:tail:1}.
\end{proof}

 \paragraph{Remark.} In the lemma below, we  need to refer to the hierarchical solution $\bB_k^\star$ in the {\em copy phase} of the thread $T_k$. Towards this end, we follow the convention that throughout the copy phase, the hierarchical solution $\bB_k^{\rm tail}$ is empty, and the hierarchical solution $\bB_k^\star$ is constructed in the same manner as in the tail phase. In other words, during the copy phase, the hierarchical solution $\bB_k^\star$ is obtained by removing from $\bB_k$ all the sets that are at level $\leq \ell_k$, and adding back a static greedy solution for $L^{\bB_k}_{\ell_k}$. 
 We claim that with this convention, \Cref{item:lm:new:tail:1:2} of \Cref{lm:new:tail:1} continues to hold even during the copy phase.

  \begin{lemma}
  \label{lm:new:500}
At any time-step $t$ during the  copy or tail phase for thread $T_k$, we have $\tausus_k = O\left(\left| \bB^{\star}_k(t) \right|\right)$.
 \end{lemma}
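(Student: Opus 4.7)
The plan is to lower bound $|\bB^\star_k(t)|$ by the number of sets in $\bB_k(t)$ lying at levels strictly greater than the critical level $\ell_k$, and then show this count is at least $\tausus_k/2$ throughout the copy and tail phases. The construction of $\bB^\star_k$ removes from $\bB_k$ precisely the sets at level $\le \ell_k$ (replacing them by $\tmB^{\rm tail}_k$), so every $s\in \bB_k(t)$ with $\lev^{\bB_k}(s) > \ell_k$ survives in $\bB^\star_k(t)$. This immediately gives
\[
|\bB^\star_k(t)| \;\ge\; \bigl|\{s\in\bB_k(t):\lev^{\bB_k}(s) > \ell_k\}\bigr|.
\]

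Next, I would evaluate this quantity at $t=\tsus_k$ using the definition of $\ell_k$. Because coverage sets in a hierarchical solution are disjoint and non-empty (by \Cref{def:hierarchical-sol}), each set at level $\le \ell_k$ contributes at least one element to $C_{\ell_k}^{\bB_k}(\tsus_k)$, so
\[
\bigl|\{s\in\bB_k(\tsus_k):\lev^{\bB_k}(s) \le \ell_k\}\bigr| \;\le\; |C_{\ell_k}^{\bB_k}(\tsus_k)| \;\le\; \tfrac{1}{2}\tausus_k,
\]
where the last inequality is the defining property of $\ell_k$. Since $|\bB_k(\tsus_k)|=\tausus_k$, we conclude that at least $\tfrac{1}{2}\tausus_k$ sets of $\bB_k(\tsus_k)$ lie at levels $>\ell_k$.

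The third and final step is to argue that this lower bound is preserved under the lazy updates performed to $\bB_k$ during the suspension, copy and tail phases. Inspecting \Cref{alg:greedy:lazy:0}, a lazy update can (i) lower the passive level of an element upon its deletion, (ii) assign a newly inserted element $e$ to an existing set $s\in\bB_k$ without changing $\lev^{\bB_k}(s)$, or (iii) create a fresh set at level $0$. None of these operations removes a set from $\bB_k$ or alters the level of an existing set, so the quantity $|\{s\in\bB_k(t):\lev^{\bB_k}(s)>\ell_k\}|$ is monotonically non-decreasing in $t$ throughout the copy and tail phases. Combining the three observations yields $|\bB^\star_k(t)| \ge \tfrac{1}{2}\tausus_k$, i.e. $\tausus_k = O(|\bB^\star_k(t)|)$.

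The only subtlety I anticipate is confirming that the ``lazy update'' semantics are truly monotone with respect to set-membership and set-levels; if in the actual algorithm there is a step that merges or re-levels sets (for instance, when a set in $\bB_k$ becomes empty because all its covered elements get deleted), then the argument would need an additional accounting that such a purged set must have been at level $\le \ell_k$ anyway, or that the purge is deferred past the termination of the phase. Assuming the natural convention that $\bB_k$ only undergoes the lazy update of \Cref{alg:greedy:lazy:0} during these phases, the argument above is straightforward; the rest is a bookkeeping exercise.
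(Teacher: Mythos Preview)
Your proposal is correct and follows essentially the same approach as the paper: partition $\bB_k(\tsus_k)$ by levels relative to $\ell_k$, bound the number of low-level sets by $|C^{\bB_k}_{\ell_k}(\tsus_k)|\le\tfrac12\tausus_k$ via the non-empty disjoint coverage property, and note that the surviving high-level sets are all in $\bB^\star_k(t)$ because lazy updates neither delete sets nor alter their levels. The paper's write-up is slightly terser but the decomposition and the key inequality $\tausus_k \le |\bB^\star_k(t)| + \tfrac12\tausus_k$ are the same.
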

\begin{proof}
    
    Suppose that $T_k$ enters suspension phase at time-step $\tsus$.

    We partition $\bB_k(\tsus)$ into sets at levels higher than $\ell_k$, and sets at levels at most $\ell_k$.
    Note that the sets in the former part are included in $\bB^\star_k$, and cannot be modified throughout the copy and tail phase. So, its size is at most $\bB^{\star}_k(t)$. Also, the size of the latter part can be bounded by its coverage size, since every set has nonempty coverage.
    So, we have
    \[\tausus_k = |\bB_k(\tsus)| 
    \le |\bB^{\star}_k(t)| + \left|C^{\bB_k}_{\ell_k}(\tsus)\right|
    \le |\bB^{\star}_k(t)| + 0.5\tausus_k.
    \]
    The proof follows by rearranging the terms in the above inequality.
    \end{proof}

  \begin{lemma}
\label{lm:new:tail:2}
Throughout the duration of copy and tail phases for thread $T_k$, we have $\left| \bR_k \right| = O(\tausus_k)$.
 \end{lemma}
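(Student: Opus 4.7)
The plan is to decompose $\bR_k(t)$ into its contribution from the copy phase and (if applicable) its contribution from the tail phase, and bound each by $O(\tausus_k)$.

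For the copy phase, recall that the thread only copies into $\bR_k$ sets $s \in \bB_k$ with $\lev^{\bB_k}(s) > \ell_k$. During the copy phase the only updates to $\bB_k$ are the lazy updates (Algorithm~\ref{alg:greedy:lazy:0}), which insert any newly-created set at level $0$. Since $\ell_k \geq 0$, the collection $\{s \in \bB_k : \lev^{\bB_k}(s) > \ell_k\}$ can only shrink after time $\tsus_k$. Its size at $\tsus_k$ is trivially at most $|\bB_k(\tsus_k)| = \tausus_k$, which bounds the copy-phase contribution to $\bR_k$ by $\tausus_k$.

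For the tail phase, the contribution inherited from the copy phase remains bounded by $\tausus_k$ by the previous paragraph. The additional contribution is a prefix of $\bB_k^{\rm tail}(t)$, the catch-up greedy solution with source $(\ttail_k, \ell_k, L_{\ell_k}^{\bB_k})$. Since each set added to a catch-up greedy solution must cover at least one element that has appeared in the evolving source universe, we get
\[
|\bB_k^{\rm tail}(t)| \;\leq\; |L_{\ell_k}^{\bB_k}(\ttail_k)| \;+\; (t - \ttail_k).
\]

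It remains to bound both terms by $O(\tausus_k)$. By the definition of the critical level, $|C_{\ell_k}^{\bB_k}(\tsus_k)| \leq \tausus_k/2$, and \Cref{lm:new:copy} bounds the suspension-plus-copy interval $[\tsus_k, \ttail_k]$ by $o(\tausus_k)$. The sub-universe $L_{\ell_k}^{\bB_k}$ grows by at most one per time-step (live elements are slowly evolving, and lazy updates add at most one new level-$0$ element to $\bB_k$ per step), so $|L_{\ell_k}^{\bB_k}(\ttail_k)| \leq \tausus_k/2 + o(\tausus_k) = O(\tausus_k)$. Then \Cref{lm:catchup:end} applied to $\bB_k^{\rm tail}$ bounds the tail phase duration by $\tfrac{2}{\log n}\cdot|L_{\ell_k}^{\bB_k}(\ttail_k)| = o(\tausus_k)$, yielding $|\bB_k^{\rm tail}(t)| = O(\tausus_k)$ and hence $|\bR_k(t)| = O(\tausus_k)$.

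The main subtlety will be the careful accounting in the final paragraph: we must justify that $L_{\ell_k}^{\bB_k}$ evolves by at most one per time-step despite being driven by several concurrent processes (foreground lazy updates, $\bB_k$ lazy updates, and the construction of $\bB_k^{\rm tail}$ in the tail phase itself), and that the ``prefix'' interpretation of $\bB_k^{\rm tail}(t)$ during the tail phase does in fact let us invoke the running-universe bound rather than the extended-solution bound of \Cref{lm:catchup:end:3}. Everything else is a straightforward combination of the critical-level definition, Lemmas \ref{lm:new:copy} and \ref{lm:catchup:end}, and the lazy-update mechanics recorded in \Cref{sec:basic:greedy}.
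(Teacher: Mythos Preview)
Your proof is correct and follows essentially the same approach as the paper's: decompose $\bR_k$ into its $\bB_k$-contribution and its $\bB_k^{\rm tail}$-contribution, bound each by $O(\tausus_k)$ using the element-counting argument (each set covers at least one distinct element of the evolving source universe) together with the $o(\tausus_k)$ lifetime bound from \Cref{lm:new:copy}. Your copy-phase bound is slightly sharper---you observe that only level-$>\ell_k$ sets are copied and those are frozen after $\tsus_k$ since lazy updates insert only at level~$0$, whereas the paper simply bounds all of $|\bB_k(t)|$ by $\tausus_k + o(\tausus_k)$---but this is a cosmetic difference.
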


 \begin{proof}(Sketch)
 Suppose that the thread $T_k$ enters suspension phase at time-step $\tsus_k$. By definition, at the start of time-step $\tsus_k$, we have $\left| \bB_k \right| = \tausus_k$. Using the same argument as in the proof of \Cref{lm:new:copy}, we conclude that the thread $T_k$ remains in suspension, copy or tail phase for at most $o(1) \cdot \tausus_k$ time-steps. During each such time-step, the size of $\bB_k$ increases by at most one (due to a potential lazy update w.r.t.~$L_k^{\bF}$). Thus, we infer that $\left| \bB_k(t) \right| \leq \left| \bB_k(\tsus_k) \right| + o(1) \cdot \tausus_k = O(\tausus_k)$ at every time-step $t$ during the copy and tail phases of $T_k$.

 Next, observe that during each time-step the thread $T_k$ remains in suspension, copy or tail phase, the size of $C_{\ell_k}^{\bB_k}$ increases by at most one (due to a potential lazy update of $\bB_k$ w.r.t.~$L_k^{\bF}$). Recall that $\left| C_{\ell_k}^{\bB_k}(\tsus_k) \right| \leq  \tausus_k$ by definition. Furthermore, the thread $T_k$ remains in suspension, copy or tail phase for at most $o(1) \cdot \tausus_k$ time-steps. Thus, at every time-step $t$ in the copy or tail phase of thread $T_k$, we have 
 $$\left|\tmB^{\rm tail}_k(t) \right| \leq |L_{\ell_k}^{\bB_k}(\tsus_k)| + (t-\tsus_k)  \leq  \left| C_{\ell_k}^{\bB_k}(\tsus_k) \right|  + o(1) \cdot \tausus_k = O(\tausus_k),$$
 where the first inequality holds because the number of sets in $\bB^{\rm tail}_k(t)$ cannot larger than the number of elements they are meant to cover.
 
 Since  $\bR_k \subseteq \bB_k \cup \tmB^{\rm tail}_k$, we  derive that $\left| \bR_k \right| \leq \left| \bB_k \right| + \left| \tmB^{\rm tail}_k \right| = O(\tausus_k)$ throughout copy and tail phases of $T_k$.  \end{proof}

 \begin{corollary}
\label{cor:new:tail:3}
At every time-step $t$, we have $\sum_{k \in [0, \ell_{\max}]} \left| \bR_k \right| = O(\log n) \cdot \text{{\sc OPT}}(L)$.
\end{corollary}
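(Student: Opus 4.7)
The plan is to deduce this corollary from \Cref{lm:new:tail:2}, \Cref{lm:new:500}, and item~\ref{item:lm:new:tail:1:2} of \Cref{lm:new:tail:1}, using the geometric decay among $\tausus$ values enforced by the scheduler \Cref{alg:new:transition}. First, I would note that $\bR_k$ is empty except when $T_k$ is in its copy or tail phase, so the sum reduces to indices in $S(t) := \{k : T_k \text{ is in copy or tail phase at time } t\}$. By \Cref{lm:new:tail:2}, $|\bR_k| = O(\tausus_k)$ for every $k \in S(t)$, so it suffices to prove
\[
\sum_{k \in S(t)} \tausus_k \;=\; O(\log n) \cdot \OPT(L(t)).
\]

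Next, I would extract a geometric decay among the $\tausus$ values of threads in $S(t)$. I would order the threads in $S(t)$ by the time they entered copy phase (ties broken by decreasing level, consistent with the inner loop of \Cref{alg:new:transition}), giving a sequence $T_{k_1}, T_{k_2}, \ldots, T_{k_r}$, and argue that $\tausus_{k_i} \leq \tfrac{1}{2} \tausus_{k_{i-1}}$ for each $i \geq 2$. The direct case is when $T_{k_{i-1}}$ is still in copy phase at the moment $T_{k_i}$ is inducted: the initialization $\tau \gets \tfrac{1}{2} \min_j \tausus_j$ of \Cref{alg:new:transition}, together with the subsequent halving after each insertion in the inner loop, immediately yields the bound. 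Summing the resulting geometric series then gives $\sum_{k \in S(t)} \tausus_k = O(\tausus_{k_1})$.

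Finally, applying \Cref{lm:new:500} to the earliest-entering thread $T_{k_1}$ gives $\tausus_{k_1} = O(|\bB^{\star}_{k_1}(t)|)$. Applying item~\ref{item:lm:new:tail:1:2} of \Cref{lm:new:tail:1} (together with the remark that its conclusion on $|\bB^{\star}_k|$ extends from the tail phase to the copy phase, by declaring $\bB_k^{\rm tail}$ empty and $\bB_k^{\star}$ obtained by replacing the sets of $\bB_k$ at level $\leq \ell_k$ by a static greedy solution on $L^{\bB_k}_{\ell_k}$) gives $|\bB^{\star}_{k_1}(t)| = O(\log n) \cdot \OPT(L(t))$. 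Chaining these three bounds yields the claim.

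The main obstacle is the geometric-decay step, specifically the case in which some thread $T_{k_{i-1}}$ has already transitioned from copy to tail phase before the next thread $T_{k_i}$ enters copy phase, so that the scheduler's minimum no longer sees $T_{k_{i-1}}$ and the scheduler does not directly enforce $\tausus_{k_i} \leq \tfrac{1}{2} \tausus_{k_{i-1}}$. To handle this, I would exploit that both copy and tail phases of $T_k$ last at most $O(\tausus_k / \log n)$ time-steps (cf.\ \Cref{cl:new:copy:main} and the bound on tail-phase catch-up greedy via \Cref{lm:catchup:end}), so a thread in tail phase is already near termination; hence the number of tail-phase threads with $\tausus$ comparable to that of the most recent copy-phase entrant is bounded by a constant, and their contribution can be absorbed into the same geometric tail. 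Formalizing this absorption—by grouping threads in $S(t)$ into blocks of near-equal $\tausus$ and bounding the block size via the short tail-phase duration—constitutes the bulk of the technical work.
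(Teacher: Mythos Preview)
Your approach is essentially the paper's: restrict to threads in copy or tail phase, use \Cref{lm:new:tail:2} to bound each $|\bR_k|$ by $O(\tausus_k)$, sum the geometric series, and finish with \Cref{lm:new:500} and item~\ref{item:lm:new:tail:1:2} of \Cref{lm:new:tail:1} applied to the thread with largest $\tausus$.

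The ``obstacle'' you flag is a presentation artifact rather than a real difficulty. The paper's proof simply \emph{asserts} that the scheduler makes the $\tausus$ values geometrically decreasing; the reason your concern arises is that \Cref{alg:new:transition} was written for Take~II, which has no tail phase. In Take~III the scheduler's minimum should range over threads in copy \emph{and} tail phases---and indeed in the full algorithm (\Cref{alg:transition} in \Cref{sec:description}) it does. With that reading, whenever $T_{k_i}$ enters copy phase the threshold $\tau$ already accounts for $\tausus_{k_{i-1}}$ regardless of whether $T_{k_{i-1}}$ is in copy or tail phase, so $\tausus_{k_i}\le \tfrac12 \tausus_{k_{i-1}}$ holds directly. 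Your proposed workaround via short tail-phase durations and block absorption is unnecessary.
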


\begin{proof}(Sketch)
Note that every background thread $T_k$ has $\bR_k = \emptyset$ until the start of its copy phase. 

At the start of a time-step $t$, let $Z(t) \subseteq \left[0, \ell_{\max}\right]$ denote the collection of indices of those background threads that are in copy or tail phases. Now, the scheduler in \Cref{alg:new:transition} ensures that at every time-step $t$, the values in the collection $\{ \tausus_k : k \in Z(t) \}$ are geometrically decreasing. Let $k(t) := \arg\max_{k \in Z(t)} \{ \tausus_k \}$. Applying   \Cref{lm:new:tail:2}, \Cref{lm:new:500} and \Cref{item:lm:new:tail:1:2} of \Cref{lm:new:tail:1} for both copy and tail phases (see the remark just before \Cref{lm:new:500}), we now infer that   
\begin{eqnarray*}
\sum_{k \in [0, \ell_{\max}]} \left| \bR_k(t) \right| = \sum_{k \in Z(t)} \left| \bR_k(t) \right| = \sum_{k \in [0, \ell_{\max}]} O(\tausus_k) = O\left( \tausus_{k(t)}\right) = O\left(\left| \bB_{k(t)}^{\star}(t) \right| \right) = O(\log n) \cdot \textsc{OPT}(L(t)).
\end{eqnarray*}
This concludes the proof of the corollary.
\end{proof}

\paragraph{Proof of \Cref{th:overview:main}.} Our dynamic algorithm ensures the invariant that $\bF$ is a feasible set cover for the collection of all live elements. Hence, the maintained solution $\bF \bigcup_{k \in [0, \ell_{\max}]} \bR_k$ is also a feasible set cover for the collection of all live elements. The size of the maintained solution is $|\bF| + \sum_{k \in [0, \ell_{\max}]} \left| \bR_k \right|$, and so the approximation guarantee of our algorithm follows from \Cref{cor:new:tail:1}, \Cref{lem:approx-factor-full} and \Cref{cor:new:tail:3}.

Using standard data structures, the foreground solution, the background solutions and the buffer solutions can all be maintained in $\tO(f)$ worst-case update time. Furthermore, we have already emphasized that using appropriate pointer switches, we can construct the hierarchical solution $\bB_k^\star$ from $\bB_k$ and $\bB_k^{\rm tail}$ in $\tO(1)$ time, and we can also switch $\bB_k^\star$ into the foreground in $\tO(1)$ time, whenever necessary. Thus, the overall worst-case update time of our algorithm is also $\tO(f)$.

Finally, we observe that during each time-step within the copy and tail phases of the thread $T_k$, at most $\tO(1)$ sets get added to the buffer solution $\bR_k$. In contrast, the buffer solution $\bR_k$ remains empty throughout the computation and suspension phases of the thread $T_k$. Furthermore, whenever a  thread $T_k$  terminates, the buffer solution $\bR_k$ is reset to being empty and (provided $T_k$ normally terminates) we switch $\bB^\star_k$ to the foreground; but this event does not cause any insertion recourse (as opposed to causing potentially massive amount of deletion recourse). Finally, the foreground solution $\bF$ keeps getting lazily updated w.r.t.~$L$ at every time-step, and this also leads to a worst-case insertion recourse of at most $1$. Thus, the total worst-case insertion recourse of the maintained solution $\bF \bigcup_{k \in [0, \ell_{\max}]} \bR_k$ is still $\tO(1)$. Now, applying the reduction outlined in \Cref{sec:reduction}, we obtain a dynamic set cover algorithm with $O(\log n)$-approximation, $\tO(f)$ worst-case update time and $\tO(1)$ worst-case recourse. This concludes the proof of \Cref{th:overview:main}.

\section{Closing Remarks}\label{sec:closing}In this paper, we presented the first fully dynamic set cover algorithms to achieve non-trivial worst-case guarantees on both recourse and update time. In particular, we {\em simultaneously} obtained $\poly\log(n)$ worst-case recourse and $O(f \cdot \poly\log(n))$ worst-case update time, while guaranteeing  $O(\log n)$ or $O(f)$ approximations. 

A natural next goal is to refine these bounds further to match the best results known in the amortized setting, e.g., $(1+\eps)\ln n$ and $(1+\epsilon)f$ approximations with $\tO(1)$ worst-case recourse and $\tO(f\cdot \poly(1/\eps))$ worst-case update time. It seems plausible that our framework can be used to reduce the approximation factors to $(2+\eps) \ln n$ and $(2+\eps) f$ respectively. However, the gap of $2$ is an inherent barrier. Recall that to bound recourse, we used a combination of two solutions, the foreground solution and the buffer solutions, in the output. Each solution, at best, is a tight approximation that adds a factor of $\ln n$ or $f$ to the approximation ratio. Overcoming this barrier of $2$ will require more new ideas that go beyond the algorithmic framework that we introduced in this paper.

We also remark that our results only hold in the unweighted setting, i.e., when the sets have uniform cost. It is natural to ask whether our algorithmic framework extends to the weighted case. At a conceptual level, certain ideas that we introduced in this paper seem agnostic to sets having weights, while other ideas seem to fail. For instance, the idea of using a buffer solution to limit worst-case recourse is independent of whether the sets are weighted or unweighted. On the other hand, consider the idea of suspending and later restarting (or aborting) background processes to only allow a set of well-separated solutions to proceed to the buffer, that we used to bound our approximation ratio. This mechanism does not extend to the weighted case, because we can no longer trade off computation speed (which depends on the number of sets) with solution cost (which depends on the costs of sets) in the way that we did in our current algorithm. Nevertheless, the history of dynamic set cover algorithms makes us hopeful that our results will eventually be extended to the weighted setting. A case in point is the first $(1+\eps)f$-approximation result obtained by \cite{AbboudAGPS19}. The algorithm presented in this paper also applied only to the unweighted setting, but was later extended (with a mild dependence on the range of weights) to the weighted case by \cite{BhattacharyaHN19}. In contrast to this pair of results, the $f$-approximation algorithm obtained by \cite{AssadiS21} also applies to the unweighted setting only, but an extension to the weighted case has remained elusive.

Finally, we remark on the logarithmic bounds in update time and recourse that we obtain in this paper. In the low frequency regime (i.e., $O(f)$-approximation), there are a series of results in prior work that removed the $O(\log n)$ dependence in the {\em amortized} update time bounds~\cite{AssadiS21,BhattacharyaHNW21,BukovSZ25}. So, it is natural to ask whether we can obtain a dynamic set cover algorithm with $O(f)$ worst-case update time and $O(1)$ worst-case recourse. We note that neither of these results is known even in isolation (the only exception is \cite{GuptaKKP17} which obtains $O(1)$ worst-case recourse but at the cost of {\em exponential} update time), and designing algorithms that achieve either bound represents an interesting direction for future research.

\newpage

\part{$O(\log n)$-Competitive Algorithm}
\label{part:lnn}

\paragraph{Remark.} Throughout \Cref{part:lnn}, we use the notations and terminologies introduced in \Cref{sec:prelim}.

\section{Fully Dynamic $O(\log n)$-Competitive Set Cover Algorithm}\label{sec:description}The algorithm maintains a set of different set cover solutions, some of which correspond to different levels. 
The solutions are:
\begin{itemize}
    \item[-] a {\bf foreground} solution $\bF$, 
    \item[-] a set of $\ell_{\max} +1$ {\bf background} solutions $\bB_k$ for levels $k\in [0,\ell_{\max}]$, and 
    \item[-] a set of $\ell_{\max} +1$ {\bf buffer} solutions $\bR_k$ for levels $k\in [0,\ell_{\max}]$.
\end{itemize}    
We maintain the invariant that the foreground solution is feasible for the set of live elements at any time. The background solutions are not part of the output, but are used to update the foreground solution to maintain the desired approximation bounds. However, doing this directly incurs large recourse on the foreground solution. Instead, we use the buffer solution as an intermediary between the background and foreground solutions. We slowly update a buffer solution using the corresponding background solution, and once this process completes, switch the background solution to the foreground solution. The output comprises the foreground and the buffer solutions; so, switching the background solution (which is identical to the buffer solution) to the foreground does not cost us in recourse. 

The foreground solution and each of the background solutions are hierarchical solutions (with levels and passive levels), while the buffer solutions are merely collections of sets.
The algorithm creates $\ell_{\max}+1$ background threads $T_k$, one for each level $k\in [0,\ell_{\max}]$, where each thread runs a variant of the greedy set cover algorithm that we will describe later. 
These background threads run in a sequential order from $T_{\ell_{\max}}$ to $T_0$, and update their respective background solutions independently. The background threads run after the foreground thread. The algorithm periodically copies the background solutions into the buffer solutions. When copying is complete for a background solution, the algorithm switches the background solution to the foreground to update the foreground solution. 


\paragraph{Initialization.}
The algorithm initializes the foreground, background, and buffer solutions to be empty since there is no live element at this stage, i.e. $L(0)=\emptyset$. 

\subsection{Foreground Thread}
We describe the algorithm to maintain the foreground solution on the deletion or insertion of an element.

On deletion of element $e$, we set $\plev^{\bF}(e) \gets \lev^{\bF}(e)$. Note that $e\in \cov(\bF)$ as $\bF$ is a feasible solution for all live elements. 

On insertion of element $e$, let $s$ be the set containing $e$ with largest $\lev^{\bF}(s)$. 
If such a set doesn't exist, let $s$ be any set that contains $e$ -- in this case, we add $s$ to $\bF$ at level 0. In both cases, we assign $e$ to $s$, i.e. $e$ is added to $\cov(s)$ (but we don't change $\lev(s)$). We also set $\plev(e) = \lev(e) = \lev(s)$.

Besides the above changes triggered by the deletion or insertion of an element, the foreground solution can also be modified by switching to a background solution, which happens at the termination of a background thread. We describe background threads next and then discuss how the background, buffer, and foreground solutions interact.

\subsection{Background Threads}
At initialization, the algorithm starts a background thread $T_k$ for every level $k\in [0,\ell_{\max}]$.
Whenever a background thread terminates, the algorithm re-starts a thread at the same level. So, for every level $k$, there is always a background thread $T_k$ running in the algorithm. 

Conceptually, each background thread aims to run a greedy algorithm and copy the solution to the buffer.
Indeed, if we allow an $O(\log^2 n)$ approximation factor, we can simply let all threads copy their solutions, as described in \Cref{sec:take1}. In this case, all threads can run independently in parallel. However, to obtain an approximation factor of $O(\log n)$, we have to restrict the threads that are allowed to copy to the buffer at any time. This causes some additional complexity to coordinate the behavior of different background threads.

We partition the algorithm run by a background thread into five phases. For a background thread $T_k$, we first give a short overview of the purpose of each of these five phases:
\begin{itemize}
    \item {\bf Preparation:} The background thread initializes a greedy set cover algorithm with the set of elements $L^{\bF}_k$, i.e. the sub-universe of live elements at levels $\le k$ in the foreground solution $\bF$.
    \item {\bf Computation:} The background thread runs a greedy set cover algorithm on the sub-universe above to generate a hierarchical background solution $\bB_k$. 
    This background solution covers most elements in the sub-universe, but might leave a few elements uncovered. These latter elements are said to form the tail-universe of this background thread.
    \item {\bf Suspension:} At the end of the computation phase, the background thread might enter a suspension phase where it suspends operation.
    \item {\bf Copy:} Once out of the suspension phase, the background thread copies the background solution it computed to a corresponding buffer solution.
    \item {\bf Tail:} In this final phase, the background thread resumes the greedy algorithm on the (small) tail-universe and immediately copies each set in the solution to the corresponding buffer solution.
\end{itemize}

Next, we describe the algorithm for each phase in detail. In terms of data structures, the background thread $T_k$ maintains a priority queue $Q_k$ over all sets $\calS$, where the key of a set $s$ is the size of its intersection with the uncovered sub-universe $L^{\bF}_{k} \setminus \cov(\bB_k)$. We defer the details of the data structure to \Cref{sec:update-time}.

We also note that the sub-universe $L^{\bF}_k$ can change during the execution of a background thread $T_k$. 
This happens when an element $e$ in the sub-universe is deleted, or an element $e$ is inserted and the foreground thread sets $\lev^{\bF}(e) \le k$.
We describe how to handle these changes separately in each phase. 

Let $\cspd$ be a large constant that we determine later in \Cref{sec:lifetime}. It measures the number of elements processed in each time-step.

Specially, if $|L^\bF_k|\le \cspd$ at the beginning of $T_k$, all work of $T_k$ can be finished in one time-step. In this case, we execute $T_k$ to termination in the time-step. We call this the rule of base threads.

\paragraph{Preparation Phase.}
During the preparation phase, $T_k$ builds the priority queue $Q_k$. This requires scanning the sub-universe $L^{\bF}_{k}$, which is done at the speed of processing $\cspd$ elements in each time-step.

During the scanning, $T_k$ also sets $\plev^{\bB_k}(e) \gets \max\{k+1, \plev^{\bF}(e)\}$ for each element in the sub-universe. (The levels $\lev^{\bB_k}(e)$ of elements $e\in L^{\bF}_{k}$ will be decided later.) 

If the sub-universe $L^{\bF}_k$ changes during the preparation phase, we modify $Q_k$ to reflect this change in the sub-universe $L^{\bF}_k$ and reset the passive level of the element if required.

\paragraph{Computation Phase.}
The background thread $T_k$ executes the greedy algorithm given in \Cref{alg:greedy} in its computation phase, which is an adaptaion of the standard offline greedy algorithm for set cover. A crucial property of the offline algorithm is that the sets are added to the solution in order of monotonically decreasing coverage. But, this property might be violated in our setting because of element insertions. To assert this property artificially, the algorithm uses an index $p_k$ to indicate the level of the last set computed in the greedy algorithm. The algorithm ensures that $p_k$ monotonically decreases over time. We call $p_k$ the {\em copy pointer} of the background thread $T_k$.

In terms of the speed of processing, the greedy algorithm adds $\cspd$ elements to the coverage of the solution per time-step. Since the algorithm is processing elements much faster than insertions and deletions, the elements that change status (from live to dormant or vice-versa) during the execution of the greedy algorithm are a small fraction of all the elements in the greedy solution. This latter property is crucial in ensuring that the algorithm produces a competitive solution.

\begin{algorithm}
\caption{Greedy algorithm run by background thread $T_k$}
    $p_k\gets k+1$\;
    \While{the uncovered sub-universe $L^{\bF}_{k}\setminus \cov(\bB_{k})$ is nonempty}{
    Find the set $s$ that covers the maximum number of uncovered elements in the sub-universe 
    (using $Q_k.$FindMax), i.e. 
    $$s = \arg\max_{s\in \calS} |s\cap (L^{\bF}_{k}\setminus \cov(\bB_{k}))|.$$
    
    Add $s$ to $\bB_k$. Set $\cov(s)\gets s\cap (L^{\bF}_{k}\setminus \cov(\bB_k))$.\;\label{line:greedy-cov-assign}
    
    Set $\lev(s) \gets \min\{\lfloor \log_{1+\eps}|\cov(s)|\rfloor, p_k\}$.\;\label{line:set-level}

        For each $e\in \cov(s)$, set $\lev^{\bB_k}(e)\gets \lev(s)$, and update $Q_k$ to reflect that $e$ is removed from the uncovered sub-universe.\;
    
     Update $p_k \gets \min\{p_k, \lev^{\bB_k}(s)\}$.\;\label{line:pointer-update}
}\label{alg:greedy}
\end{algorithm}

We pause the greedy solution once the number of uncovered elements in the sub-universe becomes small. Intuitively, this is because at this stage, the changes in the sub-universe can significantly affect this small uncovered set.
These uncovered elements form the tail-universe which is handled later in the tail phase of the background thread. In particular,
\begin{itemize}
    \item After processing a set $s$ (i.e.\ when \Cref{alg:greedy} reaches the end of the while loop), if $|L^{\bF}_{k}\setminus \cov(\bB_{k})| \le |\bB_k|$, then pause the greedy algorithm and enter the suspension phase.
\end{itemize}
Specially, if $|L^{\bF}_{k}\setminus \cov(\bB_{k})| \le |\bB_k|\le \cspd$ when we pause the greedy algorithm, then all remaining work of $T_k$ can be finished in one time-step. In this case, we continue to execute the algorithm for copy and tail phases (to be defined later) until the termination of $T_k$. In this case, the thread is viewed to terminate in computation phase, and not viewed to enter later phases.
We call this the rule of shortcut threads. This implies that if $T_k$ enters later phases, then $|\bB_k(\tsus)| > \cspd$ where $\tsus$ is the time-step when $T_k$ enters suspension phase.

The sub-universe $L^{\bF}_k$ can change due to insertions and deletions of elements during the computation phase. We give the algorithm to handle these updates below in \Cref{alg:deletion,alg:insertion}.

\begin{algorithm}
\caption{Updates by background thread $T_k$ on deletion of element $e$ in $L^{\bF}_k$}
    \eIf{$e\in\cov(s)$ for some set $s\in \bB_k$}{
        $\plev^{\bB_k}(e)\gets \lev^{\bB_k}(e)$\;
    }{
        Update $Q_k$ to reflect that $e$ is removed from the uncovered sub-universe.
    }
\label{alg:deletion}    
\end{algorithm}
\begin{algorithm}
\caption{Updates by background thread $T_k$ on insertion of element $e$ in $L^{\bF}_k$ (i.e. $\lev^{\bF}(e) \le k$)}
    \eIf{$e$ is contained in some set in $\bB_k$}{
        Cover $e$ using the set $s$ with the maximum level in $\bB_k$ among all those that contain $e$. \; 
        $\plev^{\bB_k}(e) \gets \lev(s)$\;  
        $\lev^{\bB_k}(e) \gets \lev(s)$\;
    }{
        Update $Q_k$ to reflect that $e$ is inserted into the uncovered sub-universe.\;
        $\plev^{\bB_k}(e) \gets p_k$\;
    }
\label{alg:insertion}    
\end{algorithm}

\paragraph{Suspension Phase.}
When $|L^{\bF}_{k}\setminus \cov(\bB_{k})| \le  |\bB_k|$, the background thread $T_k$ enters the suspension phase.
In the suspension phase, the thread does not run the greedy algorithm, and does not copy any sets to the buffer solution.
When $T_k$ enters suspension phase, we take a snapshot of its solution size and denote it by $\tausus_k$.
In other words, $\tausus_k := |\bB_k(\tsus)|$ where $\tsus$ is the time-step when $T_k$ enters suspension phase.

If the sub-universe $L^{\bF}_k$ changes during the suspension phase, we update $\bB_k$ in an identical way to that in the computation phase (\Cref{alg:deletion,alg:insertion}). 

\bigskip

We now describe the criteria for suspended threads to enter the copy phase. The main desiderata is that the set of threads simultaneously in copy and tail phases form a geometric series in terms of their solution sizes. We give the algorithm for ensuring this below:


\begin{algorithm}
\caption{Transitioning background threads from suspension to copy phase}
    $\tau \gets \frac 12  \cdot \min_j \tausus_j$, where the minimum is taken over for all threads $T_j$ that are in copy and tail phases.\;\label{line:tau-init}
    \While{there exists some $T_{k'}$ currently in suspension phase such that $\tausus_{k'} \le \tau$}{
        $k \gets \max \{k': T_{k'} \text{ in suspension phase and } \tausus_{k'} \le \tau\}$\;
            
    Move $T_k$ from suspension to copy phase.\; 
     Update $\tau \gets \frac 12 \cdot \tausus_k$.\;
}
\label{alg:transition}
\end{algorithm}

Let $\tsus$ denote the time-step when a background thread $T_k$ enters the suspension phase. 
If $T_k$ remains too long in the suspension phase, then the previously computed solution $\bB_k(\tsus)$ can become uncompetitive due to many insertions and deletions. So, we add a thresholding condition: if $T_k$ is in the suspension until time-step $\tsus + 0.1 \cdot\tausus_k$, 
then we abort the thread and restart the preparation phase.

\paragraph{Copy Phase.}

During the copy phase, $T_k$ copies sets in $\bB_k$ to the corresponding buffer solution $\bR_k$ at the speed of $\cspd$ sets per time-step.

If the sub-universe $L^{\bF}_k$ changes during the copy phase, we update $\bB_k$ identically to the computation phase (\Cref{alg:deletion,alg:insertion}). In addition, any sets added to $\bB_k$ (because of element insertions) are immediately reflected in the buffer solution $\bR_k$. 

\paragraph{Tail Phase.}
When entering the suspension phase, the elements in the tail-universe are still to be covered. These elements remain uncovered during the suspension and copy phases. After the solution $\bB_k$ has been copied to $\bR_k$, the background thread $T_k$ enters the tail phase. In this phase, the algorithm resumes the greedy algorithm on the tail-universe.
The speed to run the greedy algorithm is $\cspd$ elements per each time-step, same as the computation phase. 
In the tail phase, $T_k$ copies each new set added to $\bB_k$ immediately to $\bR_k$. 

If the sub-universe $L^{\bF}_k$ changes during the tail phase, we handle this identically to the copy phase. I.e., we run  \Cref{alg:deletion,alg:insertion} and any sets added to $\bB_k$ (because of element insertions) are immediately reflected in the buffer solution $\bR_k$. 

\paragraph{Termination.}
The tail phase ends when $\bB_k$ (and therefore $\bR_k$) covers every element in the sub-universe $L^{\bF}_k$. Now, the algorithm switches the background solution $\bB_k$ to the foreground by removing all sets in $\bF$ at levels $\le k$, and replacing them with the sets in the corresponding levels in $\bB_k$. Additionally, $\bB_k$ might have sets at level $k+1$, which are now merged into level $k+1$ in the foreground solution $\bF$. 

We will show later that these transformations can be done efficiently within the update time afforded by a single time-step of the dynamic instance. Note that the foreground solution $\bF$ and the buffer solutions $\bR_k$ collectively constitute the output. Since the buffer solution $\bR_k$ is a copy of the background solution $\bB_k$ at this stage, the switch does not change the collection of sets in the output. Therefore, recourse is unaffected. But, this step is important in correctly maintaining our invariants on the data structures of the foreground solution. As part of the switch, the levels and passive levels of the relevant elements in $\bF$ are updated to those in $\bB_k$. 

After this switch for a background thread $T_k$, we abort all background threads $T_{k'}$, $k'<k$. When a thread gets aborted, it discards all sets copied to $\bR_{k'}$ and discards all data structures. Then, the threads $T_{k'}$, $k'\le k$ return to the preparation phase (on the new sub-universe $L^\bF_{k'}$). To distinguish between the two situations causing a background thread to end -- that it completed its tail phase and it got aborted by another background thread -- we call the former {\em normal termination} and the latter an {\em abnormal termination}.
Termination by the rule of base threads or shortcut threads is also viewed as normal termination.

\subsection{Sequential Behavior of Threads}
In the above algorithm description, we consider each thread separately. Since our overall algorithm is a sequential algorithm, the threads need to coordinate in a global scheduling. 
We use the following simple scheduling: The foreground thread is executed first. Then, the background threads are executed in the order of decreasing levels, from $T_{\ell_{\max}}$ to $T_0$. Each thread runs in a consecutive time block.

This order is related to the following features of the algorithm.
\begin{itemize}
    \item[-] The foreground thread handles the insertion or deletion in the foreground solution, which determines the change of uncovered sub-universes of background threads.
    \item[-] Although \Cref{alg:transition} is described as a global loop, the condition to enter copy phase from suspension phase is actually examined by each thread separately. Before executing any background threads, we initialize $\tau$ according to Line \ref{line:tau-init}. When executing background threads $T_k$ in suspension phase, we compare its snapshot size $\tausus_k$ with the thread $\tau$. If $\tausus_k \le \tau$, we let $T_k$ enter copy phase and update $\tau\gets \frac 12 \cdot \tausus_k$. The decreasing order of executing background threads ensures an equivalent behavior to \Cref{alg:transition}.
\end{itemize}
We explain some related details below.
\begin{itemize}
    \item[-] When a background thread is executed, it first handles the insertion or deletion, then continue the algorithm for its current phase.
    \item[-] (Rule of one phase in a time-step.) A background thread is only allowed to undergo one phase in a time-step, with the special case of a phase change in the end. So, once a background thread exists one phase and enters the next phase, it will stop execution for the current time-step, and start the next phase at the subsequent time-step.
    \item[-] Recall the normal termination of a background thread will abort all lower-level background threads. Since the background threads are executed in decreasing order, whenever we encounter a normal termination of a background level $T_k$, the lower-level background threads are not executed yet, and they can be directly aborted. Also, according to the rule of one phase in a time-step, these lower-level background threads will start preparation phase in the subsequent time-step.
\end{itemize}

\section{Analysis of the Fully Dynamic $O(\log n)$-Competitive Set Cover Algorithm}\label{sec:analysis}\subsection{Properties of the solution}

Recall that $\cov(\bF)$ and $C^{\bF}_k$ respectively denote the set of elements (live and dormant) in the coverage of all levels and levels $\le k$ of the foreground solution $\bF$. Thus, $\cov(\bF) \setminus C^{\bF}_k$ represents the set of elements in the coverage of levels $> k$. 

\paragraph{Feasibility.}
We start by showing that the sets of live elements in $\cov(\bB_k)$ and $L^\bF_k$ are identical at normal termination of a background thread $T_k$ (i.e. if it terminates at the end of its tail phase and not because it got aborted by another background thread):

\begin{claim}\label{fact:background-feasible}
We have the following properties:
\begin{enumerate}[(i)]
    \item For any time $t$ and any background thread $T_k$, we have
    $\cov(\bB_k(t))\cap L(t) \subseteq L^{\bF}_{k}(t)$.
    \item When a background thread $T_k$ terminates normally, we have $\cov(\bB_k)\cap L =  L^{\bF}_{k}$.
    \item When a background thread $T_k$ terminates normally, the update to the foreground solution does not change $L^{\bF}_j$ at any level $j > k$. 
\end{enumerate}
\end{claim}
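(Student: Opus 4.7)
My plan is to establish (i) by induction on time, and then derive (ii) and (iii) as direct consequences of (i) and the normal termination rule. The key preliminary for (i) is that within a single execution of $T_k$ (from a preparation phase up to the next termination), no higher-level thread $T_{k'}$ with $k' > k$ normally terminates, because any such normal termination would immediately abort $T_k$; and normal terminations of lower-level threads $T_{k'}$ with $k' < k$ leave $L_k^{\bF}$ unchanged by part (iii) applied inductively to $T_{k'}$. Hence throughout this interval, changes to $L_k^{\bF}$ arise only from the foreground thread's reactions to element insertions and deletions, and changes to $\cov(\bB_k)$ arise only from $T_k$'s own algorithmic actions together with \Cref{alg:deletion} and \Cref{alg:insertion}.

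Given this, the inductive step of (i) reduces to a short case analysis. Thread $T_k$ only enlarges $\cov(\bB_k)$ during computation (\Cref{line:greedy-cov-assign} of \Cref{alg:greedy}) or during the analogous tail phase, and in both cases the added elements come from $L_k^{\bF} \setminus \cov(\bB_k)$. An insertion of $e$ with $\lev^{\bF}(e) \le k$ places $e$ in $L_k^{\bF}$ and, via \Cref{alg:insertion}, either adds $e$ to $\cov(\bB_k)$ through an existing set (landing in $\cov(\bB_k) \cap L_k^{\bF}$) or only records it in the uncovered sub-universe. An insertion with $\lev^{\bF}(e) > k$ does not touch $\bB_k$ and does not enlarge $L_k^{\bF}$. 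A deletion of $e$ from $L$ does not add to $\cov(\bB_k)$ by \Cref{alg:deletion} and removes $e$ from $L$, so the intersection can only shrink. Preparation, suspension, and copy phases leave $\cov(\bB_k)$ unchanged. In each case the inclusion is preserved.

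Part (ii) then follows immediately: normal termination (in the tail phase, or early via the base or shortcut thread rules) occurs exactly when $\bB_k$ covers every element of $L_k^{\bF}$, so $L_k^{\bF} \subseteq \cov(\bB_k) \cap L$, and combining with (i) and $L_k^{\bF} \subseteq L$ yields equality. For (iii), I will use the fact that the switch replaces the sets of $\bF$ at levels $\le k$ by those of $\bB_k$ (whose levels lie in $\{0,\dots,k\}$), merges the level-$(k+1)$ sets of $\bB_k$ into level $k+1$ of $\bF$, and leaves all sets of $\bF$ at levels $> k+1$ untouched. Writing $L_j^{\bF} = L_{k+1}^{\bF} \cup \{e \in L : k+1 < \lev^{\bF}(e) \le j\}$ for $j > k$, the second piece is clearly preserved, so it suffices to show that $L_{k+1}^{\bF}$ is preserved. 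A live element at level $k+1$ in $\bF$ before the switch is covered by a set at level $k+1$ that persists, and by (i) such an element cannot also lie in $\cov(\bB_k)$ (which would force it into $L_k^{\bF}$ and contradict $\lev^{\bF}(e) = k+1$); so its level stays $k+1$. Live elements at levels $\le k$ in $\bF$ are, by (ii), exactly $\cov(\bB_k) \cap L$, and after the switch each is placed at its $\bB_k$-level, which is at most $k+1$, keeping it in $L_{k+1}^{\bF}$.

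The hard part will be scheduling the induction correctly: part (iii) for $T_{k'}$ must be available before the induction for $T_k$ at $k > k'$ can be closed, which calls for an outer induction on the temporal sequence of normal terminations. Beyond this scheduling subtlety, every step of the argument reduces to a direct bookkeeping check against the algorithmic descriptions in this section.
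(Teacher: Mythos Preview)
Your proposal is correct and follows essentially the same approach as the paper: prove (i) by induction on time (the paper phrases it as induction on level and time), using (iii) for lower-level threads inductively to handle switches by $T_{k'}$ with $k' < k$, then derive (ii) from the termination condition and (iii) from (ii). Your argument for (iii) is in fact spelled out more carefully than the paper's one-line ``immediate consequence of (ii)'' justification, but the underlying reasoning is the same.
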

\begin{proof}
    Fix any $k$. It is immediate that property (i) implies property (ii) since at normal termination of $T_k$, all live elements in $L^\bF_k$ are in $\cov(\bB_k)$.
%
    Moreover, when a background thread $T_k$ terminates normally, the foreground solution $\bF$ is updated only at levels $\le k+1$. This means that property (iii) is an immediate consequence of property (ii).

    
    We are left to prove property (i), which we do by induction on level $k$ and time $t$.
    As the base case, when any thread $T_k$ is created, $\cov(\bB_k)$ is empty, and the statement is trivial.
    We now consider the inductive steps:
    \begin{itemize}
    \item[-]
    The greedy algorithm run by $T_k$ adds a new set $s$ to $\bB_k$. Since Line \ref{line:greedy-cov-assign} assigns $\cov(s)$ to be a subset of $L^{\bF}_k$, the property is preserved.
    \item[-]
    Consider insertion of an element $e$. If $\lev^{\bF}(e) > k$, then it affects neither $\cov(\bB_k)$ nor $L^{\bF}_k$.
    Otherwise, $e$ is added to $L^{\bF}_k$, and may or may not be added to $\cov(\bB_k)$ depending on whether it is already covered by a set in $\bB_k$. So, the property is preserved.
    \item[-]
    Consider deletion of an element $e$. In this case, $e$ is removed from $L$, and $\cov(\bB_k)$ remains unchanged. So, the property is preserved.
    \item[-]
    Finally, consider the case that $\bF$ is updated at normal termination of another background thread $T_{k'}$. We have $k'< k$, otherwise $T_k$ would be aborted.
    By the inductive hypothesis, (iii) holds for $k'$, which means that $L^{\bF}_k$ is not changed by the update. So, the property is preserved. \qedhere
    \end{itemize}
\end{proof}

Using the claim above, we now show that the foreground solution is always feasible, i.e. all elements in $L$ are always covered in $\bF$:

\begin{lemma}\label{lem:feasible}
    $\bF$ is always a feasible set cover.
\end{lemma}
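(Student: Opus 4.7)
The plan is to prove the lemma by induction on the sequence of events that can modify either the set of live elements $L$ or the foreground solution $\bF$. The base case is immediate since at $t=0$ we have $L(0)=\emptyset$ and $\bF(0)=\emptyset$, so the empty solution vacuously covers the empty live set. For the inductive step, assuming $\bF$ covers $L$ just before an event, I would check that each of the three possible events preserves the property.

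First, I would handle the two easy cases, both corresponding to the foreground thread. On the insertion of an element $e$, the foreground thread explicitly places $e$ into the coverage of some set of $\bF$: either by extending the coverage of an existing set $s \in \bF$ that contains $e$, or, if no such set exists, by adding an arbitrary set containing $e$ to $\bF$ at level $0$. No other element loses its covering set in this process, so feasibility is preserved. On the deletion of $e$, the foreground thread only rewrites $\plev^{\bF}(e)$ and removes no sets from $\bF$; every remaining live element thus retains its cover.

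The main obstacle, and the only case with genuine content, is the switch executed when a background thread $T_k$ terminates normally: all sets of $\bF$ at levels $\le k$ are replaced by the sets of $\bB_k$ at the corresponding levels, with any level-$(k{+}1)$ sets of $\bB_k$ merged into level $k{+}1$ of $\bF$. To show feasibility survives the switch, I would partition $L$ by the pre-switch foreground level of each element. Live elements $e$ with $\lev^{\bF}(e) > k$ were covered by sets at levels $> k$ in $\bF$, and these sets all persist after the switch (possibly absorbed into the merged level $k{+}1$), so such $e$ remain covered. Live elements with $\lev^{\bF}(e) \le k$ lie, by definition, in $L^{\bF}_k$; here I would invoke \Cref{fact:background-feasible}(ii), which states that at normal termination of $T_k$ we have $L^{\bF}_k = \cov(\bB_k) \cap L$. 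Hence every such $e$ is contained in $\cov(\bB_k)$, and the corresponding covering set of $\bB_k$ is now part of the post-switch $\bF$.

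The three cases together close the induction, and the lemma follows. In effect, Lemma~\ref{lem:feasible} is a corollary of \Cref{fact:background-feasible}: once property (ii) of that claim is in place, the only bookkeeping needed is to confirm that the level-$(k{+}1)$ merge step preserves coverage, which is clear because the merged sets are retained rather than discarded.
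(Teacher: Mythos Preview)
Your proof is correct and follows essentially the same approach as the paper's: an induction over the events that modify $L$ or $\bF$, handling insertion and deletion directly and appealing to \Cref{fact:background-feasible}(ii) for the switch at normal termination. Your partition of $L$ by pre-switch foreground level in the switch case is just an explicit elaboration of the paper's one-line invocation of that claim.
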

\begin{proof}
    We prove this lemma by induction over time. Initially, $L = \emptyset$, so the lemma trivially holds.
    We now consider the inductive steps:
    \begin{itemize}
        \item[-] 
        On insertion of an element $e$, the algorithm adds $e$ to $\cov(s)$ for some existing set $s\in \bF$ or adds a new set $s$ to $F$ and sets $\cov(s) = \{e\}$.
        \item[-]
        On deletion of an element $e$, the algorithm does not change $\cov(\bF)$.
        \item[-]
        Finally, when a thread $T_k$ terminates normally, the set of live elements covered in $\cov(\bF)$ does not change, by property (ii) of \Cref{fact:background-feasible}. \qedhere
    \end{itemize}
\end{proof}

\paragraph{Hierarchical solution.}
Next, we establish that the foreground and background solutions are hierarchical solutions, and do not have any duplicate elements. First, we show that the set of elements in the coverage of a background thread $T_k$ is disjoint from $\cov(\bF) \setminus C^{\bF}_k$. This ensures that when we switch the background solution to the foreground, we do not create duplicate elements.

\begin{claim}\label{fact:cov-subset-foreground-coverage}
    For any background thread $T_k$, we have $\cov(\bB_k) \cap (\cov(\bF) \setminus C^{\bF}_k) = \emptyset$.
\end{claim}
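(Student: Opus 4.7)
The plan is to fix a level $k$ and argue by induction on time that the disjointness $\cov(\bB_k) \cap (\cov(\bF) \setminus C^{\bF}_k) = \emptyset$ is preserved throughout the life of the background thread $T_k$. The base case is immediate: whenever $T_k$ enters a fresh preparation phase (at initialization, or after being normally terminated, or after being aborted by some higher-level thread), its background solution is empty.

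For the inductive step, I would split on the event occurring at the current time-step. Three cases are essentially routine: a greedy step inside $T_k$, an insertion, and a deletion. A greedy step (\Cref{alg:greedy}) assigns $\cov(s) \subseteq L^{\bF}_k \setminus \cov(\bB_k)$, placing every new element of $\cov(\bB_k)$ at foreground level $\le k$, hence inside $C^{\bF}_k$. An insertion of $e$ can enlarge $\cov(\bB_k)$ only via \Cref{alg:insertion}, whose header guarantees $\lev^{\bF}(e) \le k$ at that moment; for the elements already in $\cov(\bB_k)$, the foreground insertion alters only the level of the inserted $e$, so all other foreground levels are untouched, and when the inserted $e$ itself is live one may fall back on property (i) of \Cref{fact:background-feasible} to conclude $e \in L^{\bF}_k \subseteq C^{\bF}_k$. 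A deletion neither adds to $\cov(\bB_k)$ nor modifies any $\lev^{\bF}(\cdot)$, so the invariant persists verbatim.

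The main obstacle, and the case on which I expect to spend the bulk of the argument, is the normal termination of some other background thread $T_{k'}$. Since $T_k$ is still alive, one must have $k' < k$, so the switch rewrites $\bF$ only at levels $\le k'+1 \le k$. Fix any $e \in \cov(\bB_k)$ just before the switch. If $\lev^{\bF}(e) > k'+1$ beforehand, the set containing $e$ in $\bF$ survives the switch untouched, and the inductive hypothesis already gives $\lev^{\bF}(e) \le k$. Otherwise $\lev^{\bF}(e) \le k'+1$, and unpacking the description of the switch shows that $e$ is either expelled from $\cov(\bF)$ (which trivially vacates the intersection at $e$), remains inside a set of $\bF$ that survived at level $k'+1$, or is moved into a set inherited from $\bB_{k'}$ at some level $\le k'+1$; in every subcase the resulting foreground level is at most $k$, so $e \in C^{\bF}_k$. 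This closes the induction and yields the desired disjointness.
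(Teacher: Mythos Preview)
Your proposal is correct and follows essentially the same induction-on-time approach as the paper, with the same case split. The one difference worth noting is in the termination case: you track each $e \in \cov(\bB_k)$ individually through the switch, whereas the paper uses the cleaner one-line observation that since the switch modifies $\bF$ only at levels $\le k'+1 \le k$, the set $\cov(\bF) \setminus C^{\bF}_k$ cannot grow, so its (empty) intersection with the unchanged $\cov(\bB_k)$ remains empty.
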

\begin{proof}
    We prove by induction on time. The base case is trivial since a background thread $T_k$ is initialized in the preparation phase with an empty solution, i.e., $\cov(\bB_k)$ is empty.
    For the inductive step, we have the following cases:
    \begin{enumerate}
        \item The greedy algorithm run by $T_k$ adds a new set $s$ to $\bB_k$. Since Line \ref{line:greedy-cov-assign} assigns $\cov(s)$ to be a subset of $L^{\bF}_k\subseteq C^{\bF}_k$, the property is preserved.
        \item A new element $e$ is inserted in $L^{\bF}_k$, and therefore also in $C^{\bF}_k$. \Cref{alg:insertion} adds $e$ to $\cov(\bB_k)$ if an existing set in $\bB_k$ contains $e$, else $e$ is added to the uncovered set of elements. In the latter case, $\cov(\bB_k)$ remains unchanged. Therefore, the property is preserved.
        \item An element $e$ is deleted from $L^{\bF}_k$. The foreground thread only modifies $\plev^{\bF}(e)$, so $e$ is retained in $C^{\bF}_k$ as a dormant element. Therefore, the property is trivially preserved. 
        \item A background thread at some level $k' < k$ normally terminates and updates the foreground solution $\bF$ at levels $\le k'+1$. 
        Since levels $> k$ are not changed in $\bF$, the set $\cov(\bF) \setminus C^{\bF}_k$ cannot grow because of this update to $\bF$. Therefore, the property is preserved.
        \qedhere
    \end{enumerate}
\end{proof}

Next, we show a property of the copy pointer that is crucial for ensuring that the foreground and background solutions are hierarchical solution:

\begin{claim}\label{fact:pointer-monotone}
    During the lifetime of a background thread $T_k$, the following hold for its  copy pointer $p_k$:
    \begin{itemize}
        \item[-] $p_k$ is monotone decreasing, starting from $k+1$.
        \item[-] for every element $e\in L^{\bF}_k$, we have $\plev^{\bB_k}(e) \ge p_k$.
    \end{itemize}        
\end{claim}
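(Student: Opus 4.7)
My plan is to prove both bullets jointly, by induction over the event stream during $T_k$'s lifetime, starting at the first time-step of the computation phase when $p_k$ is initialized. The first bullet is immediate by code inspection: the only assignments to $p_k$ are the initialization $p_k \gets k+1$ on the first line of \Cref{alg:greedy} and the update $p_k \gets \min\{p_k, \lev^{\bB_k}(s)\}$ in line \ref{line:pointer-update}; the latter is non-increasing, and the variable $p_k$ persists untouched through the suspension, copy, and tail phases.

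For the second bullet I plan to maintain the invariant $\plev^{\bB_k}(e) \ge p_k$ for every $e \in L^{\bF}_k$ by induction over the same event stream. The base case is the boundary between the preparation and computation phases: preparation assigns $\plev^{\bB_k}(e) \gets \max\{k+1, \plev^{\bF}(e)\} \ge k+1$ to each element of $L^{\bF}_k$, which matches the initial value $p_k = k+1$. The inductive step splits into four cases. A greedy addition only decreases $p_k$ and does not touch any $\plev^{\bB_k}$, so the invariant only becomes easier to satisfy. A deletion of $e$ removes it from $L^{\bF}_k$, so the invariant no longer needs to hold for it. An insertion of $e$ that is not already covered by any set in $\bB_k$ directly assigns $\plev^{\bB_k}(e) \gets p_k$ in \Cref{alg:insertion}, giving the bound with equality. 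A foreground update triggered by the normal termination of some $T_{k'}$ with $k' < k$ leaves $L^{\bF}_k$ intact by property (iii) of \Cref{fact:background-feasible}, and modifies neither $\bB_k$ nor $p_k$.

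The remaining and most delicate case, which I expect to be the main obstacle, is an insertion of an element $e$ that is already contained in some $s \in \bB_k$: \Cref{alg:insertion} then assigns $\plev^{\bB_k}(e) \gets \lev(s)$, and to preserve the invariant I must verify $\lev(s) \ge p_k$ at the current moment. My idea is to rewind to the time-step $t'$ at which the greedy routine appended $s$ to $\bB_k$; at that step line \ref{line:pointer-update} executed $p_k \gets \min\{p_k, \lev(s)\}$, so the pointer value immediately after that assignment was at most $\lev(s)$. Combining this snapshot with the monotonicity of $p_k$ from the first bullet yields $p_k \le p_k^{(t')} \le \lev(s)$ at the current time, closing the induction. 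It is precisely this coupling between the two bullets that motivates proving them jointly in a single induction rather than in sequence.
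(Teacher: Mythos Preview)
Your proposal is correct and follows essentially the same approach as the paper: code inspection for the monotonicity of $p_k$, and an event-by-event induction for the invariant $\plev^{\bB_k}(e)\ge p_k$. Your treatment is in fact more careful than the paper's terse proof, which simply asserts that an inserted element's passive level ``is set to $p_k$'' and thereby glosses over the covered-insertion branch of \Cref{alg:insertion} that you handle explicitly via the rewind-and-monotonicity argument.
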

\begin{proof}
    Since $p_k$ is initialized to $k+1$, and only modified by Line \ref{line:pointer-update} in \Cref{alg:greedy}, the first part of the claim follows.
    
    For the second part, note that in the preparation phase of $T_k$, passive levels of all elements $e\in L^{\bF}_k$ are set to $k+1$. For an element inserted into $L^{\bF}_k$, its passive level is set to $p_k$, which is monotone decreasing.
    The passive levels in a background solution can only be modified due to deletion, which moves $e$ out of $L^{\bF}_k$.
\end{proof}

We are now ready to show that the foreground and background solutions are hierarchical solutions:

\begin{lemma}\label{fact:alg-sol-hierarchical}
    The foreground solution $\bF$ and the background solutions $\bB_k$ for every $k$ are hierarchical solutions as defined in \Cref{def:hierarchical-sol}.
    In particular, the coverages are non-empty and disjoint, and the levels and passive levels satisfy the level and passive level invariants (as given in \Cref{sec:prelim}).
\end{lemma}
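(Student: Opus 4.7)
The plan is to prove the lemma by induction on the time-step $t$, establishing each of the three required properties (non-empty and disjoint coverages, level invariant, passive level invariant including the dormant equality) for both $\bF$ and every $\bB_k$. The base case is trivial since all solutions are empty at $t=0$. For the inductive step, I would enumerate the four event types that can modify a solution within one time-step: (i) the greedy algorithm adds a new set $s$ to $\bB_k$ via \Cref{alg:greedy}; (ii) an element $e$ is inserted (handled in $\bF$ by the foreground thread, and in each live $\bB_k$ by \Cref{alg:insertion}); (iii) an element $e$ is deleted (handled in $\bF$ and by \Cref{alg:deletion}); and (iv) a background thread $T_k$ normally terminates and switches $\bB_k$ into $\bF$. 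The heart of the argument lies in showing that each of these events preserves all invariants.

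For event (i), non-emptiness follows because the while loop only triggers when $L^\bF_k\setminus\cov(\bB_k)$ is non-empty, and the chosen $s$ maximizes $|s\cap(L^\bF_k\setminus\cov(\bB_k))|$, so $\cov(s)$ is non-empty; disjointness is automatic because Line~\ref{line:greedy-cov-assign} only adds previously uncovered elements; the level invariant holds by construction of Line~\ref{line:set-level}; and the passive level invariant $\lev^{\bB_k}(e)\le\plev^{\bB_k}(e)$ for $e\in\cov(s)$ follows from $\lev^{\bB_k}(e)=\lev(s)\le p_k\le\plev^{\bB_k}(e)$, where the last inequality is exactly \Cref{fact:pointer-monotone}. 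For event (ii), in both $\bF$ and $\bB_k$, the inserted element is either attached to the set $s$ of maximum level containing it (with $\plev(e)=\lev(e)=\lev(s)$, trivially satisfying both invariants and preserving the level invariant since $|\cov(s)|$ only grows), or starts a singleton coverage at level $0$ satisfying $0\le\log_{1+\eps}1$; if $e$ is inserted into $\bB_k$ but not covered, then $\plev^{\bB_k}(e)$ is set but $e\notin\cov(\bB_k)$, so no invariant applies. For event (iii), \Cref{alg:deletion} and the foreground deletion rule both set $\plev(e)\gets\lev(e)$, which preserves $\lev\le\plev$ and simultaneously enforces the dormant equality.

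The main obstacle, and the only event requiring real work, is event (iv). Here $\bF$ is rebuilt by deleting all sets at level $\le k$, replacing them with the sets in the corresponding levels of $\bB_k$, and merging the level-$(k{+}1)$ sets of $\bB_k$ into level $k{+}1$ of $\bF$; the levels and passive levels of affected elements are inherited from $\bB_k$. Disjointness of coverages within the new $\bF$ is the critical check: coverages among surviving $\bF$-sets (at levels $>k{+}1$, or the original level-$(k{+}1)$ sets) are disjoint by the inductive hypothesis, coverages among the incoming $\bB_k$-sets are disjoint by the inductive hypothesis applied to $\bB_k$, and the cross-disjointness between the surviving $\bF$ elements and $\cov(\bB_k)$ is exactly \Cref{fact:cov-subset-foreground-coverage}, since the surviving elements lie in $\cov(\bF)\setminus C^\bF_k$. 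Non-emptiness is inherited from both constituent solutions. The level invariant for surviving $\bF$-sets is unchanged; for incoming $\bB_k$-sets at levels $\le k$ it is inherited directly; and for sets merged into level $k{+}1$, the invariant continues to hold because their level is unchanged while their coverage only grows by union with coverage from $\bB_k$. The passive level invariant is preserved by the same inheritance argument, and the dormant equality continues to hold because it was maintained in $\bB_k$ throughout its lifetime by \Cref{alg:deletion}. This completes the induction.
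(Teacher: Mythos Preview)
Your proposal is correct and follows essentially the same approach as the paper's proof: both verify non-emptiness, disjointness, the level invariant, and the passive level invariant by case analysis over the same four events, invoking \Cref{fact:pointer-monotone} for the passive level invariant when the greedy algorithm adds a set and \Cref{fact:cov-subset-foreground-coverage} for coverage disjointness at a switch. The only organizational difference is that you induct on time and enumerate events, whereas the paper proceeds invariant by invariant; one minor imprecision is your description of the level-$(k{+}1)$ merge as growing coverages ``by union''---in fact the merge simply imports the $\bB_k$ sets as separate entries (and \Cref{lem:no-duplicate-set} later rules out collisions)---but this does not affect the argument since each set's level and coverage are inherited intact from a solution where the invariant already held.
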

\begin{proof}
    We verify the properties of a hierarchical solution:
    \begin{itemize}
        \item[-] Coverages are non-empty: When a set is added to a background solution (in the computation or tail phase) by the greedy algorithm, its coverage set is non-empty. Furthermore, coverage sets do not shrink over time because elements that become dormant due to deletion are not deleted from their respective coverage sets.

        In the foreground solution, sets are added in one of two ways: either they are obtained from a background solution after a normal termination, or they are added due to insertion of an element. In either case, the coverage sets are non-empty.
        \item[-] Coverages are disjoint: First, we consider background solutions. When a set is added by the greedy algorithm, its coverage is disjoint from all sets that are already in the background solution. Moreover, if a set gains coverage due to an element insertion, the newly inserted element is added to the coverage of only one set in the solution. Hence, the coverages of all sets in the background solution remain disjoint throughout.
        
        The coverages of sets in the foreground solution change over time in two ways. First, this may be due to an element insertion. In this case, the new element is added to the coverage of exactly one set in the foreground solution. Second, if a background solution is switched to the foreground after normal termination, then \Cref{fact:cov-subset-foreground-coverage} ensures that the set of elements in the coverage of the background solution is disjoint from the coverages of the sets that remain in the foreground solution after the switch.
        \item[-] Level Invariant:
        When a set is added to a background solution by the greedy algorithm, its level is set to satisfy the level invariant. When a set is added to the foreground solution to handle element insertion, its level is set to 0, which satisfies the level invariant.
        After a set is added, the level invariant is preserved because the coverage of a set cannot shrink, and the level of a set cannot change.
        
        \item[-] Passive Level Invariant: For a set added to a background solution by the greedy algorithm, Line~\ref{line:set-level} sets the level of the set to at most $p_k$, while all elements in its coverage have passive level at least $p_k$ by \Cref{fact:pointer-monotone}.
        After an element is inserted or deleted, the algorithm sets its passive level to be equal to its level. The switch of a background solution to the foreground after normal termination of a background thread preserves levels and passive levels. So, the passive level invariant is maintained throughout the algorithm. \qedhere
    \end{itemize}
\end{proof}

\paragraph{No Duplicate Sets.} In the previous lemma, we ruled out the possibility of the same element appearing in the coverages of two sets in a hierarchical solution. But, this does not rule out the same set appearing multiple times in the same solution with disjoint coverages. We rule out this latter possibility in the next lemma. 

\begin{lemma}\label{lem:no-duplicate-set}
    In any hierarchical solution $S$ (foreground or background solution), a set $s$ appears at most once.
\end{lemma}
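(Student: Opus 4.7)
The plan is to argue the background and foreground solutions separately, since they are updated by very different mechanisms. For a background solution $\bB_k$, sets enter only through the greedy step of \Cref{alg:greedy}, which requires $|s\cap(L^\bF_k\setminus\cov(\bB_k))|\ge 1$. I will show that once $s$ has been added, every element of $s\cap L^\bF_k$ is permanently in $\cov(\bB_k)$: at the moment of insertion, the set $s\cap(L^\bF_k\setminus\cov(\bB_k))$ becomes $\cov^{\bB_k}(s)$; any later $e\in s$ that joins $L^\bF_k$, whether through an external insertion or through a foreground switch by a thread of lower index, is routed by \Cref{alg:insertion} into the coverage of an existing set of $\bB_k$ rather than a fresh copy of $s$; and deletions never shrink $\cov(\bB_k)$. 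Consequently greedy never again sees $s$ with positive marginal gain, and a thread abort resets $\bB_k$, so duplicates cannot accumulate across runs.

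For the foreground solution $\bF$, sets enter either through the insertion handler (which creates a new entry only when no existing set of $\bF$ contains the inserted element, trivially ruling out duplication) or through the switch triggered by normal termination of some thread $T_k$. The switch removes every set at level $\le k$ from $\bF$, injects the sets of $\bB_k$ at their levels $\le k+1$, and merges the collections at level $k+1$. The only way this can create a duplicate is if some $s\in\bB_k$ is \emph{also} in $\bF$ at a level $\ge k+1$ in a configuration not resolved by the level-$(k+1)$ merge, i.e., either coexistence with $s\in\bF$ at some level strictly greater than $k+1$, or coexistence with $s\in\bB_k$ at some level $\le k$ while $s\in\bF$ at level exactly $k+1$. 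Ruling out this coexistence is the main obstacle.

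To rule it out I will maintain, by induction over time, the invariant
\[
(\star)\quad \text{for every } s\in\bF \text{ and every live } e\in s,\quad \lev^\bF(e)\ge\lev^\bF(s),
\]
together with its background analogue
\[
(\star\star)\quad \text{for every } s\in\bB_k \text{ and every } e\in s\cap L\cap L^\bF_k\cap\cov(\bB_k),\quad \lev^{\bB_k}(e)\ge\lev^{\bB_k}(s).
\]
Granted $(\star)$, suppose some $s\in\bB_k$ at the moment of switch were also in $\bF$ at some level $\ell\ge k+1$. Consider the earlier time $t_0$ when greedy first placed $s$ into $\bB_k$; this required a live $e\in s\cap L^\bF_k(t_0)\setminus\cov(\bB_k(t_0))$, hence $\lev^{\bF(t_0)}(e)\le k$. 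During $T_k$'s lifetime the collection of sets in $\bF$ at levels $\ge k+1$ is frozen (element insertions touch only level $0$ or the current level of an existing max-level covering set; switches of $T_{k''}$ with $k''<k$ act only on levels $\le k''+1\le k$; switches of $T_{k''}$ with $k''\ge k+1$ would abort $T_k$), so $s$ already sat at level $\ell\ge k+1$ in $\bF(t_0)$. But then $(\star)$ at $t_0$ forces $\lev^{\bF(t_0)}(e)\ge\ell\ge k+1>k$, contradicting $\lev^{\bF(t_0)}(e)\le k$.

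It remains to verify $(\star)$ and $(\star\star)$ inductively. Insertion handlers preserve both because they always assign the new element the level of the chosen max-level containing set; deletions only update $\plev$ and are inert for either invariant. Invariant $(\star\star)$ is preserved by greedy because \Cref{fact:pointer-monotone} forces the copy pointer $p_k$ to be monotone non-increasing (so sets added earlier in $\bB_k$ have levels at least as large as those added later), and the routing rule of \Cref{alg:insertion} prevents a newly covered $e\in\cov^{\bB_k}(s)$ from simultaneously belonging to a strictly higher-level set of $\bB_k$: otherwise $e$ would have already been placed in a higher-level set's coverage at the time it entered $L^\bF_k$. At a switch, $(\star)$ transfers by combining $(\star\star)$ with \Cref{fact:background-feasible}(ii), which guarantees $\cov(\bB_k)\cap L=L^\bF_k$ at normal termination: sets retained at level $>k+1$ inherit the invariant directly, and freshly injected sets from $\bB_k$ at level $\ell'\le k+1$ inherit it from $(\star\star)$ for their live elements in the old $L^\bF_k$ and from the retained/merged structure at level $\ge k+1$ for their live elements at level $>k$ in the old foreground.
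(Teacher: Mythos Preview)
Your proposal is correct and follows essentially the same route as the paper. Your ``every element of $s\cap L^\bF_k$ is permanently in $\cov(\bB_k)$'' is exactly the paper's \emph{Maximal Coverage} property (and yields no duplicates in $\bB_k$ for the same reason), while your invariants $(\star)$ and $(\star\star)$ are the paper's \emph{Highest Coverage} property restricted to live elements; the contradiction argument you run for the foreground --- freeze sets at levels $\ge k{+}1$ during $T_k$'s lifetime, then apply $(\star)$ at the moment greedy selected $s$ into $\bB_k$ --- is identical to the paper's. One minor remark: your side comment that the ``$\ell=k{+}1$ and $\lev^{\bB_k}(s)=k{+}1$'' case is ``resolved by the level-$(k{+}1)$ merge'' is unnecessary, since your own argument for general $\ell\ge k{+}1$ already rules that case out.
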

\begin{proof}
    First, we consider a background solution $\bB_k$. We show the following property (we call this {\em maximal coverage}):
    
    {\bf (Maximal Coverage.)} For any element $e$ in the uncovered sub-universe $L^{\bF}_k\setminus \cov(\bB_k)$ and for any set $s\in \bB_k$, we have $e\notin s$. 
    
    Before proving this property, we observe that it implies the statement of the lemma. By maximal coverage, for any set $s$ in $\bB_k$, none of its elements can be in the uncovered sub-universe $L^{\bF}_k \setminus \cov(\bB_k)$. This implies that a duplicate copy of $s$ cannot be added to $\bB_k$ by the greedy algorithm.

    we now show maximal coverage holds inductively over time. For the base case, in the preparation phase of $T_k$, we have that $\bB_k = \emptyset$. So, the property is trivial.
    For the inductive step, there are three cases:
    \begin{itemize}
        \item[-] The greedy algorithm run by $T_k$ adds a new set $s$ to $\bB_k$:
        Since Line \ref{line:greedy-cov-assign} assigns $\cov(s)$ to be $s\cap (L^{\bF}_k\setminus \cov(\bB_k))$, it follows that no element in the remaining uncovered sub-universe $L^{\bF}_k\setminus \cov(\bB_k\cup\{s\})$ can belong to $s$.
        
        \item[-] A new element $e$ is added to $L^{\bF}_k$ due to element insertion:
            In \Cref{alg:insertion}, $e$ is added to the uncovered sub-universe if and only if $e$ does not belong to any set in $\bB_k$. So, maximal coverage holds for $e$.
            
        \item[-] An element $e\in L^{\bF}_k$ is deleted:
        Since \Cref{alg:deletion} does not change the coverages of sets, maximal coverage is unaffected.
    \end{itemize}
    This completes the proof of the lemma for background solutions.
    
    We now consider the foreground solution $\bF$. Note that by \Cref{lem:feasible}, the foreground solution is feasible and hence does not have any uncovered element. So, maximal coverage does not apply to the foreground solution. To establish the lemma for the foreground solution, we use the following property:
    
    {\bf (Highest Coverage.)}
    Suppose element $e$ belongs to $\cov^S(s)$ for some set $s$ in a hierarchical solution $S$. ($S$ can be either the foreground solution $\bF$ or any background solution $\bB_k$.)  It holds that $e$ does not belong to any set $s'$ at a higher level than $s$ in $S$. I.e. if $\lev^S(s') > \lev^S(s)$, then $e\notin s'$.

    Before proving this property, we first show that it implies the statement of the lemma for the foreground solution $\bF$.
    The foreground solution can add a set $s$ in two ways. First, if $s$ is added to handle insertion of element $e$, then $e$ cannot belong to any existing set in $\bF$. So, $s$ was not in $\bF$.
    
    The remaining case is when the foreground solution is updated by switching a background solution $\bB_k$ after its normal termination. Assume for contradiction that some set $s$ is in $\bF$ at some level $> k$ and also in $\bB_k$. 
    Let $t, t'$ denote the time-steps when $\bB_k$ was initiated and when $s$ was added to $\bB_k$ respectively.   
    Note that foreground sets at levels $> k$ can only be modified by the normal termination of a background thread at some level $\ge k$. But, such a normal termination of $\bB_{k'}$ for some $k' > k$ would abort $T_k$. Since $T_k$ had a normal termination, it must be that the sets in levels $> k$ in $\bF$ are identical at time-steps $t, t'$ (and also at termination of $T_k$).
    This means that $s$ was already in $\bF$ at a level $> k$ at time-step $t$. Now, consider any element $e\in s$. We will show that $e$ cannot be in the uncovered sub-universe of $T_k$ at any stage. Since this holds for all elements of $s$, it follows that $s$ cannot be added by the greedy algorithm to $\bB_k$. First, $e \notin L^{\bF}_k(t)$ because this would violate the highest coverage property of $\bF$ at time-step $t$. Thus, $e$ must have been inserted between time-steps $t$ and $t'$. But, in this case, the foreground thread will add $e$ to the set containing $e$ at the highest level in $\bF$. Since $s$ is such a candidate set at level $> k$, the element $e$ cannot be added to $L^{\bF}_k$, and therefore, cannot appear in the uncovered sub-universe of $T_k$. This establishes that $s$ cannot be in $\bB_k$, and therefore, we do not create duplicate sets in the foreground solution $\bF$.
    

    We are left to prove the highest coverage property. Although we only need it for the foreground solution to establish the lemma, we will first show it for background solutions and then use this to show it for the foreground solution. Our proof is by induction over time. For the base case, $\bB_k$ is empty in the preparation phase, and the property is trivial. For the inductive step, there are three cases:
    \begin{itemize}
        \item[-] The greedy algorithm run by $T_k$ adds a new set $s$ to $\bB_k$: 
        For newly covered elements, the highest coverage property follows from the maximal coverage property of those elements before the new set is added. Previously covered elements continue to satisfy the highest coverage property, since the new set has lowest level in $\bB_k$ by \Cref{fact:pointer-monotone}.         

         \item[-] A new element $e$ is added to $L^{\bF}_k$ due to element insertion:
         If $e$ is in some set in $\bB_k$, then \Cref{alg:insertion} will add it to the coverage of the set $s$ with the highest level (among sets in $\bB_k$ containing $e$). So, the highest coverage property holds for $e$.

        \item An element $e\in L^{\bF}_k$ is deleted:
        Since \Cref{alg:deletion} does not change the levels and coverages of sets, highest coverage is unaffected.
    \end{itemize}

    We now proceed to establishing the highest coverage property for the foreground solution. As before, our proof is by induction over time. For the base case, initially $\bF$ is empty, and the property is trivial.
    For the inductive step, there are three cases:
    \begin{itemize}
        \item The foreground solution is updated by switching a background solution $\bB_k$ after its normal termination: 
        This adds sets at levels $\le k+1$ to $\bF$; thus, the highest coverage property in $\bF$ for elements $e$ with $\lev^{\bF}(e)\ge k+1$ are not affected.
        The remaining elements are now covered according to $\cov(\bB_k)$. Let $e$ be such an element that is now covered at level $\lev^{\bB_k}(e) \le k+1$. We note that $e$ cannot belong to a set that was already in $\bF$ before the switch at level $\ge k+1$ by the inductive hypothesis for $\bF$. Furthermore, $e$ cannot belong to a set in $\bB_k$ at a level higher than $\lev^{\bB_k}(e)$ by the highest coverage property of $\bB_k$. Since these are the only new sets added to $\bF$ and they inherit their levels from those in $\bB_k$, the property continues to hold for these sets in $\bF$ after the switch. Combined, this establishes the highest coverage property for $e$ in $\bF$ after the switch.
    
        \item A new element $e$ is inserted:
        If $e$ belongs to any set in $\bF$, then the foreground thread adds $e$ to the coverage of the set at the highest level among all those containing $e$. Otherwise, it adds a new level-0 set if $e$ is not in any set in $\bF$. In both cases, the highest coverage property holds for $e$. 
        
        \item An element $e$ is deleted: 
        The foreground thread does not change the levels and coverages, and hence, the highest coverage property is unaffected.
    \end{itemize}
    This concludes the proof of the highest coverage property for the foreground solution.
\end{proof}

\eat{
\begin{fact}
    We have the following properties on a buffer solution $\bR_k$:
    \begin{itemize}
        \item[-] During preparation, computation, and suspension phases of $T_k$, we have $\bR_k=\emptyset$.
        \item[-] During the copy phase of $T_k$, we have $\bR_k\subseteq \bB_k$. At the end of the copy phase, $\bR_k = \bB_k$.
        \item[-] During the tail phase of $T_k$, we have $\bR_k = \bB_k$.
    \end{itemize}
\end{fact}
}

\subsection{Lifetime Bounds}\label{sec:lifetime}

We set $\cspd = 400$. 
Intuitively, we view $\cspd$ as a sufficiently large constant. So, we will keep some dependence on $\cspd$ in the analysis.

From the rule of one phase in a time-set, we will say a time-step $t$ is both the last time-step of the previous phase and the first time-step of the next phase. In this case, the next phase does not perform work at time-step $t$. So, in the following analysis, we will count the work starting from time-step $t+1$.

\begin{fact}\label{fact:prep-phase-time}
    Suppose that a background thread $T_k$ is in the preparation phase at time-step $t$. Then,
    $T_k$ will finish preparation phase and enter computation phase (or get aborted) by time-step $t+\left\lceil \frac{1.1}{\cspd}\cdot |L^{\bF}_{k}(t)|\right\rceil$.
\end{fact}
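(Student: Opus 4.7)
The plan is a simple potential-function argument tracking the number of elements of $L^\bF_k$ that $T_k$'s preparation scan has yet to process. Set $N := |L^\bF_k(t)|$, and for each $s \ge t$ let $U(s) \subseteq L^\bF_k(s)$ denote the set of currently-live sub-universe elements not yet processed by the scan at the end of time-step $s$. Trivially $|U(t)| \le N$, and the preparation phase terminates (with $T_k$ transitioning to computation) exactly when $U(\cdot) = \emptyset$.

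Next I would derive a one-step recurrence. In each time-step $s > t$, $T_k$ first handles the single foreground update (at most one insertion or deletion touching $L^\bF_k$), then processes $\cspd$ pending elements (or all of them if fewer remain). Deletions only shrink $U$; a single insertion enlarges it by at most one; and, crucially, the normal termination of any still-active lower-level thread $T_{k'}$ (with $k' < k$) does not alter $L^\bF_k$ at all, by property (iii) of \Cref{fact:background-feasible}. Combining these yields
\[
|U(s)| \;\le\; \max\bigl(0,\; |U(s-1)| + 1 - \cspd\bigr).
\]

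Iterating the recurrence, $|U(\cdot)|$ drops by at least $\cspd - 1$ per time-step while it remains positive, so $|U(t+i)| \le \max\bigl(0,\, N - i(\cspd-1)\bigr)$, and preparation finishes within $\lceil N/(\cspd-1)\rceil$ time-steps unless $T_k$ is aborted first (in which case the claim is trivial). The arithmetic check that closes the proof is $1.1(\cspd-1) \ge \cspd$, which holds whenever $\cspd \ge 11$ and hence certainly for $\cspd = 400$; this gives $\lceil N/(\cspd-1)\rceil \le \lceil 1.1\,N/\cspd\rceil$, matching the stated bound. The only nontrivial ingredient is the guarantee that $L^\bF_k$ is not perturbed by neighboring threads during $T_k$'s preparation, which is precisely what \Cref{fact:background-feasible}(iii) supplies; the rest is a routine calculation.
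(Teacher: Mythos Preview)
Your proof is correct and follows essentially the same idea as the paper's: both argue that the scan processes $\cspd$ elements per time-step while the sub-universe changes by at most one, so the net progress is $\cspd-1$ per step, and both invoke the fact that lower-level terminations do not perturb $L^\bF_k$. The only cosmetic difference is that the paper first bounds the duration from the start $\tprep$ in terms of $|L^\bF_k(\tprep)|$ and then relates $|L^\bF_k(t)|$ back to $|L^\bF_k(\tprep)|$ via $|L^\bF_k(t)| \ge 0.99\,|L^\bF_k(\tprep)|$, whereas you track the remaining work directly from $t$ via $|U(t)| \le |L^\bF_k(t)|$; your route is slightly more direct but otherwise the arguments coincide.
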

\begin{proof}
    Suppose that the preparation phase starts at time-step $\tprep$.
    During the preparation phase, $T_k$ scans elements in the sub-universe $L^{\bF}_{k}$, at a speed of $\cspd$ elements per time-step, except for the last time-step when $T_k$ finishes the preparation phase. On the other hand, the sub-universe $L^{\bF}_{k}$ itself may be modified due to insertion/deletion of elements, but at the speed of at most one element per time-step. So, the scan  finishes  in $\Delta$ time-steps after $t$, where
    $$\Delta \leq \left\lceil\frac{1}{\cspd-1}\cdot |L^{\bF}_{k}(\tprep)|\right\rceil
    \leq \left\lceil 0.01 \cdot |L^{\bF}_{k}(\tprep)|\right\rceil.$$

    Since $t$ is a time-step within the preparation phase, we have $t\in[\tprep, \tprep+\Delta-1]$, and hence $|L^{\bF}_{k}(t)| \ge |L^{\bF}_{k}(\tprep)| - (t-\tprep) \geq |L^{\bF}_{k}(\tprep)| -  (\Delta-1) \ge 0.99 \cdot |L^{\bF}_{k}(\tprep)|$. Thus, we get $$\Delta\le \left\lceil \frac{1}{\cspd - 1}\cdot |L^{\bF}_{k}(\tprep)|\right\rceil
    \le \left\lceil\frac{1}{0.99 \cdot (\cspd - 1)}\cdot |L^{\bF}_{k}(t)|\right\rceil
    \leq  \left\lceil\frac{1.1}{\cspd}\cdot |L^{\bF}_{k}(t)|\right\rceil.$$ This concludes the proof.
\end{proof}

\begin{fact}\label{fact:comp-phase-time}
    Suppose that a background thread $T_k$ is in the computation phase at time-step $t$. Then,
    $T_k$ will finish computation phase  by time-step $t+\left\lceil\frac{1.1}{\cspd}\cdot |L^{\bF}_{k}(t) \setminus  \cov(\bB_k(t))|\right\rceil$.
\end{fact}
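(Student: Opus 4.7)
The plan is to mirror the proof of \Cref{fact:prep-phase-time}, this time tracking the size of the \emph{uncovered} sub-universe $U(t') := |L^{\bF}_{k}(t') \setminus \cov(\bB_k(t'))|$ rather than the full sub-universe $|L^{\bF}_k(t')|$. The central claim is a per-time-step progress estimate: as long as the computation phase is still active during a time-step $t'$, the quantity $U$ decreases by at least $\cspd - 1$ between time $t'$ and time $t'+1$. For the greedy-side contribution, whenever the phase is still ongoing after $t'$, \Cref{alg:greedy} must have processed a full batch of $\cspd$ elements during $t'$ (otherwise either the uncovered set would have emptied mid-batch or the suspension check $|L^{\bF}_k \setminus \cov(\bB_k)| \le |\bB_k|$ would have triggered after some set, either of which ends the phase), and each such element is removed from the uncovered sub-universe via Line~\ref{line:greedy-cov-assign}. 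On the external-update side, \Cref{alg:deletion} either flips the passive level of a dormant element without touching $\cov(\bB_k)$ (leaving $U$ unchanged) or removes an already-uncovered element from the sub-universe (decreasing $U$ by $1$), so deletions never grow $U$; while \Cref{alg:insertion} grows $U$ by at most one per time-step, and only when the inserted element belongs to no set of $\bB_k$. Combining these two effects yields a net decrease of at least $\cspd - 1$ in $U$.

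Iterating this estimate, $U(t+\Delta) \le U(t) - \Delta(\cspd - 1)$ at every intermediate time $t+\Delta$ where the phase is still active. Since the phase ceases to be active as soon as $U$ hits $0$ (or the suspension/base-thread/shortcut-thread rule fires earlier), it must end by time-step $t + \lceil U(t)/(\cspd - 1)\rceil$. Using $\cspd = 400 \ge 11$, the elementary inequality $\frac{1}{\cspd - 1} \le \frac{1.1}{\cspd}$ yields $\lceil U(t)/(\cspd - 1)\rceil \le \lceil 1.1\cdot U(t)/\cspd\rceil$, which is precisely the bound claimed in the statement.

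The only subtlety---rather than a genuine obstacle---is handling the terminal time-step of the phase, where the greedy may cover fewer than $\cspd$ elements because the suspension check or the emptying of the uncovered set interrupts the batch. This does not weaken the estimate: the argument only invokes the full-$\cspd$ decrement during time-steps for which the phase is still active at the start, and an earlier end-of-phase only strengthens the bound. The same remark applies to termination via the base-thread or shortcut-thread rules described just after \Cref{alg:greedy}: these rules can only abbreviate the computation phase, and thus remain compatible with the claimed upper bound.
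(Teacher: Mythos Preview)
Your approach is essentially the same as the paper's: track the uncovered sub-universe, observe that greedy drains it at rate $\cspd$ while external updates can refill it by at most one per time-step, and conclude the $\lceil U(t)/(\cspd-1)\rceil \le \lceil 1.1\,U(t)/\cspd\rceil$ bound.

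There is, however, one subtlety you glossed over that the paper makes explicit. You account for changes to $U$ coming from \Cref{alg:deletion} and \Cref{alg:insertion}, i.e., from element insertions/deletions in the dynamic instance. But there is a third possible external source of change to $L^{\bF}_k$: during the lifetime of $T_k$, a \emph{lower-level} background thread $T_{k'}$ with $k'<k$ may normally terminate and switch its solution into the foreground, which in principle could alter many levels in $\bF$ at once. The paper rules this out in one line by invoking \Cref{fact:background-feasible}(iii), which says such a switch leaves $L^{\bF}_j$ unchanged for all $j>k'$, and in particular $L^{\bF}_k$. Without that observation, your ``at most one per time-step'' claim for external changes is not fully justified. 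Adding that single reference closes the gap.
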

\begin{proof}
    During the computation phase, $T_k$ keeps covering elements at a speed of $\cspd$ elements per time-step. These elements are from the uncovered universe $L^{\bF}_k \setminus \cov(\bB_k)$.
    The uncovered universe may be modified due to insertion/deletion of elements, but at the speed of at most one element per time-step. 
    In contrast, the uncovered universe cannot be modified by termination of lower levels, as per  \Cref{fact:background-feasible} (iii).
    This implies that the computation phase finishes in $\left\lceil\frac{1}{\cspd-1}\cdot |L^{\bF}_{k}(t) \setminus  \cov(\bB_k(t))|\right\rceil \leq \left\lceil\frac{1.1}{\cspd}\cdot |L^{\bF}_{k}(t) \setminus  \cov(\bB_k(t))|\right\rceil$ time-steps after $t$.
\end{proof}

\begin{lemma}\label{lem:copy-phase-time}
    Suppose that a background thread $T_k$ is in the copy phase at time-step $t$. Then,
    $T_k$ will finish copy phase and enter tail phase (or get aborted)  by time-step $t+\left\lceil\frac{1}{\cspd}\cdot |\bB_k(t)|\right\rceil$.
\end{lemma}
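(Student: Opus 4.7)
The plan is to track how much copy work remains from time-step $t$ onward and to observe that this quantity cannot grow during the copy phase, so the rate-$\cspd$ transfer process exhausts it within the claimed bound. Let $\tcopy \le t$ denote the time-step in which $T_k$ entered the copy phase. Inspecting \Cref{alg:deletion,alg:insertion}, neither a deletion nor an insertion of an element in $L^{\bF}_k$ introduces a new set into $\bB_k$: deletions only adjust passive levels or the priority queue $Q_k$, while insertions either assign the new element to an already-present set in $\bB_k$ or move it to the uncovered sub-universe. Since the greedy algorithm of \Cref{alg:greedy} is not invoked during the copy phase, no new sets can enter $\bB_k$ at all (and any such set, if the phase description is read conservatively, would by construction be mirrored into $\bR_k$ in the same time-step and so would not add to the rate-$\cspd$ backlog).

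Consequently, the sets that $T_k$ must transfer via the rate-$\cspd$ copy process throughout the copy phase are (a subset of) $\bB_k(\tcopy)$. Since $\bB_k$ is monotonically non-shrinking over time, we have $|\bB_k(\tcopy)| \le |\bB_k(t)|$, so the number of as-yet-uncopied sets at time-step $t$ is at most $|\bB_k(t)|$. Copying $\cspd$ sets per time-step therefore drains the backlog within $\lceil |\bB_k(t)|/\cspd \rceil$ time-steps of $t$, at which point $T_k$ exits the copy phase and enters the tail phase. If a higher-level thread normally terminates in the meantime and aborts $T_k$, the bound holds a fortiori. The proof is essentially a bookkeeping argument; the only subtlety to verify is that no set can sneak into $\bB_k$ via the update handlers during the copy phase, which is immediate from the syntactic form of \Cref{alg:insertion,alg:deletion}. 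This explains why the copy-phase bound avoids the $1.1/\cspd$ slack that appeared in \Cref{fact:prep-phase-time} and \Cref{fact:comp-phase-time}: there is no competition between the rate-$\cspd$ processing and an adversarial insertion stream.
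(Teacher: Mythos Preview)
Your proof is correct and follows essentially the same approach as the paper. The paper's proof is more terse, asserting directly that ``the collection of sets that define the background solution $\bB_k$ does not change during the copy phase''; you supply the justification by inspecting \Cref{alg:insertion,alg:deletion} and noting that the greedy step of \Cref{alg:greedy} is paused, which is exactly the content behind the paper's assertion. Your parenthetical hedge about sets being ``mirrored into $\bR_k$'' is unnecessary (no set is in fact added via \Cref{alg:insertion}), but it does no harm.
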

\begin{proof}
    During each time-step within the copy phase, the thread $T_k$ copies $\cspd$ sets from the background solution $\bB_k$ into the buffer solution $\bR_k$. Further, the collection of sets that define the background solution $\bB_k$ does {\em not} change during the copy phase. Accordingly, the copy phase finishes in $\left\lceil\frac{1}{\cspd}\cdot |\bB_k(t)|\right\rceil$ time-steps after $t$.
\end{proof}

\begin{lemma}\label{lem:tail-phase-time}
Suppose that a background thread $T_k$ is in the tail phase at time-step $t$. Then, $T_k$ will terminate by time-step $t+\min\left\{ \left\lceil\frac{1.1}{\cspd} \cdot |L^{\bF}_{k}(t) \setminus  \cov(\bB_k(t))|\right\rceil, \left\lceil\frac{1.4}{\cspd} \cdot |\bB_k(t)|\right\rceil \right\}$.
\end{lemma}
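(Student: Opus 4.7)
The plan is to prove the two bounds in the minimum separately. For the first bound $\lceil \frac{1.1}{\cspd}\cdot |L^{\bF}_k(t)\setminus \cov(\bB_k(t))|\rceil$, I would replay the argument used for \Cref{fact:comp-phase-time}: during the tail phase the greedy algorithm continues to cover $\cspd$ uncovered elements per time-step, while insertions and deletions modify $L^{\bF}_k$ by at most one element per time-step, and by \Cref{fact:background-feasible}(iii) terminations at lower levels do not touch the uncovered sub-universe. Hence the tail phase ends within $\lceil \frac{1}{\cspd-1}\cdot |L^{\bF}_k(t)\setminus \cov(\bB_k(t))|\rceil \le \lceil \frac{1.1}{\cspd}\cdot |L^{\bF}_k(t)\setminus \cov(\bB_k(t))|\rceil$ time-steps.

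The more interesting bound is $\lceil \frac{1.4}{\cspd}\cdot |\bB_k(t)|\rceil$, which requires showing that the uncovered sub-universe at time $t$ is bounded by a small constant factor of $|\bB_k(t)|$. I would trace the evolution of $|L^{\bF}_k \setminus \cov(\bB_k)|$ from the moment $\tsus$ when $T_k$ entered suspension phase: by the pause condition of the computation phase, $|L^{\bF}_k(\tsus)\setminus \cov(\bB_k(\tsus))|\le \tausus_k$. By the thresholding rule of the suspension phase (the thread would have been aborted otherwise) we have $\tcopy-\tsus \le \lceil 0.1\tausus_k\rceil$, where $\tcopy$ is the start of the copy phase. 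Crucially, \Cref{alg:insertion} does not add new sets to $\bB_k$ (it either routes $e$ to an existing set or to the uncovered queue), and no algorithm step in the suspension or copy phase adds sets either, so $|\bB_k|$ is unchanged across suspension and copy, giving $|\bB_k(\tcopy)| = \tausus_k$. By \Cref{lem:copy-phase-time}, $\ttail - \tcopy \le \lceil \tausus_k/\cspd\rceil$. Since the uncovered sub-universe can grow by at most $1$ per time-step between $\tsus$ and $\ttail$, and during the tail phase itself the net change is $\le 1-\cspd < 0$ per step, we conclude that $|L^{\bF}_k(t)\setminus \cov(\bB_k(t))| \le \tausus_k + (\ttail-\tsus) \le \tausus_k(1+0.1+1/\cspd)+O(1)$. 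Sets in $\bB_k$ are only ever added and never removed, so $|\bB_k(t)|\ge \tausus_k$; moreover the shortcut-thread rule forces $\tausus_k>\cspd$ whenever $T_k$ reaches the tail phase at all, so the $O(1)$ additive slack is subsumed. Plugging $\cspd=400$ into $\lceil \frac{1}{\cspd-1} \cdot 1.11\,|\bB_k(t)|\rceil$ and checking $\frac{1.11}{399} < \frac{1.4}{400}$ gives the desired bound $\lceil \frac{1.4}{\cspd}\cdot |\bB_k(t)|\rceil$ by monotonicity of the ceiling.

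The main obstacle is the bookkeeping around the $|\bB_k|$ side of the inequality: specifically, justifying that insertion-triggered operations during suspension/copy never spawn new sets in $\bB_k$, that deletions never remove sets, and that the shortcut and base-thread rules preclude the small-$\tausus_k$ regime from reaching the tail phase. Once these are in hand, the numerical inequality $\frac{1.11}{\cspd-1} \le \frac{1.4}{\cspd}$ at $\cspd=400$ carries the rest of the argument; I would pick up a bit of additional slack in the $1.11 \to 1.4$ gap to absorb the additive ceiling constants cleanly.
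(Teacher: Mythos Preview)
Your proposal is correct and follows essentially the same approach as the paper: establish the first bound via the catch-up greedy speed argument (identical to \Cref{fact:comp-phase-time}), then derive the second bound by tracing back to $\tsus$, using the pause criterion $|L^{\bF}_k(\tsus)\setminus\cov(\bB_k(\tsus))|\le\tausus_k$, the suspension-threshold rule, \Cref{lem:copy-phase-time}, the invariance of $|\bB_k|$ through suspension and copy, the shortcut-thread rule to absorb additive constants, and finally $|\bB_k(t)|\ge\tausus_k$. The one cosmetic difference is that the paper controls the uncovered sub-universe at $t$ by bounding it at $\ttail$ and then bounding $t-\ttail$ via the already-proved first bound (yielding $\le 1.25|\bB_k(t)|$), whereas you invoke the net-negative change during the tail phase directly to get $\le 1.11|\bB_k(t)|$; both routes close comfortably under the $1.4/\cspd$ target.
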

\begin{proof}
   Using the same argument as in the proof of \Cref{fact:comp-phase-time}, we infer that $T_k$ will terminate by time-step $t+ \left\lceil\frac{1.1}{\cspd} \cdot |L^{\bF}_{k}(t) \setminus  \cov(\bB_k(t))|\right\rceil$. It now remains to show that $T_k$ will terminate by time-step $t+ \left\lceil\frac{1.2}{\cspd} \cdot |\bB_k(t)|\right\rceil$.  
   
   Suppose that $T_k$ enters suspension phase, copy phase and tail phase respectively at time-steps $\tsus, \tcopy$ and $\ttail$. This means that 
   \begin{align}
  \tsus \leq \tcopy \leq \ttail \leq t, \text{ and } \\
 t-\ttail \leq \left\lceil\frac{1.1}{\cspd} \cdot |L^{\bF}_{k}(\ttail) \setminus  \cov(\bB_k(\ttail))| \right\rceil   \label{eq:new:0}.
   \end{align}

    As the uncovered universe  $L^{\bF}_{k}\setminus \cov(\bB_{k})$ grows by at most one element per time-step during $[\tsus, t]$, we get
    \begin{align}
    \label{eq:new:2}
    |L^{\bF}_{k}(t_2) \setminus \cov(\bB_{k}(t_2))| & \leq |L^{\bF}_{k}(t_1) \setminus \cov(\bB_{k}(t_1))| + (t_2 -t_1), \text{ for all } \tsus \leq t_1 \leq t_2 \leq t.
    \end{align}
    Furthermore, by the criterion to enter suspension phase, we have 
    \begin{equation}
    \label{eq:new:1}
    |L^{\bF}_{k}(\tsus) \setminus \cov(\bB_{k}(\tsus))| \le  |\bB_k(\tsus)|.
    \end{equation}
    Since  $T_k$ is in the suspension phase for at most $0.1\cdot |\bB_k(\tsus)|$ time-steps, it follows that  $$\tcopy - \tsus 
    \le 0.1\cdot |\bB_k(\tsus)|.$$ 
    
    During the interval $[\tsus, \ttail]$, the collection of sets that define $\bB_k$ remain unchanged. This implies that  
    \begin{equation}
    |\bB_k(\tsus)| = |\bB_k(\tcopy)| = |\bB_k(\ttail)|  \label{eq:new:4}
    \end{equation}
    
    Next, by \Cref{lem:copy-phase-time}, we have  $\ttail - \tcopy \leq \left\lceil\frac{1}{\cspd}\cdot |\bB_k(\tcopy)|\right\rceil \le \left\lceil 0.1 \cdot |\bB_k(\tcopy)|\right\rceil$.
    This gives us
     \begin{equation}
\label{eq:new:3}\ttail - \tsus = (\tcopy - \tsus) + (\ttail - \tcopy)  \leq 0.11 \left(|\bB_k(\tsus)| + |\bB_k(\tcopy)| \right) = 0.22 |\bB_k(\tsus)|.
    \end{equation}

    From the preceding discussion, we now derive that
    \begin{align}
    \label{eq:align:1}
    |L^{\bF}_{k}(\ttail)|    
    \setminus \cov(\bB_{k}(\ttail))| &  \leq 
    |L^{\bF}_{k}(\tsus) \setminus \cov(\bB_{k}(\tsus))| + (\ttail-\tsus) & (\text{\Cref{eq:new:2}}) \nonumber \\
    & \leq |\bB_k(\tsus)| + 0.22 |\bB_k(\tsus)| & (\text{\Cref{eq:new:1,eq:new:3}}) \nonumber \\  
    & \leq 1.22 \cdot |\bB_k(\ttail)| & (\text{\Cref{eq:new:4}}).
     \end{align}
 
Next, we observe that no set gets deleted from  $\bB_k$ during the tail phase. This gives us
\begin{equation}
\label{eq:last}
|\bB_k(\ttail)| \leq |\bB_k(t)|.
\end{equation}
Finally, we conclude that
\begin{align}
\label{eq:align:2}
|L^{\bF}_{k}(t) \setminus \cov(\bB_{k}(t))| & \leq |L^{\bF}_{k}(\ttail) \setminus \cov(\bB_{k}(\ttail))| + (t-\ttail) & (\text{\Cref{eq:new:2}}) \nonumber \\
& \leq 1.01 \cdot |L^{\bF}_{k}(\ttail) \setminus \cov(\bB_{k}(\ttail))| + 1 & (\text{\Cref{eq:new:0}}) \nonumber \\
& \leq 1.24 \cdot |\bB_k(\ttail)| + 0.01 \cdot |\bB_k(\tsus)| & (\text{\Cref{eq:align:1}, rule of shortcut threads}) \nonumber \\
& \leq 1.25 \cdot |\bB_k(t)| & (\text{\Cref{eq:new:4,eq:last}})
\end{align}
From \Cref{eq:align:2}, we infer that $T_k$ will terminate by time-step $t+ \left\lceil \frac{1.1}{\cspd} \cdot |L^{\bF}_{k}(t) \setminus  \cov(\bB_k(t))|\right\rceil \leq t+\left\lceil\frac{1.4}{\cspd} \cdot |\bB_k(t)|\right\rceil$. This concludes the proof of the lemma.
\end{proof}

\begin{corollary}\label{cor:copy-tail-phase-time}
    Suppose that a background thread $T_k$ is  in  copy or tail phase at time-step $t$. Then, the thread 
    $T_k$ will terminate by time-step   $t+\frac{5}{\cspd}\cdot \min\left\{\tausus_k, |\bB_k(t)| \right\} = t + \frac{5}{\cspd}\cdot \tausus_k$.
\end{corollary}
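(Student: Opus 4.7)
The plan is to handle the copy-phase and tail-phase cases separately, in each case combining the duration bounds from \Cref{lem:copy-phase-time} and \Cref{lem:tail-phase-time}, and then using the invariant that $|\bB_k|$ does not shrink between $\tsus$ and the end of the tail phase. First I would establish the equality $\min\{\tausus_k, |\bB_k(t)|\} = \tausus_k$: during the suspension phase neither \Cref{alg:deletion} nor \Cref{alg:insertion} inserts a new set into $\bB_k$ (insertions of uncovered elements only update $Q_k$); during the copy phase the collection of sets in $\bB_k$ is explicitly fixed; and only in the tail phase, when the greedy algorithm resumes, can $|\bB_k|$ grow. Thus $|\bB_k(t)| \ge |\bB_k(\tsus)| = \tausus_k$ for all $t \ge \tsus$ in copy or tail phase.

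Next I would record that the rule of shortcut threads forces $\tausus_k > \cspd$ whenever $T_k$ actually enters copy or tail phase. This controls the ceiling slack, since each additive $+1$ from a ceiling can then be absorbed into $\tfrac{\tausus_k}{\cspd}$.

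For Case 1 ($t$ in copy phase), \Cref{lem:copy-phase-time} together with $|\bB_k(t)|=\tausus_k$ gives that the copy phase ends at some $\ttail \le t + \bigl\lceil \tausus_k / \cspd \bigr\rceil$, and at that moment $|\bB_k(\ttail)| = \tausus_k$. Applying \Cref{lem:tail-phase-time} at $\ttail$ then bounds termination by $\ttail + \bigl\lceil 1.4\,\tausus_k / \cspd \bigr\rceil$. Summing and using $\tausus_k > \cspd$ to handle the two ceiling $+1$'s, the total is at most $t + \tfrac{2.4\,\tausus_k}{\cspd} + 2 \le t + \tfrac{4.4\,\tausus_k}{\cspd} < t + \tfrac{5\,\tausus_k}{\cspd}$. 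For Case 2 ($t$ in tail phase), the cleanest move is to apply \Cref{lem:tail-phase-time} at $\ttail$ (where $|\bB_k(\ttail)| = \tausus_k$) rather than at $t$, obtaining termination by $\ttail + \bigl\lceil 1.4\,\tausus_k / \cspd \bigr\rceil \le t + \bigl\lceil 1.4\,\tausus_k / \cspd \bigr\rceil \le t + \tfrac{5\,\tausus_k}{\cspd}$.

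I do not expect any genuinely hard step; the statement is essentially a bookkeeping combination of \Cref{lem:copy-phase-time} and \Cref{lem:tail-phase-time}, glued by the monotonicity of $|\bB_k|$ across the three phases. The one place that requires mild care is Case 2: applying \Cref{lem:tail-phase-time} naively at time $t$ leaves $|\bB_k(t)|$, which can strictly exceed $\tausus_k$ because the tail-phase greedy keeps adding sets, so one must instead anchor the bound at $\ttail$, where $|\bB_k|$ is still equal to $\tausus_k$. This, together with using $\tausus_k > \cspd$ to absorb the ceiling overheads, is the only subtlety.
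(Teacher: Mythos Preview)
Your proposal is correct and follows essentially the same approach as the paper: both first establish $\min\{\tausus_k,|\bB_k(t)|\}=\tausus_k$ via the invariance of $|\bB_k|$ on $[\tsus,\ttail]$ and its monotonicity thereafter, then split into copy versus tail phase and combine \Cref{lem:copy-phase-time} and \Cref{lem:tail-phase-time}, absorbing ceiling slack via the shortcut rule $\tausus_k>\cspd$. The only cosmetic difference is that the paper handles the tail-phase case tersely (its Case~1 simply cites \Cref{lem:tail-phase-time}), whereas you spell out the anchoring at $\ttail$ explicitly; this is the same idea, just more clearly articulated.
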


\begin{proof}
Suppose that $T_k$ enters suspension, copy and tail phases respectively at time-steps $\tsus, \tcopy$ and $\ttail$, and terminates at time-step $\tend$. This means that $\tsus \leq \tcopy \leq \ttail \leq \tend$, and $t \in [\tcopy, 
\tend]$.  Now, from the description of our algorithm, we recall that the collection of sets that define  $\bB_k$ does {\em not} change during the interval $[\tsus, \ttail]$. Further, no set gets deleted from $\bB_k$ during the interval $[\ttail, \tend]$.
Thus, we get
\begin{equation}
\label{eq:no:background:change}
\tausus_k = | \bB_k(\tsus) | = | \bB_k(\tcopy) | = |\bB_k(\ttail)|  \leq |\bB_k(t)|.
\end{equation}
This implies that 
\begin{equation}
\label{eq:no:background:0}
\min\left\{\tausus_k, |\bB_k(t)| \right\} = \tausus_k.
\end{equation}
We now consider two possible cases.

\medskip
\noindent {\em Case 1: $t \in [\ttail, \tend]$.} In this case, the corollary immediately follow from \Cref{eq:no:background:0} and \Cref{lem:tail-phase-time}.

\medskip
\noindent {\em Case 2: $t \in [\tcopy, \ttail]$.} In this case, we derive that
\begin{align*}
\tend - t & = (\tend - \ttail) + (\ttail - t) & \\
& \leq \frac{1.4}{\cspd} \cdot |\bB_k(\ttail)| + \frac{1}{\cspd} \cdot |\bB_k(t)| + 2 & (\text{\Cref{lem:copy-phase-time,lem:tail-phase-time}}) \\
& \leq \frac{1.4}{\cspd} \cdot |\bB_k(\ttail)| + \frac{1}{\cspd} \cdot |\bB_k(t)| + \frac{2}{\cspd} \cdot |\bB_k(\tsus)| & (\text{rule of shortcut threads}) \\
& \leq \frac{5}{\cspd} \cdot |\bB_k(t)| & (\text{\Cref{eq:no:background:change}}) 
\end{align*}
This concludes the proof of the corollary.
\end{proof}

\begin{lemma}
\label{lm:wait:time}
    Every background thread $T_k$  remains in the suspension phase for less than $0.1 \cdot \tausus_k$ time-steps.
\end{lemma}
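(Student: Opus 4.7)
The plan is to argue by contradiction: suppose some background thread $T_k$ sits in the suspension phase for at least $0.1\cdot\tausus_k$ time-steps starting at $\tsus_k$. At every such time-step $t\in[\tsus_k,\tsus_k+0.1\tausus_k]$, the scheduler in \Cref{alg:transition} must have failed to promote $T_k$ to the copy phase. Inspecting the scheduler, this can happen for exactly one of two reasons, which I will call \emph{blocked-from-above} (some thread $T_j$ with $j>k$ is already in copy/tail phase, or is about to be promoted at time $t$, and its $\tausus_j$ drives the threshold $\tau$ below $\tausus_k$) and \emph{blocked-from-below} (some thread $T_j$ with $j<k$ and $\tausus_j\le \tausus_k$ has already been promoted ahead of $T_k$ at the current time-step, so that after the update $\tau=\tfrac12\tausus_j<\tausus_k$). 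I will show that in either case $T_k$ actually leaves suspension (by abnormal termination or by promotion) well within $0.1\tausus_k$ time-steps, contradicting the assumption.

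For the blocked-from-above case, any such blocker $T_j$ must satisfy $\tausus_j\le 2\tau\le 2\tausus_k$, since the threshold $\tau$ was last set to $\tfrac12\tausus_j$. By \Cref{cor:copy-tail-phase-time}, $T_j$ terminates within $\tfrac{5}{\cspd}\tausus_j\le \tfrac{10}{\cspd}\tausus_k$ additional time-steps. As $j>k$, the normal termination of $T_j$ abnormally terminates $T_k$; with $\cspd=400$ this happens well before the $0.1\tausus_k$ budget is exhausted. The plan is to bound the accumulated duration of all blocked-from-above intervals by a small multiple of $\tausus_k/\cspd$ by invoking this termination bound geometrically over the levels $j>k$ that can ever block $T_k$ (their $\tausus_j$ values fall in dyadic bands $[\tausus_k,2\tausus_k]$, $[\tausus_k/2,\tausus_k]$, etc., and within each band only one blocker can appear before it terminates and clears the level from the copy/tail set).

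For the blocked-from-below case, the blocker $T_j$ with $j<k$ is already in copy or tail phase at time $t$, so again by \Cref{cor:copy-tail-phase-time} it terminates within $\tfrac{5}{\cspd}\tausus_j\le \tfrac{5}{\cspd}\tausus_k$ time-steps. On its termination, $T_j$ is cleared from the copy/tail collection and the threads $T_0,\dots,T_{j-1}$ reset to preparation, so after at most $\tfrac{5}{\cspd}\tausus_k$ further time-steps either $T_k$ is abnormally terminated (if a higher thread simultaneously completed) or the scheduler re-examines $T_k$ with no lower-level blocker present, allowing $T_k$ to be promoted unless it is now blocked-from-above. Iterating this argument, each blocked-from-below episode consumes only $O(\tausus_k/\cspd)$ time-steps, and the number of distinct lower-level blockers before a promotion is at most $k+1=O(\log n)$, which is still $o(\tausus_k)$ because threads that reach the suspension phase have $\tausus_k\ge \cspd$ by the rule of shortcut threads.

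The main obstacle, exactly the one flagged in the overview after \Cref{assume:simple}, is that the values $\{\tausus_j\}_j$ are snapshots taken at different time-steps by asynchronous greedy computations and need not be monotone in the level $j$. Thus ``blocked-from-below'' is not automatically ruled out by structural monotonicity, and one has to bound the number and cumulative duration of such episodes directly. The clean way to do this is to charge each blocked-from-below episode to the normal termination of its blocker $T_j$ via \Cref{cor:copy-tail-phase-time}, observe that after this termination the lower levels are simultaneously reset and can no longer block $T_k$ without first completing a full preparation and computation phase (which by \Cref{fact:prep-phase-time,fact:comp-phase-time} is far too long to re-enter the copy/tail set within the remaining budget for $T_k$), and then sum these charges. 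Combined with the geometric argument in the blocked-from-above analysis, the total waiting time is bounded by a constant-times-$\tausus_k/\cspd$, which is strictly less than $0.1\tausus_k$ once $\cspd$ is the chosen large constant $400$, yielding the required contradiction.
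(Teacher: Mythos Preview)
Your high-level dichotomy (blocked-from-above vs.\ blocked-from-below) and your treatment of the blocked-from-above case are essentially right: once any $T_j$ with $j>k$ is in copy/tail and has $\tausus_j<2\tausus_k$, by \Cref{cor:copy-tail-phase-time} it terminates within $\tfrac{10}{\cspd}\tausus_k$ steps and aborts $T_k$. (Note that a single such event already ends $T_k$; there is no need for a ``geometric over dyadic bands'' argument here.)

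The genuine gap is in your blocked-from-below analysis. You bound each episode by $O(\tausus_k/\cspd)$ and then bound the \emph{number} of episodes by $k+1=O(\log n)$, concluding the total is $o(\tausus_k)$. But $\cspd$ is a fixed constant ($400$), so $O((\log n)/\cspd)\cdot\tausus_k$ is $\Theta((\log n)\,\tausus_k)$, not $o(\tausus_k)$; the bound is too weak. Your attempted fix --- that after blocker $T_j$ terminates, ``the lower levels are simultaneously reset and can no longer block $T_k$'' --- only covers levels $0,\dots,j-1$. The levels $j{+}1,\dots,k{-}1$ are \emph{not} aborted by $T_j$'s termination, and any of them could already be in suspension and enter copy phase next, becoming the new blocker. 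So your reset argument does not prevent a chain of up to $k$ blockers, and the $O(\log n)$ multiplicative loss remains.

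The paper's proof closes exactly this gap with a different idea: it constructs a chain of blockers $T_{j_1},T_{j_2},\dots$ (each the one in $\tdown$ with the latest termination time at the moment the previous one finishes) and shows $\tausus_{j_{i+1}}\le \tfrac12\,\tausus_{j_i}$. The reason is that $T_{j_{i+1}}$ was \emph{not} in copy/tail at time $t_i-1$ (else it would have been chosen instead of $T_{j_i}$), so it entered copy phase during $[t_i,t_{i+1}-1]$ while $T_{j_i}$ was in copy/tail; by the scheduler's threshold rule in \Cref{alg:transition}, this forces $\tausus_{j_{i+1}}\le \tfrac12\,\tausus_{j_i}$. The resulting geometric series gives $\sum_i(\text{episode length})\le \tfrac{5}{\cspd}\sum_i\tausus_{j_i}\le \tfrac{10}{\cspd}\tausus_{j_1}<\tfrac{20}{\cspd}\tausus_k$, independent of the number of levels. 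This halving observation --- leveraging the scheduler's $\tau\gets \tfrac12\tausus_j$ update --- is the missing ingredient in your argument.
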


The above lemma shows that removing the time limit of suspension phase results in an equivalent algorithm. We add the limit in algorithm description to simplify the proofs and eliminate circular argument.

We combine the results in this section as follows.
\begin{lemma}\label{lem:overall-lifetime}
    Suppose that a background thread $T_k$ is running at time-step $t$. Then, the lifetime of $T_k$ is at most $0.2\cdot |L^{\bF}_{k}(t)|$ time-steps.
\end{lemma}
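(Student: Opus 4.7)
The plan is to decompose the lifetime of $T_k$ into its constituent phases, bound each duration using the facts and lemmas already proved in \Cref{sec:lifetime}, and then convert the resulting bound, which will naturally be expressed in terms of $|L^{\bF}_k(\tprep)|$, into a bound in terms of $|L^{\bF}_k(t)|$ for an arbitrary time-step $t$ within the lifetime. Base and shortcut threads are handled separately, since they terminate within a single time-step (so the claim is trivial once $|L^{\bF}_k(t)|$ is at least a small constant).

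First, I would fix the notation $\tprep \le \tcomp \le \tsus \le \tcopy \le \tend$ for the time-steps at which preparation, computation, suspension, and the copy phase begin, and at which $T_k$ terminates (whether normally or abnormally -- an abnormal termination only shortens the lifetime relative to the normal-termination bound, so bounding the latter suffices). The four phase durations can then be bounded as follows: (i) $\tcomp - \tprep \le \frac{1.1}{\cspd}|L^{\bF}_k(\tprep)| + 1$ by \Cref{fact:prep-phase-time}; (ii) $\tsus - \tcomp \le \frac{1.1}{\cspd}|L^{\bF}_k(\tcomp)| + 1$ by \Cref{fact:comp-phase-time} applied at $t = \tcomp$ (noting that $\bB_k(\tcomp) = \emptyset$); (iii) $\tcopy - \tsus < 0.1\,\tausus_k$ by \Cref{lm:wait:time}; and (iv) $\tend - \tcopy \le \frac{5}{\cspd}\tausus_k$ by \Cref{cor:copy-tail-phase-time}.

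Second, I would upper bound $\tausus_k = |\bB_k(\tsus)|$ in terms of $|L^{\bF}_k(\tcomp)|$. Since $\bB_k(\tcomp) = \emptyset$, every set in $\bB_k(\tsus)$ was added by \Cref{alg:greedy} at some time in $[\tcomp,\tsus]$, and by Line \ref{line:greedy-cov-assign} each such addition consumes at least one element from the uncovered sub-universe at that moment. Hence $\tausus_k$ is at most the total number of distinct elements that ever appeared in the uncovered sub-universe during $[\tcomp,\tsus]$, which is at most $|L^{\bF}_k(\tcomp)|$ plus the number of insertions into $L^{\bF}_k$ during $[\tcomp,\tsus]$, namely at most $\tsus - \tcomp \le \frac{1.1}{\cspd}|L^{\bF}_k(\tcomp)| + 1$. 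Thus $\tausus_k \le \bigl(1 + \tfrac{1.1}{\cspd}\bigr)|L^{\bF}_k(\tcomp)| + 1$.

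Third, I would sum the four bounds and plug in the bound on $\tausus_k$. With $\cspd = 400$, the dominant term is $(0.1 + \tfrac{5}{\cspd})\tausus_k \le 0.113\,|L^{\bF}_k(\tcomp)| + O(1)$, while the contributions from preparation and computation add another $\tfrac{1.1}{\cspd}(|L^{\bF}_k(\tprep)| + |L^{\bF}_k(\tcomp)|)$. Relating $|L^{\bF}_k(\tcomp)|$ and $|L^{\bF}_k(\tprep)|$ (they differ by at most $\tcomp - \tprep$, itself $\le \tfrac{1.1}{\cspd}|L^{\bF}_k(\tprep)| + 1$), the total lifetime $\tend - \tprep$ is at most some constant, call it $\gamma$, times $|L^{\bF}_k(\tprep)|$, where $\gamma < 0.12$ for $\cspd = 400$. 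Finally, since $|L^{\bF}_k|$ changes by at most one per time-step and the lifetime is at most $\gamma\,|L^{\bF}_k(\tprep)|$, for any time-step $t$ in the lifetime we have $|L^{\bF}_k(t)| \ge (1-\gamma)|L^{\bF}_k(\tprep)|$, which gives lifetime $\le \tfrac{\gamma}{1-\gamma}|L^{\bF}_k(t)| < 0.2\,|L^{\bF}_k(t)|$ as required.

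The main obstacle is the circular flavor of step two together with the final conversion: we want a bound in terms of $|L^{\bF}_k(t)|$, but the phase-duration bounds naturally involve $|L^{\bF}_k|$ at different internal time-steps, and $\tausus_k$ sits between them. The trick is to first bound everything in terms of $|L^{\bF}_k(\tcomp)|$ via the greedy-addition argument, then collapse to $|L^{\bF}_k(\tprep)|$ using the short-preparation bound, and finally translate to $|L^{\bF}_k(t)|$ using the fact that the lifetime itself is already a small fraction of $|L^{\bF}_k(\tprep)|$. The constant $0.2$ is chosen loosely enough to absorb the $O(1)$ rounding terms and the $1/(1-\gamma)$ blow-up coming from the conversion step.
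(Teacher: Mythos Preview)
Your proposal is correct and follows essentially the same approach as the paper: decompose the lifetime into its phases, bound each using \Cref{fact:prep-phase-time}, \Cref{fact:comp-phase-time}, \Cref{lm:wait:time}, and \Cref{cor:copy-tail-phase-time}, bound $\tausus_k$ by the number of distinct elements ever present in the sub-universe (the paper phrases this as a bijection from sets in $\bB_k(\tsus)$ to elements in their coverages), and then convert the resulting bound in terms of $|L^{\bF}_k(\tprep)|$ to one in terms of $|L^{\bF}_k(t)|$ via the one-change-per-step property. Your treatment is in fact slightly more careful than the paper's in avoiding the apparent circularity when bounding $\tausus_k$ (you route through $|L^{\bF}_k(\tcomp)|$ first), but the overall argument is the same.
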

\begin{proof}
Suppose $T_k$ starts at $\tprep$ and terminates at $\tend$.
If $|L^\bF_k(\tprep)|\le \cspd-1$, then $T_k$ terminate in one time-step by the rule of base threads. In this case, the statement holds. In the rest of the proof, we may assume $|L^\bF_k(\tprep)|\ge \cspd$, i.e.\ $1 \le \frac{|L^\bF_k(\tprep)|}{\cspd}$.

$T_k$ may terminate in one of the three cases: (1) normal termination at tail phase, (2) normal termination by rule of shortcut threads at computation phase, and (3) aborted by a higher thread. We may assume case (1) w.l.o.g., because its upper bound subsumes the other cases.

Suppose that $T_k$ enters computation phases, suspension phase, and copy phase respectively at time-steps $ \tcomp, \tsus$ and $\tcopy$.
We have the following time bounds:
\begin{align*}
    \tcomp-\tprep &\le \left\lceil \frac{1.1}{\cspd}\cdot |L^\bF_k(\tprep)|\right\rceil  & \text{(\Cref{fact:prep-phase-time})}\\
    \tsus-\tcomp &\le \left\lceil \frac{1.1}{\cspd}\cdot |L^\bF_k(\tcomp)|\right\rceil  & \text{(\Cref{fact:comp-phase-time})}\\
    \tcopy-\tsus &\le \left\lceil 0.1\cdot \tausus_k \right\rceil  & \text{(\Cref{lm:wait:time})}\\
    \tend-\tcopy & \le \frac{5}{\cspd}\cdot \tausus_k & \text{(\Cref{cor:copy-tail-phase-time})}
\end{align*}

We will show that $\tend-\tprep \le \Delta := 0.2\cdot |L^\bF_k(\tprep)|$.
During $t\in [\tprep, \tprep+\Delta]$, $|L^\bF_k(t)|$ can change by at most 1 per time-step, so we always have $0.8 |L^\bF_k(\tprep)|\le |L^\bF_k(t)|\le 1.2 |L^\bF_k(\tprep)|$.
To bound $\tausus_k = |\bB_k(\tsus)|$, we consider a bijection from each set in $\bB_k(\tsus)$ to an arbitrary element in its coverage. These elements are distinct and belong to $L^\bF_k$ when the set is added by the greedy algorithm. So, $\tausus_k \le 1.2 |L^\bF_k(\tprep)|$. In conclusion,
\[\tend - \tprep
\le(1.1+1.1\times 1.2 + 0.1\cspd + 5 + 3)\frac{1}{\cspd}|L^\bF_k(\tprep)|
\le 0.15|L^\bF_k(\tprep)| \le 0.2 |L^\bF_k(t)|\]
\end{proof}

\subsubsection{Proof of \Cref{lm:wait:time}}

   Suppose that the thread $T_k$ enters the suspension phase at time-step $\tsus$ and ends at time-step $\tend >  \tsus$. From the algorithm description, this can potentially happen because of one of the following three reasons.
    \begin{itemize}
    \item (i) $T_k$ normally terminates  at time-step $\tend$. 
    \item (ii) $T_k$ gets aborted because another  thread $T_j$, with $j > k$, normally terminates at time-step $\tend$.
    \item (iii) $T_k$ spends $0.1 \cdot \tausus_k$ time-steps in the suspension phase, and $\tend = \tsus + \left\lfloor 0.1 \cdot \tausus_k\right\rfloor$.
    \end{itemize}
    We will show that $T_k$  ends because of reason (i) or reason (ii) (and {\em never} because of reason (iii)). Specifically, suppose that $T_k$ is in suspension phase during the interval $[\tsus, \tsus + \deltasus]$, for some integer $\deltasus \geq 0$. We will show: 
    \begin{equation}
    \label{eq:toshow}
    \deltasus \le 0.1\cdot  \tausus_k.
    \end{equation}

    Consider any $t \in [\tsus, \tsus + \deltasus-1]$.  We say that $T_k$ is {\em blocked-from-above at $t$} iff there is some other thread $T_j$, with $j > k$ and $\frac{1}{2} \cdot \tausus_j < \tausus_k$, that is in either  copy  or tail phase at the end of time-step $t$. In contrast, we say that $T_k$ is {\em blocked-from-below at  $t$} iff (a) it is {\em not} blocked-from-above at  $t$, and (b) there is some other thread $T_j$, with $j < k$ and $\frac{1}{2} \cdot \tausus_j < \tausus_k$, that is in either copy or tail phase at the end of time-step $t$. The observation below follows from the description of our algorithm (see \Cref{alg:transition}).

    \begin{observation}
   \label{obs:block} $T_k$ is either blocked-from-above or blocked-from-below at every  $t \in [\tsus, \tsus+\deltasus-1]$.
    \end{observation}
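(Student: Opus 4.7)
The approach is to directly unwind the definitions of blocked-from-above and blocked-from-below, and connect them to the invariant maintained by the scheduler in \Cref{alg:transition}. My plan is to fix an arbitrary $t \in [\tsus, \tsus + \deltasus - 1]$ and observe that, since $T_k$ is in suspension phase throughout $[\tsus, \tsus+\deltasus-1]$, in particular $T_k$ is still in suspension phase at the end of the scheduler's while loop at time-step $t$. By the loop's termination condition, no thread currently in suspension satisfies $\tausus \le \tau$, so in particular $\tausus_k > \tau$ for the terminal value of $\tau$.

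Next, I would identify what this terminal value of $\tau$ represents. A one-line induction on the iterations of the while loop shows that, throughout the loop, $\tau$ equals $\tfrac{1}{2}\min_j \tausus_j$, where the minimum is taken over all threads $T_j$ that are \emph{currently} in copy or tail phase (this set only grows during the loop, and each update $\tau \gets \tfrac{1}{2}\tausus_k$ in Line of \Cref{alg:transition} uses a $\tausus_k$ that is weakly smaller than the previous value of $\tau$, preserving the invariant). Combined with $\tausus_k > \tau$ at termination, this yields a witness thread $T_j$ in copy or tail phase at the end of time-step $t$ with $\tfrac{1}{2}\tausus_j < \tausus_k$. Since $T_k$ itself is in suspension while $T_j$ is in copy or tail phase, we have $j \ne k$.

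Finally, I would conclude by a case analysis on the index $j$ of the witness. If some witness has $j > k$, then by definition $T_k$ is blocked-from-above at $t$, and we are done. Otherwise, every witness must have $j < k$; then condition (a) in the definition of blocked-from-below holds vacuously (there is no $j > k$ witness), and condition (b) is witnessed by the $T_j$ above with $j < k$ and $\tfrac{1}{2}\tausus_j < \tausus_k$. In either case, $T_k$ is blocked-from-above or blocked-from-below at $t$, establishing the observation.

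The main subtlety I anticipate is the invariant bookkeeping for $\tau$ inside the scheduler -- specifically, making sure that the terminal value of $\tau$ is the half-minimum of $\tausus_j$ over all threads in copy or tail phase \emph{at the end} of the loop, not merely over those in copy or tail phase \emph{at the start}. The key point is that whenever the loop moves a new thread $T_{k'}$ into copy phase, it simultaneously updates $\tau \gets \tfrac{1}{2}\tausus_{k'}$, and since that iteration was entered under the guard $\tausus_{k'} \le \tau_{\mathrm{old}}$, the invariant $\tau = \tfrac{1}{2}\min_j \tausus_j$ over the updated set of copy/tail threads is preserved. Beyond this routine induction, the entire argument is a direct reading off of the scheduler's termination condition and the definitions of the two blocking notions.
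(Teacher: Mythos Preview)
Your proposal is correct and is precisely the argument the paper has in mind: the paper does not spell out a proof at all, merely stating that the observation ``follows from the description of our algorithm (see \Cref{alg:transition}),'' and your unpacking of the scheduler's termination condition together with the invariant $\tau = \tfrac{1}{2}\min_j \tausus_j$ over the current copy/tail threads is exactly the intended reasoning. Your handling of the edge case (no threads in copy/tail would force $\tausus_k > \tau = +\infty$, a contradiction) and the case split on $j>k$ versus $j<k$ are both sound.
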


    \begin{claim}
    \label{cl:above} 
    Suppose that the thread $T_k$ is blocked-from-above at some time-step $t \in [\tsus, \tsus+\deltasus-1]$. Then, the thread $T_k$ must terminate before time-step $t+ \frac{10}{\cspd} \cdot \tausus_k$.    \end{claim}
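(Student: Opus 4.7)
The plan is to reduce to \Cref{cor:copy-tail-phase-time} for the blocking thread, and then note that any termination of a thread at a level above $k$ forces $T_k$ itself to terminate.

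Since $T_k$ is blocked-from-above at time-step $t$, by definition there is some thread $T_j$ with $j > k$ and $\tausus_j < 2\tausus_k$ which is in copy or tail phase at the end of time-step $t$. In particular, $T_j$ is in copy or tail phase at time-step $t+1$. Applying \Cref{cor:copy-tail-phase-time} to $T_j$ at time-step $t+1$, the thread $T_j$ must end by time-step
\[
(t+1) + \frac{5}{\cspd}\cdot \tausus_j \;<\; t + 1 + \frac{10}{\cspd}\cdot \tausus_k.
\]
Since $\cspd$ is a sufficiently large constant and we only care about the final bound up to an additive rounding, this is at most $t + \frac{10}{\cspd}\cdot \tausus_k$ (using $\tausus_k \geq \cspd$ for non-trivial threads; the base-thread case is excluded since $T_k$ has already entered suspension, so by the rule of shortcut threads we have $\tausus_k > \cspd$).

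Let $t'$ be the time-step at which $T_j$ terminates. It remains to argue that $T_k$ terminates no later than $t'$. There are two cases. If $T_j$ normally terminates at $t'$, then by the termination protocol in the algorithm description, all threads of level strictly less than $j$ are aborted at time-step $t'$; since $k < j$, this includes $T_k$. If instead $T_j$ is abnormally terminated at $t'$, then by the algorithm this is triggered by the normal termination of some higher thread $T_{j'}$ with $j' > j > k$ at time-step $t'$, and this normal termination likewise aborts $T_k$. In either case $T_k$ terminates at time-step $t' \leq t + \frac{10}{\cspd}\cdot \tausus_k$, which establishes the claim.

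The argument is essentially a direct combination of the termination bound from \Cref{cor:copy-tail-phase-time} with the cascading-abort behaviour of the scheduler; the only small wrinkle is to verify that the ``aborted by a higher thread'' case does not help $T_j$ live longer than the corollary's bound, which it does not because \Cref{cor:copy-tail-phase-time} bounds the time of \emph{any} termination (normal or abnormal) of $T_j$.
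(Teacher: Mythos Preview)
Your proposal is correct and follows essentially the same approach as the paper: invoke \Cref{cor:copy-tail-phase-time} on the blocking thread $T_j$ and use $\tausus_j < 2\tausus_k$ to convert the bound, then observe that any termination of $T_j$ (or of whatever higher thread aborted $T_j$) forces $T_k$ to be aborted as well. Your case split on normal versus abnormal termination of $T_j$ is in fact more explicit than the paper's one-line ``since $j>k$, the thread $T_k$ gets aborted when $T_j$ terminates.''

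One small simplification: the detour through time-step $t+1$ is unnecessary and forces you into the slightly awkward business of absorbing the $+1$. The definition of ``blocked-from-above at $t$'' already places $T_j$ in copy or tail phase \emph{at} time-step $t$ (it is there at the end of $t$, hence during $t$), so you may apply \Cref{cor:copy-tail-phase-time} directly at $t$ and obtain $t + \frac{5}{\cspd}\tausus_j < t + \frac{10}{\cspd}\tausus_k$ without any additive fuss. This is exactly what the paper does.
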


    \begin{proof}
    Since $T_k$ is blocked-from-above at some time-step $t \in [\tsus, \tsus+\Delta-1]$, there exists another thread $T_j$, with $j > k$ and $\frac{1}{2} \cdot \tausus_j < \tausus_k$, that is in either copy or tail phase at time-step $t$.  Then, \Cref{cor:copy-tail-phase-time} guarantees that  $T_j$ will terminate by time-step $t+ \frac{5}{\cspd} \cdot \tausus_j < t+ \frac{10}{\cspd} \cdot \tausus_k$. Finally, note that since $j > k$, the thread $T_k$ gets aborted when $T_j$ terminates.
    \end{proof}

    Let $\tdown(t)$ denote the collection of threads $T_j$, with $j < k$ and $\frac{1}{2} \cdot \tausus_{j} < \tausus_k$, that are in either copy or tail phase at the end of time-step $t$. By \Cref{cl:above}, if $T_k$ is blocked-from-above at time-step $\tsus$, then $\deltasus < \frac{10}{\cspd} \cdot \tausus_k < 0.1 \cdot \tausus_k$.
    This immediately implies \Cref{lm:wait:time}. Accordingly, throughout the rest of the proof, we make the following assumption. 

    \begin{assumption}
    \label{assume:new}
    The thread $T_k$ is {\em not} blocked-from-above at time-step $\tsus$, and $\deltasus \geq 1$. 
    \end{assumption}

    \begin{corollary}
    \label{cor:new:start}
    We have $\tdown(\tsus) \neq \emptyset$.
    \end{corollary}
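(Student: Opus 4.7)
The plan is to observe that this corollary is essentially a direct unpacking of the definitions, using Observation \ref{obs:block} together with Assumption \ref{assume:new}. The argument proceeds in three short steps.

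First, I would note that since $\deltasus \geq 1$ by Assumption \ref{assume:new}, the time-step $\tsus$ itself lies in the interval $[\tsus, \tsus+\deltasus-1]$. By Observation \ref{obs:block}, the thread $T_k$ must therefore be either blocked-from-above or blocked-from-below at time-step $\tsus$. Second, Assumption \ref{assume:new} explicitly rules out the former, so $T_k$ must be blocked-from-below at time-step $\tsus$. Third, by the definition of ``blocked-from-below'', this means there exists some thread $T_j$ with $j < k$ and $\tfrac{1}{2} \cdot \tausus_j < \tausus_k$ that is in either copy or tail phase at the end of time-step $\tsus$. But such a thread $T_j$ is, by definition, a member of $\tdown(\tsus)$, so $\tdown(\tsus) \neq \emptyset$.

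Since the proof is a matter of chaining together definitions and previously established observations, there is no real obstacle to overcome; the only thing to be careful about is that the interval $[\tsus, \tsus + \deltasus - 1]$ contains $\tsus$, which requires $\deltasus \geq 1$ --- hence the inclusion of this condition in Assumption \ref{assume:new}.
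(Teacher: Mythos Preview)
Your proof is correct and takes essentially the same approach as the paper, which simply states that the corollary follows from Observation~\ref{obs:block} and Assumption~\ref{assume:new}. You have carefully unpacked the one-line justification into its constituent steps.
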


    \begin{proof}
    Follows from \Cref{obs:block} and \Cref{assume:new}.
    \end{proof}

    Let $t^\star(T_j)$ denote the last time-step during which a given thread $T_j$ is in either copy or tail phase. For some integer $r \geq 0$, we now define a sequence of $r$ consecutive intervals $[t_1, t_2-1], [t_2, t_3-1], \ldots, [t_{r}, t_{r+1}-1]$, and  a sequence of indices $j_1, j_2, \ldots, j_r$, according to the procedure described in \Cref{alg:intervals}. 
   \begin{algorithm}
\caption{Constructing the intervals $[t_1, t_2-1],  \ldots, [t_r, t_{r+1}-1]$ and the indices $j_1,  \ldots, j_r$.}
    $t_1 \leftarrow \tsus$ \;  
    $T_{j_1} \leftarrow \arg \max_{T_j \in \tdown(t_1)} \{ t^\star(T_j)\}$, breaking ties arbitrarily (see \Cref{cor:new:start}). \; 
    $t_2 \leftarrow t^\star(T_{j_1}) + 1$ \; 
    $i \leftarrow 1$. \; 
    \While{$\tdown(t_{i+1}-1) \setminus \{T_{j_{i}}\} \neq \emptyset$}{
        $T_{j_{i+1}} \leftarrow \arg \max_{T_j \in \tdown(t_{i+1}-1)} \{ t^\star(T_j)\}$, breaking ties arbitrarily. \;  
        $t_{i+2} \leftarrow t^\star(T_{j_{i+1}}) + 1$ \; 
        $i \leftarrow i+1$ \;
}
$r \leftarrow i$ \; 
\label{alg:intervals}
\end{algorithm}

 \begin{observation}
    \label{obs:switch}
    For all $i \in [1, r-1]$, we have $\tausus_{j_{i+1}} \leq \frac{1}{2} \cdot \tausus_{j_{i}}$. 
    \end{observation}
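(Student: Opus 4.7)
The plan is to combine the argmax rule of \Cref{alg:intervals} with a geometric property of the scheduler in \Cref{alg:transition}. The scheduler's property, which follows by induction on the $\tau$ updates starting from \Cref{line:tau-init}, is the following: whenever a thread $T$ is newly moved from suspension to copy phase at some time-step $t$, its $\tausus_T$ value is at most $\tfrac{1}{2}$ of $\tausus_{T'}$ for every thread $T'$ already in copy or tail phase at the start of $t$, and also at most $\tfrac{1}{2}$ of $\tausus_{T'}$ for every $T'$ picked ahead of $T$ within the same scheduler iteration. To handle the tie case that arises below, I would refine the tie-breaking in the argmax of \Cref{alg:intervals} so that, among the candidates attaining the maximum $t^\star$, the one with the largest $\tausus$ is selected.

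Fix $i \in [1, r-1]$ and let $t^e_j$ denote thread $T_j$'s entry time to copy phase. The main step is to show that $t^e_i \le t^e_{i+1}$. Writing $\sigma_i := t_i - 1$ for $i \ge 2$ and $\sigma_1 := t_1$, we have $T_{j_i} \in \tdown(\sigma_i)$ by construction, so $t^e_i \le \sigma_i$. If $t^e_{i+1} > \sigma_i$, then $t^e_i < t^e_{i+1}$ immediately. Otherwise $T_{j_{i+1}}$ is in copy or tail throughout $[t^e_{i+1}, t^\star(T_{j_{i+1}})] \supseteq [\sigma_i, t^\star(T_{j_i})]$, so $T_{j_{i+1}} \in \tdown(\sigma_i)$. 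For $i \ge 2$ one also has $T_{j_{i+1}} \neq T_{j_{i-1}}$ because $t^\star(T_{j_{i-1}}) = t_i - 1 < t^\star(T_{j_i}) \le t^\star(T_{j_{i+1}})$. Hence $T_{j_{i+1}}$ is a valid candidate for the argmax that selected $T_{j_i}$; combining $t^\star(T_{j_i}) \ge t^\star(T_{j_{i+1}})$ (from the argmax) with $t^\star(T_{j_{i+1}}) \ge t^\star(T_{j_i})$ (from $T_{j_{i+1}} \in \tdown(t^\star(T_{j_i}))$) yields a tie in $t^\star$. The refined tie-break then forces $\tausus_{j_i} \ge \tausus_{j_{i+1}}$, which by the scheduler's geometric property is compatible only with $t^e_i \le t^e_{i+1}$ (a strictly earlier copy-entry time would imply strictly larger $\tausus$, contradicting the inequality).

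Having established $t^e_i \le t^e_{i+1} \le t_{i+1} - 1 = t^\star(T_{j_i})$, so that $T_{j_i}$ is still in copy or tail at time $t^e_{i+1}$, the conclusion follows directly from \Cref{alg:transition}. If $t^e_i < t^e_{i+1}$, then $T_{j_i}$ is in copy or tail at the start of $t^e_{i+1}$, so the initialization of $\tau$ on \Cref{line:tau-init} gives $\tau \le \tfrac{1}{2}\tausus_{j_i}$, and the admission criterion $\tausus_{j_{i+1}} \le \tau$ yields the bound. If $t^e_i = t^e_{i+1}$, then both are picked in the same scheduler iteration at time $t^e_i$; the tie-break gives $\tausus_{j_i} \ge \tausus_{j_{i+1}}$, and since one iteration of \Cref{alg:transition} picks threads in decreasing order of level with $\tausus$ halved at each step, $T_{j_i}$ must be picked before $T_{j_{i+1}}$. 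After $T_{j_i}$'s pick, $\tau$ is updated to $\tfrac{1}{2}\tausus_{j_i}$, so the subsequent admission of $T_{j_{i+1}}$ yields $\tausus_{j_{i+1}} \le \tfrac{1}{2}\tausus_{j_i}$.

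The main obstacle I anticipate is the tie case $t^\star(T_{j_i}) = t^\star(T_{j_{i+1}})$: without a carefully specified tie-breaking rule, the argmax in \Cref{alg:intervals} could legitimately pick the thread with the smaller $\tausus$ as $T_{j_i}$, which would reverse the direction of the geometric decrease. Fixing the tie-break as above (``largest $\tausus$ wins'') sidesteps this issue cleanly; the remainder of the argument is a direct read-off of the scheduler's semantics and the definitions of the intervals.
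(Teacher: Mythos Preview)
Your argument is correct, and your attention to the tie case is well placed: the paper's own proof asserts $t^\star(T_{j_{i+1}}) > t^\star(T_{j_i})$ strictly and then invokes ``otherwise we would have defined $j_i \leftarrow j_{i+1}$'' without saying what happens when the $t^\star$ values coincide under arbitrary tie-breaking. Your proposed refinement (break argmax ties by largest $\tausus$) cleanly closes that small gap.

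That said, the paper's route is much shorter. Instead of first proving the entry-time inequality $t^e_{j_i} \le t^e_{j_{i+1}}$ and then splitting into strict and equal subcases, it argues in one stroke that $T_{j_{i+1}}$ was \emph{not} in $\tdown$ at the moment $T_{j_i}$ was selected: since $T_{j_{i+1}} \in \tdown(t_{i+1}-1)$ we have $t^\star(T_{j_{i+1}}) \ge t^\star(T_{j_i})$, so had $T_{j_{i+1}}$ already been in copy/tail at that earlier time, it would have been (at least tied with) the argmax and hence chosen as $T_{j_i}$. Therefore $T_{j_{i+1}}$ enters copy phase at some time in $[t_i,\, t_{i+1}-1]$; throughout this window $T_{j_i}$ is in copy or tail, and \Cref{alg:transition} gives $\tausus_{j_{i+1}} \le \tfrac{1}{2}\tausus_{j_i}$ immediately. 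This bypasses your detour through entry times and the final case split entirely. (Your aside that $T_{j_{i+1}} \ne T_{j_{i-1}}$ is also unnecessary: the argmax in \Cref{alg:intervals} ranges over all of $\tdown(t_{i+1}-1)$, with no exclusion.)

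Both approaches share the same endgame---invoke the scheduler while $T_{j_i}$ is still in copy/tail---but the paper reaches it via a single contraposition on the argmax. Your longer path does buy a principled handling of ties, but the same tie-breaking fix grafts onto the paper's one-line argument just as easily, so the direct route is preferable.
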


    \begin{proof}
    By definition, we have $T_{j_{i+1}} \in \tdown(t_{i+1}-1)$, and  hence $t^\star(T_{j_{i+1}}) \geq t_{i+1} > t^\star(T_{j_{i}})$. It follows that $T_{j_{i+1}} \notin \tdown(t_{i}-1)$, for otherwise we would have defined $j_{i} \leftarrow j_{i+1}$ at time-step $t_{i}-1$. Accordingly, the thread $T_{j_{i+1}}$ enters copy phase during some time-step $t \in [t_{i}, t_{i+1}-1]$, when thread $T_{j_{i}}$ is in either copy or tail phase. Thus, from the description of \Cref{alg:transition}, we have $\tausus_{j_{i+1}} \leq \frac{1}{2} \cdot \tausus_{j_{i}}$.
    \end{proof}

    In particular, \Cref{obs:switch} implies that $r = O(\log n)$,  because $\tausus_j \in [1, n]$ for all background threads $T_j$.

    \begin{observation}
    \label{obs:end:block}
    Suppose that $t_{r+1} \in [\tsus, \tsus + \deltasus - 1]$. Then, the thread $T_k$ is blocked-from-above at $t_{r+1}$.
    \end{observation}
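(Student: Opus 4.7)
The plan is to argue by contradiction: assume that $T_k$ is in suspension at the end of $t_{r+1}$ (as guaranteed by $t_{r+1} \le \tsus + \deltasus - 1$) but is not blocked-from-above there. By \Cref{obs:block}, $T_k$ must then be blocked-from-below at $t_{r+1}$, witnessed by some thread $T_{j'}$ with $j' < k$, $\tfrac{1}{2}\tausus_{j'} < \tausus_k$, and $T_{j'}$ in copy or tail phase at the end of $t_{r+1}$. I will split into two cases: (i) $T_{j'}$ was already in copy/tail at the end of $t_{r+1}-1$, or (ii) $T_{j'}$ transitioned from suspension to copy during $t_{r+1}$ via \Cref{alg:transition}.

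Case (i) follows directly from the termination condition of \Cref{alg:intervals}. Here $T_{j'} \in \tdown(t_{r+1}-1)$ by definition, and since $t^\star(T_{j_r}) = t_{r+1}-1$ while $T_{j'}$ remains in copy/tail at the end of $t_{r+1}$, we have $T_{j'} \neq T_{j_r}$. Thus $T_{j'} \in \tdown(t_{r+1}-1) \setminus \{T_{j_r}\}$, contradicting the exit condition of the construction. Case (ii) is where the real work lies. Let $T_{j^{(1)}}, \ldots, T_{j^{(s)}}$ denote the threads that transition to copy at $t_{r+1}$ by \Cref{alg:transition}, listed in the order picked. I plan to prove by induction that $j^{(i)} > k$ and $\tausus_{j^{(i)}} \geq 2\tausus_k$ for every $i$, which directly contradicts $T_{j'}$ (with $j' < k$) appearing in this list. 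For the base case, I will first show $\tau_{\rm init} \geq \tausus_k$: any thread $T_j$ in copy/tail at the \emph{start} of $t_{r+1}$ with $\tausus_j < 2\tausus_k$ must either have $j > k$ (violating the ``not blocked-from-above'' hypothesis, since such a thread remains in copy/tail through the end of $t_{r+1}$) or have $j < k$ and thus lie in $\tdown(t_{r+1}-1) \setminus \{T_{j_r}\}$ (violating \Cref{alg:intervals}'s termination condition). Since $T_k$ is then a suspended candidate ($\tausus_k \le \tau_{\rm init}$) that does not transition, the highest-index rule of \Cref{alg:transition} forces $j^{(1)} > k$; and $\tausus_{j^{(1)}} < 2\tausus_k$ would make $T_{j^{(1)}}$ a blocked-from-above witness at $t_{r+1}$. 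The inductive step is identical, using that the updated threshold $\tfrac{1}{2}\tausus_{j^{(i)}} \ge \tausus_k$ keeps $T_k$ a valid candidate at every subsequent iteration.

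The main obstacle I expect is a subtle timing issue: for a freshly transitioned $T_{j^{(i)}}$ (with $j^{(i)} > k$ and $\tausus_{j^{(i)}} < 2\tausus_k$) to serve as a blocked-from-above witness, it must actually survive in copy phase through the \emph{end} of $t_{r+1}$, despite having only just entered copy during $t_{r+1}$. The only way it could leave copy within $t_{r+1}$ is by being aborted by some $T_{j''}$ with $j'' > j^{(i)} > k$ normally terminating at $t_{r+1}$; but such a normal termination would simultaneously abort $T_k$, contradicting the hypothesis that $T_k$ is in suspension at the end of $t_{r+1}$. Once this wrinkle is resolved, both cases deliver contradictions and the observation follows.
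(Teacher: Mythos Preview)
Your proposal is correct and follows essentially the same route as the paper. The paper's proof is considerably more compressed: it observes that $\tdown(t_{r+1}-1)\setminus\{T_{j_r}\}=\emptyset$ and that $T_{j_r}$ has left copy/tail by the start of $t_{r+1}$, and then simply asserts that the failure of $T_k$ to transition forces a blocked-from-above witness. Your Case~(i)/Case~(ii) split and the induction over the sequence $T_{j^{(1)}},\ldots,T_{j^{(s)}}$ picked by \Cref{alg:transition} unpack exactly what the paper leaves implicit in that assertion, and your handling of the survival-through-end-of-$t_{r+1}$ wrinkle is the right way to close the remaining gap.
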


    \begin{proof}
    From the description of \Cref{alg:intervals}, we have $\tdown(t_{r+1}-1) \setminus \{ T_{j_r} \} = \emptyset$. Furthermore, since $t_{r+1} - 1 = t^\star(T_{j_r})$, the thread $T_{j_r}$ itself is no longer in copy or tail phase after the end of time-step $t_{r+1}-1$. This means that at the start of time-step $t_{r+1}$, every thread $T_j$ with $j < k$ that is in copy or tail phase has $\frac{1}{2} \cdot \tausus_j \geq \tausus_k$. Nevertheless, since we have assumed that $t_{r+1} \in [\tsus, \tsus+\deltasus-1]$, the thread $T_k$ does {\em not} enter copy or tail phase during time-step $t_{r+1}$. This can happen only if there exists some other thread $T_j$, with $j > k$ and $\frac{1}{2} \cdot \tausus_j < \tausus_k$, that is in copy or tail phase at the end of step $t_{r+1}$. By definition, this means that the thread $T_k$ is blocked-from-above at $t_{r+1}$.
    \end{proof}

    \begin{observation}
    \label{obs:sum}
    We have $t_{r+1} - t_1 \leq \frac{20}{\cspd} \cdot \tausus_{k}$.
    \end{observation}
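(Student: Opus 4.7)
The plan is to bound the length of each interval $[t_i, t_{i+1}-1]$ individually using \Cref{cor:copy-tail-phase-time}, and then sum these bounds while exploiting the geometric decay guaranteed by \Cref{obs:switch}. For each $i \in [1, r]$, observe that by the construction in \Cref{alg:intervals}, the thread $T_{j_i}$ lies in $\tdown$ at some time-step $\tau_i \in \{t_i - 1, t_i\}$, and so $T_{j_i}$ is in copy or tail phase at $\tau_i$. Applying \Cref{cor:copy-tail-phase-time} to $T_{j_i}$ at $\tau_i$ gives $t^\star(T_{j_i}) \leq \tau_i + \tfrac{5}{\cspd} \cdot \tausus_{j_i}$. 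Since $t_{i+1} = t^\star(T_{j_i}) + 1$ by construction, we obtain
$$t_{i+1} - t_i \;\leq\; \tfrac{5}{\cspd}\cdot \tausus_{j_i} + O(1).$$

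Next, I iterate \Cref{obs:switch} to get $\tausus_{j_i} \leq \tausus_{j_1}/2^{i-1}$ for all $i \in [1, r]$. Moreover, $T_{j_1} \in \tdown(t_1)$ by construction, so by definition of $\tdown$, we have $\tfrac{1}{2}\tausus_{j_1} < \tausus_k$, i.e.\ $\tausus_{j_1} < 2\tausus_k$. Combining these, $\tausus_{j_i} < \tausus_k/2^{i-2}$, and so
$$\sum_{i=1}^{r} \tausus_{j_i} \;<\; \sum_{i=1}^{r} \frac{\tausus_k}{2^{i-2}} \;<\; 4\tausus_k.$$

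Summing the per-interval bound telescopically yields
$$t_{r+1} - t_1 \;=\; \sum_{i=1}^{r}\bigl(t_{i+1} - t_i\bigr) \;\leq\; \sum_{i=1}^{r}\tfrac{5}{\cspd}\,\tausus_{j_i} + O(r) \;\leq\; \tfrac{20}{\cspd}\tausus_k + O(r).$$
Since $\cspd = 400$ is a generous constant and, by the rule of shortcut threads, we only reach this analysis when $\tausus_k > \cspd$, the additive $O(r) = O(\log n)$ term from rounding is dominated by the main geometric sum (or is absorbed into the constant $20$ with room to spare).

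The main subtlety is the ``$\tau_i \in \{t_i-1, t_i\}$'' off-by-one, which arises because $T_{j_1}$ is selected from $\tdown(t_1)$ while $T_{j_i}$ for $i \geq 2$ is selected from $\tdown(t_{i}-1)$; handling this correctly is what incurs the additive $O(1)$ per interval, and one must check that the cumulative slack is safely absorbed by the slack in the constants (noting also that $r$ is at most $O(\log n)$ via \Cref{obs:switch}, which is harmless thanks to the rule of shortcut threads bounding the regime of interest to $\tausus_k > \cspd$). Otherwise, the argument is a straightforward geometric sum, with no real obstacle.
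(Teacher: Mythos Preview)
Your approach is essentially the same as the paper's: bound each interval length $t_{i+1}-t_i$ via \Cref{cor:copy-tail-phase-time}, sum using the geometric decay from \Cref{obs:switch}, and close with $\tausus_{j_1} < 2\tausus_k$ from the definition of $\tdown$. The paper simply asserts $t_{i+1}-t_i \le \tfrac{5}{\cspd}\tausus_{j_i}$ directly (glossing over the same $+1$ you noticed) and sums.

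One small gap in your write-up: the absorption of the additive $O(r)$ is not justified by saying ``$O(r)=O(\log n)$'' and ``$\tausus_k > \cspd$''. If $\tausus_k$ is only slightly above $\cspd$ while $n$ is huge, an $O(\log n)$ term need not be bounded by any fixed multiple of $\tausus_k/\cspd$. The correct fix is to apply the rule of shortcut threads to \emph{each} $T_{j_i}$ (not just $T_k$), giving $\tausus_{j_i} > \cspd$; then $r\cdot\cspd < \sum_i \tausus_{j_i} < 4\tausus_k$, so $r < 4\tausus_k/\cspd$, which absorbs cleanly. (In fact, a closer look at your own $\tau_i$ bookkeeping shows that for $i\ge 2$ the $+1$ cancels, since $\tau_i = t_i-1$, so the total additive is only $+1$ rather than $+O(r)$.) With either fix, the exact constant exceeds $20$ slightly, but the downstream use in the proof of \Cref{lm:wait:time} only needs the bound to be below $0.1\,\tausus_k$, for which $\cspd=400$ leaves ample room.
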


    \begin{proof}
    Consider any $i \in [1, r]$. From the description in \Cref{alg:intervals}, it follows that the thread $T_{j_i}$ is in copy or tail phase at every time-step $t \in [t_i, t_{i+1}-1]$.  Accordingly, \Cref{cor:copy-tail-phase-time} guarantees that $$t_{i+1} - t_i \leq \frac{5}{\cspd} \cdot \tausus_{j_i}.$$
    Summing the above inequality over all $i \in [1, r]$, we get:
    $$t_{r+1} - t_1 \leq \frac{5}{\cspd} \cdot \sum_{i=1}^r \tausus_{j_i} \leq \frac{10}{\cspd} \cdot \tausus_{j_1} < \frac{20}{\cspd} \cdot \tausus_{k}.$$
    The penultimate step in the above derivation follows from \Cref{obs:switch}, whereas the last step holds because $T_{j_1} \in \tdown(t_1)$.
    \end{proof}

We are now ready to complete the proof of \Cref{lm:wait:time}. There are two cases to consider.

\medskip
\noindent {\em Case 1: $t_{r+1} \in [\tsus, \tsus + \deltasus -1]$.} In this case, \Cref{cl:above}, \Cref{obs:end:block} and  \Cref{obs:sum} together imply that $T_k$ terminates before time-step $t_{r+1} + \frac{10}{\cspd} \cdot \tausus_k \leq t_1 + \frac{30}{\cspd} \cdot \tausus_k = \tsus + \frac{30}{\cspd} \cdot \tausus_k \le \tsus +0.1 \cdot \tausus_k$.
This concludes the proof of \Cref{lm:wait:time}.

\medskip
\noindent {\em Case 2: $t_{r+1} \notin [\tsus, \tsus + \deltasus -1]$.} In this case, we clearly have $\deltasus \leq t_{r+1} - \tsus = t_{r+1} - t_1 \leq \frac{20}{\cspd} \cdot \tausus_k < 0.1 \cdot \tausus_k$. Here, the penultimate inequality follows from \Cref{obs:sum}.
This concludes the proof of \Cref{lm:wait:time}.

\subsection{Analyzing Recourse}


In this section, we establish a bound of $O(\log n)$ on the worst-case recourse of our dynamic set cover algorithm. We separate the total recourse into {\em insertion recourse} and {\em deletion recourse}. Insertion recourse measures the maximum number of sets that are added to the output of the algorithm in any time-step. Similarly, deletion recourse refers to the maximum number of sets that are removed from the output in any time-step. First, we bound the insertion recourse of the algorithm in the next lemma. 

\begin{lemma}\label{lem:recourse}
    The insertion recourse is $O(\log n)$.
\end{lemma}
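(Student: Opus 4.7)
The plan is to bound the number of sets that appear in the output $\bF \cup \bigcup_k \bR_k$ during any single time-step by enumerating all possible sources of insertion. I would identify three sources: (i) the foreground thread, which adds at most one level-$0$ set when handling an element insertion not already contained in any set of $\bF$; (ii) each background thread $T_k$ currently in a copy or tail phase, which copies at most $\cspd$ sets from $\bB_k$ into $\bR_k$ per time-step (plus at most one extra set if an element insertion during the copy or tail phase causes the greedy subroutine to add a new set to $\bB_k$); and (iii) the switch operation performed at the normal termination of some thread $T_k$, which replaces the sets at levels $\le k+1$ of $\bF$ by the sets of $\bB_k$.

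The key observation for source (iii) is that it contributes zero to the insertion recourse. Immediately before such a switch, the tail phase has completed, so $\bR_k = \bB_k$. The new foreground $\bF_{\text{new}}$ consists of the sets of $\bF_{\text{old}}$ at levels $> k+1$ together with those of $\bB_k$. Hence every set in $\bF_{\text{new}}$ was already in the output (either in $\bF_{\text{old}}$ or in $\bR_k$) just before the switch, and the switch introduces no new sets. Base threads and shortcut threads, which execute their entire lifecycle within a single time-step, require a minor separate check: the sets they place into $\bR_k$ and then forward to $\bF$ number at most $|\bB_k| \le |L^{\bF}_k| \le \cspd = O(1)$, so each such thread contributes only $O(1)$ insertions to the output in that time-step.

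The remaining work is to bound the number of background threads simultaneously in copy or tail phase. By \Cref{alg:transition}, whenever a suspended thread $T_k$ is transitioned to copy phase, its snapshot value $\tausus_k$ is at most half of the minimum $\tausus_j$ among threads already in copy or tail phase; and each subsequent transition in the same time-step halves $\tau$ again. Consequently, the $\tausus$ values of threads in copy or tail phase form a set whose sorted values are pairwise separated by a factor of at least $2$. Since every $\tausus$ lies in $[1, n]$, at most $O(\log n)$ threads are in copy or tail phase at any time. Combining with the per-thread bound in (ii) yields a total contribution of $O(\log n) \cdot O(\cspd) = O(\log n)$ from sources (ii), plus $1$ from source (i).

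Putting the three sources together gives a worst-case insertion recourse of $O(\log n)$ per time-step. I expect the main subtlety to be the zero-insertion verification for the switch in source (iii) and its analog for base/shortcut threads: one must be careful that no set of $\bB_k$ ends up in $\bF_{\text{new}}$ without having already been present in $\bR_k$ just before the switch, which in turn relies on the invariants established in the algorithm description that $\bR_k = \bB_k$ at the end of the tail phase.
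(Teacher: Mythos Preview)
Your proof is correct and follows the same decomposition as the paper: bound the contribution of the foreground thread, bound the per-thread copying rate into the buffers, and observe that the switch at normal termination adds no new sets because $\bR_k = \bB_k$ at that moment.

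The one place where you work harder than necessary is the bound on the number of threads simultaneously in copy or tail phase. You invoke the geometric-separation property of the $\tausus$ values from \Cref{alg:transition}; the paper instead uses the trivial observation that there are only $\ell_{\max}+1 = O(\log n)$ background threads in total (one per level), so at most $O(\log n)$ of them can be in any phase whatsoever. Your geometric-series argument is the right tool for the \emph{approximation} analysis (where one needs to bound $\sum_k |\bR_k|$, not just the count of nonempty buffers), but for recourse the cruder count already suffices.
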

\begin{proof}

    Note that the foreground solution $\bF$ and buffer solutions $\bR_k$ are part of the output of the algorithm. A background thread $T_k$ copies sets from the background solution $\bB_k$ to the corresponding buffer solution $\bR_k$ at the rate of $\cspd=O(1)$ sets per time-step in the copy and tail phases. Since there are at most $\ell_{\max} = O(\log n)$ background threads in copy or tail phases in any time-step, the buffer solutions cumulatively add at most $O(\log n)$ sets in a time-step. Now, consider the foreground solution $\bF$. Sets are added to $\bF$ in two ways. First, a single set may be added to $\bF$ on the insertion of an element that does not belong to any set in $\bF$. Second, $\bF$ is updated after normal termination of a background thread $T_k$. But, this process simply adds to $\bF$ the sets in $\bB_k$ that are already in the buffer solution $\bR_k$. Therefore, this latter step does not add to insertion recourse of the algorithm. Overall, this means that the insertion recourse is $O(\log n)$.
\end{proof}

We are left to bound the deletion recourse of the algorithm. First, note that the worst-case bound on insertion recourse immediately implies the same bound on deletion recourse, but only in an amortized sense. So, we need to de-amortize the deletion recourse of the algorithm. Instead of arguing specifically about our algorithm, we prove a more general property about de-amortizing deletion recourse. Consider any online optimization problem where a solution comprises a set of objects, and the goal is to minimize the number of objects subject to a feasibility constraint that changes online. Let us call such a problem an online unweighted minimization problem. We proved a general reduction for de-amortizing deletion recourse in \Cref{lem:new:recourse-reduction}, which we restate below. 

\recourse*

\eat{
\begin{lemma}\label{lem:recourse-reduction}
    Consider an online unweighted minimization problem, where the size of the optimal solution changes by at most $\delta$ in each time-step.
    Suppose there exists an algorithm $\cal A$ that maintains an $\alpha$-approximate solution and has worst-case insertion recourse at most $\beta$, for some parameters $\alpha, \beta\ge 1$. (The worst-case deletion recourse of algorithm $\cal A$ can be arbitrarily large.)
    Then, there exists an alternative algorithm $\cal A'$ that also maintains an $\alpha$-approximate solution and has worst-case recourse (both insertion and deletion) at most $2\beta+\delta\alpha$. 
\end{lemma}
\begin{proof}
    We define algorithm $\cal A'$ to simulate $\cal A$, but with the following modification: 
    Whenever $\cal A$ deletes objects from its solution, move those objects to a garbage set instead of deleting them immediately in $\cal A'$. The garbage set is added to the output of $\cal A$ to form the output of $\cal A'$.
    Moreover, $\cal A'$ performs an additional garbage removal operation at the end of the processing in each time-step: delete any $\beta+\delta\alpha$ objects (or all objects if fewer) from the garbage set.
    The recourse bound for $\cal A'$ is straightforward from this description.

    We bound the approximation factor of $\cal A'$ by induction over time. 
    For the base case, note that whenever the garbage set is empty (which is the case at initiation), the solutions of $\cal A$ and $\cal A'$ are identical. Hence, at these time-steps, $\cal A'$ is an $\alpha$-approximation.
    The remaining case is when the garbage set is non-empty at the end of a time-step $t$.
    In this case, $\cal A'$ must have deleted $\beta+\delta\alpha$ objects from the garbage set in time-step $t$. 
    Since $\cal A'$ has insertion recourse at most $\beta$, the solution size at the end of time-step $t$ decreases by at least $\delta\alpha$ compared to that at the end of time-step $t-1$. Combined with the induction hypothesis that the solution was an $\alpha$-approximation at the end of $t-1$, and the assumption that optimal solution size can only change by at most $\delta$ in a time-step, we conclude that the solution is still an $\alpha$-approximation at the end of time-step $t$.
\end{proof}
}

\subsection{Approximation Factor}

We start by introducing the notion of an {\em extended background solution}, which will be used in our analysis throughout the rest of this section.

\medskip
\noindent {\bf Extended background solution.} Let $T_k$ be a background thread at some time-step $t$. Consider a thought experiment where we complete the execution of the remaining operations specified by \Cref{alg:greedy} at the present time-step $t$, and let $\tmB_k(t)$ denote the solution returned by this algorithm. Note that $\tmB_k(t) \supseteq \bB_k(t)$ and $\cov(\tmB_k(t)) \supseteq L_k^{\bF}(t)$. In other words, $\tmB_k(t)$ is a feasible hierarchical solution for the input defined by $L_k^{\bF}(t)$.  We define $\tmB_k(t)$ to be the {\em extended background solution} of the thread $T_k$ at time-step $t$. 
Recall that we say $\tmB_k(t)$ is $\epsilon$-dirty at some level $j$ if $\left|P_j^{\tmB_k(t)}\right| > \epsilon \cdot \left|A_j^{\tmB_k(t)}\right|$, and $\epsilon$-tidy at level $j$ otherwise.

\begin{lemma}\label{lem:background-tidy}
Consider any background thread $T_k$ at any time-step $t$. Then, $\tmB_k(t)$ is $0.2$-tidy at every level $j \leq k$.
\end{lemma}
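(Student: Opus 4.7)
The plan is a case analysis on where the copy pointer $p_k(t)$ sits relative to $j$, followed by a careful accounting of how deletions and insertions can produce passive elements at levels $\le j$ in the extended solution. In the trivial case $p_k(t) > j$, the monotonicity of $p_k$ (\Cref{fact:pointer-monotone}) guarantees that no set has yet been assigned level $\le j$ in the real execution and that every $\plev$ value written so far (either during preparation or during a prior insertion) exceeds $j$; since the thought-experiment continuation of \Cref{alg:greedy} at the frozen instant $t$ only writes $\lev$ values and never modifies $\plev$, every element of $C_j^{\tmB_k(t)}$ satisfies $\plev > j$, i.e., belongs to $A_j^{\tmB_k(t)}$. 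Hence $|P_j^{\tmB_k(t)}| = 0$ and the lemma holds trivially.

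For the main case $p_k(t) \le j$, let $\tau_j$ denote the earliest wall-clock step at which \Cref{alg:greedy} assigned level $\le j$ to a set in $\bB_k$, and let $U^\star := L^{\bF}_k(\tau_j) \setminus \cov(\bB_k(\tau_j^-))$ denote the uncovered sub-universe just before that event. I plan to establish (a) $|P_j^{\tmB_k(t)}| \le t - \tau_j$, and (b) $|A_j^{\tmB_k(t)}| \ge |U^\star| - (t - \tau_j)$. For (a), I will enumerate the only three mechanisms that can produce $\plev \le j$: a deletion that sets $\plev \gets \lev$, an insertion arriving at the uncovered sub-universe while $p_k \le j$, or an insertion immediately absorbed by an existing set of level $\le j$. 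Each mechanism is a distinct update that touches a distinct element, so the number of distinct elements in $P_j^{\tmB_k(t)}$ is bounded by the number of updates in $[\tau_j, t]$. For (b), I will argue that every $e \in U^\star$ has $\plev(e) > j$ (from preparation $\plev \ge k{+}1$, or from a pre-$\tau_j$ insertion with $p_k > j$), and that every $e \in U^\star$ alive at $t$ ends up covered at level $\le j$ in $\tmB_k(t)$, either by the real greedy after $\tau_j$ or by the thought-experiment continuation, whose newly added sets all have level $\le p_k(t) \le j$.

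To close the ratio $|P_j|/|A_j| \le 0.2$, I must bound $t - \tau_j$ by $|U^\star|/6$. The computation-phase portion $\tsus - \tau_j$ is at most $1.1\,|U^\star|/\cspd$ by \Cref{fact:comp-phase-time}, since the greedy drains $\cspd$ elements from $U^\star$ per step. The copy/tail portions together contribute $O(\tausus_k/\cspd)$ by \Cref{lem:copy-phase-time} and \Cref{lem:tail-phase-time}, and by \Cref{lm:wait:time} the suspension portion is at most $0.1\,\tausus_k$. The critical step is then to replace $\tausus_k$ (which may be large) by the \emph{level-$\le j$ portion} of $\tausus_k$, namely the number of sets in $\bB_k(\tsus)$ of level $\le j$; every such set was added in $[\tau_j, \tsus]$ and has its disjoint coverage drawn from $U^\star$ together with the $\le \tsus - \tau_j$ insertions during that interval, so this count is $O(|U^\star|)$. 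Plugging in $\cspd = 400$ then forces the desired inequality.

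The main obstacle is the suspension phase: $\tausus_k$ can dwarf $|U^\star|$ because $\bB_k(\tsus)$ may accumulate many high-level sets added before $\tau_j$, so the raw bound $0.1\,\tausus_k$ on the suspension length does not translate to $O(|U^\star|)$ updates. The resolution is a refined counting argument — insertions during suspension that are absorbed by a level-$>j$ set of $\bB_k$ never contribute to $C_j^{\tmB_k(t)}$, and deletions contributing to $P_j^{\tmB_k(t)}$ must strike one of the $O(|U^\star|)$ elements ever sitting at level $\le j$ in $\bB_k$. The effective ``budget'' by which the suspension can enlarge $P_j^{\tmB_k(t)}$ is therefore itself $O(|U^\star|)$ rather than the full $0.1\,\tausus_k$, and combining this with (a) and (b) yields $|P_j^{\tmB_k(t)}| \le 0.2\,|A_j^{\tmB_k(t)}|$ for the chosen value of $\cspd$.
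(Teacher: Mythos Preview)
Your setup through parts (a) and (b) is correct and matches the paper's strategy. The gap is in your final two paragraphs: the obstacle you identify is not real, and your proposed workaround for it fails.

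You claim that $\tausus_k$ can dwarf $|U^\star|$ because $\bB_k(\tsus)$ may contain many high-level sets added before $\tau_j$. This is false. Immediately before the last computation-phase set is added (call this moment $\tsus^-$), the suspension criterion is \emph{not} yet met, so the uncovered sub-universe at $\tsus^-$ has size strictly greater than $|\bB_k(\tsus^-)| = \tausus_k - 1$. But every uncovered element at $\tsus^-$ lies in $U^\star$ or was inserted during $[\tau_j,\tsus^-]$, and by \Cref{fact:comp-phase-time} there are at most $1.1|U^\star|/\cspd + 1$ such insertions. Hence $\tausus_k \le (1+O(1/\cspd))|U^\star|$. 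This is precisely the step the paper exploits (phrased there as $\tausus_k \le |C_j^{\tmB_k}(\tsus)|$, via the same non-triggered suspension criterion). Once you have $\tausus_k = O(|U^\star|)$, your own accounting gives $t-\tau_j \le 1.1|U^\star|/\cspd + 0.1\tausus_k + 5\tausus_k/\cspd \approx 0.12\,|U^\star| < |U^\star|/6$, and (a)+(b) finish the proof.

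Your ``refined counting'' workaround does not rescue the argument in the hypothetical world where $\tausus_k \gg |U^\star|$: an insertion during suspension that lands in the \emph{uncovered} sub-universe (not absorbed by any existing set) receives $\plev = p_k \le j$ and is later covered at level $\le p_k(t) \le j$ in the extension, so it lands in $P_j^{\tmB_k}(t)$. You account only for insertions absorbed at level $>j$ and deletions hitting level $\le j$, but miss this third channel. Were $\tausus_k$ truly large, $\Theta(\tausus_k)$ such insertions would flood $P_j$ while $A_j$ stayed at $O(|U^\star|)$, and the ratio would blow up. The fix must come from bounding $\tausus_k$ itself, not from filtering which updates count.
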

\begin{proof}
   Fix any level $j \leq k$. Let $\tcomp$ denote the time-step at which the background thread $T_k$ enters the computation phase, and let $\tend > \tcomp$ denote the time-step at which $T_k$ terminates. Recall that at time-step $\tcomp$, the copy pointer $p_k$ of the background thread $T_k$ is initialized to $k+1 > j$. Furthermore, the value of $p_k$ can only decrease over time (\Cref{fact:pointer-monotone}). Based on this observation, we next define a time-step $t_j$, as follows.
   \begin{itemize}
   \item If $p_k > j$ at time-step $\tend$, then we define $t_j := \tend$.
   \item Otherwise, we define $t_j$ to be the first (i.e., smallest) time-step at which $p_k$ becomes equal to $j$.
   \end{itemize}
   At time-step $\tcomp$, we have $\plev^{\bB_k}(e) \geq k+1$ for all elements $e \in L_k^{\bF}$. Subsequently, the passive levels in $\bB_k$ can only be modified due to insertions/deletions of elements. To be more specific, while handling the insertion of an element $e$, the thread $T_k$ ensures that $\plev^{\bB_k}(e) \geq p_k$ (see \Cref{alg:insertion}). Similarly, while handling the deletion of an element $e$, if $e \in \cov(\bB_k)$ then the thread $T_k$ ensures that $\plev^{\bB_k}(e) \geq p_k$; otherwise  $T_k$ simply removes $e$ from its uncovered sub-universe (see \Cref{alg:deletion}). This implies that up until the start of time-step $t_j$, no element $e$ has $\plev^{\bB_k}(e) \leq j$
   (including $t_j$, since at time-step $t_j$, $T_k$ handles insertion/deletion before the greedy algorithm drops $p_k$ to $\le j$).
   In other words, we have
   \begin{equation}
    \label{eq:copypointer}
   P_j^{\tmB_k}(t') = \emptyset \text{ for all time-steps } t' \in [\tcomp, t_j]. 
   \end{equation}
   If $t \leq t_j$, then \Cref{eq:copypointer} immediately implies the lemma. Accordingly, for the rest of the proof, we assume:
   \begin{equation}
   \label{eq:assume}
   t_j < t \leq \tend.
   \end{equation}

 During any given time-step within the interval $[t_j+1, t]$, only the element $e$ being inserted/deleted can potentially have its passive level in $\tmB_k$ set to $\leq j$, whereas the passive level of every other element in $\tmB_k$ remains unchanged.  Thus, from \Cref{eq:assume}, we infer that
   \begin{equation}
   \label{eq:passive:1}
   \left| P_j^{\tmB_k}(t) \right| \leq t-t_j.
   \end{equation}
  
   We now fork into each of two possible cases, one after another.

   \medskip
   \noindent 
   {\em Case 1: During the interval $[t_j+1, t]$, the thread $T_k$ does not exit computation phase.} Here, we observe that from time-step $t_j+1$ onward, the thread $T_k$ only adds sets at levels $\leq j$ to $\bB_k$. 
   During the computation phase, $T_k$ covers $\cspd$ elements per time-step, except for the last time-step of computation phase. Since we assumed $T_k$ does not exit computation phase, the latter case cannot happen.
   So, during $[t_j+1, t]$, $T_k$ covers $\cspd$ elements at levels $\leq j$ per time-step. It follows that
   \begin{equation}
   \label{eq:passive:2}
   \left| C_j^{\tmB_k}(t) \right| \geq \left| C_j^{\bB_k}(t) \right| = \cspd \cdot (t-t_j).
   \end{equation}

   From \Cref{eq:passive:1} and \Cref{eq:passive:2}, we get $\left| C_j^{\tmB_k}(t) \right| \geq \cspd\cdot  \left| P_j^{\tmB_k}(t) \right|$. Since $A_j^{\tmB_k}(t) = C_j^{\tmB_k}(t) \setminus P_j^{\tmB_k}(t)$,  this implies that $\tmB_k$ is $\epsilon$-tidy at level $j$ at time-step $t$, and concludes the proof of the lemma. 

   {\em Case 2: During the interval $[t_j+1, t]$, the thread $T_k$ is in suspension or copy phase for $\Delta$ time-steps, for some $\Delta > 0$.} Here, let $\tsus \in [t_j, t]$ denote the time-step at which $T_k$ enters the suspension phase. Since the sets that define the background solution $\bB_k$ do {\em not} change during suspension or copy phase, throughout the duration of the suspension or copy phase, we continue to have $|\bB_k| = \tausus_k = |\bB_k(\tsus)|$. By the rule of shortcut threads, $|\bB_k(\tsus)| > \cspd$.  
   Accordingly, \Cref{lem:copy-phase-time} implies that 
   \begin{equation}
   \label{eq:passive:3}
   \Delta < 0.1 \cdot |\bB_k(\tsus)| + \frac{1}{\cspd} \cdot |\bB_k(\tsus)| + 2 \le 0.11 \cdot |\bB_k(\tsus)|.
   \end{equation}

Next, observe that during the interval $[t_j+1, t]$, the  thread $T_k$ covers $\cspd$ elements at levels $\leq j$ per time-step,  except when $T_k$ is in suspension or copy phase. Thus, we have
\begin{equation}
\label{eq:passive:4}
\left| C_j^{\tmB_k}(t) \right| \geq \left| C_j^{\bB_k}(t) \right| = \cspd \cdot (t-t_j - \Delta).
\end{equation}

Since the thread $T_k$ enters the suspension phase at time-step $\tsus$, we must necessarily have $|\bB_k(\tsus)| \leq |L_k^{\bF}(\tsus) \setminus \cov(\bB_k(\tsus))|$. Further, since $t_j \leq \tsus$, all the elements in $L_k^{\bF}(\tsus) \setminus \cov(\bB_k(\tsus))$ are covered in $\tmB_k(\tsus)$ at levels $\leq j$. This implies that  
\begin{equation}
\label{eq:passive:5}
|\bB_k(\tsus)| \leq \left|L_k^{\bF}(\tsus) \setminus \cov(\bB_k(\tsus))\right| \leq \left|C_j^{\tmB_k}(\tsus) \right|.
\end{equation}

Next, we compare the two sets $C_j^{\tmB_k}(t)$ and $C_j^{\tmB_k}(\tsus)$. Specifically, we note if an element appears in one of these sets but not in the other, then that element must have been inserted or deleted during the interval $[\tsus+1, t]$. This gives us
\begin{equation}
\label{eq:passive:6}
\left| C_j^{\tmB_k}(\tsus) \right| \leq \left| C_j^{\tmB_k}(t) \right| + (t- \tsus) \leq \left| C_j^{\tmB_k}(t) \right| + (t- t_j).
\end{equation}

We now put together all these observations, and derive that
\begin{align*}
\left| C_j^{\tmB_k}(t) \right| & \geq \cspd \cdot (t-t_j - \Delta) & (\text{\Cref{eq:passive:4}}) \\
& \geq \cspd \cdot \left((t-t_j) - 0.11  \cdot \left| \bB_k(\tsus) \right|\right)  & (\text{\Cref{eq:passive:3}}) \\
& \geq \cspd \cdot \left(  (t-t_j) - 0.11 \cdot \left( \left| C_j^{\tmB_k}(t) \right| + (t-t_j)  \right)\right)  & (\text{\Cref{eq:passive:5,eq:passive:6}}) \\
(1+0.11\cspd)\left| C_j^{\tmB_k}(t) \right| &\ge 0.89\cspd\cdot (t-t_j)
\ge 0.89\cspd \cdot \left| P_j^{\tmB_k}(t) \right| & (\text{\Cref{eq:passive:1}})\\
\left| P_j^{\tmB_k}(t) \right| &\le \frac{1+0.11\cspd}{0.89\cspd}\cdot  \left| C_j^{\tmB_k}(t) \right| \le 0.2 \cdot \left| C_j^{\tmB_k}(t) \right|
\end{align*}
Since $A_j^{\tmB_k}(t) = C_j^{\tmB_k}(t) \setminus P_j^{\tmB_k}(t)$, rearranging the terms in the last inequality, we infer that $\tmB_k$ is $0.2$-tidy at level $j$ at time-step $t$. This concludes the proof of the lemma. 
\end{proof}

\begin{lemma}\label{lem:foreground-tidy}
  The foreground solution  $\bF$ is $0.5$-tidy at the end of every time-step.
\end{lemma}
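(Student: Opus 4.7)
The plan is to fix a level $j \in [0, \ell_{\max}]$ and a time-step $t$, and bound $|P^{\bF}_j(t)|/|A^{\bF}_j(t)|$ by $0.5$ by tracking the most recent ``refresh'' of level $j$ in $\bF$. Specifically, I will take $t_0 \leq t$ to be the latest time-step at which some background thread $T_{k^{\star}}$ with $k^{\star} \geq j$ normally terminated; if no such $t_0$ exists then $\bF$ contains no set at any level $\leq j$ and the statement is vacuous. The proof will then proceed in three steps.

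First, I will show that $\bF$ is $0.2$-tidy at level $j$ immediately after the refresh at $t_0$. Since $T_{k^{\star}}$ normally terminated, its greedy algorithm (\Cref{alg:greedy}) has completed execution, so $\bB_{k^{\star}}(t_0) = \tmB_{k^{\star}}(t_0)$. By \Cref{lem:background-tidy}, $\tmB_{k^{\star}}$ is $0.2$-tidy at every level $\leq k^{\star}$, in particular at level $j$. Because the refresh replaces all sets of $\bF$ at levels $\leq k^{\star}$ by the corresponding sets of $\bB_{k^{\star}}$, one has $P^{\bF}_j(t_0) = P^{\bB_{k^{\star}}}_j(t_0)$ and $A^{\bF}_j(t_0) = A^{\bB_{k^{\star}}}_j(t_0)$, yielding $|P^{\bF}_j(t_0)| \leq 0.2 \cdot |A^{\bF}_j(t_0)|$.

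Next, I will track the changes to $|P^{\bF}_j|$ and $|A^{\bF}_j|$ during $(t_0, t]$ and bound the length of this interval. Between refreshes $\bF$ is modified only by the foreground thread's handling of insertions and deletions. A short case analysis shows that in each such time-step $|P^{\bF}_j|$ grows by at most $1$ (either from an insertion landing at level $\leq j$, which places the new element in $P$ since $\plev = \lev$ after insertion, or from a deletion demoting an element of $A^{\bF}_j$ to $P^{\bF}_j$) and $|A^{\bF}_j|$ shrinks by at most $1$ (from that same demotion); critically, no update can move elements into $A^{\bF}_j$. Hence
\[
|P^{\bF}_j(t)| \leq |P^{\bF}_j(t_0)| + (t - t_0), \qquad |A^{\bF}_j(t)| \geq |A^{\bF}_j(t_0)| - (t - t_0).
\]
To bound $t - t_0$, I apply \Cref{lem:overall-lifetime} to the thread $T_j$ that restarts its preparation phase at $t_0 + 1$: its entire lifetime is at most $0.2 \cdot |L^{\bF}_j(t_0+1)|$, and whenever it terminates (normally or because a higher thread's normal termination aborts it), some $T_{k'}$ with $k' \geq j$ triggers the next refresh of level $j$. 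Combined with $|L^{\bF}_j(t_0)| \leq |A^{\bF}_j(t_0)| + |P^{\bF}_j(t_0)| \leq 1.2 \cdot |A^{\bF}_j(t_0)|$ (from Step 1) and the unit drift of $|L^{\bF}_j|$ from $t_0$ to $t_0+1$, this yields $t - t_0 \lesssim 0.24 \cdot |A^{\bF}_j(t_0)|$.

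Combining these two estimates naively produces the ratio $0.44 / 0.76 \approx 0.58$, which just misses the target $0.5$; closing this constant-factor gap is the main obstacle. The plan here is to use the sharper bound $\tend - \tprep \leq 0.126 \cdot |L^{\bF}_j(\tprep)|$ that is actually proved (though rounded up to $0.2$ in the statement) inside \Cref{lem:overall-lifetime}, which shrinks the length of the inter-refresh interval to roughly $0.15 \cdot |A^{\bF}_j(t_0)|$ and drives the final ratio to about $0.35 / 0.85 < 0.5$. Two auxiliary points remain: (i) when $|L^{\bF}_j(t_0+1)| \leq \cspd$, the base-thread rule forces $T_j$ to terminate in the next time-step itself, so the refresh happens at $t_0+1$, reducing the problematic interval to length $0$ and in fact preserving $0.2$-tidiness throughout; and (ii) the additive $O(1)$ slack arising from integer rounding and the $\pm 1$ drift of $|L^{\bF}_j|$ is easily absorbed by the gap between $\approx 0.42$ and $0.5$.
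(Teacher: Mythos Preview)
Your approach is essentially the same as the paper's: both arguments fix a level, anchor at the most recent time the foreground was refreshed at that level (where \Cref{lem:background-tidy} gives $0.2$-tidiness), and then control the drift of $|P_k|$ and $|A_k|$ over the inter-refresh window using \Cref{lem:overall-lifetime}. Two points are worth flagging.

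First, a cleaner mechanic that the paper uses and you do not: \Cref{lem:overall-lifetime} bounds the lifetime by $0.2\,|L^{\bF}_k(t)|$ for \emph{every} $t$ in the lifetime, not just the start. The paper exploits this by writing everything in terms of the \emph{current} time $t$, via $|A_k(t_1)| \le |A_k(t)| + (t-t_1)$ and $t-t_1 \le 0.2\,|L_k(t)|$. This avoids your back-and-forth between $t_0$, $\tprep=t_0+1$, and $t$, and in particular sidesteps your additive $O(1)$ issues entirely. You should adopt this.

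Second, on the constants: you are right that the naive chain gives $\approx 0.58$. The paper closes the gap with the step $|L^{\bF}_k(t)| \le |A^{\bF}_k(t)|$, whose stated justification (``dormant elements are universally passive'') actually yields the reverse containment $A_k \subseteq L_k$; as written that step is questionable. Your fix---pulling the sharper $\approx 0.126$ constant out of the proof of \Cref{lem:overall-lifetime}---is a legitimate way to close the gap, and combined with the paper's ``current-time'' trick and the safe bound $|L_k(t)| \le |A_k(t)|+|P_k(t)|$ one gets $r \le 0.2 + 1.2\cdot 0.144\,(1+r)$, hence $r \le 0.45$, comfortably under $0.5$.

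One small omission in your Step 2: between refreshes of level $j$, the foreground can also be modified by \emph{lower} switches (normal terminations of $T_{k'}$ with $k'<j$). You should note that such switches preserve $A^{\bF}_j$ (elements keep $\plev>j$ via the $\max\{k'+1,\plev^{\bF}(e)\}$ rule) and can only shrink $P^{\bF}_j$ (dormant elements may be discarded), so your two displayed inequalities remain valid.
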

\begin{proof}
    We prove by induction on time-step $t$. As the base case, the initial greedy solution is 0-tidy.
    For the inductive case, it suffices to prove the following claim for every level $k\in[0,\ell_{\max}]$: Suppose $\bF$ is $0.2$-tidy at level $k$ just before $T_k$ starts. Then, $\bF$ is $0.2$-tidy at level $k$ when $T_k$ terminates, and $\bF$ is $0.5$-tidy at level $k$ throughout the lifetime of $T_k$.

    We now prove the claim. Let $t_1$ be the last time-step before $T_k$ starts, and $t_2$ be the time-step that $T_k$ terminates. 
    At $t_2$, $T_k$ may normally terminate or aborted by a higher thread. In both cases, the normal termination of $T_k$ or a higher thread will replace levels $\le k$ of $\bF$ by the corresponding background solution. By \Cref{lem:background-tidy}, the background solution (which is identical to the extended background solution at normal termination) is $0.1$-tidy at level $k$. This establishes the first part of the claim.

    For the second part of the claim, we have $t-t_1 \le 0.2\cdot |L_k(t)|$ for every $t\in[t_1+1, t_2]$ by \Cref{lem:overall-lifetime}. So for every $t\in[t_1+1, t_2]$,
    \begin{align*}
        \frac{|P_k(t)|}{|A_k(t)|} & \le \frac{|P_k(t_1)|+(t-t_1)}{|A_k(t)|}
        \le \frac{0.2|A_k(t_1)|+(t-t_1)}{|A_k(t)|}\\
        &\le \frac{0.2(|A_k(t)|+(t-t_1))+(t-t_1)}{|A_k(t)|} \le 0.2 + 1.2\cdot \frac{0.2|L_k(t)|}{|A_k(t)|} \le 0.44
    \end{align*}
    Here, the last inequality uses $|L^\bF_k|\subseteq |A^\bF_k|$, which holds because all dormant elements in $\cov(\bF)$ are universally passive.
    This establishes the second part of the claim.
\end{proof}

\begin{lemma}\label{lem:background-stable}
    Consider any background thread $T_k$ that 
    starts (i.e., enters preparation phase) at time-step $\tprep$ and terminates at time-step $\tend > \tprep$. Suppose that the foreground solution $\bF$ is $\epsilon$-stable at time-step $\tprep$. Then, the  background solution $\bB_k$ is also $\epsilon$-stable at every time-step $t \in [\tprep, \tend]$.
\end{lemma}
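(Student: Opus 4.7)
The plan is to fix any level $j \in [0, \lmax]$, any time-step $t \in [\tprep, \tend]$, and any set $s' \in \calS$, and show that $|A^{\bB_k}_j(t) \cap s'| < (1+\epsilon)^{j+1}$. I split the analysis on whether $j \geq k+1$ --- where $\epsilon$-stability is inherited from $\bF$ at $\tprep$ --- or $j \leq k$ --- where it is produced by the specific greedy step that first drops the copy pointer to at most $j$.

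For $j \geq k+1$, I would first observe that every set added to $\bB_k$ receives level $\lev(s) \leq p_k \leq k+1 \leq j$, so $C^{\bB_k}_j(t) = \cov(\bB_k(t))$. Next I would enumerate the four operations that can write $\plev^{\bB_k}(e)$: the preparation phase writes $\max\{k+1, \plev^{\bF}(e)\}$; the covered branch of \Cref{alg:insertion} writes $\lev(s) \leq k+1 \leq j$; the uncovered branch writes $p_k \leq k+1 \leq j$; and \Cref{alg:deletion} writes $\lev(e) \leq k+1 \leq j$. Hence $\plev^{\bB_k}(e) > j$ can only persist from the preparation phase, which forces $\plev^{\bF}(e) > j$ at $\tprep$; together with $\lev^{\bF}(e) \leq k \leq j$, this places $e$ in $A^{\bF}_j(\tprep)$, so the claim follows from the assumed $\epsilon$-stability of $\bF$ at $\tprep$.

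For $j \leq k$, if $p_k(t) > j$ then --- using that $p_k$ is monotone non-increasing by \Cref{fact:pointer-monotone} --- every past value of $p_k$ was also $> j$, so every greedy-added set has level $> j$ and every covered element has level $> j$; hence $C^{\bB_k}_j(t) = \emptyset$ and stability is trivial. Otherwise let $t_j$ be the earliest time-step at which Line~\ref{line:pointer-update} lowers $p_k$ to a value at most $j$. At $t_j$, greedy selects the set $s^*$ that maximizes the marginal coverage of $L^{\bF}_k \setminus \cov(\bB_k)$, and since $p_k > j$ at the start of that iteration, Line~\ref{line:set-level} forces $\lfloor \log_{1+\epsilon}|\cov(s^*)| \rfloor \leq j$ and hence $|\cov(s^*)| < (1+\epsilon)^{j+1}$; by greedy's optimality, every $s' \in \calS$ satisfies $|s' \cap (L^{\bF}_k(t_j) \setminus \cov(\bB_k(t_j)))| < (1+\epsilon)^{j+1}$. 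The proof then reduces to the inclusion $A^{\bB_k}_j(t) \cap s' \subseteq s' \cap (L^{\bF}_k(t_j) \setminus \cov(\bB_k(t_j)))$.

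The main obstacle is establishing that last inclusion. I would do so by tracing, for each $e \in A^{\bB_k}_j(t)$, the last operation that set $\plev^{\bB_k}(e)$ to its current value $> j$, exploiting the key fact that no set of level $\leq j$ exists in $\bB_k$ before $t_j$. An assignment after $t_j$ either yields $\plev \leq p_k \leq j$ (via the uncovered branch of \Cref{alg:insertion} or via \Cref{alg:deletion} with $\lev(e) \leq j$), or comes from the covered branch of \Cref{alg:insertion} with $\lev(s) > j$; but in the latter case $\lev^{\bB_k}(e) = \lev(s) > j$ contradicts $e \in C^{\bB_k}_j(t)$ unless a subsequent greedy step re-covers $e$ at level $\leq j$, which in turn requires $e \in L^{\bF}_k \setminus \cov(\bB_k)$ at that moment. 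A careful check using \Cref{alg:deletion,alg:insertion} rules out any history in which $e$ leaves $L^{\bF}_k$ and later returns with $\plev > j$ while also having $\lev \leq j$ in $\bB_k$. Tracing back in this way places $e$ in $L^{\bF}_k \setminus \cov(\bB_k)$ at some moment when $p_k > j$, i.e., at or before $t_j$; invoking \Cref{fact:background-feasible}(iii) to rule out perturbations of $L^{\bF}_k$ by other threads during $T_k$'s lifetime, I conclude $e \in L^{\bF}_k(t_j) \setminus \cov(\bB_k(t_j))$, which completes the proof.
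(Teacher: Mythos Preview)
Your proposal is correct and follows essentially the same decomposition as the paper: for $j\ge k+1$ inherit stability from $\bF(\tprep)$ by tracing which operations can set $\plev^{\bB_k}(e)>j$, and for $j\le k$ combine the greedy maximality bound at the step where $p_k$ first drops to $\le j$ with the inclusion $A^{\bB_k}_j(t)\subseteq L^{\bF}_k(t_j)\setminus\cov(\bB_k(t_j))$. One minor simplification: your caveat ``unless a subsequent greedy step re-covers $e$ at level $\le j$'' is vacuous---once $e$ enters $\cov(\bB_k)$ its level is fixed and greedy never touches it again---so that branch of your case analysis closes immediately; the paper sidesteps this detour by first disposing of sets $s\in\bB_k$ with $\lev^{\bB_k}(s)\ge j+1$ as a separate sub-case (showing $A^{\bB_k}_j\cap s=\emptyset$ directly), though as your argument implicitly shows, that split is not strictly necessary since the greedy bound at $t_j$ applies to every $s\in\calS$.
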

\begin{proof}
To prove the lemma, we need to show that
\begin{equation}
\label{eq:extend:2}
\left| A_j^{\bB_k}(t) \cap s \right| < (1+\epsilon)^{j+1} \text{ for all sets } s \in \mathcal{S}, \text{ time-steps } t \in [\tprep, \tend] \text{ and levels } j. 
\end{equation}

As the thread $T_k$ never assigns an element to a level $> (k+1)$, we always have $A_j^{\bB_k} = A_{k+1}^{\bB_k}$ for all $j > (k+1)$. Accordingly, we only need to prove \Cref{eq:extend:2} for levels $j \leq k+1$. 

We first focus on the case where $j = k+1$. 
We start by observing that $A_{k+1}^{\tmB_k}(\tprep) \subseteq A_{k}^{\bF}(\tprep)$.  Since $\bF$ is $\epsilon$-stable (see \Cref{def:stable}) at time-step $\tprep$, we get
\begin{equation}
\label{eq:extend:200}
\left| A_{k+1}^{\tmB_k}(\tprep
) \cap s \right|\leq \left| A_{k}^{\bF}(\tprep
) \cap s \right|  < (1+\epsilon)^{k+1} \text{ for all sets } s \in \mathcal{S}.
\end{equation}
Subsequently, the thread $T_k$ changes (resp.~create) the passive level of an element only when it gets deleted (resp.~inserted). Furthermore, \Cref{alg:deletion} and \Cref{alg:insertion} ensure that $T_k$ sets the passive levels of the elements being inserted/deleted to $\leq  (k+1)$. Thus, we have $A_{k+1}^{\tmB_k}(\tprep) \supseteq A_{k+1}^{\tmB_k}(t)$ at each time-step $t \in [\tprep, \tend]$, and so \Cref{eq:extend:200} implies that 
\begin{equation}
\label{eq:extend:201}
\left| A_{k+1}^{\tmB_k}(t) \cap s \right| < (1+\epsilon)^{k+1} \text{ for all sets } s \in \mathcal{S} \text{ and all } t \in [\tprep, \tend].
\end{equation}
Since we always have $A_j^{\bB_k} \subseteq A_j^{\tmB_k}$, \Cref{eq:extend:201} implies that $\bB_k$ satisfies \Cref{eq:extend:2} for level $j = k+1$.

For the rest of the proof, we focus on a level $j \leq k$. Let $t_j \in [\tprep, \tend]$ be the first (i.e., smallest) time-step at which the copy pointer $p_k$ of the thread $T_k$ becomes equal to $j$ (see \Cref{fact:pointer-monotone}). Otherwise, if $p_k(t) > j$ for all  $t \in [\tprep, \tend]$, then we define $t_j := \tend$. Now, we consider two possible cases, depending on the value of $t$. 

\medskip
\noindent {\em Case 1: $t < t_j$.} In this case, we have $A_j^{\bB_k}(t) = \emptyset$, and so \Cref{eq:extend:2} trivially holds.

\medskip
\noindent {\em Case 2: $t \geq t_j$.} Here, we consider two sub-cases, depending on the status of the set $s$. First, suppose that $s \in \bB_k$ and $\lev^{\bB_k}(s) \ge j+1$. In this sub-case, every element $e \in s$ is assigned a level $\lev^{\bB_k}(e) \geq j+1$, and this level cannot change in future. Furthermore, whenever an element $e$ that belongs to $s$ gets inserted at some future time-step $t' \geq t_j$, the thread $T_k$ would assign $e$ to a level $\geq j+1$ (see \Cref{alg:insertion}). Thus, we have $A_j^{\bB_k}(t) \cap s = \emptyset$, and so \Cref{eq:extend:2} trivially holds. 

Finally, consider the remaining sub-case that $s\notin \bB_k$ or $\lev^{\bB_k}(s) \le j$.
Let $s_j$ be the set that causes $p_k$ to drop to $\le j$ in the greedy algorithm at time-step $t_j$.
In this case, $s$ is not added to $\bB_k$ before $s_j$ is selected by the greedy algorithm.
According to \Cref{alg:greedy}, when the greedy algorithm selects $s_j$ with $\lev(s_j) \le j$, it cannot find any set with marginal coverage $\geq (1+\epsilon)^{j+1}$. 
This implies that
\begin{equation}
\label{eq:extend:500}
\left| s \cap \left(L^\bF_k(t_j) \setminus \cov(\bB_k(t_j) \right) \right| < (1+\epsilon)^{j+1}. 
\end{equation}
Now, consider any element $e \in A_j^{\bB_k}(t)$. Clearly, the element $e$ either  belongs to $L^\bF_k(t_j) \setminus \cov(\bB_k(t_j))$, or it gets inserted at some time-step $t' \in [t_j, t]$. In the latter scenario, however, the thread $T_k$ would either make the
element $e$ universally passive upon insertion, or set $\plev^{\bB_k}(e)$ to $p_k(t') \leq j$. This implies that 
\begin{equation}
\label{eq:extend:501}
A_j^{\bB_k}(t) \subseteq L^\bF_k(t_j) \setminus \cov(\bB_k(t_j)).
\end{equation}
From the preceding discussion, we derive that
\begin{align*}
\left| A_j^{\bB_k}(t) \cap s \right| & \leq \left| s \cap \left(L^\bF_k(t_j) \setminus \cov(\bB_k(t_j) \right) \right|  & (\text{\Cref{eq:extend:501}}) \\
& < (1+\epsilon)^{j+1} & (\text{\Cref{eq:extend:500}})
\end{align*}
In other words, \Cref{eq:extend:2} holds. This concludes the proof of the lemma.
\end{proof}

\begin{lemma}\label{lem:foreground-stable}
   The foreground solution $\bF$ is always $\eps$-stable.
\end{lemma}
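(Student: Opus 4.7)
The plan is to induct on the time-step $t$. The base case is immediate: initially $\bF$ is empty, so $A_k^\bF = \emptyset$ for every level $k$, and stability holds trivially. For the inductive step, I would analyze separately each of the three ways $\bF$ can change during a single time-step: an element insertion, an element deletion, and a switch triggered by the normal termination of some background thread $T_k$.

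The insertion and deletion cases are easy. On insertion of $e$, the foreground thread either assigns $e$ to the highest-level set already containing it or creates a new level-$0$ set; in both subcases it sets $\plev^\bF(e) \gets \lev^\bF(e)$, so $e$ is universally passive and contributes nothing to $A_k^\bF$ for any $k$. On deletion of $e$, the foreground thread again sets $\plev^\bF(e) \gets \lev^\bF(e)$, which can only remove $e$ from some $A_k^\bF$. Hence in both cases the inequality $|A_k^\bF \cap s| < (1+\eps)^{k+1}$ is preserved for every $s \in \calS$ and every level $k$.

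The main case is the switch at normal termination of $T_k$. First, note that the switch only modifies sets at levels $\le k+1$ of $\bF$ (by the proof of \Cref{fact:background-feasible}), so stability at levels $j \ge k+2$ is automatic from the inductive hypothesis. For levels $j \le k$, all old sets are deleted from $\bF$ and replaced verbatim by the corresponding levels of $\bB_k$, along with their levels and passive levels. Thus $A_j^\bF$ after the switch coincides with $A_j^{\bB_k}$. By the inductive hypothesis $\bF$ was $\eps$-stable at the time $\tprep$ when $T_k$ entered preparation phase, so \Cref{lem:background-stable} applies and gives $|A_j^{\bB_k} \cap s| < (1+\eps)^{j+1}$, which transfers immediately to the new $A_j^\bF$.

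The hard part will be the critical level $j = k+1$, because here level $k+1$ of $\bF$ is not replaced but is formed by \emph{merging} old level-$k+1$ sets of $\bF$ with the level-$k+1$ sets contributed by $\bB_k$. My plan is first to establish disjointness: by \Cref{fact:cov-subset-foreground-coverage}, any element covered at level $k+1$ in old $\bF$ lies in $\cov(\bF) \setminus C_k^\bF$ and therefore cannot lie in $\cov(\bB_k)$, so the two contributions to $A_{k+1}^\bF \cap s$ after the switch are set-theoretically disjoint. To bound the $\bB_k$-side contribution I would extract from the proof of \Cref{lem:background-stable} (specifically \Cref{eq:extend:200}) the sharper inequality $|A_{k+1}^{\bB_k} \cap s| \le |A_k^\bF(\tprep) \cap s| < (1+\eps)^{k+1}$, which is strictly stronger than the naive $(1+\eps)^{k+2}$ bound one gets from $\eps$-stability at level $k+1$. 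The remaining obstacle — and the delicate step — is to control the accumulated old level-$k+1$ contribution: I would track how level $k+1$ of $\bF$ can be grown only by previous normal terminations of $T_k$ (insertions, deletions, and terminations of $T_{k'}$ with $k' < k$ do not touch level $k+1$) and use this to show that together with the new $A_{k+1}^{\bB_k}\cap s$ piece the total stays below $(1+\eps)^{k+2}$, completing the induction.
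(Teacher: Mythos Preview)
Your handling of insertions, deletions, and levels $j \le k$ matches the paper. The gaps are at levels $j \ge k+1$.

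For $j \ge k+2$: saying stability is ``automatic from the inductive hypothesis'' because only sets at levels $\le k+1$ are modified is not sufficient. The set $A_j^{\bF}$ depends on the passive levels of \emph{all} elements with $\lev^{\bF} \le j$, in particular those at levels $\le k$, and the switch overwrites those passive levels with $\plev^{\bB_k}$. You still need to argue that this overwriting cannot insert new elements into $A_j^{\bF}\cap s$.

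For $j = k+1$: the additive plan does not close. The ``old level-$(k+1)$'' piece is a subset of $A_{k+1}^{\bF}$ before the switch, which the inductive hypothesis only bounds by $(1+\eps)^{k+2}$; adding your $(1+\eps)^{k+1}$ bound on the $\bB_k$-side overshoots. Tracking that level $k+1$ grows only via prior terminations of $T_k$ does not help, since each such termination merges its own batch and there is no evident accumulation bound short of the full $(1+\eps)^{k+2}$.

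The idea you are missing---and what the paper uses for both issues simultaneously---is an \emph{invariance} observation rather than an additive one: for every $j \ge k+1$, the set $A_j^{\bF}$ is unchanged by the switch. The reason is that the preparation phase assigns $\plev^{\bB_k}(e) = \max\{k+1,\plev^{\bF}(e)\}$ to elements of the initial sub-universe, while any element inserted later gets $\plev^{\bB_k}(e)\le k+1$; hence for every element whose level is reassigned, $\plev^{\bB_k}(e) > k+1$ holds iff $\plev^{\bF}(e) > k+1$ held before. In particular the elements that $\bB_k$ contributes to $A_{k+1}^{\bF}$ after the switch are precisely the elements of $L_k^{\bF}$ that were \emph{already} in $A_{k+1}^{\bF}$ before---they are not a new additive piece at all. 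This gives $A_{k+1}^{\bF}(\text{after}) = A_{k+1}^{\bF}(\text{before})$ immediately, and the inductive hypothesis finishes; the same reasoning covers all $j>k+1$.
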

\begin{proof}
    The foreground solution $\bF$ can change only because of one of the following three events.
    \begin{itemize}
    \item (i) Insertion of an element.
    \item (ii) Deletion of an element.
    \item (iii) A background thread $T_k$ normally terminates, and accordingly we remove all the sets in $\bF$ at levels $\leq k$, and replace them with the sets in $\bB_k$. We refer to this event as a {\em background-switch by thread $T_k$}.
    \end{itemize}
    During event (i), the element being inserted is made universally passive in $\bF$. This does not affect the active coverages. On the other hand, during event (ii), the element being deleted becomes universally passive. This may remove the element from some active coverages, which can only make the stable property weaker. So, the foreground solution $\bF$ can never cease to be $\epsilon$-stable due to the insertion or deletion of an element. 
    
    For the rest of the proof, we focus on the background-switch by a thread $T_k$ at (say) time-step $t$. Let $\tprep < t$ be the time-step at which the thread $T_k$ gets created (i.e., enters the preparation phase).
    (By the rule of one phase per time-step, $T_k$ starts to work at the subsequent time-step after it gets created at $\tprep$. So, $\tprep < t$ even if $T_k$ terminates in one time-step by the rule of base threads.)
    We perform an induction over time-steps. By our induction hypothesis, the foreground solution $\bF$ is $\epsilon$-stable at all time-steps $< t$. Since $\tprep < t$, by \Cref{lem:background-stable} the background solution $\bB_k$ is  $\epsilon$-stable at time-step $t$. Furthermore,  \Cref{fact:background-feasible} guarantees that $\cov(\bB_k) \cap L = L_k^{\bF}$ just before the background-switch by thread $T_k$ at time-step $t$. So, immediately after the background-switch by thread $T_k$ at time-step $t$, we continue to have 
    \begin{equation}
    \label{eq:wrapup:1}
    \left| A_j^{\bF} \cap s \right| = \left| A_j^{\bB_k} \cap s \right| < (1+\epsilon)^{j+1} \text{ for all sets } s \in \mathcal{S} \text{ and levels } j \leq k. 
    \end{equation}
   The background thread $T_k$ sets the levels of elements in $L_k^{\bF}$ to be $\leq k+1$.
   For the passive levels, we have the following claim before the switch: $T_k$ sets $\plev^{\bB_k}(e) > k+1$ if and only if $\plev^\bF(e) > k+1$ and $e\in L^\bF_k$. This is because every  element $e$ in the initial sub-universe is assigned $\plev^{\bB_k}(e)  = \max\{\plev^\bF(e), k+1\}$, and every new elements $e$ inserted to $L^\bF_k$ is assigned  $\plev^{\bB_k}(e) \le k+1$.
   From the claim, we have that for all levels $j > k+1$, the background-switch by thread $T_k$ does {\em not}  change  $A_j^{\bF}$. Thus, applying our induction hypothesis, immediately after the background-switch by thread $T_k$ at time-step $t$ we continue to have 
    \begin{equation}
    \label{eq:wrapup:2}
    \left| A_j^{\bF} \cap s \right| < (1+\epsilon)^{j+1} \text{ for all sets } s \in \mathcal{S} \text{ and levels } j > k+1. 
    \end{equation}
    Finally, we focus on level $j = k+1$. The background-switch by thread $T_k$ inserts some sets into level $k+1$, thereby bringing up the levels of some elements from $\leq k$ to $k+1$.
    For every element $e$ that belongs to one of these sets, however, 
    whether $\plev^\bF(e)>k+1$ is not changed by the switch according to the claim.
    This ensures that $A_{k+1}^{\bF}$ also does {\em not} change due to the background-switch by thread $T_k$ at time-step $t$. Thus, by our induction hypothesis, immediately after the switch we continue to have
    \begin{equation}
    \label{eq:wrapup:3}
    \left| A_{k+1}^{\bF} \cap s \right| < (1+\epsilon)^{k+2} \text{ for all sets } s \in \mathcal{S}. 
    \end{equation}
    The lemma now follows from \Cref{eq:wrapup:1,eq:wrapup:2,eq:wrapup:3}.
    \end{proof}

\begin{lemma}\label{lem:foreground-approx}
    The foreground solution is  an $O(\log n)$-approximate minimum set cover.
\end{lemma}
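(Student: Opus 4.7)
The plan is to invoke \Cref{lem:approx-factor-full} directly, assembling the three ingredients it requires: that $\bF$ is a hierarchical solution, that it is $\eps$-tidy, and that it is $\eps$-stable at every time-step.

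First, I would observe that by \Cref{fact:alg-sol-hierarchical}, the foreground solution $\bF$ is a bona fide hierarchical solution in the sense of \Cref{def:hierarchical-sol}: its coverage sets are non-empty and disjoint, and both the level and passive-level invariants hold. Moreover, by \Cref{lem:feasible}, $\bF$ is feasible for the current set $L(t)$ of live elements at every time-step $t$, so $L^{\bF}(t) = L(t)$ and hence $\OPT(L^{\bF}(t)) = \OPT(L(t))$.

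Next, I would combine the structural guarantees on $\bF$: \Cref{lem:foreground-tidy} shows that $\bF$ is $0.5$-tidy at the end of every time-step, and \Cref{lem:foreground-stable} shows that $\bF$ is $\eps$-stable at every time-step (with the value $\eps = 0.5$ fixed in \Cref{sec:prelim} to define the level invariant). Thus at the end of any time-step $t$, the hierarchical solution $\bF$ simultaneously satisfies the $\eps$-tidy and $\eps$-stable properties required by \Cref{lem:approx-factor-full} for a constant $\eps$.

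Applying \Cref{lem:approx-factor-full} to $\bF$ at time-step $t$ then yields
\[
|\bF(t)| \;\le\; (1 + O(\eps))\,\ln n \cdot \OPT(L^{\bF}(t)) \;=\; O(\log n) \cdot \OPT(L(t)),
\]
which is the claimed bound. I do not expect any substantive obstacle here: all the real work has already been carried out in \Cref{lem:foreground-tidy,lem:foreground-stable}, and this lemma is essentially a wrap-up that verifies the hypotheses of \Cref{lem:approx-factor-full} are met by $\bF$ at every time-step. The only mild subtlety is aligning the constants (the tidy bound is $0.5$ rather than $\eps$), but since $\eps$ is itself a fixed constant, \Cref{lem:approx-factor-full} still applies and the overhead is absorbed into the $O(\cdot)$.
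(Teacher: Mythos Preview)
Your proposal is correct and matches the paper's approach exactly: the paper's proof is the one-liner ``Combine \Cref{lem:foreground-tidy,lem:foreground-stable} and apply \Cref{lem:approx-factor-full},'' and you have simply spelled out the hypotheses (hierarchical, feasible, tidy, stable) more explicitly before invoking \Cref{lem:approx-factor-full}.
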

\begin{proof}
    Combine \Cref{lem:foreground-tidy,lem:foreground-stable} and apply \Cref{lem:approx-factor-full}.
\end{proof}
\begin{lemma}\label{lem:background-approx}
    For any running background thread $T_k$, the size of extended background solution solution $|\tmB_k|\le O(\log n)\cdot \OPT(L)$.
\end{lemma}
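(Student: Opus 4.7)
The plan is to show that the extended background solution $\tmB_k$ is an $\eps$-tidy and $\eps$-stable hierarchical solution whose live elements equal $L^\bF_k$, and then invoke (an analog of) \Cref{lem:approx-factor-full} to obtain $|\tmB_k| \le O(\log n)\cdot \OPT(L^\bF_k) \le O(\log n)\cdot \OPT(L)$, where the second inequality follows because any feasible cover of $L$ is also a feasible cover of $L^\bF_k \subseteq L$.

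For $\eps$-stability of $\tmB_k$, I would extend the proof of \Cref{lem:background-stable} from $\bB_k$ to $\tmB_k$. The extension is essentially free: $\tmB_k$ is defined by continuing \Cref{alg:greedy} statically at the current time-step, so every additional set is still chosen by the greedy rule and assigned a level via the same $\min\{p_k, \lfloor \log_{1+\eps}|\cov(s)|\rfloor\}$ formula. The three-case analysis of \Cref{lem:background-stable} (case $j = k+1$, case $j \le k$ with $s \in \bB_k$ of level $\ge j+1$, and case $j \le k$ otherwise) applies verbatim: in particular, the crucial cutoff $|s \cap (L^\bF_k \setminus \cov(\bB_k))| < (1+\eps)^{j+1}$ at the moment $p_k$ drops to $j$ is a property of the greedy rule and is preserved by the static continuation.

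For $\eps$-tidiness, \Cref{lem:background-tidy} already gives tidiness at every level $j \le k$. Since $\tmB_k$ places no sets at levels $> k+1$, the only remaining level is $j = k+1$, at which tidiness can genuinely fail in the worst case. I therefore replace the tidy bound at level $k+1$ by a direct count: by the level invariant, each set at level $k+1$ in $\tmB_k$ has $|\cov(s)| \ge (1+\eps)^{k+1}$, so the number $N_{k+1}$ of such sets is at most $|\cov(\tmB_k)|/(1+\eps)^{k+1}$. Next, $\cov(\tmB_k) \subseteq L^\bF_k(\tprep) \cup \Delta$, where $\Delta$ is the set of elements inserted into $L^\bF_k$ during $T_k$'s lifetime; by \Cref{lem:overall-lifetime}, $|\Delta| = O(|L^\bF_k(\tprep)|)$. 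Finally, applying \Cref{lem:foreground-tidy,lem:foreground-stable} to $\bF$ at time $\tprep$ gives $|L^\bF_k(\tprep)| \le |C_k^\bF(\tprep)| = O((1+\eps)^{k+1}\cdot \OPT(L))$, so $N_{k+1} = O(\OPT(L))$. Combined with the $O(\log n)\cdot \OPT(L)$ contribution from levels $\le k$ produced by the standard per-level calculation underlying \Cref{lem:approx-factor-full} (using tidy and stable at those levels to show each contributes $O(\OPT(L))$ sets, summed over $O(\log n)$ levels), this yields the lemma.

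The main obstacle will be the level-$(k+1)$ handling: tidiness genuinely may fail there, since the preparation phase sets $\plev^{\bB_k}(e) = \max\{k+1, \plev^\bF(e)\}$ while subsequent lazy updates for insertions and deletions in $L^\bF_k$ can push many elements into $P_{k+1}^{\tmB_k}$ with no matching growth in $A_{k+1}^{\tmB_k}$. Circumventing this by routing the top-level count through the level-invariant coverage lower bound and through the inherited tidy/stable guarantees of $\bF$ at level $k$ is the key workaround, and the one place where the lemma relies on the foreground-solution analysis rather than on purely internal properties of $T_k$.
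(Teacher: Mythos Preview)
Your approach differs from the paper's and takes a more hands-on route. The paper's proof uses a relabeling trick: it observes that initializing $p_k$ to $\ell_{\max}+1$ instead of $k+1$ would not change which sets the greedy picks (only the levels assigned to them), and under this alternative leveling the arguments of \Cref{lem:background-tidy} and \Cref{lem:background-stable} go through at \emph{every} level, so \Cref{lem:approx-factor-full} applies directly to the relabeled $\tmB_k$. This sidesteps the level-$(k{+}1)$ obstruction entirely and never touches the foreground analysis.

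Your per-level decomposition is also viable, and the count $N_j = O(\OPT(L(t)))$ for $j \le k$ does follow from tidy plus stable as you describe. However, there is a gap in your level-$(k{+}1)$ step: you apply \Cref{lem:foreground-tidy} and \Cref{lem:foreground-stable} at time $\tprep$, which yields $|L^\bF_k(\tprep)| = O((1+\eps)^{k+1}\cdot \OPT(L(\tprep)))$, not the required bound in $\OPT(L(t))$. Since $\OPT$ can drop by as much as $0.2\,|L^\bF_k(\tprep)|$ over the thread's lifetime and $(1+\eps)^{k+1}$ may be large, this does not transfer. The fix is immediate: apply the foreground lemmas at the \emph{current} time $t$ to get $|L^\bF_k(t)| = O((1+\eps)^{k+1}\cdot\OPT(L(t)))$, and bound $|\cov(\tmB_k(t))| = O(|L^\bF_k(t)|)$ directly (live elements lie in $L^\bF_k(t)$; dormant ones number at most the deletions, hence at most the lifetime, which \Cref{lem:overall-lifetime} bounds by $0.2\,|L^\bF_k(t)|$).

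What each approach buys: the paper's relabeling is self-contained and avoids any dependency on the foreground-solution lemmas; your route is more explicit about where the level-$(k{+}1)$ sets go, at the cost of that extra dependency and the time-indexing subtlety above.
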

\begin{proof}
    From \Cref{lem:background-tidy}, $\tmB_k$ is $\eps$-tidy at levels $\le k$, but might be $\eps$-dirty at levels $> k$. 
    Consider modifying the greedy algorithm of $T_k$ by initializing $p_k$ to $\ell_{\max}+1$ instead of $k+1$. This may change the definition of levels and passive levels in $\bB_k$. Nevertheless, notice that the modification does not affect the choice of sets in the greedy algorithm. So, we can obtain the same sets as $\tmB_k$. Now, we can apply the same argument as in \Cref{lem:background-tidy,lem:background-stable} to show that $\tmB_k$ is $\eps$-tidy and $\eps$-stable under modified levels and passive levels. By \Cref{lem:approx-factor-full}, this implies $|\tmB_k|\le O(\log n)\cdot \OPT(L)$.
\end{proof}

\begin{lemma}\label{lem:approx-factor}
    The output solution is an $O(\log n)$-approximate minimum set cover.
\end{lemma}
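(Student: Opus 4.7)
The output is $\bF \cup \bigcup_{k} \bR_k$, and feasibility is immediate from \Cref{lem:feasible}. The plan is to bound the size by $O(\log n) \cdot \OPT(L(t))$ in two parts: the foreground contributes $O(\log n) \cdot \OPT(L(t))$ directly by \Cref{lem:foreground-approx}, and the buffers contribute $O(\log n) \cdot \OPT(L(t))$ by exploiting the geometric scheduling.

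For the buffer bound, let $Z(t) \subseteq [0, \ell_{\max}]$ be the set of indices $k$ such that $T_k$ is in copy or tail phase at time $t$; for $k \notin Z(t)$ we have $\bR_k = \emptyset$. I plan to establish three facts and then sum them up:
\begin{itemize}
    \item[(i)] For every $k \in Z(t)$, $\bR_k \subseteq \bB_k$, so $|\bR_k(t)| \le |\bB_k(t)|$.
    \item[(ii)] For every $k \in Z(t)$, $|\bB_k(t)| = O(\tausus_k)$.
    \item[(iii)] For the indices $k \in Z(t)$ ordered by entry into copy phase, the values $\tausus_k$ form a geometric sequence with ratio at least $2$; hence $\sum_{k \in Z(t)} \tausus_k \le 2 \cdot \max_{k \in Z(t)} \tausus_k$.
\end{itemize}
Fact (iii) follows directly from \Cref{alg:transition}: the scheduler only transitions a suspended thread $T_k$ into copy phase when $\tausus_k \le \frac{1}{2}\tau$, where $\tau$ is the minimum $\tausus$ value among threads currently in copy or tail phase; this yields the factor-$2$ separation. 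To bound the maximum, I will chain $\tausus_k = |\bB_k(\tsus)| \le |\bB_k(t)| \le |\tmB_k(t)|$, since $\bB_k$ never loses sets after $\tsus$ and $\bB_k \subseteq \tmB_k$ by definition; then \Cref{lem:background-approx} gives $|\tmB_k(t)| = O(\log n) \cdot \OPT(L(t))$.

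The main technical step is fact (ii). The idea is to observe that $|\bB_k|$ is unchanged during the suspension phase and during the copy phase, so any growth of $\bB_k$ beyond $\tausus_k$ occurs only in the tail phase. In the tail phase, each newly added set covers at least one element that is either in the tail-universe at $\tsus$ or was inserted into $L^\bF_k$ after $\tsus$. The tail-universe size at $\tsus$ is $\le \tausus_k$ by the suspension trigger (i.e., $|L^\bF_k(\tsus)\setminus \cov(\bB_k(\tsus))|\le |\bB_k(\tsus)|$), and the number of insertions between $\tsus$ and $t$ is bounded by $t - \tsus$, which is itself $O(\tausus_k)$: the suspension contributes at most $0.1\tausus_k$ time-steps by \Cref{lm:wait:time}, and the copy plus tail phases contribute at most $\frac{5}{\cspd}\tausus_k$ by \Cref{cor:copy-tail-phase-time}. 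Combining yields $|\bB_k(t)| \le \tausus_k + \tausus_k + O(\tausus_k) = O(\tausus_k)$.

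Putting everything together: $\sum_{k\in Z(t)}|\bR_k(t)| \le O(1) \sum_{k\in Z(t)} \tausus_k \le O(1) \cdot \max_k \tausus_k \le O(\log n) \cdot \OPT(L(t))$, which combined with $|\bF| \le O(\log n) \cdot \OPT(L(t))$ from \Cref{lem:foreground-approx} gives the claimed $O(\log n)$-approximation. I expect the principal subtlety to lie in fact (iii)—pinning down the geometric decay rigorously against the sequential behavior of the scheduler and verifying that it holds uniformly throughout each time-step, not merely right after a transition event—rather than in the elementary bookkeeping behind facts (i) and (ii).
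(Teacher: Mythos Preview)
Your proposal is correct and follows essentially the same approach as the paper: bound $|\bF|$ by \Cref{lem:foreground-approx}, show $|\bR_k| \le |\bB_k| = O(\tausus_k)$ for threads in copy/tail phase via the suspension-trigger bound on the tail-universe plus the lifetime bounds (\Cref{lm:wait:time}, \Cref{cor:copy-tail-phase-time}), use the scheduler's geometric separation of $\tausus_k$ values, and cap the largest $\tausus_k$ via $\tausus_k \le |\bB_k(t)| \le |\tmB_k(t)|$ together with \Cref{lem:background-approx}. The paper's proof makes the constant explicit (showing $|\bB_k(t)| \le 2.2\,\tausus_k$), but the structure and ingredients are the same as yours.
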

\begin{proof}
    The output solution is union of foreground solution and background solutions of threads in copy or tail phase.
    By combining \Cref{lem:foreground-approx,lem:background-approx}, we have that $|\bF|\le O(\log n)\cdot \OPT(L)$, and $ |\bB_k|\le |\tmB_k| \le O(\log n)\cdot \OPT(L)$ for each thread in copy or tail phase. However, we are not done yet because there might be $O(\log n)$ such background threads.

    Suppose there are $r$ background threads $T_{k_1}, T_{k_2}, \ldots T_{k_r}$ in the copy or tail phase, ordered by when they entered the copy phase.     By the criteria to enter the copy phase, we have $\tausus_{k_{i+1}} \le \frac 12 \tausus_{{k_i}}$ for every $i< r$.    That is, $\{\tausus_{{k_i}}\}_i$ forms a geometric series with ratio $\le \frac 12$. 
    We claim that for any thread $T_k$ in the copy or tail phase, $\tausus_k\le |\bB_k| \le 2.2\cdot \tausus_k$. It follows that $\sum_{i=1}^r |\bB_{k_i}| \le O(1) \cdot |\bB_{k_1}|\le O(\log n)\cdot \OPT(L)$.

    Suppose $T_k$ enters suspension phase at time-step $\tsus$, and let $t$ be the time-step indicated by the claim. The lower bound of the claim is simply because $\tausus_k = |\bB_k(\tsus)|$, and $T_k$ cannot remove sets from $\bB_k$ after $\tsus$. Next, we prove the upper bound of the claim. By the criteria to enter suspension phase, $|L^\bF_k(\tsus) \setminus \cov(\bB_k(\tsus))|\le \tausus_k$.
    By \Cref{cor:copy-tail-phase-time,lm:wait:time}, $t-\tsus \le 0.2\cdot \tausus_k$. After $\tsus$, $T_k$ can only add sets to $\bB_k$ through the greedy algorithm during the tail phase. Each set added by the greedy algorithm must cover an element in the uncovered sub-universe. Such an element is either in $L^\bF_k(\tsus) \setminus \cov(\bB_k(\tsus))$, or inserted during $[\tsus+1, t]$. So, the number of sets added to $\bB_k$ from $\tsus$ to $t$ is at most $1.2\cdot \tausus$. The claim follows.
\end{proof}

\subsection{Implementation Details and Update Time}\label{sec:update-time}

First, we describe the data structures used to maintain the foreground, background, and buffer solutions.

\paragraph{Data Structures.}

The algorithm maintains three types of data structures:
\begin{enumerate}
    \item For each solution $S$ maintained by the algorithm (i.e.\ foreground, background and buffer solutions), its collection of sets are organized by levels, and the sets in each level are stored in a balanced binary search tree. We call these set BSTs and use $\Tset(S, \ell)$ to denote the set BST for solution $S$ at level $\ell$.
    The algorithm maintains a global pointer to the set BST for each level in every solution.
    \item For each hierarchical solution $S$ maintained by the algorithm (i.e.\ foreground and background solutions),  its coverage $\cov(S)$ is organized by levels, and the elements in each level of $\cov(S)$ are stored in two balanced binary search trees that store the alive and dormant elements respectively. We call these element BSTs and denote it $\Telem(S, \ell)$ for solution $S$ at level $\ell$. A node in an element BST stores the identity of an element $e$, its level $\lev^S(e)$, passive level $\plev^S(e)$, and the set whose coverage it belongs to in $S$, i.e. $s\in S$ such that $e \in \cov^S(s)$. Similar to set BSTs, the algorithm maintains a global pointer to the element BST for each level in every hierarchical solution.
    \item As mentioned in \Cref{sec:description}, we also have a priority queue $Q_k$ for each background thread $T_k$. The purpose of this priority queue is to facilitate efficient execution of a step of the greedy algorithm in the computation and tail phases of $T_k$. In particular, every node in the priority queue is a set $s$ and its key $Q_k(s)$ is the number of uncovered elements in $s$, i.e. 
    \[
        Q_k(s) = |s\cap (L^{\bF}_k\setminus \cov(\bB_k))|.
    \]        
    We remark that $Q_k$ only stores sets $s$ with key value $Q_k(s) > 0$. This ensures that the size of the priority queue $Q_k$ is at most the maximum frequency $f$ times the size of uncovered sub-universe $|L^{\bF}_k\setminus \cov(\bB_k)|$.

    We insert or delete elements from the uncovered sub-universe in $T_k$ implicitly by updating the keys in $Q_k$. To insert an element $e$, we add $1$ to $Q_k(s)$ for every set $s$ containing $e$. (If such a set $s$ was not in $Q_k$ prior to this step, we add it to $Q_k$ with key value $1$.) Similarly, to delete an element $e$, we subtract $1$ from $Q_k(s)$ for every set $s$ containing $e$. (If this drops $Q_k(s)$ to $0$, we remove $s$ from $Q_k$.) 
    

    \item Finally, for each background thread $T_k$, we have an array indexed by all sets where for each set $s$, it stores a BST containing all elements in $s$ that are in the uncovered universe of $T_k$, i.e.\ $s\cap (L^{\bF}_k\setminus \cov(\bB_k))$. We call this the uncovered BST and denote it $\Tunc(k, s)$.\footnote{For space efficiency, if we want to avoid storing an explicit array of size $|\cal S|$, then we can also store the global pointers in a BST or a hash table that only indexes the sets that actually appear in any priority queue. We ignore this in the rest of the description.}
    
\end{enumerate}

\paragraph{Update Time Bound.}

We now give details of how the data structures are updated by the algorithm, and derive resulting bounds on the update time of the algorithm. 

\begin{lemma}\label{lem:update-time}
    In any time-step, the foreground thread or a background thread $T_k$ calls $O(f\log n)$ priority queue and BST operations, and runs for $O(f \log n)$ time outside these calls. Therefore, the worst-case update time of the algorithm is $O(f\log^3 n)$.
\end{lemma}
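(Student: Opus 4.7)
The plan is to bound the per-time-step work separately for the foreground thread and for each background thread $T_k$, then combine them, noting that all the balanced BST and priority queue operations in use take $O(\log n)$ time each. Throughout, the key observation that drives the bookkeeping is that outside of phase transitions and the base/shortcut special cases, each thread processes only $O(\cspd)=O(1)$ elements or sets in a single time-step, while each such element touches $O(f)$ auxiliary entries (one per set containing it).

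First I would handle the foreground thread. In a time-step, it processes at most one insertion or deletion. For a deletion of $e$, we move $e$ from the alive to the dormant element BST at level $\lev^{\bF}(e)$ and set $\plev^\bF(e)\gets \lev^\bF(e)$: one $\Telem$ operation. For an insertion of $e$, we scan the $\le f$ sets containing $e$ to find the one of largest level in $\bF$; for each candidate we perform one lookup in the corresponding $\Tset(\bF,\cdot)$, totaling $O(f)$ BST calls plus one $\Telem$ insertion and possibly one $\Tset$ insertion. That is $O(f\log n)$ time. Additionally, when a background thread $T_k$ normally terminates during the same time-step, the foreground is updated by replacing the global pointers to $\Tset(\bF,\ell)$ and $\Telem(\bF,\ell)$ for $\ell\le k$, and by an $O(\log n)$-time splice of the $\Tset(\bB_k,k+1)$ and $\Telem(\bB_k,k+1)$ structures into those of $\bF$; this is $O(\log n)$ pointer/BST work (only sets at level $k+1$ of $\bB_k$ need to be attached, and they can be concatenated as balanced-tree subtrees).

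Next I would analyze a background thread $T_k$ in each of its five phases; only one phase is executed per time-step by the rule of one phase per time-step. In every phase, $T_k$ first handles the at most one element entering or leaving $L^\bF_k$ via \Cref{alg:deletion} or \Cref{alg:insertion}; this touches $Q_k(s)$ and $\Tunc(k,s)$ for the $\le f$ sets $s$ containing the element, hence $O(f)$ priority queue/BST calls, i.e.\ $O(f\log n)$ time. In the preparation phase the thread scans $\cspd = O(1)$ new elements and initializes their passive levels and $Q_k$ entries, again $O(f)$ PQ updates per element. In the computation and tail phases, executing $\cspd$ elements of \Cref{alg:greedy} amounts to: one $Q_k.\text{FindMax}$, adding a new set to $\Tset(\bB_k,\cdot)$, and for each of the $O(\cspd)$ newly covered elements, one $\Telem(\bB_k,\cdot)$ insertion and an $O(f)$-sized walk through the sets containing that element to update $Q_k$ and $\Tunc(k,\cdot)$; total $O(f\log n)$ time. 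The suspension phase does only the lazy update, and the copy phase copies $\cspd$ sets from $\bB_k$ to $\bR_k$ via $O(\log n)$ BST insertions. Summing, every phase of $T_k$ costs $O(f\log n)$ PQ/BST operations and $O(f\log n)$ additional time in a single time-step, which is exactly what the lemma asserts.

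The main obstacle, I expect, is handling the rule of base/shortcut threads and the cascade at termination within a single time-step. For a base thread, $|L^\bF_k|\le \cspd=O(1)$ at creation, so the entire preparation, computation, copy, and tail phases involve only $O(1)$ elements and sets, bounded by $O(f\log n)$ total work; the analogous bound holds for a shortcut thread, where $|\bB_k|\le\cspd=O(1)$ when suspension would begin. When $T_k$ terminates normally in time-step $t$, all lower threads $T_{k'}$ with $k'<k$ are aborted, which requires discarding their $Q_{k'}$, $\Tunc(k',\cdot)$, $\Tset(\bB_{k'},\cdot)$, $\Tset(\bR_{k'},\cdot)$ and $\Telem(\bB_{k'},\cdot)$: I would store each of these behind a single pointer so that the reset in time-step $t$ is just $O(1)$ pointer swaps per thread (garbage collection of the detached structures can then be deferred or charged in worst-case $O(\log n)$ amortized but we only need worst-case, so we keep the detached roots in a deletion queue and process one entry per time-step). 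The final step is to multiply: foreground contributes $O(f\log n)$ time; each of the $\ell_{\max}+1=O(\log n)$ background threads contributes $O(f\log n)$ operations each costing $O(\log n)$ plus $O(f\log n)$ time, i.e.\ $O(f\log^2 n)$ per thread; summing over all threads yields the claimed worst-case update time of $O(f\log^3 n)$.
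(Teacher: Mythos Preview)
Your approach is the same as the paper's—a phase-by-phase accounting of BST and priority-queue operations, then multiplication by the $O(\log n)$ threads—and the conclusion is correct. Two points are worth tightening, though.

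First, you systematically undercount the level-lookup cost. Since the set and element BSTs are indexed by level, to find whether a set $s$ is in $\bF$ (and at which level) you must query $\Tset(\bF,\ell)$ for each of the $O(\log n)$ levels; likewise finding $\lev^{\bF}(e)$ for a deleted element needs $O(\log n)$ queries across the $\Telem(\bF,\ell)$'s. So the foreground insertion actually incurs $O(f\log n)$ BST operations rather than $O(f)$, and deletion incurs $O(\log n)$ rather than $O(1)$. This does not hurt the final bound (the lemma allows $O(f\log n)$ operations per thread), but your intermediate counts are too optimistic.

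Second, and more substantively, you gloss over how the preparation phase iterates over $L^{\bF}_k$. The elements are scattered across the $\Telem(\bF,\ell)$ for $\ell\le k$, and during $T_k$'s preparation a lower thread $T_{k'}$ with $k'<k$ may terminate normally and shuffle those elements among levels $\le k'+1$ (the set $L^{\bF}_k$ is unchanged, but the level-partition is not). A naive level-by-level scan can therefore miss or repeat elements. The paper handles this by processing elements in a fixed global index order: at each step it queries the successor of the last processed index in every $\Telem(\bF,\ell)$ with $\ell\le k$ and takes the minimum, costing $O(\log n)$ BST operations per element. You should incorporate this; it is the one place where the per-element cost acquires an extra $\log n$ factor beyond the $O(f)$ priority-queue updates.
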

\begin{proof}
In bounding update time, we distinguish between two types of operations done at each time-step. The first type of operation is to handle element insertion or deletion in the instance at the current time-step. This is done for both foreground and background threads. The second type of operation applies only to the background threads, where they execute a set of operations based on the phase they are in. We bound the number of data structural steps for these two types of operations separately.

First, we consider an element insertion or deletion:
\begin{itemize}
    \item[-] Insertion of element $e$: Note that the element belongs to at most $f$ sets.

    We first consider the foreground thread.
    For each set $s$ that contains $e$, we query $s$ in each set BST $\Tset(\bF, \ell)$. These queries reveal the levels of all sets containing $e$ that are already in $\bF$. If there is at least one such set, let $s^*$ be the set at the highest level $\ell^*$ in $\bF$ among all such sets. We add a new node to the element BST $\Telem(\bF, \ell^*)$ containing $e$, $\lev^\bF(e) = \ell^*$, $\plev^\bF(e) = \ell^*$, and $s^*$. This means we are implicitly adding $e$ to $\cov(s^*)$. Next, suppose $e$ is not in any set $s\in \bF$. In this case, we first add an arbitrary set containing $e$ to the set BST $\Tset(\bF, 0)$ at level $0$. Then, we repeat the above steps for adding $e$ to the element BST $\Telem(\bF, 0)$.

    Now, we consider a background thread $T_k$ at level $k\le \lev^{\bF}(e)$. The first step is identical to the foreground thread, i.e. for each set $s$ that contains $e$, we query $s$ in each set BST $\Tset(\bB_k, \ell)$. If there is at least one set containing $e$ in $\bB_k$, then we use the same set of updates as in the foreground solution. If none of the sets in $\bB_k$ contains $e$, then we insert $e$ in $Q_k$ as described in the previous subsection by updating $Q_k(s)$ for $s\ni e$. Correspondingly, we also add $e$ to the uncovered BST $\Tunc(k, s)$ for every $s\ni e$.

    The running time for the operations in any single thread for element insertion is dominated by
    $O(f\log n)$ BST operations for set BSTs, 
    $O(1)$ BST operations for element BSTs,
    $f$ BST operations for uncovered BSTs, and 
    $f$ priority queue updates for element addition to $Q_k$.

    \item[-] Deletion of element $e$:
    For the foreground thread, we query $e$ in the element BSTs $\Telem(\bF, \ell)$ to retrieve $\lev^{\bF}(e)$. Then, we update $\plev^\bF(e)$ in $\Telem(\bF, \lev^\bF(e))$.
    We move the element from the alive element BST to the dormant element BST.

    For each background thread $T_k$, we similarly query the element BSTs $\Telem(\bB_k, \ell)$ to decide whether $e\in \cov(\bB_k)$. If yes, the update is identical to the foreground thread. If no, we delete $e$ from $Q_k$ as described in the previous subsection by updating $Q_k(s)$ for $s\ni e$. Correspondingly, we also remove $e$ from the uncovered BST $\Tunc(k, s)$ for every $s\ni e$.

    A special case is when the deleted element $e$ is in $\cov(s)$ for the set $s$ that is currently being added to $\bB_k$ by $T_k$ in the computation or tail phases. Since the process of updating data structures for adding $s$ to $\bB_k$ does not happen in the single time-step, it is ambiguous as to whether $e\in \cov(\bB_k)$ while $s$ is being added. To avoid ambiguity, we do the following. If $e$ is not in the element BSTs $\Telem(\bB_k, \ell)$, then we query $e$ in the uncovered BST $\Tunc(k, s)$. If yes, then we first add $e$ to $\cov(\bB_k)$ using the steps described below in the computation and tail phases, and then proceed with the deletion of $e$ (which will now change the passive level of $e$ instead of removing it altogether from the solution $\bB_k$).

    The running time for the operations in any single thread for element deletion is dominated by 
    $O(\log n)$ BST operations for element BSTs, 
    $f$ BST operations for uncovered BSTs, and 
    $f$ priority queue updates for element deletion from $Q_k$.
\end{itemize}

We now analyze the operations performed by a background thread $T_k$ based on the phase it is in:
\begin{itemize}

    \item[-] Preparation Phase: 
    Elements in $L^\bF_k$ form the initial uncovered sub-universe for background thread $T_k$ in the preparation phase. In order to retrieve these elements, the algorithm uses the alive element BSTs $\Telem(\bF, \ell)$ for $\ell\le k$. For each element $e$, it is added to the priority queue $Q_k$ as described earlier. Furthermore, the element is added to the uncovered BSTs $\Tunc(k, s)$ for every $s\ni e$. This takes $f$ priority queue and BST operations. Since $O(\log n)$ elements are processed in each time-step, this takes $O(f\log n)$ priority queue and BST operations per time-step.

    Note that the set of elements in $L^\bF_k$ can change due to insertions, deletions, and because of background solutions at levels $< k$ being switched to the foreground. When an element $e$ is inserted or deleted, the update algorithm for insertion or deletion given above is applied both to the foreground thread and the background thread $T_k$. But, if a background thread at level $\ell < k$ is switched to the foreground, it can change the levels of the elements in $L^{\bF}_\ell$ (although it does not change the set of elements in $L^\bF_k$ overall by \Cref{fact:background-feasible}). This can create inconsistencies, e.g., if the preparation phase were handling elements by level in the foreground solution. 

    To avoid such inconsistencies, our goal is to handle elements in order of a fixed index. To do so, the background thread $T_k$ in the preparation phase uses the following strategy to select the next element to process.
    Recall that the process runs in a consecutive time block, so that the foreground data structures are stable.
    It runs queries on all the alive element BSTs $\Telem(\bF, \ell)$ at levels $\ell \le k$ to identify the successor of the last element processed in each BST. Then, it selects the element $e$ with the smallest index among these successors and processes it next. This ensures that elements are processed in order of their index irrespective of their levels in the foreground solution.
    In one time-step, the process finds $\cspd$ elements in $L^\bF_k$, and each element takes $O(\log n)$ BST operations to query all element BSTs. 
    The process takes $O(\log n)$ BST operations.


    \item[-] Computation and Tail Phases:
    In the greedy algorithm, we first retrieve the set $s$ with the largest key in $Q_k$. This is single priority queue operation. First, we insert set $s$ in the set BST $\Tset(\bB_k, \lev(s))$ where $\lev(s)$ is as defined in Line~\ref{line:set-level} in \Cref{alg:greedy}. Next, we retrieve $\cov(s)$ using $\Tunc(k, s)$. For each element $e\in \cov(s)$, we insert it in the element BST $\Telem(\bB_k, \lev(s))$. At the same time, we remove $e$ from $Q_k$ (by updating the key as described earlier) and also remove it from the uncovered BSTs $\Tunc(k, s)$ for every $s\ni e$.
    
    To add set $s$ to $\bB_k$, we process each element $e\in \cov(s)$ as follows. We remove $e$ from the uncovered sub-universe by deleting it from $Q_k$, and removing it from $\Tunc(k, s')$ for every $s\ni e$.
    Then, we insert $e$ to $\Telem(\bB_k, \lev(s))$ with relevant information. ($\lev^{\bB_k}(e)$ is set to $\lev(s)$, while $\plev^{\bB_k}(e)$ is determined before at preparation phase or when $e$ is inserted.) Since the thread processes $\cspd$ elements per time-step, this is $O(f)$ priority queue operations and $O(1)$ BST operations.

    \item[-] Termination:
    At the normal termination of a background thread $T_k$, the switch between $\bB_k$ and $\bF$ is implemented as follows. 
    We use pointer switch to replace the foreground data structures $\Tset(\bF, \ell)$ and  $\Telem(\bF, \ell)$ at levels $\ell \le k$ by the corresponding background data structures $\Tset(\bB_k, \ell)$ and $\Telem(\bB_k, \ell)$. Besides this, we merge the data structures for level $k+1$, i.e., we merge $\Tset(\bF, k+1)$ with $\Tset(\bB_k, k+1)$, and merge $\Telem(\bF, k+1)$ with $\Telem(\bB_k, k+1)$.
    This is $O(\log n)$ BST operations.\qedhere
\end{itemize}
\end{proof}

The above implementation details and running time analysis work for the goal of bounding worst case insertion recourse.
To bound the deletion recourse as well, we need to deamortize the deletion recourse according to \Cref{lem:new:recourse-reduction}. The garbage collection step can be efficiently implemented as follows. When we switch out some set BST from the foreground solution, we move that to a garbage set. The algorithm performs an additional garbage collection step that removes $O(\log n)$ sets in the garbage set from the output.

\begin{lemma}\label{lm:full:reduction}
    The deamortized algorithm has worst-case update time $O(f\log^3 n)$.
\end{lemma}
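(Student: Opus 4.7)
The plan is to bound the overhead introduced by the deamortization construction of \Cref{lem:new:recourse-reduction} and show it is subsumed by the $O(f\log^3 n)$ budget of \Cref{lem:update-time}. The construction adds two kinds of work on top of the base algorithm: (a) a ``move to garbage'' step whenever the base algorithm $\cal A$ would delete sets from its output, and (b) a ``garbage trim'' step that, each time-step, removes the fixed quota of $2\beta + \delta\alpha = O(\log n)$ sets from the garbage, where we instantiate \Cref{lem:new:recourse-reduction} with $\alpha = O(\log n)$, $\beta = O(\log n)$ (from \Cref{lem:recourse}), and $\delta = 1$.

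For (a), I would maintain the garbage as a linked list of BST roots. The only places where $\cal A$ deletes sets from its output are a normal termination of a background thread $T_k$ (which displaces the BSTs $\Tset(\bF, \ell)$ for $\ell \le k$) and routine lazy updates of the hierarchical solutions (which can displace at most $O(1)$ sets per time-step). In both cases the base algorithm already acts on BSTs by pointer switches rather than by enumerating their contents, so each ``delete'' event simply appends $O(\log n)$ root pointers to the garbage list, contributing $O(\log n)$ additional pointer operations per switch and $O(1)$ per lazy-update deletion. For (b), I keep a cursor at the head of the garbage list and perform $O(\log n)$ standard BST deletions per time-step, advancing the cursor whenever a BST becomes empty. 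Each deletion costs $O(\log n)$, so the trim step costs $O(\log^2 n)$ per time-step.

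The main conceptual obstacle is justifying that trimming the garbage never requires touching any internal data structure of the base algorithm. Concretely, I have to verify that every set landing in the garbage — every set that sat in $\bF$ at a level $\le k$ just before the switch by $T_k$ — is no longer referenced by any $\Telem(\bF,\cdot)$ after the switch (since those BSTs at levels $\le k$ are replaced wholesale by $\Telem(\bB_k,\cdot)$), nor by any $\Telem(\bB_{k'},\cdot)$, $\Tset(\bB_{k'},\cdot)$, $\Tunc(k',\cdot)$, $Q_{k'}$, or $\bR_{k'}$ for $k' \le k$ (since those threads are aborted and their structures discarded), nor by the same structures for $k' > k$ (by \Cref{lem:no-duplicate-set} and the highest-coverage argument used there). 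Once this decoupling is in place, pruning the garbage is a pure output-side operation, and summing the costs gives worst-case update time $O(f\log^3 n) + O(\log^2 n) = O(f\log^3 n)$.
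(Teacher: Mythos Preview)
The proposal is correct and follows essentially the same approach as the paper: maintain the garbage as (a collection of) BSTs, append the displaced set-BSTs via pointer operations at each normal termination, and trim $O(\log n)$ sets per time-step at $O(\log n)$ cost each, all of which is dominated by the $O(f\log^3 n)$ bound from \Cref{lem:update-time}. Your linked-list-of-roots variant is a minor implementation alternative to the paper's single merged garbage BST, and your decoupling discussion is extra care the paper omits (it is unnecessary in the sense that the garbage is purely output-side and the base algorithm has already discarded these sets from its internal structures; also note that lazy updates of $\bF$ never delete sets, so that case does not actually arise).
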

\begin{proof}
    Given \Cref{lem:update-time}, it remains to prove that the garbage collection step described in \Cref{lem:new:recourse-reduction} can be efficiently implemented.
    We construct a garbage BST to manage the garbage sets in the output solution.
    The algorithm can only remove sets from the output solution by switching at the normal termination of a background thread. This step is implemented by pointer switch of BSTs, so we can merge the BSTs of sets that are switched out to the garbage BST in $O(\log n)$ BST operations.
    At the end of each time-step, we perform the garbage collection step by removing $O(\log n)$ sets from the garbage BST (and the output solution) as required by \Cref{lem:new:recourse-reduction}, which takes $O(\log n)$ BST operations.
\end{proof}

One issue is that the size of coverage of some solution might exceed $n$, since we defined $n$ to be a universal upper bound for the number of live elements.
Nevertheless, for any hierarchical solution $S$ maintained by the algorithm, $|\cov(S)|$ can be upper bounded by $O(n)$ by the tidy property of $\bF$ and $\tmB_k$. So, all data structures maintained by the algorithm have size at most $O(n)$. It follows that each BST or priority queue operation can be implemented in $O(n)$ time. This also validates our claim that $\ell_{\max} = O(\log n)$ when defining $\ell_{\max}$.

\newpage

\part{$O(f)$-Competitive Algorithm}
\label{part:f}

\paragraph{Remark.} The presentation in \Cref{part:f} is self-contained. In particular, we use slightly different notations and terminologies in \Cref{part:f} than the ones introduced in \Cref{sec:prelim} and \Cref{part:lnn}.

\section{Preliminaries}\label{sec:prelim-f}

In the Dynamic Set Cover problem, we are given a (fixed) universe $U$ of elements and a collection of sets $\calS$ that cover $U$. At any time-step $t\ge 0$, there is a set of {\em live} elements $L^{(t)}$ that have to be covered by the algorithm. Elements that are not live are called {\em dormant} and denoted $D^{(t)}:=U\setminus L^{(t)}$. At any time-step, at most one element might change the state of live and dormant. We view such a change as an insertion or deletion to the set $L^{(t)}$.
We denote $n$ to be an upper bound of $|L^{(t)}|$, the number of elements to be covered. We denote $f$ to be an upper bound of the frequency of elements, that is  the number of sets containing it.

A solution is a collection of sets in a given space $\calS$. At time-step $t$, a solution $S^{(t)}\subseteq \calS$ is {\em feasible} if it covers $L^{(t)}$.
We consider the unweighted setting, that is the cost of every set is 1.

This part is devoted to the following theorem.
\begin{theorem}\label{thm:f-main}
    There is a deterministic algorithm for the dynamic set cover problem that maintains an $O(f)$-approximate solution with worst-case recourse of $O(\log n)$ and worst-case update time of $O(f\log^3 n)$.
\end{theorem}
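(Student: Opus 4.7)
\section{Proof Plan for Theorem~\ref{thm:f-main}}

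The plan is to port the three-tiered foreground/background/buffer framework of Part~\ref{part:lnn} to the $O(f)$-approximation regime, substituting the catch-up greedy subroutine with a catch-up primal-dual subroutine tailored to the natural LP relaxation of set cover. Concretely, I would redefine a hierarchical solution so that each element $e$ carries an integer level $\lev(e)$ and each set $s$ in the solution carries the level $\lev(s):=\max_{e\in\cov(s)}\lev(e)$; the intended invariant is that for every set $s\in\calS$ we have $|\{e\in s: \lev(e)\ge\lev(s)\}|\le (1+\eps)^{\lev(s)+1}$ (tight local packing), and for every set in the solution $|\cov(s)|\ge (1+\eps)^{\lev(s)}$. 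This is the standard level-based certificate that yields an $f$-approximation after multiplying by a factor $f$, analogous to how $\eps$-tidy and $\eps$-stable together give $O(\log n)$ in Part~\ref{part:lnn}; the passive-level apparatus carries over verbatim to absorb insertions/deletions lazily.

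Next I would replace \Cref{alg:greedy} with a catch-up variant of the classical level-based primal-dual algorithm: process elements in decreasing order of their current level; for each uncovered element, raise its level until some set becomes tight, then add that set to the solution and mark all its level-$k$ elements as covered at level $k$. The analogue of the copy pointer ensures monotonically decreasing levels over the course of the phase, so the proof of the level and passive-level invariants proceeds identically to \Cref{fact:pointer-monotone} and \Cref{fact:alg-sol-hierarchical}. Speed and source/target semantics are unchanged: the thread $T_k$ runs on source $(\tcomp,k,L_k^{\bF})$, processes $\cspd$ elements per time-step, and populates a target background solution $\bB_k$ whose size is at most $f\cdot\OPT(L_k^{\bF})$.

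The lifetime analysis of the five phases (preparation, computation, suspension, copy, tail) only uses the fact that solutions do not grow faster than one set per time-step in suspension, that the catch-up algorithm runs at speed $\cspd$, and that the suspension-to-copy scheduler in \Cref{alg:transition} enforces a geometric decrease of $\tausus$-values among concurrently copying threads. None of these are sensitive to which offline algorithm is being simulated, so \Cref{lm:wait:time}, \Cref{cor:copy-tail-phase-time}, and \Cref{lem:overall-lifetime} transfer with only cosmetic edits. The tidy/stable bookkeeping in \Cref{lem:background-tidy}, \Cref{lem:foreground-tidy}, \Cref{lem:background-stable}, and \Cref{lem:foreground-stable} then needs to be re-executed with the new tightness invariant in place of the greedy marginal-coverage bound; crucially, the argument that most elements at level $\le j$ are active (because few insertions/deletions occur during the short lifetime of the thread relative to $|C_j^{\bB_k}|$) is agnostic to the approximation regime. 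Finally, the insertion-recourse bound of $O(\log n)$ and the deamortization via \Cref{lem:new:recourse-reduction} (with $\alpha=O(f)$, $\beta=O(\log n)$, $\delta=1$) yield overall worst-case recourse $O(f+\log n)=O(\log n)$ provided $f=O(\log n)$; for $f=\omega(\log n)$ one must instead charge the garbage-collection budget to $O(f\log n)$ per step, which still fits inside the $O(f\log^3 n)$ update-time budget since the data-structure accounting of \Cref{lem:update-time} is unchanged.

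The main obstacle I anticipate is the analogue of \Cref{lem:background-tidy}, which in Part~\ref{part:lnn} hinges on the monotonicity of the copy pointer together with the greedy choice of sets at each marginal-coverage threshold. In the primal-dual setting the certificate is dual feasibility rather than marginal coverage, so we need a different argument showing that within the short window between a thread entering suspension and its termination, the fraction of elements at level $\le j$ whose dual certificates have been invalidated by subsequent insertions/deletions remains bounded. I expect to handle this by defining the passive-level update rule so that any invalidated dual variable immediately drops the element's passive level, and then bounding the invalidated mass by $o(1)\cdot\tausus_k$ using \Cref{lm:wait:time}. A secondary subtlety is the switch step at normal termination: because the $f$-approximation invariant is global (``no set is over-packed''), one must verify that replacing levels $\le k$ of $\bF$ by the corresponding portion of $\bB_k$ does not create over-packing at level $k+1$; this should follow from the source/target construction which ensures $\cov(\bB_k)\cap L=L_k^{\bF}$ at termination, together with the invariant being preserved separately on each side of the cut.
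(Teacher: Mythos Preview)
Your high-level plan—port the five-phase foreground/background/buffer framework of Part~\ref{part:lnn} and replace catch-up greedy with a primal-dual rebuild—is exactly what the paper does, and you are right that the scheduler and the lifetime bounds transfer almost verbatim. The gap is in the primal-dual certificate itself. The paper maintains \emph{explicit} dual values $w_e$ with three invariants: dual feasibility $w_s:=\sum_{e\in s}w_e\le 1$ for every $s\in\calS$; a tight-set invariant $w_s\ge(1+\eps)^{-1}$ for every $s$ in the maintained solution; and a level invariant $w_e\le(1+\eps)^{-\lev_e}$. The $O(f)$ bound then comes directly from LP duality, $|\calS(\bF)|\le(1+\eps)\sum_{s\in\calS(\bF)}w_s\le(1+\eps)f\cdot w_U$, together with $w_U\le(1+\eps)w_L$, which is precisely what the tidy property is engineered to deliver. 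Your combinatorial packing invariant $|\{e\in s:\lev(e)\ge\lev(s)\}|\le(1+\eps)^{\lev(s)+1}$ is not a dual-feasibility bound (summing the implied contributions over levels diverges), and $|\cov(s)|\ge(1+\eps)^{\lev(s)}$ is the greedy Level Invariant from Part~\ref{part:lnn}; together they do not certify an $f$-approximation.

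Two further differences are structural, not cosmetic. First, the rebuild is not element-by-element: the paper keeps a set $E(\bB_k)$ of \emph{exposed} elements whose common dual value is $(1+\eps)^{-p_k}$ and sweeps $p_k$ from $k{+}1$ down to $0$, freezing a batch of exposed elements at level $p_k$ whenever an incident set becomes tight. Second, the tidy property changes shape to $|D_k|+|P_k|\le\eps(|A_k|+|E|)$: dormant elements $D_k$ appear on the left because a deleted element must retain its dual mass (otherwise its covering set may cease to be tight), and exposed elements $E$ appear on the right because they carry live dual mass but have no level. So your anticipated obstacle about ``invalidated dual certificates'' does not arise—dual feasibility is preserved at all times—but you instead have to track dormant dual mass, which has no analogue in Part~\ref{part:lnn}. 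Finally, your last sentence conflates recourse with update time: invoking \Cref{lem:new:recourse-reduction} with $\alpha=O(f)$ gives worst-case recourse $2\beta+\delta\alpha=O(\log n+f)$, and routing the excess deletions into the update-time budget does not shrink the recourse.
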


\begin{define}[Primal-dual solutions]
    Recall the LP relaxation of unweighted set cover on universe $L$:
    \begin{align*}
        (P) ~ & \min \sum_{s\in \calS} x_s
        & (D) ~ & \max \sum_{e\in L} y_e\\
        \text{s.t. } & \forall e\in L, \sum_{s\in \calS: e\in s} x_s \ge 1
        &\text{s.t. } & \forall s\in \calS, \sum_{e\in s} y_e \le 1\\
        & x\ge 0 & & y\ge 0
    \end{align*}

    A dual solution defines a dual value $w_e\ge 0$ for all live elements $e\in L^{(t)}$.
    Although we consider the dual LP on universe $L^{(t)}$, it will be convenient to define dual values for some dormant elements in $D^{(t)}$ as well.
    Such a dual solution is called an {\em extended dual solution}. (Let $w_e=0$ by default for other dormant elements.)

    We use $w_s:=\sum_{e\in s} w_e$ to denote the total dual value in a set $s$. 
    As part of the definition of extended dual solution, we require the following invariant:
    \begin{itemize}
        \item (Dual feasibility invariant) $\forall s\in \calS, w_s\le 1$.
    \end{itemize}
    We remark that this is stronger than the constraints in dual LP because the sum includes extended dual value of dormant elements.
    
    A set $s\in \calS$ is said to be tight if $w_s \ge (1+\eps)^{-1}$.
    Given an extended dual solution, we define its corresponding primal solution as the collection of all tight sets.

    In our algorithm, a primal-dual solution $\bF$ consists of two parts:
    \begin{enumerate}
        \item An extended dual solution $\{w_e(\bF)\}_{e\in U(\bF)}$. Here, $U(\bF)\subseteq U$ is the sub-universe of relevant element of $\bF$. The dual values are explicitly maintained for $U(\bF)$, and are regarded as 0 for other elements.
        \item A subset of its corresponding primal solution denoted by $\calS(\bF)$.  We remark that the primal solution is not necessarily feasible for the primal LP, but we require the following invariant suggested by complementary slackness.
    \end{enumerate}
    \begin{itemize}
        \item (Tight set invariant) $\forall s\in \calS(\bF), w_s(\bF) \ge (1+\eps)^{-1}$.
    \end{itemize}
\end{define}

\begin{lemma}\label{lem:f-approx-general}
    If a feasible extended dual solution $\bF$ satisfies $w_U(\bF) \le (1+\eps)w_L(\bF)$, then the corresponding primal solution is $((1+\eps)^2 f)$-competitive.
\end{lemma}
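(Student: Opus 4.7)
The plan is to combine a standard charging argument that bounds the number of tight sets by the extended dual value, with LP duality restricted to the live universe, and then pay for the gap between $w_U(\bF)$ and $w_L(\bF)$ using the hypothesis.

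First, I would let $T$ denote the collection of all tight sets (the corresponding primal solution whose competitive ratio we wish to bound). By the definition of tightness, each $s\in T$ has $w_s(\bF)\ge (1+\eps)^{-1}$, so
\[
    |T| \;\le\; (1+\eps)\sum_{s\in T} w_s(\bF) \;=\; (1+\eps)\sum_{s\in T}\sum_{e\in s} w_e(\bF).
\]
Swapping the order of summation and using the fact that every element $e\in U$ lies in at most $f$ sets of $T$ (since frequencies are at most $f$ in $\calS\supseteq T$) gives
\[
    |T| \;\le\; (1+\eps) \sum_{e\in U} w_e(\bF)\cdot \bigl|\{s\in T: e\in s\}\bigr| \;\le\; (1+\eps)\, f\cdot w_U(\bF).
\]

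Next, I would apply the hypothesis $w_U(\bF)\le (1+\eps)\, w_L(\bF)$ to upgrade this to
\[
    |T| \;\le\; (1+\eps)^2\, f\cdot w_L(\bF).
\]
It remains to argue $w_L(\bF)\le \OPT(L)$. For this, I would observe that the restriction $\{w_e(\bF)\}_{e\in L}$ is a feasible dual solution for the dual LP on the universe $L$: the dual feasibility invariant gives $\sum_{e\in s} w_e(\bF)\le 1$ for every $s\in\calS$, and since $w_e(\bF)\ge 0$, dropping the contributions from dormant elements only decreases the sum, so $\sum_{e\in s\cap L} w_e(\bF)\le 1$ as required. By weak LP duality, $w_L(\bF)\le \OPT_{LP}(L)\le \OPT(L)$, and chaining this with the previous inequality yields $|T|\le (1+\eps)^2 f \cdot \OPT(L)$.

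There is no real obstacle here; the argument is textbook primal-dual once one keeps track of the two different universes $L$ and $U$. The only subtlety, and the reason the statement needs the hypothesis at all, is that the extended dual places positive mass on dormant elements in $D=U\setminus L$, so the naive charging argument bounds $|T|$ in terms of $w_U(\bF)$ rather than $w_L(\bF)$; the assumption $w_U(\bF)\le (1+\eps)w_L(\bF)$ is precisely what is needed to close this gap at the cost of an extra $(1+\eps)$ factor.
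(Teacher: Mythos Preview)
Your proposal is correct and follows essentially the same approach as the paper's proof: both chain tightness of the sets, the frequency bound, the hypothesis $w_U\le(1+\eps)w_L$, and weak LP duality in the same order. The paper presents this as a one-line chain of inequalities, while you spell out each step (in particular the justification that the restriction to $L$ remains dual-feasible), but there is no substantive difference.
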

\begin{proof}[Proof sketch]
\[|\calS(\bF)| \le \sum_{s\in \calS(\bF)} (1+\eps)w_s \le (1+\eps)\sum_{e\in U}\sum_{s\in \calS(\bF):e\in s} w_e \le (1+\eps)f w_U \le  (1+\eps)^2 f\cdot w_L \le (1+\eps)^2 f\cdot OPT\]
\end{proof}

\begin{define}[Hierarchical solutions]
     Based on a primal-dual solution $\bF$, we assign a level $\lev_s(\bF) \in [0, \lmax]$ to every set $s\in \calS(\bF)$, and $\lev_e(\bF)\in [0, \lmax]$ for some elements in $U(\bF)$.\footnote{The notation $[0,\lmax]$ refers to the set of integers in the interval.} Denote $C(\bF)\subseteq U(\bF)$ to be the set of elements whose level is defined in $\bF$ (can be live or dormant).
     Denote $E(\bF):=U(\bF)\setminus C(\bF)$ to be the elements in the sub-universe whose levels are not fixed. We call $C(\bF)$ covered elements and $E(\bF)$ exposed elements.

    As part of the definition of hierarchical solutions, we require the following invariants.
    First, the levels of elements are related to the dual values.
    \begin{itemize}
        \item (Level invariant) $\forall e\in C(\bF), w_e(\bF) \le (1+\eps)^{-\lev_e(\bF)}$.
    \end{itemize}
    Second, a covered element $e\in C(\bF)$ must be covered by a tight set $s\in \calS(\bF)$. (The converse is not true. It is possible that $e\in s, s\in \calS(\bF)$, but $e$ is exposed.) Moreover, it must be covered by the highest one. 
    \begin{itemize}
        \item (Highest level invariant) $\forall e\in C(\bF), \lev_e(\bF) := \max\{\lev_s(\bF):s\in \calS(\bF), e\in s\}$.
    \end{itemize}

    An element $e\in C(\bF)$ is called {\em active} if $w_e(\bF) = (1+\eps)^{-\lev_e(\bF)}$, and {\em passive} if $w_e(\bF) < (1+\eps)^{-\lev_e(\bF)}$.
\end{define}

    We use a subscript $k$ on many notations to denote the corresponding sets defined on levels $[0, k]$. Formally, we denote:
    \begin{itemize}
        \item $\calS_k(\bF):=\{s\in \calS(\bF): \lev_s(\bF) \le k\}$, sets in the primal solution at levels $[0, k]$.
        \item $C_k(\bF):=\{e\in C(\bF): \lev_e(\bF)\le k\}$, the set of covered elements at levels $[0, k]$;
        \item  $L_k(\bF):=\{e\in L \cap C_k(\bF) : \lev_e(\bF) \le k\}$,  the set of live elements at levels $[0, k]$;
        \item  $D_k(\bF):= \{e\in D \cap C_k(\bF) :\lev_e(\bF) \le k\}$,  the set of dormant elements at levels $[0, k]$;
        \item $A_k(\bF) := \{e\in L_k(\bF): w_e(\bF) = (1+\eps)^{-\lev_e(\bF)}\}$, the set of active live elements at levels $[0, k]$;
        \item $P_k(\bF) := \{e\in L_k(\bF): w_e(\bF) < (1+\eps)^{-\lev_e(\bF)}\}$, the set of passive live elements at levels $[0, k]$.
    \end{itemize}
    Note that $C_k(\bF) =  L_k(\bF) \uplus D_k(\bF), L_k(\bF) =  A_k(\bF) \uplus P_k(\bF)$.

\begin{define}
    A hierarchical solution $\bF$ is said to be $\eps$-tidy at level $k$ if $|D_k(\bF)| + |P_k(\bF)| \le \eps \cdot (|A_k(\bF)| + |E(\bF)|)$, and $\eps$-dirty at level $k$ otherwise.
    If $\bF$ is $\eps$-tidy at every level $k\in[0,\lmax]$, we say it is $\eps$-tidy.
\end{define}

\section{Fully Dynamic $O(f)$-Competitive Set Cover Algorithm}\label{sec:description-f}The algorithm maintains a set of primal-dual solutions, including a {\em foreground solution} $\bF^{(t)}$, and a set of $\lmax+1$ {\em background solutions} $\bB_k^{(t)}$ for levels $k\in[0,\lmax]$.
Intuitively, we expect $\bF^{(t)}$ to be a feasible primal-dual solution where majority of dual values are assigned to live elements, so that \Cref{lem:f-approx-general} guarantees good approximation. We can handle insertion and deletion by local fix (see \Cref{sec:foreground-thread}). But, doing so may accumulate dual values in dormant elements, as the local fix algorithm need to keep dormant elements to satisfy tight set invariant.
To clean up the dormant elements, we repeatedly run a rebuild procedure in parallel for each level $k$. The procedure rebuilds the levels $[0, k]$ of $\bF^{(t)}$ to form the background solution $\bB_k$. 
Since $\bB_k$ is identical to $\bF$ at levels $> k$, the algorithm only needs to maintain a partial solution of $\bB_k$ at levels $[0, k]$.  We denote $\Sstar(\bB_k)\subseteq \calS(\bB_k)$ to be the sets in the partial solution.
Ideally, the procedure can remove all dormant elements in $D_k(\bF)$ and raise all remaining passive elements in $P_k(\bF)$ to level $k+1$. After the rebuild finishes, the algorithm performs a switch step which replaces $\bF$ with the new solution $\bB_k$. Then, it aborts the rebuild procedures on lower levels, since they are based on an outdated version of $\bF$.

One issue here is that the switch step can replace a large subset of $\calS(\bF)$, yet must be executed in one time-step. To avoid large recourse, we have to gradually add $\Sstar(\bB_k)$ to the output solution before the switch step. It is tempting to include all background solutions in the output, but doing so will harm the approximation factor since there can be $O(\log n)$ background solutions. To resolve this, we only include a subset of background solutions in the output solution, and only allow these included background solutions to switch.

Algorithmically, we use a set of {\em buffer solutions} $\calR_k$ for $k\in[0,\lmax]$  to store copies of $\Sstar(\bB_k)$  before the potential switch steps.
The output solution consists of the foreground solution and all buffer solutions.
We remark that the buffer solutions are not primal-dual solutions; they only store  primal solutions in order to control recourse. The switch steps still happen between the foreground and background solutions.

\paragraph{Initialization.} The algorithm initializes the foreground, background, and buffer solutions to be all empty. (We assume $L^{(0)}=\emptyset$.)

\paragraph{Multi-threading.}
It will be convenient to view the algorithm as independent threads. The threads include:
\begin{enumerate}
    \item a foreground thread that handles updates in the foreground solution, and
    \item $\lmax+1$ background threads $T_k$, one for each level $k$, that gradually execute a primal-dual rebuild algorithm on the sub-universe $L_k(\bF^{(t)})$.
    The background threads also need to handle updates in the set $L_k(\bF^{(t)})$  because of changes in $\bF^{(t)}$. The actual sub-universe involved in the background thread is $L_k(\bF^{(t_0)})$ at the time $t_0$ when $T_k$ is created, and all elements inserted to levels $\le k$ of $\bF$ during the lifetime of $T_k$. 
\end{enumerate}

The foreground solution can only be modified by the foreground thread, or by the switch step at the termination of a background thread.

\paragraph{Sequential Behavior.}
Although we describe the algorithm as independent threads, the actual behavior of the algorithm within a time-step is sequential. A time-step is organized as follows:
\begin{enumerate}
    \item The algorithm receives the update (insertion or deletion).
    \item The foreground thread executes.
    \item The background threads execute one by one, from higher level to lower level. If a background thread is not running, the algorithm creates that thread.
    If a background thread terminates, the algorithm processes the switch triggered by the termination, and aborts all lower threads. The first thread is said to normally terminate.
\end{enumerate}

\subsection{Foreground Thread}\label{sec:foreground-thread}

On deletion of element $e$, it becomes dormant, but we keep its dual value in the foreground solution. The solution does not change.

On insertion of element $e$, let $s$ be the highest set in $\calS(\bF)$ covering $e$, i.e. $s=\arg\max_{s\in \calS(\bF), e\in s} \lev_s(\bF)$.
If such a set $s$ exists, we assign $e$ to $s$ by setting $\lev_e(\bF)=\lev_s(\bF)$ and $w_e(\bF)=0$.
Otherwise, $e$ is not covered by $\bF$. In this case, we pick the set $s$ with smallest dual slackness $1-w_s$ among the sets that cover $e$. We set $w_e(\bF)=1-w_s(\bF)$, so that $s$ becomes tight and dual feasibility is preserved. We add $s$ to $\calS(\bF)$ at level 0.

\subsection{Background Threads}

The background thread intends to run a primal-dual rebuild described in \Cref{alg:f-rebuild} at the pace of processing $\cspd$ elements per time-step to rebuild the sub-universe.
Before the rebuild, the thread performs a initialization procedure described in \Cref{alg:f-prepare}. This is called the preparation phase of the thread.
During the execution of \Cref{alg:f-prepare,alg:f-rebuild}, the sub-universe might be updated due to the foreground thread. 
Such an update must be insertion or deletion of an element in the sub-universe.
The updates are handled by \Cref{alg:f-insert,alg:f-delete} after the preparation phase. (The behavior in preparation phase will be described later.)
Within a time-step, a background thread first handles the update, then continues the rebuild process.

The algorithm runs primal-dual algorithm by lazily raising the dual value of the exposed elements in sub-universe. Denote $p_k$ to be a level pointer that decreases from $k+1$ to 0. Conceptually, all exposed elements in $E(\bB_k)$ have dual values raised to $(1+\eps)^{-p_k}$. Whenever a set becomes tight during this process, we place the set and all exposed element in the set at level $p_k$. By doing so, the elements become covered and their dual values are fixed.


\paragraph{Data Structures.}
The background solution $\bB_k$ should be viewed as a complete primal-dual solution that is identical to $\bF$ for elements with $\lev_e(\bF) \ge k+1$. However, the algorithms only maintain a partial solution that differs from $\bF$. Denote $\Ustar(\bB_k)\subseteq U(\bB_k)$ to be a sub-universe of elements that we explicitly maintain. \footnote{Precisely speaking, the sub-universe consists of elements in the initial sub-universe $L_k(\bF^{(t_0)})$ at the time-step $t_0$ when $T_k$ is created, and elements inserted to $L_k(\bF)$ after $t_0$. In addition, deleted elements may be removed from the sub-universe.}
Denote $\Cstar(\bB_k)$ to be elements in $\Ustar(\bB_k)$ with level defined, i.e., $\Cstar(\bB_k) := \Ustar(\bB_k)\cap C(\bB_k)$.
For each element $e\in \Ustar(\bB_k)$, we maintain its dual value $w_e(\bB_k)$; For each element $e\in \Cstar(\bB_k)$, we maintain its level $\lev_e(\bB_k)$.
The information of elements outside the sub-universe will only be accessed aggregately when the algorithm queries the total dual value of a set $s\in \calS$. Such queries can be efficiently retrieved from data structure of $\bF$.

We denote $\Sstar(\bB_k)$ to be the sets in the partial solution. Sets in $\Sstar(\bB_k)$ are tight sets of $\bB_k$ at levels $\le k+1$, and they can be combined with $\calS(\bF)\setminus \calS_k(\bF)$ to form $\calS(\bB_k)$.  These sets are explicitly maintained by the background thread $T_k$ 

The exposed elements $E(\bB_k)$ are viewed to participate an ongoing primal-dual algorithm, so that their dual values are regarded as $(1+\eps)^{-p_k}$ but their levels are not decided yet.
The algorithm explicitly maintains $E(\bB_k)$ and guarantee $E(\bB_k)= \Ustar(\bB_k)\setminus \Cstar(\bB_k)$.
Denote $\Sin(E)$ to be the collection of sets incident to $E$. The algorithm explicitly maintains $\Sin(E(\bB_k))$ in the sense that for each insertion and deletion of an element $e$ in $E(\bB_k)$, the algorithm updates $\Sin(E(\bB_k))$ accordingly by checking all sets incident to $e$.

The total dual value $w_s(\bB_k)$ of a set $s\in \calS$ consists of three parts:
\begin{enumerate}
    \item Elements at levels $\ge k+1$ of $\bF$ contribute their dual values in $\bF$.
    \item Elements in $\Cstar(\bB_k)$ contribute their dual values in $\bB_k$.
    \item Exposed elements contribute $|s\cap E(\bB_k)|\cdot (1+\eps)^{-p_k}$. This part is maintained lazily in the sense that the algorithm maintains $|s\cap E(\bB_k)|$ in appropriate data structures for each set $s\in \Sin(E(\bB_k))$.
\end{enumerate}
We maintain $w_s(\bB_k)$ for sets $s\in \Sin(\Ustar(\bB_k))$ explicitly, in the sense that an update on $w_e$ or whether $e\in E(\bB_k)$ triggers update on $w_s$ for all incident sets $s$.
For other sets $s\notin \Sin(\Ustar(\bB_k))$, we have $w_s(\bB_k)=w_s(\bF)$, and we can retrieve its total dual value from the foreground data structure.

\begin{algorithm}
    \caption{Initialization of primal-dual algorithm run by background thread $T_k$} \label{alg:f-prepare}
    $p_k\gets k+1$\;
    \ForEach{$e\in A_k(\bF)$}{
        $w_e(\bB_k) \gets (1+\epsilon)^{-(k+1)}$ \tcp{$e$ is added to $E(\bB_k)$.}
    }
    \ForEach{$e\in P_k(\bF)$}{
        Run \Cref{alg:f-insert} on element $e$.
    }
\end{algorithm}

\begin{algorithm}
    \caption{Primal-dual rebuild algorithm run by background thread $T_k$}\label{alg:f-rebuild}
    \For{$p_k \gets k+1$ downto $0$}{
        Raise $w_e(\bB_k)\gets (1+\eps)^{-p_k}$ for each $e\in E(\bB_k)$\;
        \While{$\exists s \in \Sin(E(\bB_k))$ s.t. $w_s(\bB_k) \ge (1+\eps)^{-1}$}{
        Add $s$ to $\Sstar(\bB_k)$ at level $p_k$\;
        \ForEach{$e\in s\cap E(\bB_k)$}{
            $\lev_e(\bB_k)\gets p_k, w_e(\bB_k)\gets (1+\eps)^{-p_k}$ \tcp{$e$ is moved from $E(\bB_k)$ to $\Cstar(\bB_k)$.}
        }
        \tcp{$s$ is removed from $\Sin(E(\bB_k))$.}
    }
    }
    Switch $\bF$ with $\bB_k$.
\end{algorithm}

\begin{algorithm}
    \caption{Handling deletion of an element $e$ in $L_k(\bF)$}\label{alg:f-delete}
    \eIf{$e\in s$ for some set $s\in \Sstar(\bB_k)$}{
        \If{$e\in E(\bB_k)$}{
            $\lev_e(\bB_k)\gets p_k, w_e(\bB_k)\gets (1+\eps)^{-p_k}$ \tcp{$e$ is moved from $E(\bB_k)$ to $\Cstar(\bB_k)$.}
        }
        $e$ becomes dormant in $\bB_k$ \tcp{$w_e(\bB_k)$ and $\lev_e(\bB_k)$ do not change.}
    }{
        Remove $e$ from $E(\bB_k)$ and $U^*(\bB_k)$ \tcp{$w_e(\bB_k)$ becomes 0.}
    }
\end{algorithm}
\begin{algorithm}
    \caption{Handling insertion of an element $e$ in $L_k(\bF)$}\label{alg:f-insert}  
    \If{$e\in s$ for some set $s\in \Sstar(\bB_k)$}{
        Let $s$ be the highest set in $\bB_k$ that contains $e$\;
        $\lev_e(\bB_k) \gets \lev_s(\bB_k), w_e(\bB_k)\gets 0$
        \tcp{$e$ is added to $\Cstar(\bB_k)$.}
    }\ElseIf{$1-\max_{s\in \calS: e\in s}\{w_s(\bB_k)\} < (1+\eps)^{-p_k} $}{
        $s\gets \arg\max_{s\in \calS: e\in s}\{w_s(\bB_k)\}$\;
        Add $s$ to $\Sstar(\bB_k)$ at level $p_k$\; 
        $\lev_e(\bB_k)\gets p_k, w_e(\bB_k)\gets 1-w_s(\bB_k)$
        \tcp{$e$ is added to $\Cstar(\bB_k)$.}
    }\Else{
        $w_e(\bB_k) \gets (1+\epsilon)^{-p_k}$ \tcp{$e$ is added to $E(\bB_k)$.}
    }
\end{algorithm}

Next, we describe the phases of a background thread $T_k$.
Let $\cspd$ be a large constant that will be decided later.
Specially, if the sub-universe is small, we do not separate the phases.
\begin{itemize}
    \item (Rule of base threads.) If $|L_k(\bF)|\le \cspd$ when $T_k$ is created, all work of $T_k$ can be finished in one time-step. In this case, we execute \Cref{alg:f-prepare,alg:f-rebuild} in one time-step.
    $T_k$ is not viewed to enter any phase.

    (According to the sequential behavior of threads and since $|L_k(\bF)|$ is monotone, the algorithm only processes the highest basic thread in a time-step.)
\end{itemize}

\paragraph{Preparation Phase.}
In the preparation phase, $T_k$ executes \Cref{alg:f-prepare} at the speed of processing $\cspd$ elements per time-step.

During the preparation phase, $L_k(\bF)$ might be modified by insertion and deletion (after the foreground thread decided that the update affects $L_k(\bF)$).
A deletion in $L_k(\bF)$ is handled by \Cref{alg:f-delete}.
An insertion in $L_k(\bF)$ during the second for loop for passive elements is handled by \Cref{alg:f-insert}.
Specially, on insertion of element $e$ during the first loop for active elements, we view $e$ as a passive element in $P_k(\bF)$ and simply append the new element to the list of second for loop. This special behavior is because we have not initialized the dual values of active elements during the first loop, which is necessary for \Cref{alg:f-insert}.

\paragraph{Computation Phase.}
In the computation phase, $T_k$ executes \Cref{alg:f-rebuild}. But, once $|E(\bB_k)|\le |\Sstar(\bB_k)|$ after finishing a while loop (adding a set), it pauses \Cref{alg:f-rebuild} and enters the suspension phase.

The bottleneck of each while loop is maintaining the data structures for elements in $s\cap E(\bB_k)$. The algorithm processes $\cspd$ elements per time-step.

In computation phase and later phases, updates in $L_k(\bF)$ are handled by \Cref{alg:f-delete,alg:f-insert}.

Specially, if the solution size is small, we do not pause the algorithm.
\begin{itemize}
    \item (Rule of shortcut threads.) If $|E(\bB_k)|\le\cspd$ and $|\Sstar(\bB_k)|\le \cspd$ during computation phase of $T_k$, all remaining work can be done in one time-step. In this case, we continue executing \Cref{alg:f-rebuild} to the end, and copy all sets in $\Sstar(\bB_k)$ to $\calR_k$ in one time-step. $T_k$ is viewed to terminate in computation phase, and not viewed to enter later phases.
\end{itemize}

\paragraph{Suspension Phase.}
When a thread $T_k$ enters the suspension phase, we take a snapshot of its solution size and denote it $\tausus_k$. (We have $\tausus_k>\cspd$
 by the rule of shortcut threads.) 

During the suspension phase, $T_k$ checks whether $\tausus_k\le \frac 12 \cdot \tau$, where $\tau$ is the minimum of $\tausus_j$ among the threads $T_j$ in copy and tail phases. 
If this condition holds, $T_k$ enters copy phase; otherwise, $T_k$ remains in suspension phase.

According to the sequential behavior of background threads, after a higher thread $T_k$ enters the copy phase, lower threads in suspension phase will compare to the updated threshold $\tau=\tausus_k$.

\paragraph{Copy Phase.}
During the copy phase, $T_k$ copies sets in $\Sstar(\bB_k)$ to $\calR_k$ at the speed of $\cspd$ sets per time-step. ($\Sstar(\bB_k)$ may grow by one set per time-step due to \Cref{alg:f-insert}, but cannot shrink. So, the order to copy sets is arbitrary since we can catch up the updates.)

\paragraph{Tail Phase.}
During the tail phase, $T_k$ resumes the primal-dual algorithm. 
In addition, we maintain $\calR_k$ to be a faithful copy of $\Sstar(\bB_k)$. 

If $|E(\bB_k)|\le \cspd$ during tail phase, all remaining work can be done in one time-step. In this case, we continue executing \Cref{alg:f-rebuild} to the end in the current time-step.

\paragraph{Termination.}
When a background thread normally terminates, at the end of \Cref{alg:f-rebuild}, the algorithm switches the $\bF$ with $\bB_k$.
In terms of primal solution, the algorithm removes sets $\calS_k(\bF)$ and moves $\calR_k$ to the foreground, which costs no insertion recourse. In terms of data structure for dual values, the algorithms discards the foreground data structures at levels $\le k$, moves the data structures at levels $\le k$  of $\bB_k$ to foreground, and merges the data structure at level $k+1$ of $\bB_k$ with the foreground data structure at level $k+1$.
All these operations can be done in one time-step.

\section{Analysis of the Fully Dynamic $O(f)$-Competitive Set Cover Algorithm}\label{sec:analysis-f}\subsection{Invariants}


\begin{fact}\label{fact:termination-Ek-empty}
    When a background thread $T_k$ normally terminates, $E(\bB_k)=\emptyset$.
\end{fact}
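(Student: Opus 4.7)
The plan is to argue directly from the structure of \Cref{alg:f-rebuild}, focusing on the state of $E(\bB_k)$ at the moment the for loop terminates (which is precisely when the switch triggers normal termination). Since updates are processed before the background thread resumes the rebuild within any time-step, and the tail phase's ``shortcut'' clause guarantees that \Cref{alg:f-rebuild} is run to completion in a single time-step once $|E(\bB_k)|\le\cspd$, we only need to examine what happens inside the final iteration of the for loop, when $p_k=0$.

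At the start of the $p_k=0$ iteration, the algorithm raises $w_e(\bB_k)\gets(1+\eps)^{0}=1$ for every $e\in E(\bB_k)$. I would then argue by contradiction: suppose $E(\bB_k)\ne\emptyset$ at the moment the for loop exits. Pick any $e\in E(\bB_k)$. Since every element of the ground universe has frequency at least one, there exists some $s\in\calS$ containing $e$; hence $s\in\Sin(E(\bB_k))$. Because $w_e(\bB_k)=1$ and all dual values are nonnegative, we have $w_s(\bB_k)\ge w_e(\bB_k)=1\ge(1+\eps)^{-1}$. But this means $s$ satisfies the while-loop condition, contradicting the fact that the while loop (and therefore the for loop) has already terminated. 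Thus $E(\bB_k)=\emptyset$ at the termination of \Cref{alg:f-rebuild}, which is exactly the moment of normal termination of $T_k$.

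The only subtlety I expect is making sure nothing can re-introduce elements into $E(\bB_k)$ between the final while-loop exit and the switch step. This requires observing two things: (i) within the final time-step of the tail phase, the update (if any) is handled by \Cref{alg:f-insert}/\Cref{alg:f-delete} \emph{before} the rebuild resumes, so any insertion into $L_k(\bF)$ has already been absorbed into the current state when the for loop performs its last iteration; and (ii) after the while loop exits with $p_k=0$ there are no further lines in \Cref{alg:f-rebuild} other than the switch itself, so no additional element can be placed in $E(\bB_k)$. Combining these with the contradiction argument above yields the claim.
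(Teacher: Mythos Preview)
Your proof is correct and follows essentially the same approach as the paper's: both argue by contradiction that if some $e\in E(\bB_k)$ remained when $p_k=0$, then any set $s\ni e$ would satisfy $w_s(\bB_k)\ge w_e(\bB_k)=1\ge(1+\eps)^{-1}$, so the while loop could not have exited. Your additional discussion of why no element can be re-inserted into $E(\bB_k)$ between the last while-loop exit and the switch is more explicit than the paper (which omits this point), but the core argument is the same.
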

\begin{proof}
Assume for contradiction that $e\in E(\bB_k)$ at the end of \Cref{alg:f-rebuild}. Then, there exists an set $s$ containing $e$, and $s\in \Sin(E(\bB_k))$ by definition of $\Sin(E(\bB_k))$. Since $w_s(\bB_k)\ge w_e(\bB_k) = (1+\eps)^{-p_k}=1$ when $p_k$ decreases to 0, $s$ should be selected by the while loop to cover $e$, a contradiction.
\end{proof}

\begin{lemma}\label{lem:f-switch-not-change-subuniverse}
    When a background thread $T_k$ normally terminates, the switch step cannot change $L_j(\bF)$ at any level $j \ge k+1$.
\end{lemma}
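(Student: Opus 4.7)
My plan is to prove this lemma by tracking which live elements can possibly have their foreground level changed by the switch step, and verifying that all such changes stay within levels $[0,k+1]$. The switch replaces the foreground data at levels $[0,k]$ with the corresponding part of $\bB_k$, and merges the level $k+1$ data structures. Hence the foreground level of any element $e$ with $\lev_e(\bF^{(t^-)}) \ge k+2$ is untouched, and any element at level $k+1$ pre-switch remains at level $k+1$ post-switch via the merge. Thus the only way $L_j(\bF)$ could change for some $j \ge k+1$ is if an element outside $L_k(\bF^{(t^-)})$ gets placed at level $\le k+1$ post-switch, or an element in $L_k(\bF^{(t^-)})$ fails to appear at any level $\le k+1$ post-switch. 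Both scenarios involve the maintained sub-universe $\Ustar(\bB_k)$ mismatching $L_k(\bF)$.

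The core step is therefore to establish the auxiliary invariant that $\Ustar(\bB_k) \cap L = L_k(\bF)$ holds throughout the lifetime of $T_k$. I would verify this by induction on time-steps during the lifetime of $T_k$: the base case follows from the initialization, which loads exactly $L_k(\bF^{(t_0)})$ into $\Ustar(\bB_k)$; for the inductive step, insertions into $L_k(\bF)$ are echoed into $\Ustar(\bB_k)$ by \Cref{alg:f-insert}, deletions from $L_k(\bF)$ remove the element from the live portion of $\Ustar(\bB_k)$ by \Cref{alg:f-delete}, and the only other event that could modify $\bF$ during $T_k$'s lifetime is a switch by some lower thread $T_{k'}$ with $k' < k$. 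By the inductive hypothesis of the present lemma applied to $T_{k'}$ (a strong induction on the time of normal termination), such a switch cannot change $L_{k'+1}(\bF) \supseteq L_k(\bF)$ wait, that's wrong direction; the statement gives that $L_j(\bF)$ for $j \ge k'+1$ is preserved, and since $k \ge k'+1$, we get that $L_k(\bF)$ is unchanged by this lower switch, as required.

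At normal termination, \Cref{fact:termination-Ek-empty} gives $E(\bB_k) = \emptyset$, so $\Ustar(\bB_k) = \Cstar(\bB_k)$; combined with the auxiliary invariant this yields $\Cstar(\bB_k) \cap L = L_k(\bF^{(t^-)})$. I then compare pre and post-switch explicitly. For any live element $e$: if $\lev_e(\bF^{(t^-)}) \in [k+1, j]$, then $e \notin \Ustar(\bB_k)$ (since the invariant would otherwise force $\lev_e(\bF^{(t^-)}) \le k$), its foreground level is untouched by the switch, so it contributes to both $L_j(\bF^{(t^-)})$ and $L_j(\bF^{(t^+)})$; if $\lev_e(\bF^{(t^-)}) \le k$, then $e \in L_k(\bF^{(t^-)}) = \Cstar(\bB_k) \cap L$, so post-switch $\lev_e(\bF^{(t^+)}) = \lev_e(\bB_k) \le k+1 \le j$, and $e$ again contributes to both sides. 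Conversely, every live element of $L_j(\bF^{(t^+)})$ arises either from $\Cstar(\bB_k)$ (hence from $L_k(\bF^{(t^-)}) \subseteq L_j(\bF^{(t^-)})$) or from an unchanged foreground level in $[k+1, j]$.

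The main obstacle I anticipate is the auxiliary invariant, because it requires interleaving the inductive argument about the present lemma with the step-by-step tracking of $\Ustar(\bB_k)$ across all four possible causes of modification (insertion, deletion, the background thread itself placing elements into $\Cstar$ or $E$, and switches by strictly lower threads). In particular, one must check that $T_k$'s own internal moves between $E(\bB_k)$ and $\Cstar(\bB_k)$ never change the \emph{live part} of $\Ustar(\bB_k)$, and that \Cref{alg:f-insert}'s three branches each correctly add $e$ to $\Ustar(\bB_k)$ so that every live element at foreground levels $\le k$ is accounted for when the eventual switch fires.
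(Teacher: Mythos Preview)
Your proposal is correct and follows essentially the same approach as the paper: both use strong induction on the time of normal termination, invoke the inductive hypothesis to argue that switches by strictly lower threads $T_{k'}$ (with $k' < k$) leave $L_k(\bF)$ unchanged during $T_k$'s lifetime, and then establish the two inclusions $L_k(\bF^{(t^-)}) \subseteq \Cstar(\bB_k)\cap L$ and $\Cstar(\bB_k)\cap L \subseteq L_k(\bF^{(t^-)})$ at the switch moment. The only presentational difference is that you isolate the auxiliary invariant $\Ustar(\bB_k)\cap L = L_k(\bF)$ and maintain it throughout $T_k$'s lifetime, whereas the paper verifies the needed equality only at termination by tracing each element back to its point of entry into $\Ustar(\bB_k)$; the underlying case analysis (initial sub-universe versus later insertion, and the three branches of \Cref{alg:f-insert}) is the same.
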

\begin{proof}
    Assume for contradiction that the earliest violation happens at time-step $t$ for levels $j>k$. Let $t^-, t^+$ respectively denote the time before and after the switch step at termination of $T_j$. Suppose $T_j$ is created at time-step $\tprep$.

    The switch step removes all elements in $L_k(\bF^{(t^-)})$ from $\bF$. We first show that any element $e\in L_k(\bF^{(t^-)})$ is added back to $L_j(\bF^{(t^+)})$. By the assumption that the lemma holds before $t$, $L_j(\bF)$ cannot be modified by the termination of lower-level threads during $[\tprep, t]$. So, $e$ is either in the initial sub-universe $L_j(\bF^{(\tprep)})$ or inserted to the sub-universe after $\tprep$. According to \Cref{alg:f-prepare,alg:f-insert}, at initialization or insertion, $e$ is either assigned $\lev_e(\bB_k) \le k+1 \le j$ or added to $E(\bB_k)$. In the latter case, $e$ must be removed from $E(\bB_k)$ before the termination by \Cref{fact:termination-Ek-empty}. Since $e\in L^{(t)}$, $e$ can only be removed from $E(\bB_k)$ by \Cref{alg:f-rebuild}, and $e$ must be assigned $\lev_e(\bB_k) \le k+1 \le j$ when removed from $E(\bB_k)$. The algorithm cannot modify $\lev_e(\bB_k)$ after it is assigned. So, we have $\lev_e(\bB_k) \le j$ before the switch. Since $\lev_e(\bF) = \lev_e(\bB_k)$ after the switch, we have $e\in L_j(\bF^{(t^+)})$.

    The switch step inserts the partial solution of $\bB_k$ to foreground. Since $E(\bB_k)=\emptyset$ by \Cref{fact:termination-Ek-empty} and $\Ustar(\bB_k)=\Cstar(\bB_k)\cup E(\bB_k)$, the switch step can only add elements in $\Cstar(\bB_k)$ to foreground. Among these elements, the dormant elements in $D^{(t)}$ are not involved in the lemma. We next show that the live elements in $\Cstar(\bB_k)$ must belong to $L_k(\bF^{(t^-)})$ before the switch.
    An element $e$ can be added to $\Cstar(\bB_k)$ in three ways: (1) in \Cref{alg:f-prepare} as a passive element, (2) in the inner loop of \Cref{alg:f-rebuild}, (3) in \Cref{alg:f-insert} as a passive element. Let $t_e\in[\tprep, t]$ be the time-step when $e$ is added to $\Cstar(\bB_k)$. In all three cases, $e$ is either in $L_k(\bF)$ or inserted into $L_k(\bF)$ at $t_e$. By the assumption that the lemma holds before $t$, $L_k(\bF)$ can only be modified by the foreground thread during $[\tprep, t-1]$. This is also true at time-step $t$ before the switch, since there cannot be another normal termination at time-step $t$ by the sequential behavior of threads.  So, $\lev_e(\bF)$ cannot change from $t_e$ to $t^-$, and $e\in L_k(\bF^{(t^-)})$.
\end{proof}

The above lemma implies that during the lifetime of a background thread $T_k$, $L_k(\bF)$ is stable in the sense that $L_k(\bF)$ can only be modified by the foreground thread, and cannot be modified by termination of lower background threads. (It also cannot be modified by higher background threads since termination of higher threads will abort $T_k$.)
Since all modification of $L_k(\bF)$ due to foreground thread are handled by \Cref{alg:f-delete,alg:f-insert}, it is valid to explicitly maintain $E(\bB_k)$.
\begin{lemma}\label{lem:f-Ek-correct}
    For any background thread $T_k$, we have $E(\bB_k)\subseteq L_k(\bF)$, and any element $e\in E(\bB_k)$ cannot belong to any set in $\Sstar(\bB_k)$ at levels $> p_k$.
\end{lemma}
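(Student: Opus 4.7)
\textbf{Proof plan for Lemma \ref{lem:f-Ek-correct}.} I will prove both claims simultaneously by induction on the sequence of events that can modify either $E(\bB_k)$, $\Sstar(\bB_k)$, or the pointer $p_k$ during the lifetime of $T_k$. The base case is the moment $T_k$ is created, when $E(\bB_k) = \emptyset$ and $\Sstar(\bB_k) = \emptyset$, so both claims hold trivially. A key preliminary observation, which I will invoke throughout, is \Cref{lem:f-switch-not-change-subuniverse}: during the lifetime of $T_k$, the set $L_k(\bF)$ can only be modified by the foreground thread (terminations of higher threads abort $T_k$, and terminations of lower threads do not touch $L_k(\bF)$). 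Every such foreground modification is mirrored into $T_k$ via \Cref{alg:f-delete,alg:f-insert}, so it is enough to audit those routines together with \Cref{alg:f-prepare,alg:f-rebuild}.

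The inductive step is a short case analysis over the possible events. (1) In \Cref{alg:f-prepare}, active elements of $A_k(\bF) \subseteq L_k(\bF)$ are added directly to $E(\bB_k)$, which is safe for both invariants since $\Sstar(\bB_k)$ is still empty; passive elements are routed through \Cref{alg:f-insert} (handled below). (2) The inner \textbf{while} loop of \Cref{alg:f-rebuild} adds a set $s$ to $\Sstar(\bB_k)$ at the current level $p_k$ and simultaneously removes every $e \in s \cap E(\bB_k)$ from $E(\bB_k)$; so after the step no $e \in E(\bB_k)$ lies in $s$, and since only elements leave $E(\bB_k)$ both invariants are preserved. (3) The \textbf{for} loop of \Cref{alg:f-rebuild} decrements $p_k$ from some old value $p_k^{\mathrm{old}}$ to $p_k^{\mathrm{new}}$; for the second claim I must check that no $e \in E(\bB_k)$ lies in a set $s \in \Sstar(\bB_k)$ with $\lev_s(\bB_k) = p_k^{\mathrm{old}} > p_k^{\mathrm{new}}$. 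But by case~(2) above, when $s$ was inserted all $e \in s \cap E(\bB_k)$ were removed; and by the inductive hypothesis applied to the intervening updates (cases~(4) and~(5) below), no new element in $s$ could have been inserted into $E(\bB_k)$ in the meantime. (4) \Cref{alg:f-delete} acts on $e \in L_k(\bF)$. If $e \in E(\bB_k)$ prior to the deletion, the second inductive hypothesis says $e$ is in no $s \in \Sstar(\bB_k)$, so the \textbf{else} branch fires and $e$ is purged from $E(\bB_k)$; otherwise $e \in \Cstar(\bB_k)$, and the event does not touch $E(\bB_k)$. Either way both claims persist. (5) \Cref{alg:f-insert} adds $e$ to $E(\bB_k)$ only in its final \textbf{else} branch, which is explicitly guarded by the condition that $e$ lies in \emph{no} set of $\Sstar(\bB_k)$ (and $e$ is being inserted into $L_k(\bF)$, hence into $L_k(\bF)$), so both invariants are preserved.

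The only genuinely nontrivial piece of bookkeeping is case~(3), which is where the qualifier ``at levels $> p_k$'' (rather than the stronger ``in $\Sstar(\bB_k)$'') becomes essential: inside the \textbf{while} loop of \Cref{alg:f-rebuild}, after a newly tight set $s$ is inserted into $\Sstar(\bB_k)$ but before the inner \textbf{for} loop has processed all of $s \cap E(\bB_k)$, elements of $E(\bB_k)$ can transiently lie in a set at the \emph{current} level $p_k$. This is why the statement only forbids membership in sets of strictly higher level. I expect this case to be the main obstacle to state cleanly, but once the bookkeeping of case~(2) is set up the argument goes through as above. Combining the cases completes the induction and establishes both $E(\bB_k) \subseteq L_k(\bF)$ and the level restriction.
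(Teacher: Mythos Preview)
Your case analysis has a genuine gap in case~(3). You write that ``by case~(2) above, when $s$ was inserted all $e \in s \cap E(\bB_k)$ were removed'', but case~(2) only covers sets added by the \textbf{while} loop of \Cref{alg:f-rebuild}. A set $s$ can also be added to $\Sstar(\bB_k)$ at level $p_k$ through the \textbf{elseif} branch of \Cref{alg:f-insert} (invoked either directly on an insertion or from the second loop of \Cref{alg:f-prepare}); your case~(5) discusses only the final \textbf{else} branch of \Cref{alg:f-insert} and says nothing about this. When a set is added via the \textbf{elseif} branch, only the single element being processed is moved to $\Cstar(\bB_k)$; any other elements of $s$ that were already in $E(\bB_k)$ stay exposed. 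So immediately after such a step there can be exposed elements lying in a set of $\Sstar(\bB_k)$ at the current level $p_k$, and your claim that ``all $e\in s\cap E(\bB_k)$ were removed when $s$ was inserted'' is simply false for these sets. When $p_k$ is then decremented, you have no argument ruling them out.

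The paper closes this gap with a different observation: any set added through \Cref{alg:f-insert} is made tight (we set $w_e = 1 - w_s$, so afterwards $w_s = 1$), and the background thread never decreases $w_s(\bB_k)$. Hence if such an $s$ still had an exposed element at the moment $p_k$ is about to be decremented, then $s \in \Sin(E(\bB_k))$ with $w_s(\bB_k) \ge (1+\eps)^{-1}$, contradicting the termination condition of the \textbf{while} loop in \Cref{alg:f-rebuild}. You need this tightness-plus-loop-exit argument (or an equivalent) to finish the proof; the purely structural ``everything was swept out at insertion time'' argument does not cover all the ways sets enter $\Sstar(\bB_k)$. A smaller slip: in case~(4) you invoke the inductive hypothesis as ``$e$ is in no $s\in\Sstar(\bB_k)$'', but the hypothesis only rules out levels strictly above $p_k$; fortunately both branches of \Cref{alg:f-delete} remove $e$ from $E(\bB_k)$, so your conclusion there survives despite the misquotation.
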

\begin{proof}
    For the first statement, we prove that (1) any element added to $E(\bB_k)$ is from $L_k(\bF)$. This is easy to check in \Cref{alg:f-prepare,alg:f-insert}. (2) Any deletion from $E(\bB_k)\cap L_k(\bF)$ in $L_k(\bF)$ is also removed from $E(\bB_k)$. By \Cref{lem:f-switch-not-change-subuniverse}, $L_k(\bF)$ can only be modified due to insertion and deletion during the lifetime of $T_k$. Whenever an element $e\in E(\bB_k)$ is removed from $L_k(\bF)$ due to a deletion, \Cref{alg:f-delete} will remove $e$ from $E(\bB_k)$. 

    We prove the second statement by induction on time for each element $e$. For the base case, we prove that $e$ does not belong to any set in $\Sstar(\bB_k)$ when added to $E(\bB_k)$. If $e$ is added to $E(\bB_k)$ in the first loop of \Cref{alg:f-prepare}, $\Sstar(\bB_k)$ is empty; If $e$ is added to $E(\bB_k)$ by \Cref{alg:f-insert} or in the second loop of \Cref{alg:f-prepare}, the algorithm guarantees that $e$ does not belong to any set in $\Sstar(\bB_k)$.
    
    For the inductive step, since the levels in $\bB_k$ cannot change once determined, and the algorithm is only allowed to add sets at level $p_k$, the only nontrivial case is when $p_k$ decreases. Assume for contradiction that $e$ is contained in a set $s\in \Sstar(\bB_k)$ at level $p_k$ when the outer loop of $p_k$ finishes in \Cref{alg:f-rebuild}. $s$ can be added to $\Sstar(\bB_k)$ by \Cref{alg:f-rebuild}, or by \Cref{alg:f-prepare,alg:f-insert}. In the former case, every element in $s\cap E(\bB_k)$ should be removed from $E(\bB_k)$, a contradiction. In the latter case, $s$ must be a tight set when added to $\Sstar(\bB_k)$ by \Cref{alg:f-prepare,alg:f-insert}. Since the algorithms cannot decrease $w_s(\bB_k)$, $s$ is still tight at the end of outer loop of $p_k$. Also, $s\in \Sin(E(\bB_k))$ since $e\in E(\bB_k)$. This contradicts the termination condition of while loop in \Cref{alg:f-rebuild}.
\end{proof}

Next, we establish the invariants required in the definition of primal-dual solution and hierarchical solution.

\begin{lemma}[Level invariant]\label{lem:f-level-invariant}
    In the foreground solution and all background solutions, any element $e\in C(\bB_k)$ satisfies $w_e(\bB_k) \le (1+\eps)^{-\lev_e(\bB_k)}$.
\end{lemma}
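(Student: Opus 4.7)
The plan is to prove the level invariant by induction over time, checking every operation in the algorithm that can modify either $w_e$ or $\lev_e$ for a covered element, and verifying that the invariant is preserved (or newly established) in each case. The invariant is trivially satisfied at initialization, since the foreground and all background solutions are empty, so $C(\bF) = C(\bB_k) = \emptyset$.

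For the inductive step, I would walk through the places where $w_e$ or $\lev_e$ can change for the foreground $\bF$ and each background $\bB_k$ separately. For $\bF$: on a deletion, neither $w_e$ nor $\lev_e$ changes; on an insertion of $e$, either we attach $e$ to the highest set $s\in\calS(\bF)$ covering it and set $w_e=0, \lev_e=\lev_s$ (trivially $0\le(1+\eps)^{-\lev_e}$), or $e$ is not yet covered, in which case $\lev_e$ is set to $0$ and $w_e=1-w_s\le 1=(1+\eps)^{0}$. For $\bB_k$, I would argue case-by-case through each of the four sub-algorithms: (i) in \Cref{alg:f-prepare}, active elements are placed in $E(\bB_k)$ (so their level is undefined and the invariant is vacuous), and passive elements are routed through \Cref{alg:f-insert}; (ii) in \Cref{alg:f-rebuild}, raising $w_e$ for exposed elements keeps them in $E(\bB_k)$ where the invariant does not apply, and when a set $s$ becomes tight and is placed at level $p_k$, every newly covered $e\in s\cap E(\bB_k)$ is assigned $\lev_e=p_k$ and $w_e=(1+\eps)^{-p_k}$, giving equality; (iii) in \Cref{alg:f-delete}, the only nontrivial case is when $e\in E(\bB_k)\cap s$ for some $s\in\Sstar(\bB_k)$, where $e$ is assigned $\lev_e=p_k$ and $w_e=(1+\eps)^{-p_k}$, again with equality, after which $e$ becomes dormant without modifying these values; (iv) in \Cref{alg:f-insert}, if $e$ is in some set of $\Sstar(\bB_k)$, then we set $w_e=0$; if instead the guard $1-\max_s w_s(\bB_k)<(1+\eps)^{-p_k}$ triggers, we assign $w_e=1-w_s<(1+\eps)^{-p_k}=(1+\eps)^{-\lev_e}$; otherwise $e$ is exposed and the invariant is vacuous.

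Finally, I must handle the switch at normal termination of $T_k$, which replaces $\bF$'s levels $[0,k]$ by the corresponding part of $\bB_k$ and merges the level-$k+1$ data structures. Here the new foreground levels $[0,k]$ inherit the invariant from $\bB_k$ (by the inductive hypothesis applied to $\bB_k$), while the unchanged levels $\ge k+2$ of $\bF$ inherit it from the old $\bF$. For level $k+1$, the merged solution contains two disjoint groups of covered elements: those already at level $k+1$ in the old $\bF$ (whose $w_e,\lev_e$ are untouched), and those newly promoted from $\Sstar(\bB_k)$ at level $k+1$ (which satisfy the invariant by the inductive hypothesis on $\bB_k$). By \Cref{fact:termination-Ek-empty}, $E(\bB_k)=\emptyset$ at termination, so every covered element in $\bB_k$ has a well-defined level, and nothing is left in an ambiguous state.

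The main obstacle, I expect, will not be any single case but rather making sure the level-$k+1$ merge at the switch is argued cleanly: one has to be careful that the elements contributed by $\bB_k$ at level $k+1$ are disjoint from those already sitting at level $k+1$ in $\bF$ (which follows from the discipline that $\bB_k$ only maintains information for its partial sub-universe $\Ustar(\bB_k)$, together with \Cref{lem:f-switch-not-change-subuniverse}), and that the per-element values $(w_e,\lev_e)$ transported into $\bF$ are exactly those from $\bB_k$, so the invariant is preserved verbatim rather than modified by the merge.
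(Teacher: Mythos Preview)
Your proposal is correct and follows essentially the same induction-over-time case analysis as the paper's proof. The paper's version is terser---it dispatches the switch step in one line (``values are copied from the background solution, so the invariant holds by the inductive hypothesis'') rather than worrying about the level-$(k{+}1)$ merge, and it handles all the background cases with the single observation that whenever $T_k$ assigns $\lev_e(\bB_k)$ it sets $\lev_e=p_k$ and $w_e\le(1+\eps)^{-p_k}$, after which these values are frozen---but the substance is the same.
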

\begin{proof}
    We prove the invariant by induction on time. Initially the statement is true.

    The foreground solution can be modified in the following cases.
    \begin{enumerate}
        \item[-] Insertion in foreground thread: The algorithm either sets $w_e(\bF)=0$, or sets $\lev_e(\bF)=0$. The invariant holds in both cases.
        \item[-] Deletion in foreground thread: $\lev_e(\bF)$ and $w_e(\bF)$ do not change.
        \item[-] Switch step: $\lev_e(\bF)$ and $w_e(\bF)$ can only be modified by copying from the background solution. The invariant holds by inductive hypothesis for background solutions before the switch.
    \end{enumerate}
    
    For a background thread $T_k$, initially the invariant for elements outside $L_k(\bF)$ follows inductive hypothesis for the foreground solution.
    Whenever $T_k$ determine $\lev_e(\bB_k)$ for an element $e$ (in the inner loop of \Cref{alg:f-rebuild}, or in \Cref{alg:f-prepare,alg:f-insert} as a passive element), we have $\lev_e(\bB_k) = p_k$, $w_e(\bB_k)\le (1+\eps)^{-p_k}$ in all cases.
    The levels and dual values cannot change for elements in $\Cstar(\bB_k)$.
\end{proof}

\begin{lemma}[Highest level invariant]\label{lem:f-highest-level-invariant}
    For any element $e\in C(\bF)$ (resp.\ $\Cstar(\bB_k)$), $\lev_e(\bF)$ (resp.\ $\lev_e(\bB_k)$) is equal to the maximum level of sets in $\calS(\bF)$ (resp.\ $\Sstar(\bB_k)$) that contains it.
\end{lemma}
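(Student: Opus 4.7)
The plan is to prove the invariant for all solutions simultaneously by induction over time, and within each time-step by induction over the sequence of data-structure operations. In the base case, both $\bF$ and every $\bB_k$ are empty, so the invariant holds vacuously. For the inductive step, the foreground solution $\bF$ can change in three ways: via the foreground thread on an insertion or deletion, or via a switch step at the normal termination of some background thread; and a partial background solution $\Sstar(\bB_k)$ can change via \Cref{alg:f-prepare}, \Cref{alg:f-rebuild}, \Cref{alg:f-insert}, or \Cref{alg:f-delete}. I will verify each case separately.

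For the foreground thread: on a deletion, neither $\lev_e(\bF)$ nor $\calS(\bF)$ changes, so the invariant is preserved. On an insertion of element $e$, the algorithm explicitly chooses the highest set $s\in\calS(\bF)$ containing $e$ (or creates a new level-$0$ set if none exists) and sets $\lev_e(\bF)=\lev_s(\bF)$, which is exactly what the invariant demands for $e$; moreover this operation adds no new set at a level exceeding $\lev_e(\bF)$, so invariants for existing covered elements are unaffected. The delicate case is the switch step triggered by the normal termination of $T_k$. Here I would use \Cref{fact:termination-Ek-empty} together with \Cref{lem:f-switch-not-change-subuniverse} to argue that (i) elements with $\lev_e(\bF)\ge k+1$ keep the same level and set, so their invariants are inherited from the pre-switch $\bF$, and (ii) elements at levels $\le k+1$ after the switch inherit their levels from $\bB_k$, so their invariants follow from the (inductive) invariant for $\bB_k$ at levels $\le k+1$, provided the only sets added at levels $\le k+1$ come from $\Sstar(\bB_k)$. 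This is exactly how the switch is defined, so no set in the post-switch $\calS(\bF)\setminus \Sstar(\bB_k)$ lies at a level $\le k+1$ that could violate the invariant.

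For a background thread $T_k$: the only places where an element $e$ enters $\Cstar(\bB_k)$ or a set enters $\Sstar(\bB_k)$ are the inner loop of \Cref{alg:f-rebuild} and the branches of \Cref{alg:f-prepare}, \Cref{alg:f-insert}, and \Cref{alg:f-delete}. In the \Cref{alg:f-insert} branch ``$e\in s$ for some $s\in\Sstar(\bB_k)$'', the algorithm sets $\lev_e(\bB_k)=\lev_s(\bB_k)$ with $s$ chosen to be the highest set in $\bB_k$ containing $e$, which is exactly the invariant for $e$; the second branch of \Cref{alg:f-insert} introduces a brand-new set $s$ at level $p_k$ and assigns $\lev_e(\bB_k)=p_k$, and by \Cref{fact:pointer-monotone}-style monotonicity of $p_k$ together with the inductive invariant (combined with the fact that $e$ was not in any existing $s'\in\Sstar(\bB_k)$ by the branch condition), this is the highest such level. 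The same monotonicity argument handles \Cref{alg:f-delete} when $e\in E(\bB_k)$ is assigned level $p_k$. The critical case is the inner loop of \Cref{alg:f-rebuild}, where a new set $s$ is added at level $p_k$ and every $e\in s\cap E(\bB_k)$ receives $\lev_e(\bB_k)=p_k$; I must rule out that some already-existing set $s'\in\Sstar(\bB_k)$ at a strictly higher level contains $e$, and this is precisely the second clause of \Cref{lem:f-Ek-correct}. Finally, I need to confirm that no previously-assigned level $\lev_{e'}(\bB_k)$ is violated when $s$ joins $\Sstar(\bB_k)$ at level $p_k$: since $p_k$ is monotone non-increasing, any $e'\in\Cstar(\bB_k)$ with $\lev_{e'}(\bB_k)\ge p_k$ was assigned to a set already at level $\ge p_k$, so adding $s$ at level $p_k$ cannot create a higher-level candidate.

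The main obstacle, in my view, is the inner-loop case of \Cref{alg:f-rebuild}: everything ultimately depends on invoking \Cref{lem:f-Ek-correct} to prevent a newly-covered element from having a stale, higher-level covering set in $\Sstar(\bB_k)$. The switch-step case is also delicate, but its work has already been outsourced to \Cref{lem:f-switch-not-change-subuniverse} and \Cref{fact:termination-Ek-empty}, so it reduces to carefully tracking which sub-collection of $\calS(\bF)$ is replaced. Everything else is a straightforward case analysis driven by the control flow of the four algorithms.
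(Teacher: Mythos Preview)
Your overall approach matches the paper's: both proceed by induction on time with a case analysis over all operations that can modify $\bF$ or $\Sstar(\bB_k)$, and both pinpoint \Cref{lem:f-Ek-correct} as the key ingredient for the inner loop of \Cref{alg:f-rebuild}. Your treatment of the background-solution cases is essentially the same as the paper's (and in fact you are more careful than the paper about the branch of \Cref{alg:f-delete} that moves an exposed element into $\Cstar(\bB_k)$).

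However, your switch-step argument has a genuine gap. You write that after the switch ``no set in the post-switch $\calS(\bF)\setminus \Sstar(\bB_k)$ lies at a level $\le k+1$.'' This is false: the switch only removes $\calS_k(\bF)$, i.e.\ sets at levels $\le k$, so the old foreground sets at level $k{+}1$ (and above) survive and sit in $\calS(\bF)\setminus \Sstar(\bB_k)$. More importantly, your argument never rules out the real danger, which is in the \emph{other} direction: an element $e$ that was in $C_k(\bF^{(t^-)})$ and now receives $\lev_e(\bB_k)\le k{+}1$ might a priori belong to some surviving set $s'\in\calS(\bF^{(t^-)})$ with $\lev_{s'}(\bF)\ge k{+}1$, which would break the invariant after the switch. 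The paper closes this gap by invoking the inductive hypothesis for the \emph{pre-switch} $\bF$: since $\lev_e(\bF^{(t^-)})\le k$, the invariant for $\bF^{(t^-)}$ forces every set in $\calS(\bF^{(t^-)})$ containing $e$ to have level $\le k$, and hence all such sets are removed by the switch. Neither \Cref{fact:termination-Ek-empty} nor \Cref{lem:f-switch-not-change-subuniverse} supplies this conclusion; you need the inductive hypothesis for $\bF$ directly. Once you insert this one line, your argument goes through.
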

\begin{proof}
    We prove the invariant by induction on time. Initially the statement is true.

    The foreground solution can be modified in the following cases.
    \begin{enumerate}
        \item[-] Insertion in foreground thread: The algorithm directly guarantees the invariant.
        \item[-] Deletion in foreground thread: The solution does not change.
        \item[-] Switch step at termination of $T_k$: Let $t^-, t^+$ be the time before and after switch. The elements $e$ with $\lev_e(\bF^{(t^-)})\ge k+1$ and the highest set containing $e$ are not modified. Next, consider elements $e\in C_k(\bF^{(t^-)})\cap C(\bF^{(t^+)})$. By inductive hypothesis, $e$ cannot belong to any set at levels $\ge k+1$ of $\bF^{(t^-)}$.
        As argued in \Cref{lem:f-switch-not-change-subuniverse}, $e$ must be covered by a set in $\Sstar(\bB_k^{(t^-)})$ at levels $\le k+1$. 
        By the inductive hypothesis for $\bB_k^{(t^-)}$, $\lev_e(\bB_k^{(t^-)})$ is equal to the highest level of sets covering $e$ in $\Sstar(\bB_k^{(t^-)})$, which is also the highest level of sets covering $e$ in $\calS(\bF^{(t^+)})$.
    \end{enumerate}

     For a background thread $T_k$, initially $\Sstar(\bB_k)$ is empty. A background solution $\bB_k$ can be modified in the following cases.
    \begin{enumerate}
        \item[-] Adding a passive element in \Cref{alg:f-prepare,alg:f-insert}: The algorithm directly guarantees the invariant.
        \item[-] Deletion in \Cref{alg:f-delete}: The solution does not change.
        \item[-] Adding a tight set in \Cref{alg:f-rebuild}: For each $e\in s\cap E(\bB_k)$, we set $\lev_e=\lev_s=p_k$. By \Cref{lem:f-Ek-correct}, $e$ cannot belong to an existing higher set in $\Sstar(\bB_k)$. Since the algorithms of $T_k$ can only add sets at level $p_k$ and the level pointer $p_k$ is monotone decreasing, there cannot be a higher set added to $\Sstar(\bB_k)$ in the future.
    \end{enumerate}
\end{proof}

\begin{lemma}[Dual feasibility invariant]\label{lem:f-dual-feasible}
    For the dual value $w$ defined by the foreground solution or any background solution, $w_s\le 1$ for all sets $s\in \calS$.
\end{lemma}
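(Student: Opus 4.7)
The plan is to prove the invariant by induction on the time-step $t$. The base case is trivial because every solution starts empty, so $w_s=0$ for all $s\in\calS$ in both $\bF$ and every $\bB_k$. For the inductive step, I would enumerate every operation that can modify a dual value and verify that it preserves $w_s\le 1$ for every $s$. The operations split into: (a) a foreground insertion or deletion; (b) the switch step at normal termination of some $T_k$; (c) the preparation loops (Alg.~\ref{alg:f-prepare}); (d) the raise step and the inner while loop of Alg.~\ref{alg:f-rebuild}; (e) the update handlers Alg.~\ref{alg:f-insert} and Alg.~\ref{alg:f-delete}. Deletions in either solution can only remove or hold fixed a previously contributing $w_e$, so (a-deletion) and (e-deletion) are automatic. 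The switch step replaces levels $\le k+1$ of $\bF$ by the corresponding part of $\bB_k$, leaving $w_s(\bF^{(t^+)})=w_s(\bB_k^{(t^-)})$, so (b) inherits the bound from the induction hypothesis applied to $\bB_k$.

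For foreground insertion of an element $e$ not already covered by $\calS(\bF)$, the algorithm selects $s^{*}=\arg\max_{s:\,e\in s}w_s(\bF)$ and sets $w_e(\bF)=1-w_{s^{*}}(\bF)$. Then $w_{s^{*}}$ becomes exactly $1$, and for any other $s'\ni e$ we have $w_{s'}+w_e\le w_{s^{*}}+(1-w_{s^{*}})=1$ by the maximality of $s^{*}$. The three cases of Alg.~\ref{alg:f-insert} follow the same pattern: case 1 sets $w_e=0$ and is vacuous; case 2 mirrors the foreground insertion with $s^{*}=\arg\max w_s(\bB_k)$; case 3 adds $e$ to $E(\bB_k)$ with $w_e=(1+\eps)^{-p_k}$, and its guarding condition $1-\max_{s\ni e}w_s(\bB_k)\ge(1+\eps)^{-p_k}$ is precisely the inequality needed to keep every incident $w_{s'}\le 1$. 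For the first loop of Alg.~\ref{alg:f-prepare}, I would use that an active $e\in A_k(\bF)$ has $w_e(\bF)\ge (1+\eps)^{-k}=(1+\eps)\cdot(1+\eps)^{-(k+1)}$; moving the accounting of $e$ from $\bF$ (where its contribution was $w_e(\bF)$) to the exposed part of $\bB_k$ (where it contributes $(1+\eps)^{-(k+1)}$) can only decrease each $w_s$, so the final $w_s(\bB_k)\le w_s(\bF)\le 1$ by comparing term-by-term against the inductive bound on $\bF$. The second loop is a sequence of Alg.~\ref{alg:f-insert} calls, already handled.

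The main obstacle is the raise step at the top of each outer iteration of Alg.~\ref{alg:f-rebuild}, where $w_e(\bB_k)$ is lifted from $(1+\eps)^{-(p_k+1)}$ to $(1+\eps)^{-p_k}$ simultaneously for every $e\in E(\bB_k)$. The key observation is that at the moment the previous inner while loop terminated (for level $p_k+1$), every $s\in\Sin(E(\bB_k))$ satisfies $w_s(\bB_k)<(1+\eps)^{-1}$; otherwise the loop would have added $s$ to $\Sstar(\bB_k)$ and removed it from $\Sin(E(\bB_k))$. Decompose $w_s=w_s^{\mathrm{fix}}+|s\cap E(\bB_k)|\cdot(1+\eps)^{-(p_k+1)}$, where $w_s^{\mathrm{fix}}$ comes from $\Cstar(\bB_k)$ and from elements at levels $\ge k+1$ of $\bF$ and is not altered by the raise. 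Multiplying the pre-raise inequality $w_s^{\mathrm{fix}}+|s\cap E(\bB_k)|\cdot(1+\eps)^{-(p_k+1)}<(1+\eps)^{-1}$ through by $(1+\eps)$ yields $(1+\eps)w_s^{\mathrm{fix}}+|s\cap E(\bB_k)|\cdot(1+\eps)^{-p_k}<1$, so the post-raise value satisfies $w_s^{\mathrm{new}}=w_s^{\mathrm{fix}}+|s\cap E(\bB_k)|\cdot(1+\eps)^{-p_k}<1-\eps w_s^{\mathrm{fix}}\le 1$. Sets with $s\cap E(\bB_k)=\emptyset$ are untouched by the raise. The subsequent moves inside the while loop transfer elements from $E(\bB_k)$ to $\Cstar(\bB_k)$ at an unchanged $w_e=(1+\eps)^{-p_k}$, so no $w_s$ changes at that point. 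Combining all cases completes the induction and proves the lemma.
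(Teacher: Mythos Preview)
Your proof is correct and follows essentially the same induction-on-time approach as the paper, enumerating the same cases (foreground updates, switch, preparation, raise step, Algorithms~\ref{alg:f-insert}/\ref{alg:f-delete}) with the same key observations: the first preparation loop only decreases element weights relative to $\bF$, and the raise step is safe because the preceding while-loop termination guarantees $w_s<(1+\eps)^{-1}$ for all $s\in\Sin(E(\bB_k))$. Your decomposition $w_s=w_s^{\mathrm{fix}}+|s\cap E(\bB_k)|\cdot(1+\eps)^{-(p_k+1)}$ makes the factor-$(1+\eps)$ bound more explicit than the paper's one-line version, but the argument is the same.
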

\begin{proof}
    We prove the invariant by induction on time. Initially the statement is true.

    The foreground solution can be modified in the following cases.
    \begin{enumerate}
        \item[-] Insertion in foreground thread: The foreground algorithm guarantees the invariant.
        \item[-] Deletion in foreground thread: The solution does not change.
        \item[-] Switch step at termination of $T_k$: The invariant holds by inductive hypothesis for the background solution before the switch.
    \end{enumerate}

    For a background solution $\bB_k$, we first show that the initialization step in \Cref{alg:f-prepare} can only decrease the dual values of elements compared to $\bF$, so that the invariant follows inductive hypothesis for the foreground solution before the initialization step. Elements in $A_k(\bF)$ have $w_e(\bF)=(1+\eps)^{-\lev_e(\bF)}$ by definition of active elements, which is higher then their initial value $(1+\eps)^{-(k+1)}$ in $\bB_k$. After \Cref{alg:f-prepare} handles all active elements, the passive elements are initialized to satisfy the invariant.

    During the background thread, $w_s(\bB_k)$ may increase due to insertion in $L_k(\bF)$, or the change of level $p_k$. In the former case, \Cref{alg:f-insert} guarantees the invariant. In the latter case, when \Cref{alg:f-rebuild} decrease $p_k$ by one to increase $w_e(\bB_k)$ by a factor of $1+\eps$ for all elements in $e\in E(\bB_k)$, only the sets in $\Sin(E(\bB_k))$ have their $w_s$ increased, and they increase by at most by a factor of $1+\eps$. Since the termination condition of the while loop rules out any $s\in \Sin(E(\bB_k))$ with $w_s\ge (1+\eps)^{-1}$, this increase will not violate the invariant.
\end{proof}

\begin{lemma}[Tight set invariant]\label{lem:f-set-in-sol-tight}
    For the foreground solution $\bF$ and all background solutions $\bB_k$, all sets in $\calS(\bF)$ or $\Sstar(\bB_k)$ are tight (i.e., $w_s\ge (1+\eps)^{-1}$).
\end{lemma}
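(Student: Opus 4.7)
The plan is to proceed by induction on the time-step, verifying for every operation that (a) whenever a set $s$ is first inserted into $\calS(\bF)$ or $\Sstar(\bB_k)$, it already satisfies $w_s \ge (1+\eps)^{-1}$, and (b) once $s$ is in the primal solution, no subsequent operation can reduce $w_s$ below this threshold. Part (a) will follow almost immediately from the description of each algorithm: the foreground-thread insertion handler only inserts a new set $s$ when it is saturated to $w_s=1$; \Cref{alg:f-rebuild} gates the inner while-loop exactly by the condition $w_s(\bB_k) \ge (1+\eps)^{-1}$; and the passive-element branch of \Cref{alg:f-insert} only adds $s$ after bumping $w_e$ to fill in the slack, making $w_s = 1$. (Sets inherited by $\bF$ via a switch step are handled by invoking the inductive hypothesis on $\bB_k$ together with the fact that the switch is merely a pointer exchange plus a merge at level $k+1$, which preserves $w_s$ for sets at levels $\ge k+1$.)

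The bulk of the work lies in part (b): showing that $w_s$ is \emph{monotone non-decreasing} in time for any $s$ that has already been committed to the primal solution. For the foreground solution, this is clean: the foreground-thread insertion handler either sets $w_e(\bF)=0$ (no change to $w_s$ for any containing $s\in \calS(\bF)$) or raises $w_e$ strictly positively; foreground deletion does not touch dual values at all, since the deleted element is merely relabeled dormant but retains its $w_e$; and switch steps import a whole sub-solution whose tightness is given by the inductive hypothesis. For a background solution $\bB_k$, the two potentially dangerous operations are (i) \Cref{alg:f-delete}, where an element $e\in L_k(\bF)$ becomes dormant, and (ii) the outer for-loop of \Cref{alg:f-rebuild}, where $p_k$ changes. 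In case (i) one checks the two sub-cases directly: if $e$ lies in some $s\in\Sstar(\bB_k)$ the algorithm keeps $e$ in $\Cstar(\bB_k)$ with unchanged $w_e$ (if $e\in E(\bB_k)$ before the deletion, its value $(1+\eps)^{-p_k}$ is simply frozen, not decreased); otherwise $e$ is not in \emph{any} set of $\Sstar(\bB_k)$, so removing it from $U^\star(\bB_k)$ can only affect $w_s$ for sets outside $\Sstar(\bB_k)$, which are irrelevant to the invariant. Case (ii) is even easier: decreasing $p_k$ strictly increases the notional dual $(1+\eps)^{-p_k}$ assigned to every element of $E(\bB_k)$, which can only increase $w_s$ for sets containing such elements, while committed elements in $\Cstar(\bB_k)$ already have fixed dual values.

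The one spot that needs some care, and which I regard as the main obstacle, is the interaction between \Cref{alg:f-insert} and sets that were previously committed to $\Sstar(\bB_k)$. Concretely, one must rule out any silent decrease of $w_s$ when an insertion attaches a newly inserted element $e$ to an existing tight set $s$ at level $\lev_s(\bB_k)$ with $w_e(\bB_k)=0$; here a careful reading of the algorithm shows that this assignment only creates a fresh contribution to $w_s$ (namely $0$) and does not modify the previously accumulated dual mass of $s$. Combining these cases with the rebuild-gating observation and the initialization step of \Cref{alg:f-prepare} (which can only \emph{lower} dual values relative to $\bF$, and does so before any set has been placed in $\Sstar(\bB_k)$, so the invariant is vacuously maintained at that moment) completes the induction and yields the tight-set invariant for both the foreground and every background solution.
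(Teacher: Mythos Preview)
Your proposal is correct and follows essentially the same approach as the paper: verify tightness at the moment each set enters the primal solution, then argue that no subsequent operation can decrease $w_s$ for a set already in $\Sstar(\bB_k)$ or $\calS(\bF)$. The paper is terser on the background-solution side (it simply asserts ``the background algorithms cannot decrease $w_s(\bB_k)$''), whereas you spell out the case analysis for \Cref{alg:f-delete}, \Cref{alg:f-insert}, and the $p_k$ decrement; conversely, the paper is more careful on the switch step, explicitly invoking the highest level invariant (\Cref{lem:f-highest-level-invariant}) to justify that removing elements at levels $\le k$ cannot affect $w_s$ for the surviving sets at levels $\ge k+1$---your parenthetical ``the switch\ldots preserves $w_s$ for sets at levels $\ge k+1$'' is true but needs exactly that invariant to be airtight.
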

\begin{proof}

    We first prove the invariant for a background solution $\bB_k$.
     Whenever a set $s$ is added to $\Sstar(\bB_k)$ by \Cref{alg:f-prepare,alg:f-rebuild,alg:f-insert}, $s$ must be tight.  $s$ remains tight since the background algorithms cannot decrease $w_s(\bB_k)$.
     
    We next prove the invariant for the foreground solution by induction on time. Initially the statements is true.
    The foreground solution can be modified in the following cases.
    \begin{enumerate}
        \item[-] Insertion in foreground thread: The algorithm assigns the new element $e$ to an existing set $s\in \calS(\bF)$ or adds a new set $s$ to $\calS(\bF)$.
        The choice of $w_e$ guarantees that $s$ remains tight in the former case and $s$ is tight in the latter case.
        \item[-] Deletion in foreground thread: The solution does not change.
        \item[-] Switch step at termination of $T_k$: Let $t^-, t^+$ be the time before and after switch. All elements and sets at levels $\ge k+1$ of $\bF$ before the switch are not affected. By \Cref{lem:f-highest-level-invariant}, removing elements at levels $\le k$ does not affect the total values of sets at levels $\ge k+1$. So, the invariant continues to hold for sets at levels $\ge k+1$ before the switch. The other sets are replaced by $\Sstar(\bB_k)$. For a new set $s\in \Sstar(\bB_k)$, we have $w_s(\bF^{(t^+)}) = w_s(\bB_k^{(t^-)})$. So, the invariant for these sets follows the inductive hypothesis for the background solution before the switch.
    \end{enumerate}
\end{proof}


\subsection{Feasibility}
\begin{lemma}\label{lem:f-feasible}
    The foreground solution (and hence the output solution) is a feasible set cover at the end of every time-step.
\end{lemma}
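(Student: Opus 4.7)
The plan is to prove the lemma by induction on the time-step $t$, with the base case $t=0$ being trivial since $L^{(0)} = \emptyset$. For the inductive step, I would analyze the three types of events that can modify the foreground solution within a time-step: (i) insertion handling by the foreground thread, (ii) deletion handling by the foreground thread, and (iii) the switch step triggered by a normal termination of a background thread $T_k$. Events (i) and (ii) are straightforward. For insertion of $e$, the foreground thread either assigns $e$ to an existing set $s \in \calS(\bF)$ covering $e$, or adds a new set containing $e$ to $\calS(\bF)$; in both cases $e$ is immediately covered, and no other element loses its cover. For deletion, no sets are removed from $\calS(\bF)$, so the coverage of remaining live elements is preserved.

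The main content lies in event (iii). Let $t^-$ and $t^+$ denote the moments just before and just after the switch at the normal termination of $T_k$. By \Cref{lem:f-switch-not-change-subuniverse}, the switch does not alter $L_j(\bF)$ for any $j \ge k+1$, so live elements at those levels remain covered by the sets in $\calS(\bF) \setminus \calS_k(\bF)$, which are untouched. Thus the remaining task is to show that every element in $L_k(\bF^{(t^-)})$ is covered by some set in $\Sstar(\bB_k^{(t^-)})$, which then becomes part of $\calS(\bF^{(t^+)})$.

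To establish this, I would first argue the auxiliary fact that the explicitly maintained sub-universe $\Ustar(\bB_k)$ satisfies $L_k(\bF) \cap L \subseteq \Ustar(\bB_k)$ throughout the lifetime of $T_k$. This is again by induction over time within the lifetime of $T_k$: the preparation phase enumerates all elements of the initial $L_k(\bF^{(\tprep)})$ and inserts them into $\Ustar(\bB_k)$; by \Cref{lem:f-switch-not-change-subuniverse}, subsequent changes to $L_k(\bF)$ come only from the foreground thread, which triggers \Cref{alg:f-insert} on insertions (adding $e$ to $\Ustar(\bB_k)$) and \Cref{alg:f-delete} on deletions (which only removes $e$ from $\Ustar(\bB_k)$ after $e$ has already left $L$). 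Then, by \Cref{fact:termination-Ek-empty}, we have $E(\bB_k^{(t^-)}) = \emptyset$, so $\Ustar(\bB_k^{(t^-)}) = \Cstar(\bB_k^{(t^-)})$. Finally, the highest-level invariant (\Cref{lem:f-highest-level-invariant}) guarantees that every element in $\Cstar(\bB_k^{(t^-)})$ is covered by some set in $\Sstar(\bB_k^{(t^-)})$ at level $\lev_e(\bB_k^{(t^-)})$. Chaining these inclusions gives coverage of all live elements in $L_k(\bF^{(t^-)})$ after the switch.

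The only subtle point—and what I would flag as the main obstacle—is the bookkeeping needed to verify the auxiliary fact $L_k(\bF) \cap L \subseteq \Ustar(\bB_k)$, particularly for the rules of base threads and shortcut threads where the thread completes all its work in a single time-step. Here \Cref{alg:f-prepare,alg:f-rebuild} run back-to-back without interleaved updates, so the inductive invariant reduces to showing that the single execution of \Cref{alg:f-prepare} scans all of $L_k(\bF)$ at that time-step; this follows from the definition of the base/shortcut rules. Once the auxiliary fact is in place, the rest of the argument is a direct composition of \Cref{lem:f-switch-not-change-subuniverse}, \Cref{fact:termination-Ek-empty}, and \Cref{lem:f-highest-level-invariant}, so the proof is essentially bookkeeping over the invariants already established.
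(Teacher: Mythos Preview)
Your proposal is correct and uses the same inductive skeleton as the paper, with identical handling of insertions and deletions. The difference lies in case (iii), the switch step. You split the live elements into those at level $>k$ (handled by \Cref{lem:f-switch-not-change-subuniverse}) and those in $L_k(\bF^{(t^-)})$ (handled via your auxiliary fact $L_k(\bF)\cap L\subseteq \Ustar(\bB_k)$, then \Cref{fact:termination-Ek-empty}, then \Cref{lem:f-highest-level-invariant} for $\bB_k$). The paper instead applies \Cref{lem:f-switch-not-change-subuniverse} once at the top level $j=\lmax$: feasibility before the switch gives $L_{\lmax}(\bF^{(t^-)})=L^{(t)}$, the lemma gives $L_{\lmax}(\bF^{(t^+)})=L_{\lmax}(\bF^{(t^-)})$, and then \Cref{lem:f-highest-level-invariant} for $\bF$ immediately yields coverage of every live element. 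Your auxiliary fact is essentially the first half of the proof of \Cref{lem:f-switch-not-change-subuniverse} re-derived, so your route is longer but not wrong; the paper's packaging avoids the case split and the extra bookkeeping around base/shortcut threads that you flagged.
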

\begin{proof}
    We prove the lemma by induction on time. Initially the lemma is trivial.
    The foreground solution can be modified in the following cases.
    \begin{enumerate}
        \item[-] Insertion in foreground thread: The foreground algorithm either assigns $e$ to an existing set $s\in \calS(\bF)$ or adds a new set $s$ that cover $e$ to $\calS(\bF)$.
        \item[-] Deletion in foreground thread: The solution does not change.
        \item[-] Switch step at termination of $T_k$: Denote $t$ to be the time-step of switch step and $t^-, t^+$ to be the times before and after the switch step.
        By the inductive hypothesis that $\bF$ is a feasible set cover before the switch, $L_{\lmax}(\bF^{(t^-)}) = L^{(t)}$. By \Cref{lem:f-switch-not-change-subuniverse}, $L_{\lmax}(\bF^{(t^-)}) = L_{\lmax}(\bF^{(t^+)})$. So, $L_{\lmax}(\bF^{(t^+)}) = L^{(t)}$. By \Cref{lem:f-highest-level-invariant}, all elements in $L_{\lmax}(\bF)$ are contained in sets in $\calS(\bF)$. So, $\bF$ is feasible after the switch.
    \end{enumerate}
\end{proof}

\subsection{Recourse}
\begin{lemma}\label{lem:f-recourse}
    The insertion recourse is $O(\log n)$.
\end{lemma}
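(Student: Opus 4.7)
The plan is to enumerate every mechanism by which a set can enter the output solution in a single time-step, and bound the contribution of each mechanism separately. Recall that the output equals $\calS(\bF) \cup \bigcup_k \calR_k$, so an insertion into the output must come either from (i) the foreground thread inserting a new set into $\calS(\bF)$, (ii) a background thread in its copy phase adding sets to $\calR_k$, (iii) a background thread in its tail phase adding sets to $\Sstar(\bB_k)$ and mirroring them into $\calR_k$, (iv) a base thread or shortcut thread finishing in one time-step, or (v) a switch step at normal termination of a thread $T_k$. I will analyze each source and then sum.

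For source (i), the foreground algorithm on insertion either assigns the new element to an existing set in $\calS(\bF)$ or adds at most one new level-$0$ set; so the contribution is $1$. For source (v), the switch step replaces $\calS_k(\bF)$ by $\Sstar(\bB_k)$, but by the description in the \textbf{Termination} paragraph the sets of $\Sstar(\bB_k)$ are already sitting in $\calR_k \subseteq \text{output}$ when the switch fires, and $\calR_k$ is simultaneously discarded; hence the union $\calS(\bF)\cup \bigcup_k \calR_k$ gains no new set and the insertion recourse contribution is $0$. (Its deletion recourse can be large, but the lemma only concerns insertions.)

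For sources (ii) and (iii), the copy and tail phases each process $\cspd=O(1)$ sets per time-step: in the copy phase this is the explicit copying speed, and in the tail phase the bottleneck is the inner loop of \Cref{alg:f-rebuild}, which adds at most $O(1)$ sets per time-step (any set added to $\Sstar(\bB_k)$ must cover an element in $E(\bB_k)$, and only $\cspd$ elements are processed), each of which is immediately mirrored into $\calR_k$. In addition, \Cref{alg:f-insert} can add at most one further set to $\Sstar(\bB_k)$ per update. Thus each single thread in copy or tail phase contributes $O(\cspd)=O(1)$ to the insertion recourse per time-step. The key observation is that at any given time-step there are at most $\lmax+1=O(\log n)$ such threads (one per level), so the aggregate contribution is $O(\log n)$.

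For source (iv), a base thread (triggered when $|L_k(\bF)|\le \cspd$) executes \Cref{alg:f-prepare} and \Cref{alg:f-rebuild} in one time-step; the number of sets it can place into $\Sstar(\bB_k)$ (and hence copy into $\calR_k$ and the output) is bounded by $|L_k(\bF)|\le \cspd=O(1)$. Similarly a shortcut thread, triggered when $|E(\bB_k)|\le \cspd$ and $|\Sstar(\bB_k)|\le \cspd$, finishes with $|\Sstar(\bB_k)|=O(\cspd)=O(1)$ sets to drop into $\calR_k$. By the sequential behavior of threads and the monotonicity of $|L_k(\bF)|$ in $k$, at most one base thread fires per time-step; a similar monotone argument (or, more simply, a per-level $O(1)$ charge summed over $O(\log n)$ levels) shows that shortcut threads contribute at most $O(\log n)$. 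The main technical point to verify is that none of the above categories double-counts a set insertion and that the ``special case'' executions (base and shortcut) indeed terminate their copy work in a single time-step without exceeding their $O(1)$ set budget; this follows directly from the bounds that define those rules. Summing the contributions $1+O(\log n)+0+O(\log n)=O(\log n)$ gives the claim.
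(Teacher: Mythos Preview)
Your proof is correct and follows essentially the same approach as the paper: enumerate the ways a set can enter the output $\calS(\bF)\cup\bigcup_k\calR_k$, bound each source by $O(1)$ per level (and $0$ for the switch step), and sum over $\lmax+1=O(\log n)$ levels. Your case split is slightly finer than the paper's (you separate base/shortcut threads into their own category, whereas the paper folds them into the ``$T_k$ copies $\cspd$ sets per time-step'' item), but the logic and the final tally are the same.
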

\begin{proof}
    The output solution consists of the foreground solution $\bF$ and all buffer solutions $\calR_k$.
    The algorithm can add sets to $\bF$ in two ways:
    \begin{enumerate}
        \item Handling an insertion by the foreground thread. This causes at most one insertion recourse.
        \item Switch steps at normal terminations of background threads. This causes no insertion recourse.
    \end{enumerate}
    For each level $k$, the algorithm can add sets to $\calR_k$ in two ways:
    \begin{enumerate}
        \item In the copy and tail phases of $T_k$, or at the termination of $T_k$ when $T_k$ is a base or shortcut thread, $T_k$ copies sets from $\Sstar(\bB_k)$ to $\calR_k$ at the speed of $\cspd$ sets per time-step. This causes at most $2\cspd=O(1)$ recourse, because $T_k$ can go through each phase at most once in a time-step.
        \item  Handling an insertion in the sub-universe of $\bB_k$ when $T_k$ is in the copy and tail phases. This may add a new set to $\Sstar(\bB_k)$, which is repeated in $\calR_k$ in the copy and tail phases. This causes at most one insertion recourse.
    \end{enumerate}  
    In conclusion, the overall insertion recourse is $O(\log n)$ since there are $\lmax+1 = O(\log n)$ background threads.
\end{proof}

To bound the deletion recourse by $O(\log n)$, we call \Cref{lem:new:recourse-reduction}, which designs a garbage collection process to deamortize the deletion recourse.

\recourse*

\subsection{Lifetime Bounds}
In this section, we bound the lasting time of a background thread in each phase.
Recall that a base thread $T_k$ with $|L_k(\bF^{(\tprep)})|\le \cspd$ when created at $\tprep$ will terminate in one time-step and will not enter any phase. So, for a background thread $T_k$ in any phase, we have $|L_k(\bF^{(\tprep)})|> \cspd$.

A background thread $T_k$ is allow to undergo many phases in a time-step. So, the last time-step $t$ of the previous phase is also viewed as the first time-step of the next phase. For each phase $T_k$ enters within a time-step, the algorithm performs the individually scheduled work amount (e.g., processing $\cspd$ elements).
When $T_k$ normally terminates or gets aborted, it will restart at the next time-step.

We set $\cspd=500$ to be a sufficiently large constant.

\begin{fact}\label{lem:f-prep-phase-time}
    Suppose that a background thread $T_k$ is in the preparation phase at time-step $t$. Then,
    $T_k$ will finish preparation phase (enter computation phase or get aborted) by time-step $t+\frac{1.1}{\cspd}\cdot |L_k(\bF^{(t)})|$.
\end{fact}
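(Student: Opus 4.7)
The plan is to mirror the proof of \Cref{fact:prep-phase-time} in the $O(\log n)$-competitive section, with the same two-step speed-vs-update accounting but taking care that here the preparation phase processes elements from $A_k(\bF^{(\tprep)})$ and then from $P_k(\bF^{(\tprep)})$ (via \Cref{alg:f-prepare}) rather than merely scanning a BST. Let $\tprep$ be the time-step at which $T_k$ enters the preparation phase, so that $t \in [\tprep, \tprep + \Delta - 1]$ where $\Delta$ is the total duration of the preparation phase. The background thread performs $\cspd$ units of initialization work (processing $\cspd$ elements from $A_k(\bF)$ or $P_k(\bF)$) per time-step, while the sub-universe $L_k(\bF)$ changes by at most one element per time-step.

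First I would show that the number of elements $T_k$ must process during the preparation phase is bounded by $|L_k(\bF^{(\tprep)})| + \Delta$: the initial sub-universe contributes $|L_k(\bF^{(\tprep)})|$ elements, and each subsequent time-step can add at most one more (an insertion into $L_k(\bF)$, routed into the second loop of \Cref{alg:f-prepare} as a passive element by the special rule for insertions during the active-element loop). Crucially, by \Cref{lem:f-switch-not-change-subuniverse}, a normal termination of a lower background thread $T_j$ with $j < k$ does not change $L_k(\bF)$; and no higher thread can terminate during $[\tprep, t]$ without aborting $T_k$. So the only source of growth in the to-do list is the foreground thread, capped at one element per time-step.

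Balancing work versus arrivals then yields $\Delta \leq \left\lceil \frac{|L_k(\bF^{(\tprep)})|}{\cspd - 1} \right\rceil \leq \left\lceil 0.01 \cdot |L_k(\bF^{(\tprep)})| \right\rceil$, since $\cspd = 500$. Because $t \in [\tprep, \tprep + \Delta - 1]$ and $|L_k(\bF)|$ itself changes by at most one per time-step, we obtain
\[
|L_k(\bF^{(t)})| \geq |L_k(\bF^{(\tprep)})| - (t - \tprep) \geq |L_k(\bF^{(\tprep)})| - (\Delta - 1) \geq 0.99 \cdot |L_k(\bF^{(\tprep)})|.
\]
Finally, the time remaining after $t$ is at most $\Delta - (t - \tprep) \leq \Delta \leq \left\lceil \frac{1}{0.99(\cspd-1)} \cdot |L_k(\bF^{(t)})| \right\rceil \leq \frac{1.1}{\cspd} \cdot |L_k(\bF^{(t)})|$, using $|L_k(\bF^{(t)})| \geq |L_k(\bF^{(\tprep)})| \geq \cspd > 1$ (since $T_k$ is not a base thread when it enters the preparation phase) to absorb the ceiling into the constant $1.1$.

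The main subtlety, rather than an obstacle, is ensuring that the sub-universe's stability argument really applies: this needs \Cref{lem:f-switch-not-change-subuniverse} together with the sequential-behavior rule that a higher thread's termination aborts $T_k$ outright (in which case the claim is vacuous, since ``get aborted'' is one of the permitted outcomes in the lemma's conclusion). The rest is a routine calculation calibrated to the constant $\cspd = 500$.
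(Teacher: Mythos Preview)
Your proposal is correct and follows essentially the same approach as the paper's own proof: bound the preparation duration by $\frac{1}{\cspd-1}\cdot|L_k(\bF^{(\tprep)})|$ via the speed-vs-update accounting, then translate to $|L_k(\bF^{(t)})|$ using that $|L_k(\bF)|$ drifts by at most one per step and $|L_k(\bF^{(\tprep)})|>\cspd$ from the rule of base threads. Your explicit invocation of \Cref{lem:f-switch-not-change-subuniverse} to justify sub-universe stability is a nice touch that the paper leaves implicit.
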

\begin{proof}
    Suppose $T_k$ is created at time-step $\tprep$, and finishes preparation phase at $\tcomp$.
    During the preparation phase, $T_k$ runs \Cref{alg:f-prepare} and scans elements in the sub-universe $L_k(\bF) = A_k(\bF) \uplus P_k(\bF)$, at a speed of $\cspd$ elements per time-step, except for the last time-step $\tcomp$. On the other hand, $L_k(\bF^{(t)})$ may be modified due to insertion/deletion of elements, but at the speed of at most one element per time-step.
    So, $\tcomp-\tprep \le \frac{1}{\cspd-1}\cdot |L_k(\bF^{(\tprep)})|.$
    Since $t\in[\tprep, \tcomp]$, we have $$|L_k(\bF^{(t)})|\ge |L_k(\bF^{(\tprep)})| - (t-\tprep)  \ge \frac{\cspd-2}{\cspd-1}\cdot |L_k(\bF^{(\tprep)})|.$$
    Combine the above inequalities with $|L_k(\bF^{(\tprep)})|> \cspd$ from the rule of base threads, we conclude
    $$\tcomp-t\le \tcomp-\tprep\le\frac{1}{\cspd-1}\cdot|L_k(\bF^{(\tprep)})|\le \frac{1}{\cspd-2}\cdot |L_k(\bF^{(t)})| \le \frac{1.1}{\cspd}\cdot |L_k(\bF^{(t)})|.$$
\end{proof}

\begin{fact}\label{lem:f-comp-phase-time}
    Suppose that a background thread $T_k$ is in the computation phase at time-step $t$. Then,
    $T_k$ will finish computation phase  by time-step $t+\left\lceil\frac{1.1}{\cspd}\cdot |E(\bB_k^{(t)})|\right\rceil$.
\end{fact}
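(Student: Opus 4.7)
The approach mirrors the proof of \Cref{lem:f-prep-phase-time} and the analogous Fact~\ref{fact:comp-phase-time} for the $O(\log n)$-algorithm: identify a guaranteed depletion rate on $E(\bB_k)$ during the computation phase, bound the growth rate, and combine.

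The plan is as follows. First, I would note that during the computation phase, each time-step of execution of \Cref{alg:f-rebuild} moves at least $\cspd$ elements out of $E(\bB_k)$ into $\Cstar(\bB_k)$ (except possibly the final time-step, where the phase ends early). This is the stated processing speed of the inner while loop: the bottleneck is maintaining the data structures for elements in $s \cap E(\bB_k)$ when a newly tight set $s$ is added to $\Sstar(\bB_k)$, and each such element is permanently removed from $E(\bB_k)$.

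Second, I would bound the growth rate of $E(\bB_k)$. By inspecting the algorithm, the only way elements can be \emph{added} to $E(\bB_k)$ during the computation phase is via the final branch of \Cref{alg:f-insert}, triggered by an insertion into $L_k(\bF)$. By \Cref{lem:f-switch-not-change-subuniverse}, the sub-universe $L_k(\bF)$ cannot be modified by the termination of any lower-level thread during the lifetime of $T_k$, and by the rule that $T_k$ is aborted when a higher thread normally terminates, higher threads cannot alter $L_k(\bF)$ either. Hence $L_k(\bF)$ evolves solely through foreground insertions and deletions, at a rate of at most one element per time-step. Moreover, deletions handled by \Cref{alg:f-delete} can only \emph{remove} elements from $E(\bB_k)$, which helps us. Combining these, $E(\bB_k)$ grows by at most $1$ per time-step.

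Third, I combine the two bounds. The net depletion rate of $E(\bB_k)$ is at least $\cspd - 1$ elements per time-step. The phase terminates no later than the moment $E(\bB_k)$ becomes empty (which would mean the pause condition $|E(\bB_k)| \le |\Sstar(\bB_k)|$ holds trivially, or alternatively the entire \Cref{alg:f-rebuild} finishes). Hence the phase ends within at most $\left\lceil \frac{1}{\cspd-1} \cdot |E(\bB_k^{(t)})| \right\rceil$ time-steps after $t$. Since $\cspd = 500$, we have $\frac{1}{\cspd - 1} \le \frac{1.1}{\cspd}$, which yields the claimed bound $\left\lceil \frac{1.1}{\cspd} \cdot |E(\bB_k^{(t)})| \right\rceil$.

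This is a routine counting argument, and I do not expect any serious obstacle. The only point that requires a bit of care is justifying that no other mechanism---such as a switch by a sibling thread, or the phase-transition bookkeeping---can inflate $E(\bB_k)$; this is handled cleanly by citing \Cref{lem:f-switch-not-change-subuniverse} together with the abort behavior of higher-thread terminations.
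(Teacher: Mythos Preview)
Your proposal is correct and follows essentially the same argument as the paper: the paper's proof likewise bounds the depletion rate at $\cspd$ elements removed from $E(\bB_k)$ per time-step, the growth rate at one element per time-step via \Cref{alg:f-insert} (citing \Cref{lem:f-switch-not-change-subuniverse} and \Cref{lem:f-Ek-correct} to rule out changes from lower-thread terminations), and concludes with the same $\left\lceil \frac{1}{\cspd-1}\cdot|E(\bB_k^{(t)})|\right\rceil \le \left\lceil \frac{1.1}{\cspd}\cdot|E(\bB_k^{(t)})|\right\rceil$ bound.
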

\begin{proof}
    During the computation phase, $T_k$ runs \Cref{alg:f-rebuild} at a speed of removing $\cspd$ elements from $E(\bB_k)$ per time-step, except for the last time-step. The only way to add elements to $E(\bB_k)$ during the computation phase is due to \Cref{alg:f-insert} handling insertion, which can only add one element in a time-step. (By \Cref{lem:f-switch-not-change-subuniverse,lem:f-Ek-correct}, termination of lower threads cannot add elements to $E(\bB_k)$.) The algorithm terminates when $E(\bB_k)=\emptyset$. So, the computation phase finishes in $\left\lceil\frac{1}{\cspd-1}\cdot \left|E(\bB_k^{(t)})\right|\right\rceil \leq \left\lceil\frac{1.1}{\cspd}\cdot \left|E(\bB_k^{(t)})\right|\right\rceil$ time-steps after $t$.
\end{proof}

\begin{fact}\label{lem:f-tail-phase-time-basic}
    Suppose that a background thread $T_k$ is in the tail phase at time-step $t$. Then, $T_k$ will terminate by time-step $t+\frac{1.1}{\cspd} \cdot |E(\bB_k^{(t)})|$.
\end{fact}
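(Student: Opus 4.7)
The plan is to argue that the tail phase behaves in essentially the same way as the computation phase with respect to $E(\bB_k)$, and therefore the proof mirrors that of \Cref{lem:f-comp-phase-time}. Specifically, during tail phase, the thread $T_k$ resumes \Cref{alg:f-rebuild}, which removes $\cspd$ elements from $E(\bB_k)$ per time-step (except possibly in the final time-step of the phase). The phase terminates precisely when $E(\bB_k) = \emptyset$, which, together with \Cref{fact:termination-Ek-empty}, is exactly the termination condition of $T_k$.

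Next, I would bound the rate at which $E(\bB_k)$ can grow during the tail phase. The only routines that add elements to $E(\bB_k)$ during the lifetime of $T_k$ after preparation are \Cref{alg:f-insert} (when an inserted element cannot be absorbed by an existing tight set in $\Sstar(\bB_k)$ and would not violate dual feasibility at the current $p_k$). Each time-step sees at most one insertion in $L_k(\bF)$. Moreover, by \Cref{lem:f-switch-not-change-subuniverse} no lower-level thread's termination can alter $L_k(\bF)$ during the lifetime of $T_k$, and termination of any higher thread would abort $T_k$. Hence at most one element per time-step can enter $E(\bB_k)$.

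Combining the two rates, the size of $E(\bB_k)$ decreases by at least $\cspd - 1$ per time-step, so the phase ends within $\lceil |E(\bB_k^{(t)})|/(\cspd - 1)\rceil$ time-steps after $t$. The main (minor) obstacle is handling small $|E(\bB_k^{(t)})|$: when $|E(\bB_k^{(t)})| \le \cspd$, the ``rule for tail phase'' finishes everything in one additional time-step, which is trivially absorbed in the stated bound since $\frac{1.1}{\cspd}\cdot|E(\bB_k^{(t)})|$ is of the same order. For $|E(\bB_k^{(t)})| > \cspd$ the ceiling can be dropped, and the bound $\frac{1}{\cspd-1}|E(\bB_k^{(t)})| \le \frac{1.1}{\cspd}|E(\bB_k^{(t)})|$ follows from the choice $\cspd = 500$. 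This gives the stated termination bound.
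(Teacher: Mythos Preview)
Your approach is essentially the same as the paper's: both argue that \Cref{alg:f-rebuild} removes $\cspd$ exposed elements per step while at most one is added via \Cref{alg:f-insert}, giving a net decrease of $\cspd-1$ and hence termination in roughly $|E(\bB_k^{(t)})|/(\cspd-1)\le \tfrac{1.1}{\cspd}|E(\bB_k^{(t)})|$ steps.

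The one substantive difference is how the ceiling is handled. You write that $\cspd$ elements are removed ``except possibly in the final time-step,'' obtain $\lceil N/(\cspd-1)\rceil$, and then try to drop the ceiling by case analysis. The paper instead observes that the tail-phase shortcut rule (if $|E(\bB_k)|\le\cspd$, finish in the current step) means the final step \emph{also} processes a full $\cspd$ worth of work, so the ceiling never appears. Your case analysis is looser: for $N\le\cspd$ the shortcut terminates in the \emph{current} step (zero additional steps, not ``one additional''), which is what actually makes the bound hold for small $N$; and for $N>\cspd$ the bare assertion ``the ceiling can be dropped'' is not justified by size alone---e.g.\ for $N$ a few times $\cspd$, $\lceil N/(\cspd-1)\rceil$ can exceed $1.1N/\cspd$---it is again the shortcut rule that saves you. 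So the fix is simply to invoke the shortcut rule directly (as the paper does) rather than routing through a case split.
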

\begin{proof}
    During the tail phase, $T_k$ runs \Cref{alg:f-rebuild} at a speed of removing $\cspd$ elements from $E(\bB_k)$ per time-step, including the last time-step because of the shortcut rule. The only way to add elements to $E(\bB_k)$ during the computation phase is due to \Cref{alg:f-insert} handling insertion, which can only add one element in a time-step. The algorithm terminates when $E(\bB_k)=\emptyset$. So, the tail phase finishes in $\frac{1}{\cspd-1}\cdot \left|E(\bB_k^{(t)})\right| \leq \frac{1.1}{\cspd}\cdot \left|E(\bB_k^{(t)})\right|$ time-steps after $t$.
\end{proof}

\begin{fact}\label{lem:f-copy-phase-time}
    Suppose that a background thread $T_k$ is in the copy phase at time-step $t$. Then,
    $T_k$ will finish copy phase by time-step $t+\frac{2.1}{\cspd}\cdot \tausus_k$.
\end{fact}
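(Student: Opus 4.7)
The plan is to express the number of sets remaining to copy at time $t$ as a function of $\tausus_k$, and then divide by the net copying rate, which is essentially $\cspd$ per time-step. Let $\tsus$ and $\tcopy$ denote the time-steps when $T_k$ entered the suspension and copy phases, so $\tausus_k = |\Sstar(\bB_k^{(\tsus)})|$. Let $R(t') := |\Sstar(\bB_k^{(t')})| - |\calR_k^{(t')}|$ denote the number of uncopied sets at time $t'$, and let $\tend$ be the time-step at which the copy phase ends, so our goal is to show $\tend - t \le \frac{2.1}{\cspd}\cdot \tausus_k$.

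First, I would establish the two monotonicity ingredients. (i) After $\tsus$, the only mechanism (in suspension, copy, or tail phase) that adds sets to $\Sstar(\bB_k)$ is \Cref{alg:f-insert}, triggered by an insertion into $L_k(\bF)$, and at most one such insertion happens per time-step; hence $|\Sstar(\bB_k^{(t')})| \le \tausus_k + (t' - \tsus)$ for all $t' \ge \tsus$. (ii) In each time-step of the copy phase, $T_k$ removes $\min(\cspd, R)$ sets from the uncopied pool and can gain at most one (from \Cref{alg:f-insert}). This yields the recursion $R(t'+1) \le \max(0, R(t') + 1 - \cspd)$, so $R$ is non-increasing once $R \ge \cspd$ and drops to $0$ in one more step otherwise. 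In particular, $R(t) \le R(\tcopy) = |\Sstar(\bB_k^{(\tcopy)})| \le \tausus_k + (\tcopy - \tsus)$.

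Next, I would invoke a wait-time bound for the suspension phase, namely $\tcopy - \tsus \le 0.1 \cdot \tausus_k$, obtained by an adaptation of \Cref{lm:wait:time} from \Cref{part:lnn} to the $O(f)$ setting: the scheduler guarantees that in a geometrically decreasing sequence of intervals, either a higher-index thread currently in copy or tail phase must terminate (either aborting $T_k$, or freeing the geometric ``slot'' for $T_k$), or a lower-index thread must yield its copy-phase slot; summing the geometric sequence using the copy/tail duration bounds from \Cref{lem:f-tail-phase-time-basic} and an inductive use of the present lemma yields the bound. Combining this with the preceding bound gives $R(t) \le 1.1 \cdot \tausus_k$. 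Since the net decrease of $R$ per time-step is at least $\cspd - 1$, we obtain
\[
\tend - t \;\le\; \Bigl\lceil \tfrac{R(t)}{\cspd - 1} \Bigr\rceil \;\le\; \tfrac{1.1 \cdot \tausus_k}{\cspd - 1} + 1 \;\le\; \tfrac{2.1}{\cspd}\cdot \tausus_k,
\]
where the last inequality uses $\cspd = 500$ and $\tausus_k > \cspd$ (from the rule of shortcut threads).

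The main obstacle is the suspension wait-time bound $\tcopy - \tsus \le 0.1 \tausus_k$, which has to be stated and proved in tandem with \Cref{lem:f-copy-phase-time} so as to avoid circularity: the wait-time argument invokes an upper bound on the combined copy and tail phase duration of other threads, which in turn relies on the present lemma. Mirroring the $\Cref{part:lnn}$ approach, I would handle this by introducing an explicit (artificial) abort-on-timeout in the suspension phase of length $0.1 \tausus_k$, proving the copy/tail/lifetime bounds under this assumption, and then showing that the artificial timeout is never actually triggered; this justifies removing it and recovering the clean algorithm described in \Cref{sec:description-f}.
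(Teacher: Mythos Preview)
Your argument is correct, but it takes a heavier route than the paper. The paper's own proof is three lines: $T_k$ copies $\cspd$ sets per step while $\Sstar(\bB_k)$ grows by at most one, so the copy phase lasts at most $\lceil \tausus_k/(\cspd-1)\rceil \le \frac{1.1}{\cspd}\tausus_k + 1 \le \frac{2.1}{\cspd}\tausus_k$ further steps, the last inequality using $\tausus_k > \cspd$ from the shortcut rule. This admittedly glosses over exactly the point you raise---growth of $\Sstar(\bB_k)$ during suspension---but the paper does \emph{not} fix it with an artificial timeout. Instead, the circularity you worry about is dissolved by the surrounding lemma structure: the only consumer of this fact is \Cref{lem:f-copy-tail-time-weak}, which carries the explicit hypothesis that suspension lasted at most $\frac{30}{\cspd}\tausus_k$ steps; \Cref{lem:f-suspension-phase-time} then discharges that hypothesis by a double induction on time (increasing) and level (decreasing), invoking \Cref{lem:f-copy-tail-time-weak} only for threads already covered by the inductive hypothesis. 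So the paper sidesteps circularity by making the downstream lemma conditional rather than by grafting the timeout-and-abort device from \Cref{part:lnn} onto the algorithm. Your approach is more self-contained at the level of this single fact, but the paper's organization is lighter and leaves the algorithm description untouched.
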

\begin{proof}
    During each time-step within the copy phase, the thread $T_k$ copies $\cspd$ sets from $\Sstar(\bB_k^{(t)})$ to the buffer solution $\calR_k$. On the other hand, $\Sstar(\bB_k)$ may grow by one set per time-step due to \Cref{alg:f-insert}.
    So, the copy phase finishes in $\left\lceil\frac{1}{\cspd-1}\cdot \tausus_k\right\rceil \le \frac{1.1}{\cspd}\cdot \tausus_k + 1 \le \frac{2.1}{\cspd}\cdot \tausus_k$ time-steps after $t$. Here, the last step uses $\tausus_k\ge \cspd$ by rule of shortcut threads.
\end{proof}

\begin{lemma}\label{lem:f-copy-tail-time-weak}
    Suppose a background thread $T_k$ is in the \color{blue} copy or tail phase \color{black} at time-step $t$, $T_k$ spent at most $\frac{30}{\cspd}\cdot \tausus_k$ time-steps in suspension phase.
    Then, $T_k$ will terminate by time-step $t+ \frac{5}{\cspd}\cdot \tausus_k$.
\end{lemma}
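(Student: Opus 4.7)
The plan is to chain together the per-phase bounds from \Cref{lem:f-copy-phase-time} and \Cref{lem:f-tail-phase-time-basic} by carefully tracking the evolution of $|E(\bB_k)|$ through the suspension, copy, and tail phases. Since \Cref{lem:f-tail-phase-time-basic} bounds the tail phase in terms of $|E(\bB_k^{(t)})|$, the crux is to upper bound this quantity by a constant multiple of $\tausus_k$.

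The starting point is the condition for entering suspension phase: at time-step $\tsus$ we have $|E(\bB_k^{(\tsus)})| \le |\Sstar(\bB_k^{(\tsus)})| = \tausus_k$. I would then argue that during both the suspension phase and the copy phase, the size $|E(\bB_k)|$ grows by at most one per time-step. This is because \Cref{alg:f-rebuild} is not executed in these phases, termination of lower-level threads cannot occur during the lifetime of $T_k$ (so \Cref{lem:f-switch-not-change-subuniverse} applies), and the only mechanism that can add an element to $E(\bB_k)$ is the last branch of \Cref{alg:f-insert}, triggered by at most one insertion per time-step. Combining the hypothesis $\tcopy - \tsus \le \frac{30}{\cspd}\tausus_k$ with \Cref{lem:f-copy-phase-time} applied at $\tcopy$ (so $\ttail - \tcopy \le \frac{2.1}{\cspd}\tausus_k$), I get
\[
|E(\bB_k^{(\ttail)})| \;\le\; |E(\bB_k^{(\tsus)})| + (\ttail - \tsus) \;\le\; \tausus_k + \tfrac{32.1}{\cspd}\tausus_k \;\le\; 1.07\,\tausus_k,
\]
using $\cspd = 500$. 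Moreover, during the tail phase \Cref{alg:f-rebuild} removes $\cspd$ elements from $E(\bB_k)$ per time-step while insertions add at most one, so $|E(\bB_k^{(t')})| \le |E(\bB_k^{(\ttail)})|$ throughout tail phase.

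I would then finish by case analysis on which phase contains $t$. If $t$ lies in the copy phase, then by \Cref{lem:f-copy-phase-time} the copy phase ends by $t + \frac{2.1}{\cspd}\tausus_k$, and by \Cref{lem:f-tail-phase-time-basic} the tail phase then adds at most $\frac{1.1}{\cspd}\cdot 1.07\,\tausus_k \le \frac{1.2}{\cspd}\tausus_k$ time-steps, for a total of at most $\frac{3.3}{\cspd}\tausus_k \le \frac{5}{\cspd}\tausus_k$. If $t$ lies in the tail phase, then $|E(\bB_k^{(t)})| \le 1.07\,\tausus_k$ as above, and \Cref{lem:f-tail-phase-time-basic} directly gives termination within $\frac{1.2}{\cspd}\tausus_k \le \frac{5}{\cspd}\tausus_k$ time-steps.

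The calculations themselves are routine for large enough $\cspd$; the main subtlety is justifying that $|E(\bB_k)|$ really does grow by at most one per time-step during suspension and copy phases. This relies on ruling out two other potential sources of growth: first, that termination of some other thread $T_j$ with $j < k$ does not enlarge the sub-universe of $T_k$ (which is \Cref{lem:f-switch-not-change-subuniverse}), and second, that \Cref{alg:f-delete} never adds elements to $E(\bB_k)$ (straightforward from its definition). Once these are dispatched, the quoted bound follows directly.
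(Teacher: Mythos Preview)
Your proof is correct and follows essentially the same approach as the paper: bound $|E(\bB_k^{(\ttail)})|$ via the suspension-entry criterion plus the at-most-one-per-time-step growth during suspension and copy, then apply \Cref{lem:f-copy-phase-time} and \Cref{lem:f-tail-phase-time-basic}. The paper streamlines slightly by observing that $t \ge \tcopy$ and directly bounding $\tend - \tcopy$, whereas you split into the two phase cases; both are fine. One small wording slip: you write that ``termination of lower-level threads cannot occur during the lifetime of $T_k$,'' which is false (they can and do normally terminate); what you actually need, and correctly invoke later, is that such terminations do not alter $L_k(\bF)$ by \Cref{lem:f-switch-not-change-subuniverse}, and hence do not add elements to $E(\bB_k)$.
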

\begin{proof}
    Suppose $T_k$ enters suspension, copy and tail phases at $\tsus, \tcopy, \ttail$ respectively, and terminates at $\tend$.
    (If $T_k$ is aborted in copy phase, let $\ttail = \tend$.) Since $t\in [\tcopy, \tend]$, it suffices to prove $\tend-\tcopy\le \frac{5}{\cspd}\cdot \tausus_k$.
    During the suspension and copy phases, $E(\bB_k)$ can only be modified by insertion or deletion in one element per time-step. That is,
    \begin{equation}\label{eq:sus-copy-exposed-difference}
        ||E(\bB_k(t_1))|-|E(\bB_k(t_2))|| \le |t_1-t_2|, \forall t_1, t_2\in[\tsus, \ttail]
    \end{equation}

    The assumption gives $\tcopy - \tsus \le \frac{30}{\cspd}\cdot \tausus_k$. By \Cref{lem:f-copy-phase-time}, $\ttail-\tcopy\le \frac{2.1}{\cspd}\cdot \tausus_k$. If $\tend = \ttail$, we are done. Otherwise, it remains to bound $\tend-\ttail \le \frac{2.9}{\cspd}\cdot \tausus_k$.
    From (\ref{eq:sus-copy-exposed-difference}), we have $$|E(\bB_k(\ttail))| \le |E(\bB_k(\tsus))| + (\ttail-\tsus)\le  \left(1+\frac{32.1}{\cspd}\right)\tausus_k\le 1.1\tausus_k.$$
    By \Cref{lem:f-tail-phase-time-basic}, we have that $\tend-\ttail \le \frac{1.1}{\cspd}|E(\bB_k(\ttail))|\le \frac{1.3\tausus_k}{\cspd}$. 
    
\end{proof}

\begin{lemma}\label{lem:f-suspension-phase-time}
Suppose a background thread $T_k$ is in the copy or tail phase at time-step $t$. Then, $T_k$ either enters copy phase or gets aborted in suspension phase by time-step $t+\frac{30}{\cspd}\cdot \tausus_k$.
\end{lemma}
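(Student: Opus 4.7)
The intended statement appears to be the analog of \Cref{lm:wait:time} from \Cref{part:lnn}: namely, that once $T_k$ enters suspension at some time-step $\tsus$, it leaves (either by transitioning to copy phase or by getting aborted) by time-step $\tsus + \frac{30}{\cspd}\cdot\tausus_k$. My plan is to mirror the block-from-above/block-from-below argument of \Cref{lm:wait:time}, adapted to the $O(f)$ setting where the speed constant $\cspd$ replaces $\log n$ and \Cref{lem:f-copy-tail-time-weak} plays the role of \Cref{cor:copy-tail-phase-time}.

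Concretely, I will first set up parallel definitions: for any time-step $t'$ while $T_k$ is in suspension, say $T_k$ is \emph{blocked-from-above at $t'$} if there is a thread $T_j$ with $j>k$ and $\tfrac{1}{2}\tausus_j < \tausus_k$ that is in copy or tail phase at the end of $t'$, and \emph{blocked-from-below at $t'$} if (a) it is not blocked-from-above and (b) some $T_j$ with $j<k$ and $\tfrac{1}{2}\tausus_j < \tausus_k$ is in copy or tail at the end of $t'$. From the scheduling rule used to promote threads from suspension to copy phase, $T_k$ must be in one of these two states at every time-step during which it fails to transition. If $T_k$ is ever blocked-from-above at some $t' \in [\tsus,\tsus+\frac{30}{\cspd}\tausus_k]$, then by \Cref{lem:f-copy-tail-time-weak} the blocking thread $T_j$ terminates by $t' + \frac{5}{\cspd}\tausus_j < t' + \frac{10}{\cspd}\tausus_k$, and since $j>k$ this aborts $T_k$ well inside the claimed window.

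The harder case is when $T_k$ is blocked-from-below throughout suspension. Here I will build, exactly as in the proof of \Cref{lm:wait:time}, a sequence of consecutive subintervals $[t_1,t_2{-}1],\ldots,[t_r,t_{r+1}{-}1]$ and blocking thread indices $j_1,\ldots,j_r$ by greedily picking the thread in copy/tail with the latest termination time. Two ingredients then yield the bound: (i) whenever $j_{i+1}$ is chosen after $j_i$, the scheduling rule forces $\tausus_{j_{i+1}} \le \tfrac{1}{2}\tausus_{j_i}$, giving a geometric series with total $\sum_i \tausus_{j_i} \le 2\tausus_{j_1} < 4\tausus_k$; (ii) by \Cref{lem:f-copy-tail-time-weak}, each subinterval has length at most $\tfrac{5}{\cspd}\tausus_{j_i}$. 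Summing gives $t_{r+1}-t_1 \le \tfrac{20}{\cspd}\tausus_k$, after which $T_k$ can no longer be blocked-from-below, so either $T_k$ transitions at $t_{r+1}$ or an above-block kicks in and the first case applies within another $\tfrac{10}{\cspd}\tausus_k$ steps, safely inside $\tfrac{30}{\cspd}\tausus_k$.

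The main obstacle is a circular dependence: \Cref{lem:f-copy-tail-time-weak} is itself conditioned on a suspension-time bound of $\tfrac{30}{\cspd}\tausus_k$, which is what we are trying to prove. I plan to break the circularity by a joint induction ordered by the time-step at which a thread enters suspension phase: assuming the present lemma holds for every thread that entered suspension strictly before $T_k$, those threads satisfy the hypothesis of \Cref{lem:f-copy-tail-time-weak}, so every invocation of it inside the above argument (which is only needed for threads $T_{j_i}$ and $T_j$ that are already in copy/tail, hence entered suspension earlier than $T_k$) is legitimate. Special care is needed to verify that the blocking threads indeed entered suspension before $T_k$ (true because they are in copy or tail when $T_k$ is still in suspension) and to justify that no blocking thread chain can recursively depend on $T_k$ itself; both follow from the strict time ordering.
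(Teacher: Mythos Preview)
Your plan is essentially the paper's proof: the blocked-from-above/blocked-from-below dichotomy, the geometrically decreasing chain of lower blocking threads, and the $\tfrac{20}{\cspd}\tausus_k + \tfrac{10}{\cspd}\tausus_k$ accounting all match. Your chain construction picks the blocking thread with latest termination time, whereas the paper picks the one with highest level; both choices yield $\tausus_{j_{i+1}}\le\tfrac12\tausus_{j_i}$ via the copy-entry rule, so this is a cosmetic difference.

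The one real gap is your induction ordering. Ordering by the time a thread \emph{enters suspension} does not justify the inductive hypothesis: your parenthetical ``true because they are in copy or tail when $T_k$ is still in suspension'' establishes only that a blocking thread $T_j$ entered \emph{copy} before $T_k$ leaves suspension, not that $T_j$ entered \emph{suspension} before $T_k$. Concretely, a higher-level thread $T_j$ may enter suspension after $T_k$, transition immediately to copy, and then block $T_k$ from above; your hypothesis would not cover $T_j$. The paper resolves this with a double induction, first on time (increasing) and then on level $k$ (decreasing): the level component guarantees the hypothesis for all higher-level threads regardless of when they entered suspension, which is exactly what the blocked-from-above case needs. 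Equivalently, you could order by the time a thread \emph{leaves} suspension (enters copy or is aborted), with level as tiebreaker; under that ordering your parenthetical becomes a correct justification, since every blocking thread has already left suspension while $T_k$ is still waiting.
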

\begin{proof}
    We prove the lemma by induction first on time in increasing order, then on level $k$ in decreasing order. That is, in our inductive step, the assumption of \Cref{lem:f-copy-tail-time-weak} holds for all threads at levels higher than $k$, and also holds for all threads that entered copy phase before $t$.

    If $T_k$ is in suspension phase but does not enter copy phase at any time-step $t'$, one of the following reasons must hold:
    \begin{enumerate}
        \item (stopped from above) There exists a higher thread $T_j, j> k$ in copy or tail phase such that $\tausus_j < 2 \tausus_k$.
        \item (stopped from below) There exists a lower thread $T_\ell, \ell<k$ in copy or tail phase such that $\tausus_\ell < 2 \tausus_k$. By the sequential behavior of threads, $T_\ell$ must have entered copy phase before $t'$.
    \end{enumerate}
    If $T_k$ is stopped from above at time-step $t$, then $T_k$ cannot enter copy phase before $T_j$ terminates since $\tausus_j, \tausus_k$ cannot change until the threads terminate. But, termination of $T_j$ will abort $T_k$. So, $T_k$ will get aborted in suspension phase. By \Cref{lem:f-copy-tail-time-weak}, this will happen by time-step $t+\frac{5}{\cspd}\cdot \tausus_j \le t+\frac{10}{\cspd}\tausus_k$. 

    The remaining case is that $T_k$ is stopped from below at time-step $t$. In this case, we define a sequence of levels $(\ell_1, \ell_2, \ldots, \ell_r)$ and time-steps $(t_0, t_1, t_2, \ldots, t_r)$ as follows. We start with $t_0=t$. For each $i\ge 0$, we assume $T_k$ is stopped from below at time-step $t_i$, and let $\ell_{i+1}$ be the highest thread in copy or tail phase at $t_i$ such that $\ell<k, \tausus_\ell<2\tausus_k$. Then, let $t_{i+1}$ be the time-step when $T_{\ell_{i+1}}$ terminates. The sequence ends at $r$ if $T_k$ is not stopped from below or not in suspension phase after $T_{\ell_r}$ terminates at $t_r$.

    By \Cref{lem:f-copy-tail-time-weak}, $t_{i+1}-t_i \le \frac{5}{\cspd}\cdot \tausus_{\ell_i}$ for each $i$ in the sequence. To bound the total waiting time $t_r-t$, we bound the sum of $\tausus_{\ell_i}$ as follows. For each $i\in[2,r]$, $\ell_i$ is higher than $\ell_{i-1}$ since $T_{\ell_i}$ is not aborted at the termination of $T_{\ell_{i-1}}$. Hence, $T_{\ell_i}$ is not in copy or tail phase at $t_{i-2}$, otherwise we should not have chosen $\ell_{i-1}$. That is, $T_{\ell_i}$ enters copy phase during $[t_{i-2}+1, t_{i-1}]$ when $T_{i-1}$ is in copy or tail phase. By the rule to enter copy phase, $\tausus_{\ell_i}\le \frac 12 \tausus_{\ell_{i-1}}$.
    Since this holds for all $i\in[2, r]$, $\tausus_{\ell_i}$ form a geometric series. We conclude $$t_r-t \le \frac{5}{\cspd}\sum_{i=1}^r \tausus_{\ell_i} \le \frac{10}{\cspd}\cdot \tausus_{\ell_1} \le \frac{20}{\cspd}\cdot \tausus_k.$$

     Note that $T_k$ cannot enter copy phase before $t_r$ since there is always some $T_{\ell_i}$ stopping $T_k$ from below during $[t, t_r]$.
    Also, $T_k$ is not stopped from below after $t_r$. So, after $t_r$, there are the following cases.
    \begin{enumerate}
        \item $T_k$ gets aborted.
        \item $T_k$ enters copy phase.
        \item $T_k$ is stopped from above. We can use the same argument as the case of stopped from above at $t$, with an additional waiting time $t_r-t$. Then, $T_k$ gets aborted in suspension phase by $t+\frac{30}{\cspd}\cdot \tausus_k$.
    \end{enumerate}
    The lemma holds in all cases.
\end{proof}

We combine the results in this section as follows.
\begin{lemma}\label{lem:f-full-lifetime-bound}
    Suppose a background thread $T_k$ is running at time-step $t$. Then, the lifetime of $T_k$ is at most $0.1|L_k(\bF^{(t)})|$ time-steps.
\end{lemma}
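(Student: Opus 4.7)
The plan is to bound the lifetime by summing up the durations of the five phases (preparation, computation, suspension, copy, tail), with each bound coming directly from the facts/lemmas already established in Section~\ref{sec:lifetime}. First I will dispose of the base-thread case: if $|L_k(\bF^{(\tprep)})|\le \cspd$ at the time $\tprep$ when $T_k$ is created, then $T_k$ terminates within the same time-step, so $\tend-t=0$ for every $t$ in the (singleton) lifetime and the conclusion is trivial. So from now on I assume $|L_k(\bF^{(\tprep)})|>\cspd$, and let $\Delta^\star := \tend-\tprep$, with the phase transition times $\tcomp,\tsus,\tcopy,\ttail$ defined as in Section~\ref{sec:description-f}.

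Next I will control the three ``size'' quantities that appear in the phase bounds, namely $|E(\bB_k^{(\tcomp)})|$, $\tausus_k$, and $|E(\bB_k^{(\ttail)})|$, all in terms of $M := |L_k(\bF^{(\tprep)})|+\Delta^\star$. From \Cref{lem:f-Ek-correct} we have $E(\bB_k)\subseteq L_k(\bF)$, and $|L_k(\bF^{(t')})|\le |L_k(\bF^{(\tprep)})|+(t'-\tprep)\le M$ for every $t'\in[\tprep,\tend]$, so $|E(\bB_k^{(\tcomp)})|\le M$. For $\tausus_k=|\Sstar(\bB_k^{(\tsus)})|$, the key observation is a bijection: each set added to $\Sstar(\bB_k)$ (either by the while loop of \Cref{alg:f-rebuild} or by the ``\textbf{elif}'' branch of \Cref{alg:f-insert}) is responsible for moving at least one distinct element from $E(\bB_k)$ into $\Cstar(\bB_k)$, so $\tausus_k\le |\Cstar(\bB_k^{(\tsus)})|\le |\Ustar(\bB_k^{(\tsus)})|$; and elements enter $\Ustar(\bB_k)$ only from the initial sub-universe $L_k(\bF^{(\tprep)})$ or from insertions during $[\tprep,\tsus]$, giving $\tausus_k\le M$. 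Finally, during suspension and copy, $|E(\bB_k)|$ grows by at most one per time-step, and entering suspension requires $|E(\bB_k^{(\tsus)})|\le |\Sstar(\bB_k^{(\tsus)})|=\tausus_k$, so $|E(\bB_k^{(\ttail)})|\le \tausus_k+(\ttail-\tsus)\le M+\tfrac{32.1}{\cspd}M$.

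With these size bounds in hand, I add the durations from \Cref{lem:f-prep-phase-time,lem:f-comp-phase-time,lem:f-suspension-phase-time,lem:f-copy-phase-time,lem:f-tail-phase-time-basic}, getting $\Delta^\star\le \frac{1}{\cspd}\bigl(1.1+1.1+30+2.1+1.1(1+\tfrac{32.1}{\cspd})\bigr)M\le \tfrac{36}{\cspd}M$ for $\cspd=500$. Substituting $M=|L_k(\bF^{(\tprep)})|+\Delta^\star$ and rearranging yields a linear inequality in $\Delta^\star$ that resolves to $\Delta^\star\le 0.08\,|L_k(\bF^{(\tprep)})|$ once $\cspd=500$. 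I want to stress that this is not a circular argument: it is a one-shot algebraic inequality derived from unconditional per-phase bounds, not an induction on time.

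To finish, I translate the bound on $\Delta^\star$ to a bound at an arbitrary $t\in[\tprep,\tend]$. Since $|L_k(\bF)|$ changes by at most one per time-step and is modified only by insertions and deletions (as \Cref{lem:f-switch-not-change-subuniverse} rules out changes due to lower-level terminations), $|L_k(\bF^{(t)})|\ge |L_k(\bF^{(\tprep)})|-\Delta^\star\ge 0.92\,|L_k(\bF^{(\tprep)})|$, so $|L_k(\bF^{(\tprep)})|\le 1.09\,|L_k(\bF^{(t)})|$, which gives $\tend-t\le \Delta^\star\le 0.08\cdot 1.09\,|L_k(\bF^{(t)})|<0.1\,|L_k(\bF^{(t)})|$ as required. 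I expect the main technical care point to be the bijection argument bounding $\tausus_k$ by $|\Cstar(\bB_k^{(\tsus)})|$ and its relation to $L_k(\bF^{(\tprep)})|$, since it requires carefully tracking which algorithm branches add sets to $\Sstar(\bB_k)$ and ensuring injectivity; the rest is a bookkeeping calculation.
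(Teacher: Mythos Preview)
Your proof is correct and follows essentially the same approach as the paper: both sum the per-phase duration bounds from \Cref{lem:f-prep-phase-time,lem:f-comp-phase-time,lem:f-suspension-phase-time,lem:f-copy-phase-time,lem:f-tail-phase-time-basic} (the paper groups suspension/copy/tail via \Cref{lem:f-copy-tail-time-weak}), bound $\tausus_k$ by the same bijection to elements of $\Cstar(\bB_k)$, and then translate back to time~$t$ using that $|L_k(\bF)|$ changes by at most one per step. The only cosmetic difference is that you close the argument by solving a linear inequality in $\Delta^\star$ via $M:=|L_k(\bF^{(\tprep)})|+\Delta^\star$, whereas the paper uses a bootstrap (show the bound assuming $\tend\le \tprep+0.1|L_k(\bF^{(\tprep)})|$, then verify the assumption); your formulation is slightly cleaner but equivalent.
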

\begin{proof}
    Suppose $T_k$ starts at $\tprep$ and terminates at $\tend$. Suppose $T_k$ enters computation phase, suspension phase and copy phase at $\tcomp, \tsus, \tcopy$ respectively. If $T_k$ terminates before these phases, we collapse the phases by redefining the notations to be $\tend$. Any time bound for collapsed phases is trivially true since the time gap is 0 under this definition, even though the assumptions for corresponding lemmas may not hold. By the rule of base threads, we may assume $|L_k(\bF^{(\tprep)})|\ge \cspd$.
    
    We have the following time bounds:
    \begin{align*}
        \tcomp-\tprep &\le \frac{1.1}{\cspd}\cdot |L_k(\bF^{(\tprep)})| & \quad (\text{\Cref{lem:f-prep-phase-time}})\\
        \tsus-\tcomp &\le \frac{1.1}{\cspd}\cdot |L_k(\bF^{(\tcomp)})| + 1 & \quad (\text{\Cref{lem:f-comp-phase-time,lem:f-Ek-correct}})\\
        \tend-\tsus &\le \frac{35}{\cspd}\cdot \tausus_k & \quad (\text{\Cref{lem:f-copy-tail-time-weak,lem:f-suspension-phase-time}})
    \end{align*}
    Let $\Delta = 0.1 |L_k(\bF^{(\tprep)})|$. During $t\in [\tprep, \tprep+\Delta]$, $|L_k(\bF^{(t)})|$ can change by at most 1 per time-step, so we have $|L_k(\bF^{(t)})|\in[0.9 |L_k(\bF^{(\tprep)})|, 1.1 |L_k(\bF^{(\tprep)})|]$. To bound $\tausus_k = |\Sstar(\bB_k^{\tsus})|$, we consider a bijection from each set $s\in \Sstar(\bB_k^{\tsus})$ to an element $e$ added to $\Cstar(\bB_k)$ accompanied by $s$ in \Cref{alg:f-prepare,alg:f-rebuild,alg:f-insert}. These elements are distinct and belong to $L_k(\bF)$ when added to $\Cstar(\bB_k)$. So, $\tausus_k\le 1.1 |L_k(\bF^{(\tprep)})|$. In conclusion,
    \[\tend-\tprep\le (1.1+ 1.1\times 1.1 + 1 + 35\times 1.1)\frac{1}{\cspd}\cdot |L_k(\bF^{(\tprep)})| \le  \frac{45} {\cspd}\cdot |L_k(\bF^{(\tprep)})|\]
    Since $\frac{45}{\cspd} < 0.1$, our above argument is valid for $[\tprep, \tend] \subseteq [\tprep, \tprep+\Delta]$. Because $|L_k(\bF^{(t)})| > 0.9 |L_k(\bF^{(\tprep)})|$, we conclude that $\tend-\tprep \le \frac{50}{\cspd}|L_k(\bF^{(t)})| = 0.1|L_k(\bF^{(t)})|$.
\end{proof}

\begin{lemma}\label{lem:f-comp-phase-time-full}
    Suppose a background thread $T_k$ is in the computation phase and $E(\bB_k) > |\Sstar(\bB_k)|$ at time-step $t$. Then, $T_k$ will terminate by $t+0.1|E(\bB_k^{(t)})|$.
\end{lemma}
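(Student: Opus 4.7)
The plan is to decompose the interval $[t, \tend]$ into (a) the remainder of the computation phase $[t, \tsus]$, and (b) the subsequent suspension, copy, and tail phases $[\tsus, \tend]$, and bound each piece separately. If $T_k$ terminates or gets aborted during the computation phase before ever reaching suspension, then $\tend \le \tsus$ and \Cref{lem:f-comp-phase-time} already gives $\tend - t \le \frac{1.1}{\cspd}|E(\bB_k^{(t)})| + 1$, which is comfortably below $0.1|E(\bB_k^{(t)})|$ when $|E(\bB_k^{(t)})|$ is not tiny (and the edge case of a very small $|E(\bB_k^{(t)})|$ is handled by the rule of shortcut threads, which terminates $T_k$ within one time-step). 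So the interesting case is when $T_k$ enters the suspension phase at some time $\tsus \le t + \frac{1.1}{\cspd}|E(\bB_k^{(t)})| + 1$.

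The key step is to bound $\tausus_k = |\Sstar(\bB_k^{(\tsus)})|$ in terms of $|E(\bB_k^{(t)})|$, using the hypothesis $|\Sstar(\bB_k^{(t)})| < |E(\bB_k^{(t)})|$. The growth $\tausus_k - |\Sstar(\bB_k^{(t)})|$ during $[t, \tsus]$ comes from two sources: (i) sets added inside the inner while loop of \Cref{alg:f-rebuild}, each of which moves at least one element out of $E(\bB_k)$ into $\Cstar(\bB_k)$; and (ii) sets added via case 2 of \Cref{alg:f-insert}, at most one per time-step and only upon an insertion into the sub-universe. A net-change accounting of $|E(\bB_k)|$ bounds the number of sets of type (i) by $|E(\bB_k^{(t)})| - |E(\bB_k^{(\tsus)})| + I_3 \le |E(\bB_k^{(t)})| + I$, where $I \le \tsus - t$ is the total number of insertions and $I_3$ those routed to $E(\bB_k)$. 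Combining with the hypothesis yields
\[
\tausus_k \le |\Sstar(\bB_k^{(t)})| + |E(\bB_k^{(t)})| + I < 2|E(\bB_k^{(t)})| + (\tsus - t).
\]

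For the second piece, I would apply \Cref{lem:f-suspension-phase-time} to get that $T_k$ either gets aborted in suspension within $\frac{30}{\cspd}\tausus_k$ steps or enters copy phase by then, and in the latter case \Cref{lem:f-copy-tail-time-weak} gives termination within another $\frac{5}{\cspd}\tausus_k$ steps. Either way, $\tend - \tsus \le \frac{35}{\cspd}\tausus_k$. Substituting the bound on $\tausus_k$ and on $\tsus - t$ and simplifying gives
\[
\tend - t \le \frac{1.1}{\cspd}|E(\bB_k^{(t)})| + 1 + \frac{35}{\cspd}\left(2|E(\bB_k^{(t)})| + \frac{1.1}{\cspd}|E(\bB_k^{(t)})| + 1\right) = \frac{O(1)}{\cspd}|E(\bB_k^{(t)})| + O(1),
\]
which, together with $\cspd$ being a sufficiently large constant and the shortcut-thread rule handling the small-$|E(\bB_k^{(t)})|$ regime, yields the desired bound of $0.1|E(\bB_k^{(t)})|$.

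The main obstacle is the bookkeeping in step (3) that establishes $\tausus_k \lesssim 2|E(\bB_k^{(t)})|$: it is tempting to simply say $|\Sstar|$ grows at rate $\cspd$ per step and get a bound of the form $|\Sstar(\bB_k^{(t)})| + \cspd(\tsus - t)$, but this is too lossy and loses the factor we need from the hypothesis. The trick is to tie the growth of $|\Sstar|$ directly to the decrease of $|E(\bB_k)|$ via the fact that every rebuild-added set removes at least one element from $E(\bB_k)$, and to charge insertion-induced set additions separately against the at-most-one-update-per-time-step bound on $I$.
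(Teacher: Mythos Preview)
Your proposal is correct and follows the same decomposition and the same key lemmas (\Cref{lem:f-comp-phase-time}, \Cref{lem:f-suspension-phase-time}, \Cref{lem:f-copy-tail-time-weak}) as the paper's proof. The only difference is in bounding $\tausus_k$: the paper asserts $\tausus_k \le |\Sstar(\bB_k^{(t)})| + (\tsus - t) \le (1+\tfrac{2.1}{\cspd})|E(\bB_k^{(t)})|$ directly, while your element-accounting argument (charging each rebuild-added set to a removal from $E(\bB_k)$) gives the slightly weaker but more carefully justified $\tausus_k \le (2+o(1))|E(\bB_k^{(t)})|$; either bound suffices for the final $0.1|E(\bB_k^{(t)})|$ conclusion.
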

\begin{proof}
    Let $\tsus$ be the time when $T_k$ finishes computation phase, and $\tend$ be the time when $T_k$ terminates. By \Cref{lem:f-comp-phase-time}, $\tsus-t\le \frac{1.1}{\cspd}\cdot |E(\bB_k^{(t)})| + 1 \le \frac{2.1}{\cspd}\cdot |E(\bB_k^{(t)})|$. Here, the last step uses the rule of shortcut threads. By \Cref{lem:f-copy-tail-time-weak,lem:f-suspension-phase-time}, $\tend-\tsus \le \frac{35}{\cspd}\tausus_k$ (trivially true if $T_k$ does not enter suspension phase). Since sets in $\Sstar(\bB_k)$ cannot be removed, we have $\tausus_k\le |\Sstar(\bB_k^{(t)})| + (\tsus-t) \le (1+\frac{2.1}{\cspd})\cdot |E(\bB_k^{(t)})|$.
    In conclusion, $\tend-t \le (\frac{2.1}{\cspd} + \frac{35}{\cspd}(1+\frac{2.1}{\cspd}))\cdot |E(\bB_k^{(t)})|\le 0.1|E(\bB_k^{(t)})|$.
\end{proof}

\subsection{Tidy Property}

Recall that we say a hierarchical solution $\bF$ is $\eps$-tidy if $|D_k(\bF)| + |P_k(\bF)| \le \eps \cdot (|A_k(\bF)| + |E(\bF)|)$ for all $k\in[0,\lmax]$.
We next establish the tidy property for the foreground threads and all background threads. It will be useful in bounding approximation factor.
Intuitively, we can expect from the lifetime bounds that there cannot accumulate too many dormant and passive elements.

\begin{lemma}\label{lem:f-background-tidy}
    Suppose a background thread $T_k$ is in computation or later phases at time-step $t$, and $\bF$ was $\eps$-tidy when $T_k$ was created at time-step $\tprep$. Then, $\bB_k^{(t)}$ is $0.1$-tidy at levels $j \le k$, and $(1.2\eps+0.2)$-tidy at levels $j > k$.
\end{lemma}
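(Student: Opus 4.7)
The plan is to establish the two ranges of $j$ separately, with the $j \le k$ case forming the conceptual core. For levels $j \le k$, the guiding intuition is that levels $\le k$ of $\bB_k$ are rebuilt from scratch by the primal-dual procedure, so dormant and passive elements at levels $\le j$ can only accrue through external updates during what \Cref{lem:f-full-lifetime-bound} guarantees to be a short lifetime. Let $t_j \le t$ be the first time-step at which the copy pointer $p_k$ dropped to a value $\le j$; if no such $t_j$ exists, then $\Cstar(\bB_k^{(t)})$ contains no element at level $\le j$, giving $|D_j(\bB_k^{(t)})| + |P_j(\bB_k^{(t)})| = 0$ and tidiness trivially. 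Otherwise, I would first argue
\[
|D_j(\bB_k^{(t)})| + |P_j(\bB_k^{(t)})| \le (t - t_j) + 1,
\]
since every element placed in $\Cstar$ at level $\le j$ by the inner loop of \Cref{alg:f-rebuild} is active (joining a just-tightened set with $w_e = (1+\eps)^{-p_k}$), and passive or dormant level-$\le j$ elements can arise only from a single call to \Cref{alg:f-insert} or \Cref{alg:f-delete} per time-step during $[t_j, t]$.

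I would then lower-bound $|A_j(\bB_k^{(t)})| + |E(\bB_k^{(t)})|$ by casework on the current phase of $T_k$. In the computation phase, \Cref{alg:f-rebuild} has been moving $\cspd$ elements per time-step from $E$ into $\Cstar$ at levels $\le j$ since $t_j$, so $|A_j(\bB_k^{(t)})| \ge (\cspd - 1)(t - t_j)$; with $\cspd = 500$ the ratio is far below $0.1$. In the suspension, copy, or tail phase the pointer is paused at some $p_k^{\mathrm{sus}}$. If $p_k^{\mathrm{sus}} > j$, we are back in the trivial case (no $t_j$ before the tail phase resumes). If $p_k^{\mathrm{sus}} \le j$, I combine the entrance condition $|E(\bB_k^{(\tsus)})| \le |\Sstar(\bB_k^{(\tsus)})| = \tausus_k$ with the lifetime bound $t - \tsus \le 0.1|L_k(\bF^{(t)})|$, using also that $|\Cstar(\bB_k^{(\tsus)})|$ is within an additive lifetime slackness of $|L_k(\bF^{(\tprep)})|$. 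All but at most $\tausus_k$ elements of $\Cstar$ at $\tsus$ sit at levels $\le j$ as active elements, so the denominator stays $\Omega(|L_k(\bF^{(\tprep)})|)$ while the numerator is bounded by the lifetime, again yielding ratio $\le 0.1$.

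For levels $j > k$, I decompose
\[
|D_j(\bB_k^{(t)})| + |P_j(\bB_k^{(t)})| = \bigl(|D_k(\bB_k^{(t)})| + |P_k(\bB_k^{(t)})|\bigr) + \bigl(|D_j(\bB_k^{(t)}) \setminus D_k(\bB_k^{(t)})| + |P_j(\bB_k^{(t)}) \setminus P_k(\bB_k^{(t)})|\bigr).
\]
The first group is bounded by $0.1(|A_k(\bB_k^{(t)})| + |E(\bB_k^{(t)})|) \le 0.1(|A_j(\bB_k^{(t)})| + |E(\bB_k^{(t)})|)$ via the $j = k$ case. For the second, since higher-level background terminations would abort $T_k$ while by \Cref{lem:f-switch-not-change-subuniverse} lower-level terminations leave levels $> k$ untouched, the level-$(k, j]$ part of $\bB_k$ is inherited from $\bF$ and evolves only via insertions/deletions. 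Hence this count is at most $|D_j(\bF^{(\tprep)})| + |P_j(\bF^{(\tprep)})| + (t - \tprep) \le \eps |A_j(\bF^{(\tprep)})| + 0.1|L_k(\bF^{(t)})|$, using $\eps$-tidiness of $\bF^{(\tprep)}$ (with $E(\bF^{(\tprep)}) = \emptyset$) and \Cref{lem:f-full-lifetime-bound}. Bounding $|A_j(\bF^{(\tprep)})| \le |A_j(\bB_k^{(t)})| + |E(\bB_k^{(t)})| + (t - \tprep)$ (since each active element of $\bF^{(\tprep)}$ at level $\le j$ survives at $t$ as active, exposed, or absorbed into the lifetime slack) and $|L_k(\bF^{(t)})| \le |A_j(\bB_k^{(t)})| + |P_j(\bB_k^{(t)})| + |E(\bB_k^{(t)})|$, a short rearrangement of the resulting linear inequality in $|D_j| + |P_j|$ yields the claimed ratio $1.2\eps + 0.2$.

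The main obstacle I anticipate is the suspension/copy/tail phase analysis in the $j \le k$ case. While the computation phase gives a clean $\cspd$-speed bound on the denominator, once the rebuild is paused I must use the entrance condition $|E(\bB_k^{(\tsus)})| \le |\Sstar(\bB_k^{(\tsus)})|$ together with careful tracking of how many elements placed into $\Cstar$ at levels $\le j$ by $\tsus$ remain alive and active at $t$. The bookkeeping must also tolerate the possibility that $p_k$ plunges from $> j$ to $\le j$ within a single time-step (via several outer-loop iterations of \Cref{alg:f-rebuild} chained together), so the critical time-step $t_j$ may witness a burst of activations that has to be carefully attributed in the denominator bound.
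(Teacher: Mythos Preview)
Your plan for $j \le k$ has a genuine gap in the suspension/copy/tail sub-case with $p_k^{\mathrm{sus}} \le j$. The assertion ``all but at most $\tausus_k$ elements of $\Cstar$ at $\tsus$ sit at levels $\le j$ as active elements'' is false: elements placed at levels $(j,k+1]$ during $[\tprep,t_j]$ can be arbitrarily many (a single set added to $\Sstar$ can cover thousands of exposed elements), so $|\Cstar(\bB_k^{(\tsus)})|-|A_j(\bB_k^{(\tsus)})|$ is \emph{not} bounded by $\tausus_k$. Consequently your conclusion that the denominator $|A_j(\bB_k^{(t)})|+|E(\bB_k^{(t)})|$ stays $\Omega(|L_k(\bF^{(\tprep)})|)$ does not follow. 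Concretely, if almost all of $L_k(\bF^{(\tprep)})$ is absorbed into a few big sets at levels $>j$ during computation, then at $\tsus$ both $|A_j|$ and $|E|\le\tausus_k$ can be tiny relative to $|L_k(\bF^{(\tprep)})|$, while the numerator $t-t_j$ can still be comparable to $t-\tsus$, which you only control in terms of $|L_k(\bF^{(\cdot)})|$.

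The paper avoids this by case-splitting on the phase \emph{at $t_j$}, not at $t$. Since $p_k$ can only decrease while \Cref{alg:f-rebuild} is running, $t_j$ lies either in the computation phase (where just before the drop one had $|E|>|\Sstar|$, so \Cref{lem:f-comp-phase-time-full} gives $\tend-t_j\le 0.1|E(\bB_k^{(t_j)})|$) or in the tail phase (so \Cref{lem:f-tail-phase-time-basic} gives $\tend-t_j\le\frac{1.1}{\cspd}|E(\bB_k^{(t_j)})|$). Either way $t-t_j\le 0.1|E(\bB_k^{(t_j)})|$, and since every element of $E(\bB_k^{(t_j)})$ that is not deleted by time $t$ must now be either exposed or active at level $\le j$, this yields the tidy bound directly. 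Your computation-phase argument is essentially the same content as the first case; what you are missing is that the ``later phase'' case reduces to a lifetime bound \emph{from $t_j$} in terms of $|E(\bB_k^{(t_j)})|$, not a lifetime bound from $\tprep$ or $\tsus$ in terms of $|L_k(\bF)|$.

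For $j>k$, your decomposition into levels $\le k$ and $(k,j]$ is workable but more delicate than necessary, because level $k+1$ receives both $\bF$-inherited elements and elements placed by $T_k$ during $p_k=k+1$. The paper instead runs a single potential argument on $|D_j|+|P_j|$ versus $|A_j|+|E|$: starting from the $\eps$-tidy inequality for $\bF^{(\tprep)}$ (with $A_k(\bF^{(\tprep)})$ reinterpreted as $E(\bB_k^{(\tprep)})$), observe that the left side increases and the right side decreases by at most one per time-step, then plug in $\Delta_t\le 0.1(|A_j(\bB_k^{(\tprep)})|+|E(\bB_k^{(\tprep)})|)$ from \Cref{lem:f-full-lifetime-bound}. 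This sidesteps the level-$k{+}1$ bookkeeping entirely.
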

\begin{proof}
    First, we consider levels $j\le k$. Notice that any element added to $C(\bB_k)$ by $T_k$ must have $\lev_e(\bB_k)=p_k$, and $\lev_e(\bB_k)$ cannot change once decided. So, before $p_k$ is decreased to $j$, $D_j(\bB_k) = P_j(\bB_k)=\emptyset$.
    Suppose $p_k$ is decreased to $j$ at time $t_j$. After $t_j$, the only way to generate a dormant element in $D_j(\bB_k)$ is by \Cref{alg:f-delete}, and the only way to generate a passive element in $P_j(\bB_k)$ is by \Cref{alg:f-insert}. So, $|D_j(\bB_k)| + |P_j(\bB_k)|$ can increase by at most 1 per time-step after time-step $t_j$. That is, $$|D_j(\bB_k^{(t)})| + |P_j(\bB_k^{(t)})| \le t-t_j.$$
    We claim that $t-t_j \le 0.1 |E(\bB_k^{(t_j)})|$, which implies the tidy property for level $j$.

    We now prove the claim. Just before $p_k$ decrease to $j\le k$, \Cref{alg:f-rebuild} have finished the last set at level $j+1$. At that time, there are two cases: either $|E(\bB_k)|>|\Sstar(\bB_k)|$, or $|E(\bB_k)|\le |\Sstar(\bB_k)|$.
    In the former case, the claim follows \Cref{lem:f-comp-phase-time-full}. In the latter case, $T_k$ will first enter suspension phase, and must wait until the tail phase to decrease $p_k$ to $j$. Then, the claim follows \Cref{lem:f-tail-phase-time-basic}.

    Next, we consider levels $j > k$. By the tidy property of $\bF^{(\tprep)}$,
    $$|D_j(\bF^{(\tprep)})| + |P_j(\bF^{(\tprep)})|\le \eps \cdot |A_j(\bF^{(\tprep)})|.$$
    At the beginning of $\bB_k$, we view all elements in $A_k(\bF)$ are exposed in $E(\bB_k)$ and $A(\bB_k)$ is empty. With a slight abuse of notation, we view dormant and passive elements from $\bF$ to be in $D_j(\bB_k)$ and $P_j(\bB_k)$. The position of an element in $E(\bB_k), A(\bB_k), P(\bB_k)$ or $D(\bB_k)$ follows the algorithm after it gets processed by \Cref{alg:f-prepare} in the preparation phase. (The alternate view is valid because the lemma applies after the preparation phase.) Then, we have 
    $$|D_j(\bB_k^{(\tprep)})| + |P_j(\bB_k^{(\tprep)})|\le \eps (|A_j(\bB_k^{(\tprep)})| + |E(\bB_k^{(\tprep)})|).$$
    
    We next discuss how this relation changes from $\tprep$ to $t$.
    A new dormant element in $D_j(\bB_k)$ (not from $D_j(\bF^{(\tprep)})$) can only be generated by \Cref{alg:f-delete} or the foreground thread deletion algorithm. Similarly, a new passive element in $P_j(\bB_k)$ (not from $P_j(\bF^{(\tprep)})$) can only be generated by \Cref{alg:f-insert} or the foreground thread insertion algorithm. In addition, \Cref{alg:f-prepare} in preparation phase may move a passive element in $P_k(\bF)$ to level $k+1$ or to $E(\bB_k)$, which cannot increase $|D_j(\bB_k^{(\tprep)})| + |P_j(\bB_k^{(\tprep)})|$. In conclusion, $|D_j(\bB_k^{(\tprep)})| + |P_j(\bB_k^{(\tprep)})|$ can only increase by at most 1 per time-step due to insertion/deletion.
    \Cref{alg:f-prepare,alg:f-rebuild} may assign an exposed element to be active at levels $p_k \le k+1$, which does not change $|A_j(\bB_k)| + |E(\bB_k)|$.
    \Cref{alg:f-insert,alg:f-delete} may remove or add an element to $E(\bB_k)$ due to deletion or insertion.
    So, $|A_j(\bB_k)| + |E(\bB_k)|$ may decrease by at most 1 per time-step.
    By \Cref{lem:f-full-lifetime-bound}, $\Delta_t := t-\tprep\le 0.1|L_k(\bF^{(\tprep)})| =  0.1(|A_j(\bB_k^{(\tprep)})| + |E(\bB_k^{(\tprep)})|)$. So, 
    \begin{align*}
        |D_j(\bB_k^{(t)})| + |P_j(\bB_k^{(t)})|
        &\le |D_j(\bB_k^{(\tprep)})| + |P_j(\bB_k^{(\tprep)})| +\Delta_t\\
        &\le (\eps +0.1) (|A_j(\bB_k^{(\tprep)})| + |E(\bB_k^{(\tprep)})|)\\
        &\le (\eps +0.1) (|A_j(\bB_k^{(t)})| + |E(\bB_k^{(t)})|+\Delta_t)\\
        &\le (\eps + 0.1)(1+\frac{0.1}{1-0.1}) (|A_j(\bB_k^{(t)})| + |E(\bB_k^{(t)})|)\\
        &\le (1.2\eps + 0.2) (|A_j(\bB_k^{(t)})| + |E(\bB_k^{(t)})|)
    \end{align*}
\end{proof}

\begin{lemma}\label{lem:f-foreground-tidy}
    The foreground solution $\bF$ is always $0.5$-tidy.
\end{lemma}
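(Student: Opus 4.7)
The plan is to prove this by induction on time, mimicking the structure of \Cref{lem:foreground-tidy} from the $O(\log n)$ analysis. The base case $t=0$ is trivial since $\bF$ starts empty. For the inductive step, it suffices to establish the following per-level claim for each $k \in [0, \lmax]$: if $\bF$ is $0.1$-tidy at level $k$ just before $T_k$ is created at time-step $t_1$, then $\bF$ is $0.1$-tidy at level $k$ when $T_k$ terminates at time-step $t_2$, and $\bF$ is $0.5$-tidy at level $k$ throughout $[t_1+1, t_2]$. A preliminary observation is that $E(\bF) = \emptyset$ at all times, because the foreground thread assigns a level to every live element upon insertion and never un-defines a level; thus being $\eps$-tidy at level $k$ reduces to $|D_k(\bF)| + |P_k(\bF)| \le \eps |A_k(\bF)|$.

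For the state at termination, I would note that $T_k$ ends either by its own normal termination or by being aborted at the normal termination of some higher thread $T_{k'}$ with $k' > k$. In both cases the switch step replaces levels $\le k$ of $\bF$ with the corresponding levels of the terminating background solution, which by \Cref{lem:f-background-tidy} together with \Cref{fact:termination-Ek-empty} is $0.1$-tidy at every level $\le k$. This property passes to $\bF$ at level $k$ immediately after the switch, establishing the endpoint half of the claim.

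For the bound during the lifetime, I would invoke \Cref{lem:f-full-lifetime-bound} to get $t - t_1 \le 0.1 |L_k(\bF^{(t)})|$ for every $t \in [t_1+1, t_2]$, combined with the observation that a single insertion or deletion at level $\le k$ changes each of $|D_k(\bF)| + |P_k(\bF)|$ and $|A_k(\bF)|$ by at most $1$. Chaining from the inductive hypothesis $|D_k(\bF^{(t_1)})| + |P_k(\bF^{(t_1)})| \le 0.1 |A_k(\bF^{(t_1)})|$, using $|A_k(\bF^{(t_1)})| \le |A_k(\bF^{(t)})| + (t - t_1)$ and $|L_k(\bF)| = |A_k(\bF)| + |P_k(\bF)|$, a short computation of the form carried out in \Cref{lem:foreground-tidy} yields a bound $|D_k(\bF^{(t)})| + |P_k(\bF^{(t)})| \le c |A_k(\bF^{(t)})|$ with the constant $c$ comfortably below $0.5$.

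The main obstacle will be absorbing the effect of lower-thread switches during $[t_1+1, t_2]$, since when $T_j$ with $j < k$ normally terminates it rewrites $\bF$ at levels $\le j+1$ without aborting $T_k$. \Cref{lem:f-switch-not-change-subuniverse} guarantees that $L_k(\bF)$ is invariant under such a switch, so $|A_k| + |P_k|$ is unchanged; moreover, the only dormant elements affected are those that $\bB_j$ may drop from levels $\le j$ of old $\bF$, which can only decrease $|D_k|$ and hence helps. It therefore suffices to bound the possible decrease of $|A_k|$ caused by the switch, which I would do by using that $\bB_j$ is $0.1$-tidy at levels $\le j$ (from \Cref{lem:f-background-tidy}) together with \Cref{fact:termination-Ek-empty}: the new elements reassigned to levels $\le j+1$ of $\bF$ are overwhelmingly active in $\bB_j$, so any loss of active elements at level $k$ is small and fits inside the slack between the $0.1$ entry bound and the $0.5$ target, completing the claim and hence the lemma.
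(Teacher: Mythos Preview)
Your overall structure is right and matches the paper: induction on time, per-level claim, \Cref{lem:f-background-tidy} for the endpoint, \Cref{lem:f-full-lifetime-bound} for the drift bound, and the observation that $E(\bF)=\emptyset$ so tidiness reduces to $|D_k|+|P_k|\le \eps|A_k|$. You also correctly isolate the real difficulty, namely that lower-thread switches $T_j$ with $j<k$ rewrite $\bF$ during $T_k$'s lifetime.

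The gap is in how you propose to absorb those switches. Bounding the per-switch loss of active elements by the tidy property of $\bB_j$ gives only a multiplicative guarantee (each switch can turn up to a $0.1$-fraction of the affected live elements passive), and there can be many such switches over $T_k$'s lifetime; the losses compound and the ``slack between $0.1$ and $0.5$'' is not enough. Your side claim that $|D_k|$ can only decrease under a lower switch is also not literally true: a dormant element recorded in $\bB_j$ may have been dropped from $\bF$ by an even lower switch in the meantime and is then reintroduced by $T_j$'s switch.

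What the paper actually relies on (phrased there as ``these events cannot \ldots\ decrease $|A_k(\bF)|$'') is a sharper element-wise invariant: every element $e\in A_k(\bF^{(\tprep_k)})$ that is not deleted during $[\tprep_k,t]$ remains in $A_k(\bF^{(t)})$. This holds by induction on $t$: whenever some $T_j$ (with $j<k$) switches at time $t$, it was created at $\tprep_j\in[\tprep_k,t-1]$, so by the inductive hypothesis $e\in A_j(\bF^{(\tprep_j)})$; then \Cref{alg:f-prepare} places $e$ in $E(\bB_j)$, \Cref{alg:f-rebuild} later fixes it with $w_e=(1+\eps)^{-\lev_e}$, and the switch leaves $e$ active. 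This single invariant immediately gives the two inequalities you need without tracking individual switches: $|A_k(\bF^{(t)})|\ge |A_k(\bF^{(\tprep_k)})|-\Delta_t$, and (by contrapositive) every element of $D_k(\bF^{(t)})\cup P_k(\bF^{(t)})$ is either already in $D_k(\bF^{(\tprep_k)})\cup P_k(\bF^{(\tprep_k)})$ or was inserted/deleted during $(\tprep_k,t]$, so $|D_k(\bF^{(t)})|+|P_k(\bF^{(t)})|\le |D_k(\bF^{(\tprep_k)})|+|P_k(\bF^{(\tprep_k)})|+\Delta_t$. Plugging these into your chain of inequalities finishes the lemma.
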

\begin{proof}
    We prove by induction on time-step $t$. As the base case, the initial greedy solution is 0-tidy. For the inductive case, it suffices to prove the following claim for every level $k\in[0,\lmax]$: Suppose $\bF$ is 0.2-tidy at level $k$ when $T_k$ starts. Then, $\bF$ is 0.2-tidy at level $k$ when $T_k$ terminates, and $\bF$ is 0.5-tidy at level $k$ throughout the lifetime of $T_k$.

    When $T_k$ terminates, it either normally terminates or gets aborted due to a normal termination of a higher thread. In both cases, \Cref{lem:f-background-tidy} guarantees that the $\bF$ is 0.1-tidy at level $k$ after the switch step. This establishes the first part of the claim.

    For the second part of the claim, suppose $T_k$ is created at time-step $\tprep$ and terminates at time-step $\tend$. By \Cref{lem:f-full-lifetime-bound}, for any $t\in [\tprep, \tend]$,
    $$\Delta_t :=\tend-\tprep \le 0.1 |L_k(\bF^{(t)})| = 0.1(|A_k(\bF^{(t)})|+ |P_k(\bF^{(t)})|).$$
    Notice that $E(\bF)$ is always empty since $\bF$ is feasible. Since $\bF$ is 0.2-tidy when $T_k$ is created,
    $$|D_k(\bF^{(\tprep)})| + |P_k(\bF^{(\tprep)})|\le 0.2 |A_k(\bF^{(\tprep)})|.$$
    Combining the above inequalities gives
    \begin{equation}
        \Delta_t \le |P_j(\bB_k^{(\tprep)})|
    \end{equation}
    During the lifetime of $T_k$, a new dormant element in $D_k(\bF)$ can be generated by the foreground thread deletion algorithm, a new passive element in $P_k(\bF)$ can be generated by the foreground thread insertion algorithm, and an element can be removed from $A_k(\bF)$ by the foreground thread deletion algorithm. These events can happen for one element per time-step due to insertion/deletion. In addition, termination of a lower thread $T_j, j<k$ may remove some dormant elements, or move some passive elements to be either passive or active with levels $j+1 \le k$. These events cannot increase $|D_k(\bF)|$, $|P_k(\bF)|$ or decrease $|A_k(\bF)|$. For any $t\in[\tprep, \tend]$, we have the following.
        \begin{align*}
        |D_k(\bF^{(t)})| + |P_k(\bF^{(t)})|
        &\le |D_k(\bF^{(\tprep)})| + |P_k(\bF^{(\tprep)})| +\Delta_t\\
        &\le 0.2 (|A_k(\bF^{(t)})|+\Delta_t)+\Delta_t \\
        &\le 0.2|A_k(\bF^{(t)})| + 0.12(|A_k(\bF^{(t)})|+ |P_k(\bF^{(t)})|)\\
        |D_k(\bF^{(t)})| + |P_k(\bF^{(t)})|  & \le \frac{0.2+0.12}{1-0.12}|A_k(\bF^{(t)})| \le 0.5|A_k(\bF^{(t)})|
    \end{align*}
\end{proof}

\subsection{Approximation Factor}
The $O(f)$-competitiveness of the foreground solution and each (non-empty) background solution follows \Cref{lem:f-approx-from-tidy}.

The buffer solutions $\calR_k$ are empty for background threads before copy phase, and have size $O(\tausus_k)$ after entering copy phase. Since $\tausus_k$ of threads in copy and tail phases must form a geometric series, the total size of buffer solutions is at most constant times larger than the highest thread, which is $O(f)$-competitive.

\begin{define}
    The foreground solution $\bF$ is said to be $\eps$-tidy at level $k$ if $|D^\bF_k| + |P^\bF_k| \le \eps \cdot |A^\bF_k|$, and $\eps$-dirty at level $k$ otherwise.
    If $\bF$ is $\eps$-tidy at every level $k\in[0,\lmax]$, we say it is $\eps$-tidy.

    A background solution $\bB_k$ is said to be $\eps$-tidy at level $j\ge p_k$ if $|D^{\bB_k}_j| + |P^{\bB_k}_j| \le \eps \cdot |A^{\bB_k}_j| + |E(\bB_k)|$, and $\eps$-dirty at level $j$ otherwise.
    If $\bB_k$ is $\eps$-tidy at every level $j\in[p_k,\lmax]$, we say it is $\eps$-tidy.
\end{define}

\begin{lemma}\label{lem:f-approx-from-tidy}
    If $\bF$ or $\bB_k$ is an $\eps$-tidy hierarchical solution for some $\eps=\Omega(1)$, then it is $O(f)$-competitive.
\end{lemma}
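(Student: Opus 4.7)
My plan is to reduce the claim to \Cref{lem:f-approx-general}, which says that any feasible extended dual solution satisfying $w_U \le (1+\eps) w_L$ is $((1+\eps)^2 f)$-competitive. The dual feasibility and tight set invariants have already been established in \Cref{lem:f-dual-feasible,lem:f-set-in-sol-tight}, so the only thing left to verify is the ratio bound $w_U \le (1+O(\eps)) w_L$. Since $w_U = w_L + w_D$ (where $w_D$ is the total dual value on dormant elements outside $L$), it suffices to show $w_D \le O(\eps) \cdot w_L$.

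For the foreground solution $\bF$, I will decompose the dual sum level by level. Let $a_j$ (resp.\ $b_j$) denote the number of active (resp.\ dormant-plus-passive) elements at exactly level $j$, and let $A(k), B(k)$ denote their cumulative counts up to level $k$. The level invariant gives $w_e \le (1+\eps)^{-j}$ for every covered element at level $j$, with equality for active elements, so $w_D + w_P \le \sum_j b_j (1+\eps)^{-j}$ and $w_A = \sum_j a_j (1+\eps)^{-j}$. The tidy hypothesis is exactly $B(k) \le \eps \, A(k)$ for every $k$. Applying Abel summation with $\alpha = (1+\eps)^{-1}$,
\begin{equation*}
\sum_{j=0}^{\lmax} b_j \alpha^j = B(\lmax)\alpha^{\lmax} + (1-\alpha)\sum_{j=0}^{\lmax-1} B(j)\alpha^j \le \eps\left[A(\lmax)\alpha^{\lmax} + (1-\alpha)\sum_{j=0}^{\lmax-1} A(j)\alpha^j\right] = \eps \sum_j a_j \alpha^j,
\end{equation*}
which gives $w_D + w_P \le \eps \cdot w_A \le \eps \cdot w_L$, hence $w_U \le (1+\eps) w_L$.

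For a background solution $\bB_k$, the argument is analogous but must account for exposed elements. By \Cref{lem:f-Ek-correct}, $E(\bB_k) \subseteq L_k(\bF) \subseteq L$, so every exposed element is live and contributes its dual value $(1+\eps)^{-p_k}$ to $w_L(\bB_k)$. In particular, $w_L(\bB_k) \ge \sum_j a_j \alpha^j + |E(\bB_k)| \alpha^{p_k}$. The tidy hypothesis at levels $j \ge p_k$ is $B(j) \le \eps(A(j) + |E(\bB_k)|)$ (and $a_j = b_j = 0$ for $j < p_k$). Repeating the Abel calculation, the extra contributions from the $|E(\bB_k)|$ term collapse geometrically to $\eps|E(\bB_k)|\alpha^{p_k}$, yielding
\begin{equation*}
w_D(\bB_k) + w_P(\bB_k) \le \eps \left( \sum_j a_j \alpha^j + |E(\bB_k)|\alpha^{p_k} \right) \le \eps \cdot w_L(\bB_k),
\end{equation*}
so again $w_U(\bB_k) \le (1+\eps) w_L(\bB_k)$.

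In both cases, invoking \Cref{lem:f-approx-general} then yields $|\calS(\bF)| \le (1+\eps)^2 f \cdot \OPT$ (resp.\ $|\calS(\bB_k)| \le (1+\eps)^2 f \cdot \OPT$), which is $O(f)$-competitive for any constant $\eps = \Omega(1)$. The main technical step is the Abel-summation inequality together with the collapse of the geometric series for the exposed-element correction term; once that is carried out cleanly, the rest is bookkeeping against the invariants established earlier.
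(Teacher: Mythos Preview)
Your proof is correct and follows essentially the same route as the paper's: reduce to \Cref{lem:f-approx-general} via the established invariants, then prove $w_D \le \eps\, w_L$ by an Abel-summation argument that converts the level-wise tidy inequality $B(j)\le \eps(A(j)+|E|)$ into the weighted bound $\sum_j b_j\alpha^j \le \eps(\sum_j a_j\alpha^j + |E|\alpha^{p_k})$. The paper phrases the background case as ``move $E(\bB_k)$ into $A_{p_k}$ and sum from $p_k$,'' while you track the $|E|$ term explicitly and observe the telescoping $\alpha^{\lmax}+(1-\alpha)\sum_{j=p_k}^{\lmax-1}\alpha^j=\alpha^{p_k}$; these are the same computation written two ways.
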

\begin{proof}
    Given \Cref{lem:f-approx-general,lem:f-dual-feasible,lem:f-level-invariant,lem:f-highest-level-invariant,lem:f-set-in-sol-tight}, it remains to prove $w_D\le \eps w_L$ for $\bF$ or $\bB_k$.

    Extend the notation $D_j(\bF)$ to all integers $j$, so that $D_j(\bF)=D_{\lmax}(\bF)$ when $j>\lmax$, and $D_j(\bF) = \emptyset$ when $j<0$.
    \begin{align*}
        w_D &= \sum_{j=0}^{\infty} w_{D_j(\bF)} - w_{D_{j-1}(\bF)} 
        &\le \sum_{j=0}^{\infty} (1+\eps)^{-j} (|D_j(\bF)| - |D_{j-1}(\bF)|) & \quad\text{(level invariant)}\\
        &= \sum_{j=0}^{\infty} ((1+\eps)^{-j}-(1+\eps)^{-(j+1)})|D_j(\bF)|
        &\le \sum_{j=0}^{\infty} \eps(1+\eps)^{-(j+1)}\cdot \eps|A_j(\bF)| & \quad\text{(tidy property)}\\
        &=\eps \sum_{j=0}^{\infty} (1+\eps)^{-j} (|A_j(\bF)| - |A_{j-1}(\bF)|) 
        &= \eps\cdot w_A \le \eps \cdot w_L  &\quad\text{(definition of active elements)}\\
    \end{align*}
    The proof for background solutions is similar if we move $E(\bB_k)$ to $A_{p_k}(\bB_k)$ and take the sum from $p_k$ instead of 0. This is valid because no element or set in $\bB_k$ can have level $< p_k$.
\end{proof}

\begin{lemma}\label{lem:f-sstar-le-tausus}
    Suppose that a background thread $T_k$ is in the copy or tail phase at time-step $t$. Then, $|\Sstar(\bB_k^{(t)})|\le 3\tausus_k$.
\end{lemma}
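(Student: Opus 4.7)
The plan is to show that between the time $\tsus$ when $T_k$ enters suspension and the current time $t$, at most $2\tausus_k$ new sets can be added to $\Sstar(\bB_k)$, so that $|\Sstar(\bB_k^{(t)})| \le 3\tausus_k$. I would classify the sources of new sets: either (a) by the inner while-loop of \Cref{alg:f-rebuild}, which is only active during the tail phase, or (b) by \Cref{alg:f-insert} when handling an insertion. Since at most one insertion is processed per time-step, source (b) contributes at most $t - \tsus$ sets.

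To bound the sets from source (a), the key observation is that whenever a set $s$ is added by \Cref{alg:f-rebuild}, every element in $s \cap E(\bB_k)$ is moved from $E(\bB_k)$ to $\Cstar(\bB_k)$, and this move is permanent: inspecting \Cref{alg:f-delete} and \Cref{alg:f-insert} shows that once an element has a fixed level in $\bB_k$, it cannot re-enter $E(\bB_k)$. Hence each set added by \Cref{alg:f-rebuild} during $[\tsus, t]$ can be charged to at least one distinct element of $Z := \bigcup_{t' \in [\tsus, t]} E(\bB_k^{(t')})$. Elements enter $E(\bB_k)$ only by being initially present at $\tsus$ or by being inserted later (via \Cref{alg:f-insert}), so $|Z| \le |E(\bB_k^{(\tsus)})| + (t - \tsus)$. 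By the criterion that triggered suspension, $|E(\bB_k^{(\tsus)})| \le |\Sstar(\bB_k^{(\tsus)})| = \tausus_k$, giving $|Z| \le \tausus_k + (t - \tsus)$.

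It remains to bound $t - \tsus$. By \Cref{lem:f-suspension-phase-time}, $T_k$ exits suspension within $\frac{30}{\cspd}\tausus_k$ time-steps of $\tsus$, so $\tcopy - \tsus \le \frac{30}{\cspd}\tausus_k$. Applying \Cref{lem:f-copy-tail-time-weak} at time $\tcopy$ (where the assumption on time spent in suspension is satisfied by \Cref{lem:f-suspension-phase-time}) yields $\tend - \tcopy \le \frac{5}{\cspd}\tausus_k$. Together, $t - \tsus \le \tend - \tsus \le \frac{35}{\cspd}\tausus_k$, which is at most $0.07\tausus_k$ for $\cspd = 500$. Assembling the bounds, the number of new sets added during $[\tsus, t]$ is at most $|Z| + (t - \tsus) \le \tausus_k + 2(t - \tsus) \le 1.14\tausus_k$, so $|\Sstar(\bB_k^{(t)})| \le \tausus_k + 1.14\tausus_k < 3\tausus_k$.

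The main conceptual step is the irreversibility of elements leaving $E(\bB_k)$, which justifies charging distinct sets from \Cref{alg:f-rebuild} to distinct witnesses in $Z$; the remaining work is just invoking the already-established lifetime bounds and verifying constants.
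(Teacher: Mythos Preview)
Your proof is correct and follows essentially the same approach as the paper. The paper packages the argument via the potential function $\phi := |\Sstar(\bB_k)| + |E(\bB_k)|$, observing that $\phi$ increases by at most one per insertion and does not increase when \Cref{alg:f-rebuild} adds a set (since at least one exposed element is consumed); your explicit charging of f-rebuild sets to distinct elements of $Z$ together with the separate count for \Cref{alg:f-insert} is the unpacked form of the same observation, and both versions finish with the identical lifetime bound $t-\tsus \le \frac{35}{\cspd}\tausus_k$.
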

\begin{proof}
    Suppose $T_k$ enters suspension phase at $\tsus$. 
    Define potential function $\phi:= |\Sstar(\bB_k)|+|E(\bB_k)|$.
    We have $\phi^{(\tsus)}\le 2\tausus_k$ by the criteria to enter suspension phase.
    During the suspension, copy and tail phases, an insertion in \Cref{alg:f-insert} either adds a new exposed element, or adds a set to cover a new passive element. In both cases, $\phi$ increase by 1. A deletion does not increase $\phi$.
    When \Cref{alg:f-rebuild} adds a set $s$ to $\Sstar(\bB_k)$, the algorithm must remove at least one exposed element, and $\phi$ cannot increase. 
    
    We have $t-\tsus\le 0.1\tausus_k$ by \Cref{lem:f-suspension-phase-time,lem:f-copy-tail-time-weak}. In conclusion,
    $$|\Sstar(\bB_k^{(t)})| \le \phi^{(t)}\le \phi^{(\tsus)} + (t-\tsus) \le 3\tausus_k.$$
\end{proof}

\begin{lemma}\label{lem:f-approx-final}
    The output is $O(f)$-competitive.
\end{lemma}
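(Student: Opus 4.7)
The plan is to bound separately the contributions of the foreground solution $\bF^{(t)}$ and of the nonempty buffer solutions $\calR_k^{(t)}$ to the output, and to show each is $O(f) \cdot \OPT^{(t)}$.

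First, by Lemma on foreground tidiness, $\bF^{(t)}$ is $0.5$-tidy, and combined with the primal--dual invariants (Lemmas f-level-invariant, f-highest-level-invariant, f-dual-feasible, f-set-in-sol-tight) Lemma f-approx-from-tidy immediately gives $|\calS(\bF^{(t)})| \le O(f) \cdot \OPT^{(t)}$. That handles the foreground contribution.

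Next I would observe that $\calR_k^{(t)} = \emptyset$ for any $T_k$ that has not yet entered the copy phase, so only threads currently in copy or tail phase contribute to $\sum_k |\calR_k^{(t)}|$. List these threads as $T_{k_1}, T_{k_2}, \ldots, T_{k_r}$ in the order they entered the copy phase. Whenever $T_{k_{i+1}}$ entered its copy phase, $T_{k_i}$ was already in copy or tail phase, so by the criterion for entering copy phase described in the suspension phase, $\tausus_{k_{i+1}} \le \tfrac{1}{2} \tausus_{k_i}$. Hence $\{\tausus_{k_i}\}$ is geometrically decreasing with ratio at most $1/2$, and writing $k^* := k_1$ we get $\sum_{i=1}^r \tausus_{k_i} \le 2\tausus_{k^*}$. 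Combining with $|\calR_{k_i}^{(t)}| \le |\Sstar(\bB_{k_i}^{(t)})| \le 3\tausus_{k_i}$ (Lemma f-sstar-le-tausus), I obtain
\[
\sum_{k \in [0,\lmax]} |\calR_k^{(t)}| \;\le\; \sum_{i=1}^r 3\tausus_{k_i} \;\le\; 6\tausus_{k^*}.
\]

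It therefore suffices to show $\tausus_{k^*} = O(f) \cdot \OPT^{(t)}$. The key observation is that $\Sstar(\bB_{k^*})$ never shrinks during the lifetime of the thread (sets are only ever added by \Cref{alg:f-prepare,alg:f-rebuild,alg:f-insert}), so
\[
\tausus_{k^*} \;=\; |\Sstar(\bB_{k^*}^{(\tsus_{k^*})})| \;\le\; |\Sstar(\bB_{k^*}^{(t)})| \;\le\; |\calS(\bB_{k^*}^{(t)})|.
\]
Because $\bF$ is always $0.5$-tidy (Lemma f-foreground-tidy), in particular it was $0.5$-tidy at the creation time $\tprep_{k^*}$ of $T_{k^*}$. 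So Lemma f-background-tidy applies and gives that $\bB_{k^*}^{(t)}$ is $\max\{0.1,\, 1.2 \cdot 0.5 + 0.2\} = 0.8$-tidy, which combined with the primal--dual invariants lets us invoke Lemma f-approx-from-tidy to conclude $|\calS(\bB_{k^*}^{(t)})| \le O(f) \cdot \OPT^{(t)}$. Adding the two bounds yields the required $O(f)$-competitiveness of $\bF^{(t)} \cup \bigcup_k \calR_k^{(t)}$.

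The main subtlety to verify is that the hypothesis of Lemma f-background-tidy is actually met at time $t$ and not merely at $\tprep_{k^*}$; fortunately, the lemma is formulated exactly this way (it needs tidiness of $\bF$ only at the creation time), so no separate argument is required. One also has to check that $T_{k^*}$ is past its preparation phase at time $t$, which holds since it is in copy or tail phase by assumption. Everything else is routine bookkeeping.
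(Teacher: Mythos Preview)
Your proposal is correct and follows essentially the same route as the paper: bound the foreground by Lemmas~\ref{lem:f-foreground-tidy} and~\ref{lem:f-approx-from-tidy}, observe the $\tausus_{k_i}$ form a geometric series by the copy-phase entry criterion, use Lemma~\ref{lem:f-sstar-le-tausus} to bound each $|\calR_{k_i}|$, and then bound the largest $\tausus_{k^*}$ via Lemma~\ref{lem:f-background-tidy} and~\ref{lem:f-approx-from-tidy}. The only minor difference is that the paper applies the competitiveness bound to $\Sstar(\bB_{k^*})$ at the moment $\tsus_{k^*}$, whereas you apply it at the current time $t$ after noting that $\Sstar$ only grows; your version is slightly cleaner since it avoids any implicit comparison between $\OPT^{(\tsus_{k^*})}$ and $\OPT^{(t)}$.
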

\begin{proof}
    The foreground solution is $O(f)$-competitive by \Cref{lem:f-approx-from-tidy,lem:f-foreground-tidy}. For each background thread after preparation phase, $\Sstar(\bB_k)$ is $O(f)$-competitive by \Cref{lem:f-approx-from-tidy,lem:f-background-tidy}. Since buffer solutions $\calR_k$ are empty in preparation phase and always a subset of $\Sstar(\bB_k)$, each $\calR_k$ is $O(f)$-competitive.

    By \Cref{lem:f-sstar-le-tausus}, $|\calR_k| \le 3\tausus_k$ for each buffer solution.
    By the criteria to enter copy phase, $\tausus_k$ are upper bounded by a geometric series with ratio 0.5. So, the total size of buffer solutions is upper bounded by the largest $\tausus_k$, which is $O(f)$-competitive by \Cref{lem:f-approx-from-tidy,lem:f-background-tidy} at the time $T_k$ enters suspension phase.
\end{proof}

\subsection{Implementation Details and Update Time}First, we describe the data structures used to maintain the foreground, background, and buffer solutions.

\paragraph{Data Structures.}

The algorithm maintains the following data structures:

\begin{enumerate}
    \item The primal solutions maintained by the algorithm, including $\calS(\bF)$ for the foreground solution, $\Sstar(\bB_k)$ for the background solutions, and $\calR_k$ for the buffer solutions, are organized by levels, and the sets in each level are stored in a balanced binary search tree. We call these set BSTs and use $\Tset(S, \ell)$ to denote the set BST for solution $S$ at level $\ell$.
    The algorithm maintains a global pointer to each set BST.

    \item The covered elements in hierarchical solutions maintained by the algorithm, including $C(\bF)$ for the foreground solution and $\Cstar(\bB_k)$ for the background solutions, are partitioned into active, passive and dormant elements, and organized by levels. 
    The elements in each level of each part are stored in a balanced binary search trees. We call these element BSTs and use $\Tact(S, \ell), \Tpas(S, \ell),\Tdor(S, \ell)$ to denote the three BSTs for solution $S$ at level $\ell$. A node in an element BST stores the identity of an element $e$, its level $\lev_e(S)$ and dual value $w_e(S)$. The algorithm maintains a global pointer to each element BST.

    \item For each background thread $T_k$, we have an array indexed by all sets where for each set $s$, it stores a BST containing all exposed elements in $s\cap E(\bB_k)$. We call these exposed BSTs and denote them $\Texp(\bB_k, s)$. In contrast to other element BSTs, the exposed BSTs do no maintain levels or dual values (since they are undefined).
    
    \item For the foreground solution and each level $\ell$, we have a BST maintaining the total dual value of elements in $s$ at level $\ell$ of $\bF$ for all sets $s$ incident to  level $\ell$ of $\bF$.  We call this foreground incident BSTs and denote it $\Tin(\bF, \ell)$.
    Similarly, for each background solution $\bB_k$ and each level $\ell\le k$, we have a BST maintaining the total dual value of elements in $s$ at levels $\ell$ of $\bB_k$ for all sets $s$ incident to level $k$ of $\bF$.  We call this background incident BSTs and denote it $\Tin(\bB_k, \ell)$. The algorithm maintains a global pointer to each incident BST. In contrast to the set BSTs, we allow duplicate sets in the incident BSTs.
    
    \item For each background thread $T_k$, we use a priority queue $Q_k$ to maintain the dual values $w_s(\bB_k)$ for all sets $s\in \Sin(E(\bB_k))$.
    Recall that $w_s(\bB_k)$ consists of three parts: the total dual value in $\Cstar(\bB_k)$, the total dual value at levels $\ge k+1$ in $\bF$, and $(1+\eps)^{-p_k}\cdot|s\cap E(\bB_k)|$. Denote $\wstar_s$ to be the sum of first two parts. Since \Cref{alg:f-rebuild} needs to decide whether there exists a set with $w_s\ge (1+\eps)^{-1}$ for changing $p_k$, the $Q_k$ actually maintains
    a target value $t_s:=\frac{(1+\eps)^{-1}-\wstar_s}{|s\cap E(\bB_k)|}$ for sets $s$, and converts the maximum query on $w_s$ to minimum query on the target values. It is straightforward to check that the condition $w_s\ge (1+\eps)^{-1}$ is equivalent to $t_s\le (1+\eps)^{-p_k}$. \footnote{Here, dividing by 0 outputs $+\infty$ when the nominator is positive and $-\infty$ when the nominator is nonpositive.}
\end{enumerate}

\paragraph{Update Time Bound.}

We now give details of how the data structures are updated by the algorithm, and derive resulting bounds on the update time of the algorithm.

\begin{lemma}\label{lem:f-query-time}
    Given the index of an element $e$ or set $s$ and a hierarchical solution $\bF$ or $\bB_k$, the data structures can answer in $O(\log^2 n)$ time:
    \begin{enumerate}
        \item Whether $e\in C(\bF), s\in \calS(\bF), e\in \Cstar(\bB_k), e\in E(\bB_k)$ or $s\in \Sstar(\bB_k)$.
        \item $\lev_e(\bF)$ and $w_e(\bF)$ if $e\in C(\bF)$, or $\lev_e(\bB_k)$ and $w_e(\bB_k)$ if $e\in \Cstar(\bB_k)$.
        \item $w_s(\bF)$ or $w_s(\bB_k)$.
    \end{enumerate}
\end{lemma}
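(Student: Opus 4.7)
\medskip
\noindent\textbf{Proof plan.} The strategy is to answer each of the three query types using $O(\log n)$ BST operations, each costing $O(\log n)$ time, for a total of $O(\log^2 n)$. The main observation is that every element or set is located inside a constant number of per-level BSTs, of which there are only $\lmax+1 = O(\log n)$ copies, and every ``aggregate'' quantity such as $w_s$ decomposes into a sum of $O(\log n)$ per-level contributions that are explicitly stored.

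For the membership queries in item (1), I would iterate over all levels $\ell \in [0, \lmax]$ and, for each level, perform a BST search for the given index. For $e \in C(\bF)$ this amounts to querying $\Tact(\bF, \ell), \Tpas(\bF, \ell), \Tdor(\bF, \ell)$ at each level, and analogously for $e \in \Cstar(\bB_k)$ using the three element BSTs of $\bB_k$. For $s \in \calS(\bF)$ and $s \in \Sstar(\bB_k)$ I would query each $\Tset(\bF, \ell)$ or $\Tset(\bB_k, \ell)$ at each level. For $e \in E(\bB_k)$, since the algorithm explicitly maintains $E(\bB_k)$ as the difference $\Ustar(\bB_k)\setminus \Cstar(\bB_k)$, I would check membership in the (single) BST representing $E(\bB_k)$; alternatively, picking any one incident set $s$ of $e$ and querying $\Texp(\bB_k, s)$ also suffices. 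Each individual BST search is $O(\log n)$ on a structure of size $O(n)$, and we do $O(\log n)$ of them, which gives $O(\log^2 n)$ overall.

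For item (2), once the membership search of item (1) has located the node representing $e$ in the appropriate element BST, the stored fields of that node directly yield $\lev_e(\cdot)$ and $w_e(\cdot)$, so the cost is subsumed by the $O(\log^2 n)$ bound of item (1). For item (3), I would exploit the incident BSTs that aggregate per-level contributions to $w_s$. For the foreground value $w_s(\bF)$, I would perform a BST lookup of $s$ in $\Tin(\bF, \ell)$ for each level $\ell$ and sum the returned totals, giving $O(\log n)$ lookups of cost $O(\log n)$ each. For $w_s(\bB_k)$, I would decompose the quantity into the three parts described in the algorithm's data structures: (i) $\sum_{\ell > k} \Tin(\bF, \ell)[s]$ for the contribution of elements whose level is frozen above $k+1$, (ii) $\sum_{\ell \le k+1} \Tin(\bB_k, \ell)[s]$ for the contribution of $\Cstar(\bB_k)$, and (iii) $|\Texp(\bB_k, s)|\cdot (1+\eps)^{-p_k}$ for the exposed elements, where $|\Texp(\bB_k, s)|$ is read off as the size field of the BST. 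Each of the first two sums is $O(\log n)$ lookups and the third is a single lookup, so the total is again $O(\log^2 n)$.

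The only subtle point I anticipate is justifying that the three-part decomposition of $w_s(\bB_k)$ is correctly reflected in the data structures at all times, despite asynchronous updates from \Cref{alg:f-prepare,alg:f-rebuild,alg:f-insert,alg:f-delete} and from the foreground thread. This requires checking that whenever the algorithm changes $w_e(\bB_k)$, $\lev_e(\bB_k)$, or the membership of $e$ in $E(\bB_k)$, the corresponding incident BST $\Tin(\bB_k, \ell)$ is updated accordingly, and that the contribution from levels $\ge k+1$ is never modified by $T_k$ itself (which follows from \Cref{lem:f-switch-not-change-subuniverse}). Given this invariant and the standard cost of balanced BST operations on structures of size $O(n)$, the stated bound follows.
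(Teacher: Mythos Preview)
Your proposal is correct and follows essentially the same approach as the paper: iterate over the $O(\log n)$ levels, perform an $O(\log n)$-time BST lookup at each level to locate the node, read off stored fields for items (1) and (2), and for item (3) sum the per-level contributions from $\Tin(\bF,\ell)$ (and, for $\bB_k$, from $\Tin(\bB_k,\ell)$ plus the exposed-BST size). Your extra paragraph on why the three-part decomposition of $w_s(\bB_k)$ is faithfully reflected in the data structures is not strictly needed for this lemma (which only asserts query time, taking correctness of the maintained structures as given), but it does no harm.
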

\begin{proof}
    Since the input is only the index, we first need to find the node representing the element or set in the corresponding element BSTs, set BSTs or priority queues. Search a given index in a given BST or priority queue takes $O(\log n)$ time. Because we do not know the level of the input element or set, we need to search in $O(\log n)$ BSTs for all levels. This takes $O(\log^2 n)$ time in total.

    Queries 1 and 2 can be answered by information stored in the node. Next, we consider query 3 on total dual value $w_s$. For the foreground solution, the value is maintained in the foreground incident BSTs. For the background solution, the value consists of three parts, which can be found in the foreground incident BSTs $\Tin(\bF, \ell), \ell>k$, background incident BSTs $\Tin(\bB_k, \ell), \ell\le k$, and exposed BST $\Texp(k, s)$ respectively. Here, we make $O(\log n)$ queries on $O(\log n)$ BSTs, which costs $O(\log^2 n)$ time.
\end{proof}

\begin{lemma}\label{lem:f-operation-runtime}
    The data structures can be maintained in $O(f\log^2 n)$ time for the following operations:
    \begin{enumerate}
        \item Adding or removing a set in $\calS(\bF), \Sstar(\bB_k), \calR_k$ or $\Sin(E(\bB_k))$.
        \item Changing element status of live/dormant, active/passive, covered/exposed in $C(\bF)$, $\Cstar(\bB_k)$ or $E(\bB_k)$.
        \item Updating $w_e(\bF), e\in C(\bF)$ or $w_e(\bB_k), e\in \Cstar(\bB_k)$.
        \item Adding or removing an element in $C(\bF)$, $\Cstar(\bB_k)$ or $E(\bB_k)$.
    \end{enumerate}
    The input to the operations is the hierarchical solution, index of the element or set, and its new level and weight if they are involved.
\end{lemma}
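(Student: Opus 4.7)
The plan is to go through the four operation types and verify the $O(f\log^2 n)$ bound for each, exploiting the fact that any element lies in at most $f$ sets and each individual BST or priority-queue operation takes $O(\log n)$, with an extra $\log n$ factor only when we first have to search across the $\lmax+1=O(\log n)$ levels to locate the node (exactly as in \Cref{lem:f-query-time}). The unifying picture is that every operation consists of (a) a constant number of ``direct'' updates to the BST naming the changed object, plus (b) propagation of the change to the $O(f)$ incident data structures indexed by sets containing the affected element (namely $\Tin(\bF,\cdot)$ or $\Tin(\bB_k,\cdot)$, the exposed BSTs $\Texp(\bB_k,s)$, and the priority queue $Q_k$).

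For operation 1, adding or removing $s$ from $\Tset(S,\ell)$ is one $O(\log n)$ BST operation accessed through the global pointer; when $S=\bB_k$ and the change places $s$ into or removes $s$ from $\Sin(E(\bB_k))$, we additionally insert or delete $s$ in $Q_k$, recomputing $t_s=\bigl((1+\eps)^{-1}-\wstar_s\bigr)/|s\cap E(\bB_k)|$ by one call to the query routine of \Cref{lem:f-query-time}, costing $O(\log^2 n)$. No per-element work is needed, so the total is $O(\log^2 n)$. For operation 2, moving $e$ between two status-specific element BSTs is $O(\log n)$. A covered$\leftrightarrow$exposed transition, however, alters $e$'s contribution to $w_s(\bB_k)$ for every $s\ni e$: in one direction the actual stored $w_e(\bB_k)$ replaces the uniform pool contribution $(1+\eps)^{-p_k}$, and vice versa. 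For each of the $\le f$ sets $s\ni e$ we update either the corresponding entry of $\Tin(\bB_k,\ell)$ or the maintained value $|s\cap E(\bB_k)|$ (and the node of $s$ in $\Texp(\bB_k,s)$), then refresh $t_s$ in $Q_k$ if $s\in\Sin(E(\bB_k))$, at $O(\log^2 n)$ per set. Total: $O(f\log^2 n)$.

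Operation 3 is a pure weight change: for each of the $\le f$ sets $s\ni e$ we push a delta into the appropriate incident BST entry and refresh the key of $s$ in $Q_k$ when applicable, at $O(\log^2 n)$ per set, hence $O(f\log^2 n)$. Operation 4 is the composition of a BST insertion/deletion of the element node (cost $O(\log n)$) with the propagation of operation 2 or 3 to the $\le f$ incident sets, and the same $O(f\log^2 n)$ bound applies. In each case the arithmetic update to the numerator $\wstar_s$ inside an incident-BST node can be carried out in $O(\log n)$ given the stored current value, so no recomputation from scratch is required beyond the single query used to refresh $t_s$ in $Q_k$.

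The main obstacle is the bookkeeping around $Q_k$: its key $t_s$ mixes $\wstar_s$ (the fixed-level part of the dual value) with $|s\cap E(\bB_k)|$, and both quantities can shift under any of the four operations, sometimes simultaneously for every $s\ni e$. The clean accounting is to treat each modification of an incident-BST entry or of exposed-membership as triggering exactly one recomputation of $t_s$ via \Cref{lem:f-query-time} followed by a single $O(\log n)$ priority-queue update; this gives one $O(\log^2 n)$ charge per affected set, and summing over the $\le f$ sets incident to the changed element yields the claimed $O(f\log^2 n)$ bound. A secondary point to handle carefully is that the ``lookup'' for an input specified only by index already costs $O(\log^2 n)$ by \Cref{lem:f-query-time}, but this is paid only once per operation and is absorbed into the overall bound.
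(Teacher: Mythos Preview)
Your proposal follows essentially the same approach as the paper: locate the node (costing $O(\log^2 n)$ via \Cref{lem:f-query-time}), perform the direct BST/priority-queue update, and propagate the change to the $\le f$ incident sets. The per-operation decomposition and the final bound are both correct.

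There is one point where the paper is more explicit than you are, and it matters for the arithmetic. In operation 3, when you update $w_e(\bF)$ for a foreground element, this changes $\wstar_s$ not just in one priority queue but in \emph{every} background thread's $Q_\ell$ (since each $\bB_\ell$ with $\ell<\lev_e(\bF)$ reads the foreground dual value at levels $>\ell$). The paper handles this by pushing the delta incrementally into each of the $O(\log n)$ priority queues, at $O(\log n)$ per update, giving $O(\log^2 n)$ per incident set. Your stated mechanism --- ``one recomputation of $t_s$ via \Cref{lem:f-query-time} followed by a single $O(\log n)$ priority-queue update'' --- reads as one priority queue per set; if you apply that recomputation across all $O(\log n)$ threads you would get $O(\log^3 n)$ per set and overshoot the bound. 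The fix is exactly what you already do for the incident-BST entries: push the delta incrementally rather than requery, so each of the $O(\log n)$ priority-queue refreshes costs only $O(\log n)$. With that clarification your argument matches the paper's.
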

\begin{proof}
    As argued in \Cref{lem:f-query-time}, we can find the node representing the element or set in the corresponding element BSTs, set BSTs or priority queues in $O(\log^2 n)$ time.
    We now bound the running time of each operation.
    \begin{enumerate}
        \item Adding or removing a set in $\calS(\bF), \Sstar(\bB_k)$ or $\calR_k$ can be handled by adding or removing a node in the corresponding set BST.
        Adding or removing a set in $\Sin(E(\bB_k))$ can be handled by adding or removing a node in the priority queue.
        These operations take $O(\log n)$ time. The total running time is $O(\log^2 n)$ for finding the node.
        \item For all types of change of status, we can maintain the data structures by removing the node representing the element from the corresponding element BST ($\Tact,\Tpas,\Tdor$ or $\Texp$), and add a new node representing the element to another element BST. This takes $O(\log n)$ time. 
        
        If the operations involves $E(\bB_k)$, we check whether each incident sets is added or removed from $\Sin(E(\bB_k))$. By item 1, this takes $O(f\log^2 n)$ time in total. We also need to update the target values of all incident sets in $Q_k$ to reflect the changes in $|s\cap E(\bB_k)|$ (which can be queried from $\Texp(k,s)$). This is at most $f$ priority queue operations. The total running time is $O(f\log^2 n)$.

        \item For change of dual value of an element $e$ in $C(\bF)$ (resp.\ $\Cstar(\bB_k)$), we first update its node in the corresponding element BST, and then update the total dual value of all incident sets in $\Tin(\bF, \lev_e(\bF))$ (resp.\ $\Tin(\bB_k, \lev_e(\bB_k))$). After that, for each incident set $s$, we update its target value in $Q_\ell$ for all levels $\ell$. (A change of dual value in foreground may affect the target value in all background threads.) These operations take $O(f\log^2 n)$ time.
        
        \item Adding or removing an element can be handled by adding or removing a node in the corresponding element BST ($\Tact,\Tpas,\Tdor$ or $\Texp$).
        Since this changes the dual value of the element, we repeat the argument in item 3.
        If the operation involves $E(\bB_k)$, we also repeat the argument in item 2. The total running time is $O(f\log^2 n)$.
        \qedhere
    \end{enumerate}
\end{proof}

\begin{lemma}\label{lem:f-update-time}
    In any time-step, the foreground thread or a background thread $T_k$ calls runs for $O(f \log^2 n)$ time. Therefore, the worst-case update time of the algorithm is $O(f\log^3 n)$.
\end{lemma}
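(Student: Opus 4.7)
The plan is to establish the per-thread time bound of $O(f \log^2 n)$ by walking through the foreground thread and each of the five phases of a background thread $T_k$, bounding in each case the number of elementary data-structure operations invoked, and then invoking \Cref{lem:f-operation-runtime} (each operation costs $O(f \log^2 n)$) and \Cref{lem:f-query-time} (each query costs $O(\log^2 n)$). Multiplying by the $\ell_{\max}+1 = O(\log n)$ threads running per time-step then yields the global $O(f \log^3 n)$ bound.

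First I would handle the foreground thread. A deletion only changes a single live/dormant status, so it is $O(1)$ data-structure operations. An insertion requires locating the highest-level tight set among the at most $f$ sets containing $e$: this is $f$ queries of type $w_s(\bF)$ (which by \Cref{lem:f-query-time} cost $O(\log^2 n)$ each) plus at most one set addition to $\calS(\bF)$ and one element insertion into $C(\bF)$ with an associated dual value update. Using \Cref{lem:f-operation-runtime} this totals $O(f \log^2 n)$.

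Next I would go phase by phase for a background thread $T_k$. In the preparation phase, the algorithm processes at most $\cspd = O(1)$ elements of $L_k(\bF^{(\tprep)})$; for each such element, \Cref{alg:f-prepare} invokes a constant number of element insertions / dual-value updates on $\bB_k$, each of which costs $O(f \log^2 n)$ by \Cref{lem:f-operation-runtime}. In the computation and tail phases, one iteration of the inner loop of \Cref{alg:f-rebuild} performs a single FindMin query on $Q_k$ (to locate the tightest set), adds one set to $\Sstar(\bB_k)$, and moves at most $\cspd$ elements from $E(\bB_k)$ to $\Cstar(\bB_k)$; again each operation is covered by \Cref{lem:f-operation-runtime}. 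I also need to account for the handling of an insertion or deletion via \Cref{alg:f-insert}/\Cref{alg:f-delete} at the start of the time-step, which touches $O(1)$ nodes and triggers $O(1)$ operations of \Cref{lem:f-operation-runtime}. The suspension phase only examines $\tausus_k$ against the global threshold $\tau$, which is $O(1)$. The copy phase copies $\cspd = O(1)$ sets into $\calR_k$ via set-BST insertions at cost $O(\log n)$ each.

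Finally I would handle the switch step at a normal termination of $T_k$. By the argument already sketched after \Cref{alg:f-rebuild}, the switch is executed by pointer-rotating the set BSTs and element BSTs $\Tset(\bF,\ell), \Tact/\Tpas/\Tdor(\bF,\ell)$ for $\ell \le k$ with those of $\bB_k$, and merging the level-$(k+1)$ BSTs into those of $\bF$. The pointer swaps are $O(\log n)$, while the merge of two balanced BSTs of total size $O(n)$ can be done in $O(\log^2 n)$ time; analogously for the incident BSTs $\Tin(\cdot,\ell)$. The cost is absorbed into $O(f \log^2 n)$. Similarly the garbage-collection overhead from the deamortization in \Cref{lm:full:reduction} is $O(\log n)$ per time-step and absorbed. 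Since each of the $O(\log n)$ threads contributes $O(f \log^2 n)$, the overall worst-case update time is $O(f \log^3 n)$.

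The main obstacle I anticipate is not any single step but the bookkeeping required to justify the switch: one has to check that all $O(\log n)$ pointer rotations and one pair of BST merges can truly be done within a single time-step's budget without re-touching elements individually, since a naive per-element rebuild would blow up the bound by a factor of $n$. Provided the data-structure invariants established earlier let us treat the BSTs as opaque subtrees to be spliced and merged, the remaining accounting is routine.
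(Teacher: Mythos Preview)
Your plan is correct and follows essentially the same decomposition as the paper: per-thread accounting via \Cref{lem:f-query-time} and \Cref{lem:f-operation-runtime}, phase-by-phase analysis for each $T_k$, pointer swaps plus a single level-$(k{+}1)$ merge for the switch, and then the factor $\ell_{\max}+1=O(\log n)$ for the number of threads.

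Two details you gloss over that the paper handles explicitly and that you should fill in: (i) when the outer loop of \Cref{alg:f-rebuild} decrements $p_k$, conceptually every $e\in E(\bB_k)$ has its dual value raised, but no data-structure update is needed precisely because $Q_k$ stores \emph{target values} $t_s$ rather than $w_s(\bB_k)$, so the while-loop test becomes a single min-query against the new threshold $(1+\eps)^{-p_k}$; (ii) your claim that the merged BSTs have ``total size $O(n)$'' is not automatic for the incident BSTs $\Tin(\cdot,\ell)$, since repeated level-$(k{+}1)$ merges can create duplicate set entries---the paper bounds their size by $\tO(nf)$ using that $T_{\ell_{\max}}$ normally terminates every $O(n)$ steps and wipes the duplicates. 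Neither point changes your strategy, but both are needed to make the $O(\log n)$-per-operation claim rigorous.
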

\begin{proof}

First, we consider an element insertion or deletion. 
This is handled by the foreground thread, and \Cref{alg:f-insert,alg:f-delete} for each background thread. All these algorithms executes $O(f)$ queries described in \Cref{lem:f-query-time} and $O(1)$ operations described in \Cref{lem:f-operation-runtime}. So, the running time is $O(f\log^2 n)$. 

Next, we analyze the operations performed by a background thread $T_k$ based on the phase it is in. This includes \Cref{alg:f-prepare} in the preparation phase and \Cref{alg:f-rebuild} in the computation and tail phases. We limit the work in each phase of a time-step to be processing $\cspd$ elements and copying $\cspd$ sets. Processing an element or copying a set calls $O(f)$ queries described in \Cref{lem:f-query-time} and $O(1)$ operations described in \Cref{lem:f-operation-runtime}. So, these work take $O(f\log^2 n)$ time.
The remaining step that cannot be bounded by these operations is when $p_k$ decreases in \Cref{alg:f-rebuild}. In this situation, the data structure need no update, because we do not maintain the value of exposed elements, and the priority queue maintains target values instead of total dual values. The condition of the while loop in \Cref{alg:f-rebuild} can be checked using the minimum target value in $Q_k$, since $w_s\ge (1+\eps)^{-1}$ is equivalent to $t_s\le (1+\eps)^{-p_k}$.

Finally, we bound the running time of a termination step at the normal termination of a background thread $T_k$.
We use pointer switch to replace the foreground data structures ($\Tset(\bF, \ell),\Tact(\bF, \ell),\Tpas(\bF, \ell),\Tdor(\bF,\ell)$ and $\Tin(\bF,\ell)$) at levels $\ell \le k$ by the corresponding background data structures ($\Tset(\bB_k, \ell),\Tact(\bB_k, \ell),\Tpas(\bB_k, \ell),\Tdor(\bB_k,\ell)$ and $\Tin(\bB_k,\ell)$).
Besides this, we merge the data structures for level $k+1$.
This is $O(\log n)$ BST operations and takes $O(\log^2 n)$ time.

By highest level invariant, the merge cannot create duplicate nodes in element BSTs or set BSTs.
Merging incident BSTs at level $k+1$ may create duplicate sets in $\Tin(\bF, k+1)$. We allow duplicate in incident BSTs.
We claim that the size of each incident BST is at most $\tO(nf)$, so that each BST operation takes $O(\log n)$ time. Note that duplicate sets cannot be created within a background solution, in particular the highest one $\bB_{\lmax}$. So, whenever $T_{\lmax}$ terminates, the switch step replaces all foreground incident BSTs with no duplicate sets. Since each element is incident to at most $f$ sets, this is at most $nf$ sets. Between two termination of $T_{\lmax}$, each step can process $\tO(1)$ elements, which adds $\tO(f)$ sets to all incident BSTs. By \Cref{lem:f-full-lifetime-bound}, the lifetime of $T_{\lmax}$ is $O(n)$. Since $T_{\lmax}$ cannot be aborted, we must witness its termination every $O(n)$ time-steps, and the total size of each incident BSTs cannot exceed $\tO(nf)$.
\end{proof}

The above implementation details and running time analysis work for the goal of bounding worst case insertion recourse.
To bound the deletion recourse as well, we need to deamortize the deletion recourse according to \Cref{lem:new:recourse-reduction}. The garbage collection step can be efficiently implemented as follows. When we switch out some set BST from the foreground solution, we move that to a BST representing all garbage sets. The algorithm performs an additional garbage collection step that removes $O(\log n)$ sets in the garbage set from the output.

\begin{lemma}\label{lem:f-time-final}
    The deamortized algorithm has worst-case update time $O(f\log^3 n)$.
\end{lemma}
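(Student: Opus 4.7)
The plan is to invoke the general deamortization reduction of \Cref{lem:new:recourse-reduction} (instantiated with $\alpha = O(f)$, $\beta = O(\log n)$, $\delta = 1$) on top of the non-deamortized algorithm whose worst-case update time $O(f\log^3 n)$ is established in \Cref{lem:f-update-time}, and to verify that the additional garbage-collection bookkeeping fits inside the same asymptotic budget. The reduction only demands two extra capabilities per time-step: redirecting into a ``garbage'' collection any sets that the base algorithm would have removed from the output, and then explicitly removing $2\beta + \delta\alpha = O(f + \log n)$ sets from that collection at the end of the step.

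First, I would identify where the base algorithm removes sets from its output. The foreground thread never deletes sets on its own; sets can leave the output only at the switch step performed when some background thread $T_k$ terminates normally, at which moment the foreground set BSTs $\Tset(\bF, \ell)$ at levels $\ell \le k$ are swapped out and, simultaneously, the partial data structures of each lower-level thread $T_{k'}$ ($k' < k$) aborted by this termination are discarded. In the deamortized algorithm, rather than discarding these BSTs, I would link them as children of a single global ``garbage forest'' (for instance, a balanced BST whose nodes are pointers to the retired BSTs, together with a running size counter). A single normal termination adds $O(\log n)$ retired BSTs into the garbage forest, each insertion costing $O(\log n)$ time, so the redirection step adds only $O(\log^2 n)$ per normal termination, which is already subsumed by the switch cost accounted for in \Cref{lem:f-update-time}.

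Next, I would handle the garbage-collection step. At the end of each time-step the algorithm must physically remove $O(f + \log n)$ sets from the garbage forest (and hence from the output). Each removal locates an arbitrary set through one root of the forest and performs an $O(\log n)$ BST operation to extract it, followed by a constant-time pointer update removing it from the output representation; as retired BSTs are emptied they are lazily unlinked from the forest to keep its structure balanced. The per-step cost of garbage collection is therefore $O((f + \log n) \cdot \log n) = O(f \log n)$, which is dominated by the $O(f \log^3 n)$ base bound.

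Combining the $O(f \log^3 n)$ base update time of \Cref{lem:f-update-time} with the $O(\log^2 n)$ redirection overhead charged to each switch and the $O(f \log n)$ garbage-collection overhead charged to each time-step yields the claimed $O(f \log^3 n)$ worst-case update time for the deamortized algorithm. The only delicate point I expect is to maintain the garbage forest so that per-set extraction runs in $O(\log n)$ time irrespective of the (possibly large) sizes of the retired BSTs linked into it; the pointer-forest implementation sketched above handles this directly and introduces no new ideas beyond those already used for the buffer and set BSTs in the base algorithm.
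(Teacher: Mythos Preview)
Your proposal is correct and follows essentially the same approach as the paper: redirect the set BSTs displaced at a switch step into a garbage structure (the paper merges them into a single garbage BST, you keep a forest of BST pointers, an immaterial implementation variant), then explicitly delete $O(\beta+\delta\alpha)$ sets per time-step. Your accounting of the garbage-removal rate as $O(f+\log n)$ is in fact the correct instantiation of \Cref{lem:new:recourse-reduction} for the $O(f)$-approximation, and the resulting $O(f\log n)$ overhead is well within the $O(f\log^3 n)$ budget from \Cref{lem:f-update-time}.
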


We have concluded the proof of \Cref{thm:f-main} by combining \Cref{lem:f-feasible,lem:f-recourse,lem:f-approx-final,lem:f-time-final}.

{\small
\bibliography{references}

@inproceedings{AssadiKK25,
  author       = {Sepehr Assadi and
                  Sanjeev Khanna and
                  Peter Kiss},
  title        = {Improved Bounds for Fully Dynamic Matching via Ordered Ruzsa-Szemer{\'{e}}di
                  Graphs},
  booktitle    = {Proceedings of the 2025 Annual {ACM-SIAM} Symposium on Discrete Algorithms (SODA)},
  pages        = {2971--2990},
  year         = {2025}
}

@inproceedings{BehnezhadG24,
  author       = {Soheil Behnezhad and
                  Alma Ghafari},
  title        = {Fully Dynamic Matching and Ordered Ruzsa-Szemer{\'{e}}di Graphs},
  booktitle    = {65th {IEEE} Annual Symposium on Foundations of Computer Science (FOCS)},
  pages        = {314--327},
  year         = {2024}
}

@inproceedings{ArarCCSW18,
  author       = {Moab Arar and
                  Shiri Chechik and
                  Sarel Cohen and
                  Cliff Stein and
                  David Wajc},
  title        = {Dynamic Matching: Reducing Integral Algorithms to Approximately-Maximal
                  Fractional Algorithms},
  booktitle    = {45th International Colloquium on Automata, Languages, and Programming (ICALP)},
  series       = {LIPIcs},
  volume       = {107},
  pages        = {7:1--7:16},
  year         = {2018},
}

@inproceedings{Behnezhad23,
  author       = {Soheil Behnezhad},
  title        = {Dynamic Algorithms for Maximum Matching Size},
  booktitle    = {Proceedings of the 2023 {ACM-SIAM} Symposium on Discrete Algorithms (SODA)},
  pages        = {129--162},
  year         = {2023}
}

@inproceedings{Wajc20,
  author       = {David Wajc},
  title        = {Rounding dynamic matchings against an adaptive adversary},
  booktitle    = {Proceedings of the 52nd Annual {ACM} {SIGACT} Symposium on Theory
                  of Computing (STOC)},
  pages        = {194--207},
  year         = {2020}
}

@inproceedings{NeimanS13,
  author       = {Ofer Neiman and
                  Shay Solomon},
  title        = {Simple deterministic algorithms for fully dynamic maximal matching},
  booktitle    = {Symposium on Theory of Computing Conference (STOC)},
  pages        = {745--754},
  year         = {2013}
}

@inproceedings{BernsteinS16,
  author       = {Aaron Bernstein and
                  Cliff Stein},
  title        = {Faster Fully Dynamic Matchings with Small Approximation Ratios},
  booktitle    = {Proceedings of the Twenty-Seventh Annual {ACM-SIAM} Symposium on Discrete
                  Algorithms (SODA)},
  pages        = {692--711},
  year         = {2016}
}

@inproceedings{BernsteinDL21,
  author       = {Aaron Bernstein and
                  Aditi Dudeja and
                  Zachary Langley},
  title        = {A framework for dynamic matching in weighted graphs},
  booktitle    = {53rd Annual {ACM} {SIGACT} Symposium on Theory of Computing (STOC)},
  pages        = {668--681},
  year         = {2021}
}

@inproceedings{BaswanaGS11,
  author       = {Surender Baswana and
                  Manoj Gupta and
                  Sandeep Sen},
  title        = {Fully Dynamic Maximal Matching in {O} (log n) Update Time},
  booktitle    = {{IEEE} 52nd Annual Symposium on Foundations of Computer Science (FOCS)},
  pages        = {383--392},
  year         = {2011}
}

@inproceedings{BhattacharyaKSW23,
  author       = {Sayan Bhattacharya and
                  Peter Kiss and
                  Thatchaphol Saranurak and
                  David Wajc},
  title        = {Dynamic Matching with Better-than-2 Approximation in Polylogarithmic
                  Update Time},
  booktitle    = {Proceedings of the 2023 {ACM-SIAM} Symposium on Discrete Algorithms (SODA)},
  pages        = {100--128},
  year         = {2023}
}

@inproceedings{BhattacharyaHI15a,
  author       = {Sayan Bhattacharya and
                  Monika Henzinger and
                  Giuseppe F. Italiano},
  title        = {Deterministic Fully Dynamic Data Structures for Vertex Cover and Matching},
  booktitle    = {Proceedings of the Twenty-Sixth Annual {ACM-SIAM} Symposium on Discrete
                  Algorithms (SODA)},
  pages        = {785--804},
  year         = {2015}
}

@inproceedings{BhattacharyaK19,
  author       = {Sayan Bhattacharya and
                  Janardhan Kulkarni},
  title        = {Deterministically Maintaining a $(2 + \epsilon)$-Approximate
                  Minimum Vertex Cover in $O(1/\epsilon^2)$ Amortized Update Time},
  booktitle    = {Proceedings of the Thirtieth Annual {ACM-SIAM} Symposium on Discrete
                  Algorithms (SODA)},
  pages        = {1872--1885},
  year         = {2019}
}

@inproceedings{BhattacharyaHN17,
  author       = {Sayan Bhattacharya and
                  Monika Henzinger and
                  Danupon Nanongkai},
  title        = {Fully Dynamic Approximate Maximum Matching and Minimum Vertex Cover
                  in \emph{O}(log\({}^{\mbox{3}}\) \emph{n}) Worst Case Update Time},
  booktitle    = {Proceedings of the Twenty-Eighth Annual {ACM-SIAM} Symposium on Discrete
                  Algorithms (SODA)},
  pages        = {470--489},
  year         = {2017}
}

@inproceedings{OnakR10,
  author       = {Krzysztof Onak and
                  Ronitt Rubinfeld},
  title        = {Maintaining a large matching and a small vertex cover},
  booktitle    = {Proceedings of the 42nd {ACM} Symposium on Theory of Computing (STOC)},
  pages        = {457--464},
  publisher    = {{ACM}},
  year         = {2010}
}

@inproceedings{DinurS14,
  author       = {Irit Dinur and
                  David Steurer},
  title        = {Analytical approach to parallel repetition},
  booktitle    = {Symposium on Theory of Computing (STOC)},
  pages        = {624--633},
  year         = {2014}
}

@inproceedings{KhotR03,
  author       = {Subhash Khot and
                  Oded Regev},
  title        = {Vertex Cover Might be Hard to Approximate to within 2-{\textbackslash}varepsilon},
  booktitle    = {18th Annual {IEEE} Conference on Computational Complexity (CCC)},
  pages        = {379},
  year         = {2003},
}

@book{WS11,
  author       = {David P. Williamson and
                  David B. Shmoys},
  title        = {The Design of Approximation Algorithms},
  publisher    = {Cambridge University Press},
  year         = {2011},
}

@inproceedings{AssadiS21,
  author       = {Sepehr Assadi and
                  Shay Solomon},
  title        = {Fully Dynamic Set Cover via Hypergraph Maximal Matching: An Optimal
                  Approximation Through a Local Approach},
  booktitle    = {29th Annual European Symposium on Algorithms (ESA)},
  series       = {LIPIcs},
  volume       = {204},
  pages        = {8:1--8:18},
  publisher    = {Schloss Dagstuhl - Leibniz-Zentrum f{\"{u}}r Informatik},
  year         = {2021},
}

@inproceedings{AbboudAGPS19,
  author       = {Amir Abboud and
                  Raghavendra Addanki and
                  Fabrizio Grandoni and
                  Debmalya Panigrahi and
                  Barna Saha},
  title        = {Dynamic set cover: improved algorithms and lower bounds},
  booktitle    = {Proceedings of the 51st Annual {ACM} {SIGACT} Symposium on Theory
                  of Computing (STOC)},
  pages        = {114--125},
  publisher    = {{ACM}},
  year         = {2019},
}

@inproceedings{BukovSZ25,
  author       = {Anton Bukov and
                  Shay Solomon and
                  Tianyi Zhang},
  editor       = {Yossi Azar and
                  Debmalya Panigrahi},
  title        = {Nearly Optimal Dynamic Set Cover: Breaking the Quadratic-in-\emph{f}
                  Time Barrier},
  booktitle    = {Proceedings of the 2025 Annual {ACM-SIAM} Symposium on Discrete Algorithms (SODA)},
  pages        = {824--863},
  publisher    = {{SIAM}},
  year         = {2025},
}

@inproceedings{BhattacharyaHNW21,
  author       = {Sayan Bhattacharya and
                  Monika Henzinger and
                  Danupon Nanongkai and
                  Xiaowei Wu},
  editor       = {D{\'{a}}niel Marx},
  title        = {Dynamic Set Cover: Improved Amortized and Worst-Case Update Time},
  booktitle    = {Proceedings of the 2021 {ACM-SIAM} Symposium on Discrete Algorithms (SODA)},
  pages        = {2537--2549},
  publisher    = {{SIAM}},
  year         = {2021},
}

@inproceedings{BhattacharyaHN19,
  author       = {Sayan Bhattacharya and
                  Monika Henzinger and
                  Danupon Nanongkai},
  editor       = {David Zuckerman},
  title        = {A New Deterministic Algorithm for Dynamic Set Cover},
  booktitle    = {60th {IEEE} Annual Symposium on Foundations of Computer Science (FOCS)},
  pages        = {406--423},
  publisher    = {{IEEE} Computer Society},
  year         = {2019},
}

@inproceedings{BhattacharyaCH17,
  author       = {Sayan Bhattacharya and
                  Deeparnab Chakrabarty and
                  Monika Henzinger},
  editor       = {Friedrich Eisenbrand and
                  Jochen K{\"{o}}nemann},
  title        = {Deterministic Fully Dynamic Approximate Vertex Cover and Fractional
                  Matching in {O(1)} Amortized Update Time},
  booktitle    = {Integer Programming and Combinatorial Optimization - 19th International
                  Conference (IPCO)},
  series       = {Lecture Notes in Computer Science},
  volume       = {10328},
  pages        = {86--98},
  year = {2017},
}

@inproceedings{BhattacharyaHI15,
  author       = {Sayan Bhattacharya and
                  Monika Henzinger and
                  Giuseppe F. Italiano},
  title        = {Design of Dynamic Algorithms via Primal-Dual Method},
  booktitle    = {Automata, Languages, and Programming - 42nd International Colloquium,
                  {ICALP} 2015, Kyoto, Japan, July 6-10, 2015, Proceedings, Part {I}},
  series       = {Lecture Notes in Computer Science},
  volume       = {9134},
  pages        = {206--218},
    year        = {2015}
}

@inproceedings{GuptaKKP17,
  author       = {Anupam Gupta and
                  Ravishankar Krishnaswamy and
                  Amit Kumar and
                  Debmalya Panigrahi},
  editor       = {Hamed Hatami and
                  Pierre McKenzie and
                  Valerie King},
  title        = {Online and dynamic algorithms for set cover},
  booktitle    = {Proceedings of the 49th Annual {ACM} {SIGACT} Symposium on Theory
                  of Computing, {STOC} 2017, Montreal, QC, Canada, June 19-23, 2017},
  pages        = {537--550},
  year = {2017}
}

@inproceedings{SolomonU23,
  author       = {Shay Solomon and
                  Amitai Uzrad},
  editor       = {Barna Saha and
                  Rocco A. Servedio},
  title        = {Dynamic ((1+{\(\epsilon\)}) ln n)-Approximation Algorithms for Minimum
                  Set Cover and Dominating Set},
  booktitle    = {Proceedings of the 55th Annual {ACM} Symposium on Theory of Computing,
                  {STOC} 2023, Orlando, FL, USA, June 20-23, 2023},
  pages        = {1187--1200},
  publisher    = {{ACM}},
  year         = {2023},
}

@inproceedings{SolomonUZ24,
  author       = {Shay Solomon and
                  Amitai Uzrad and
                  Tianyi Zhang},
  title        = {A Lossless Deamortization for Dynamic Greedy Set Cover},
  booktitle    = {65th {IEEE} Annual Symposium on Foundations of Computer Science, {FOCS}
                  2024, Chicago, IL, USA, October 27-30, 2024},
  pages        = {264--290},
  publisher    = {{IEEE}},
  year         = {2024},
}
}

\end{document}
